\newcommand{\ol}{\overline}
\newcommand{\sH}{\mathcal{H}}
\newcommand{\wt}{\widetilde}
\DeclareMathOperator{\tr}{tr}
\theoremstyle{plain}
\newtheorem*{thm-restate}{Theorem \ref{thm:qms_exact}}
\newcommand{\bfig} {\begin{figure}\begin{center}}
\newcommand{\efig}{\end{center}\end{figure}}
\newcommand{\bi}{\begin{itemize}}
\newcommand{\ei}{\end{itemize}}
\newcommand{\Hl}{\mathcal{H}_\ell}
\newcommand{\Hr}{\mathcal{H}_r}
\newcommand{\HL}{\mathcal{H}_L}
\newcommand{\HR}{\mathcal{H}_R}
\newcommand{\HB}{\mathcal{H}_B}
\newcommand{\Hf}{\mathcal{H}_f}
\newcommand{\HP}{\mathcal{H}_P}
\newcommand{\Hall}{\mathcal{H}_L\otimes\mathcal{H}_\ell\otimes\mathcal{H}_r\otimes\mathcal{H}_R}
\newcommand{\ran}{\rangle}
\newcommand{\lan}{\langle}
\newcommand{\PR}{\mathrm{Pr}}
\newtheorem{mydef}{Definition}[section]
\newtheorem{lemma}{Lemma}[section]
\newcommand{\LP}{\mathcal{L}_\Phi}
\newcommand{\ph}{\psi_{Hawk}}
\newcommand{\chin}{\chi^{\{in\}}}
\newcommand{\chot}{\chi^{\{out\}}}
\title{The black hole interior from non-isometric codes and complexity}
\author[1]{Chris Akers,}
\author[1]{Netta Engelhardt,}
\author[1]{Daniel Harlow,}
\author[2,3]{Geoff Penington,}
\author[1]{Shreya Vardhan}
\affiliation[1]{Center for Theoretical Physics,\\
Massachusetts Institute of Technology, Cambridge, MA 02139, USA}
\affiliation[2]{Center for Theoretical Physics,\\ University of California, Berkeley, CA 94720 USA}
\affiliation[3]{Institute for Advanced Study, 1 Einstein Dr, Princeton, NJ 08540 USA}
\emailAdd{cakers@mit.edu}
\emailAdd{engeln@mit.edu}
\emailAdd{harlow@mit.edu}
\emailAdd{geoffp@berkeley.edu}
\emailAdd{vardhan@mit.edu}
\abstract{Quantum error correction has given us a natural language for the emergence of spacetime, but the black hole interior poses a challenge for this framework: at late times the apparent number of interior degrees of freedom in effective field theory can vastly exceed the true number of fundamental degrees of freedom, so there can be no isometric (i.e. inner-product preserving) encoding of the former into the latter.  In this paper we explain how quantum error correction nonetheless can be used to explain the emergence of the black hole interior, via the idea of ``non-isometric codes protected by computational complexity''.  We show that many previous ideas, such as the existence of a large number of ``null states'', a breakdown of effective field theory for operations of exponential complexity, the quantum extremal surface calculation of the Page curve, post-selection, ``state-dependent/state-specific'' operator reconstruction, and the ``simple entropy'' approach to complexity coarse-graining, all fit naturally into this framework, and we illustrate all of these phenomena simultaneously in a soluble model. 
}
\begin{document}
\maketitle

\section{Introduction}\label{sec:intro}
Understanding the quantum behavior of black holes is a longstanding problem in theoretical physics.  The key insight from including perturbative quantum corrections to gravity is that black holes behave as quantum systems with a finite number of degrees of freedom given in Planck units by \cite{Bekenstein:1973ur,Hawking:1975vcx}
\be\label{SBH}
S=\frac{\mathrm{Area}}{4}.
\ee
This leads to the idea that spacetime itself is only an approximate or ``emergent'' notion, valid in some situations but not in others.  This is especially clear in the context of the black hole information problem, where some kind of non-locality which is not present in the classical theory is required if unitarity is to be preserved \cite{Hawking:1976ra}. 

\begin{figure}
   \begin{center}
    \includegraphics[height=3.5cm]{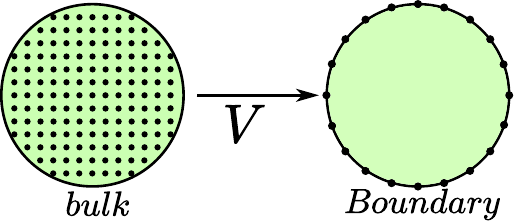}
    \caption{The holographic encoding map in AdS/CFT: an approximate isometry $V:\mathcal{H}_{bulk}\to\mathcal{H}_{Boundary}$ maps the effective bulk degrees of freedom to the fundamental boundary degrees of freedom.}\label{Vadsfig}
    \end{center}
\end{figure}

It is one thing to say that spacetime should be emergent, but quite another to know what this actually means.  So far our best-understood example of a theory with emergent spacetime is the AdS/CFT correspondence \cite{Maldacena:1997re}.   A major breakthrough towards understanding the mechanism responsible for spacetime emergence in AdS/CFT was the discovery of the Ryu-Takayanagi formula for boundary entropy in terms of bulk geometry \cite{Ryu:2006bv}, which was eventually refined into the quantum extremal surface (QES) formula of Engelhardt and Wall \cite{Engelhardt:2014gca}.  Together with progress on the topic of ``bulk reconstruction'', which is the problem of representing bulk operators on the dual CFT Hilbert space \cite{Banks:1998dd,Hamilton:2006az,Heemskerk:2012mn,Bousso:2012mh,Czech:2012bh,Wall:2012uf,Headrick:2014cta}, this led to the realization that the emergence of the gravitational spacetime can and should be formulated as a problem in quantum error correction \cite{Almheiri:2014lwa,Pastawski:2015qua,Dong:2016eik,Harlow:2016vwg}.  The basic idea is shown in figure \ref{Vadsfig}: the low-energy Hilbert space $\mathcal{H}_{bulk}$ of bulk effective field theory is mapped into the boundary Hilbert space $\mathcal{H}_{Boundary}$ by an approximately isometric map $V:\mathcal{H}_{bulk}\to\mathcal{H}_{Boundary}$ (an isometry is a linear map $V$ from one Hilbert space to another that preserves the inner product, or equivalently a linear map which obeys $V^\dagger V=I$).  In quantum error correction language $\mathcal{H}_{bulk}$ is the ``logical'' Hilbert space and $\mathcal{H}_{Boundary}$ is the ``physical'' Hilbert space, and standard properties of the correspondence such as bulk locality \cite{Almheiri:2014lwa}, entanglement wedge reconstruction \cite{Dong:2016eik}, and the quantum extremal surface formula \cite{Harlow:2016vwg}, follow from the error-correcting properties of the approximate isometry $V$ (see also \cite{Hayden:2016cfa,Cotler:2017erl,hayden2017alphabits,Akers:2019wxj,Akers:2020pmf,Akers:2021fut}).  In particular if we split the boundary theory into the degrees of freedom in a spatial region $B$ and the degrees of freedom in its complement $\ol{B}$, which we will treat as a inducing a tensor product decomposition $\mathcal{H}_{Boundary}=\mathcal{H}_B\otimes\mathcal{H}_{\ol{B}}$, then there is a similar decomposition $\mathcal{H}_{bulk}=\mathcal{H}_b\otimes \mathcal{H}_{\ol{b}}$ in the bulk, where $b$ is the entanglement wedge of $B$ and $\ol{b}$ is the entanglement wedge of $\ol{B}$, such that information in $b$ is accessible in $B$ and information in $\ol{b}$ is accessible in $\ol{B}$ (see figure \ref{subfig}).\footnote{Mathematically it is better to formulate these decompositions in terms of commuting algebras rather than tensor factors, since the latter don't usually exist unless the algebras act on entire connected components of the bulk/boundary space.  Having already committed this sin once, we will continue to commit it without further comment.}  Moreover in this standard picture, for any state $\rho$ on $\mathcal{H}_{bulk}$ we have the QES formula
\be
S(\tr_{\ol{B}}(V\rho V^\dagger))\approx\frac{\mathrm{Area}(X_B^{\rm min})}{4}+S(\rho_b).
\ee

\bfig
\includegraphics[height=4cm]{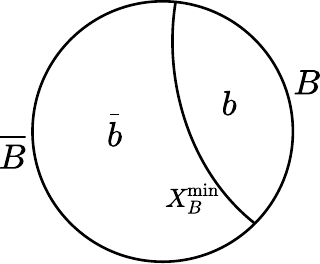}
\caption{Subregion duality in AdS/CFT: given a boundary subregion $B$, its minimal quantum extremal surface $X_B^{\rm min}$ divides the bulk space into the entanglement wedge $b$ of $B$ and the entanglement wedge $\ol{b}$ of $\ol{B}$, and all bulk observables in $b$ (or $\ol{b}$) on this low-energy subspace can be represented as operators in $B$ (or $\ol{B}$).}\label{subfig}
\efig
There is more to $\mathcal{H}_{Boundary}$ than just low-energy states of course, but we can understand the rest of the Hilbert space along similar lines as follows: including more energetic states in the domain of $V$ allows for black holes in generic microstates to be present, provided that the reconstruction excludes the black hole interiors.  In this way we can give a mathematical picture of the emergence of all parts of spacetime which are not behind black hole horizons in all states of the boundary theory \cite{Almheiri:2014lwa,Pastawski:2015qua}.  Unfortunately the more high-energy states are included, the less of the spacetime can be accounted for.  

\bfig
\includegraphics[height=4cm]{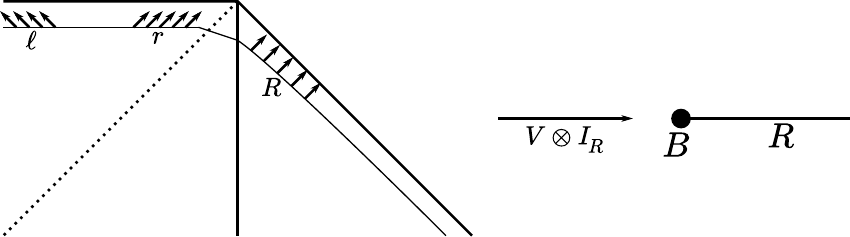}
\caption{Holography for a black hole $B$ radiating into a reservoir $R$.  In the effective picture there are left- and right-moving modes $\ell$ and $r$ on a ``nice slice'' Cauchy surface, and the holographic map $V:\Hl\otimes\Hr\to\HB$  tells us how these interior modes are encoded in the fundamental degrees of freedom.  We will always take $V$ to be a linear map, but in general it cannot be an isometry.}\label{evapbhmapfig}
\efig
What then can we say about the emergence of the spacetime behind black hole horizons?  It has been understood for some time that there are obstacles to understanding this in the same way as was done for the exterior \cite{Mathur:2009hf,Marolf:2012xe,Almheiri:2012rt,Verlinde:2012cy,Almheiri:2013hfa,Marolf:2013dba,Papadodimas:2013jku,Harlow:2014yka}. The basic problem is that the number of interior degrees of freedom which is suggested by effective field theory can vastly exceed the entropy \eqref{SBH}.  A prominent example of such a situation is a black hole in flat space which has been evaporating for a long time.  As a simple model we can treat the black hole as a quantum system $B$ with Hilbert space $\HB$, whose dimensionality $|B|$ we think of as being analogous to $e^{\frac{\mathrm{Area}}{4}}$, which is coupled to a ``reservoir'' $R$ with Hilbert space $\mathcal{H}_R$ in such a way that Hawking radiation from the black hole can propagate out into $R$.  We will refer to this description of the system as the \textit{fundamental description}. For example $B$ could describe the states in some energy band of a holographic CFT, with energy high enough that the bulk interpretation consists mostly of  ``big'' black hole states, and $R$ could be a free field theory on a half space in one dimension higher than the CFT \cite{Rocha:2008fe,Penington:2019npb,Almheiri:2019psf}.  For a real evaporating black hole $R$ includes weakly-coupled gravitons, but the radiation cloud is sufficiently diffuse that no holographic bounds are close to being saturated so a weakly-interacting Fock-space description of its Hilbert space should be accurate.\footnote{In the language of this paper, the holographic map from the reservoir into whatever is the fundamental description of gravity in flat space should be an approximate isometry which does not have any null states, with errors that are smaller than any inverse power of $|B|$, so we can just think of the radiation as fundamental.  We can illustrate this separation in AdS/CFT as follows: if we create a small black hole in $AdS_5$ whose entropy is of order $S\sim N^{1/4}$ and then wait for it to evaporate, the non-perturbative errors involved in encoding the resulting radiation state into the dual CFT are $O(e^{-N^2})$, which is much smaller than the small errors of order $e^{-S}$ which are relevant for the information problem for this black hole.}  We have used the term ``reservoir'' to denote the system $R$ into which the black hole radiates, but going forward we will mostly refer to $R$ as the ``radiation'' to match the usual parlance.

In addition to the fundamental description we can also introduce an \textit{effective description}, which is given in terms of the same radiation system $R$ but now coupled to the black hole interior degrees of freedom on a ``nice slice'' in effective field theory: there are left-movers $\ell$ which keep track of how the black hole was made, and right-movers $r$ which in Hawking's picture of the dynamics are entangled with outgoing modes in $R$.\footnote{In this paper we will ignore interactions between $\ell$ and $r$: these can be handled perturbatively provided we do not try to evolve a post-measurement state on a nice slice backwards in time (we won't), but at a substantial cost in clarity.}  We can then try to formulate a linear holographic map
\be
V:\Hl\otimes\Hr\to\HB
\ee
whose tensor product with the identity on $R$ tells us how states in the effective description are mapped into the fundamental description (see figure \ref{evapbhmapfig} for an illustration).
In this language the core problem is that in the effective description we eventually have
\be\label{oldbh}
S(\rho_r)=S(\rho_R)\gg \log|B|,
\ee
but this is not compatible with $V$ being even approximately an isometry. Indeed \eqref{oldbh} implies that $|r|\gg |B|$, so for any linear map $V:\Hl\otimes\Hr\to\HB$ there must necessarily be a sizeable subspace of ``null states'' which are annihilated by $V$.  

One may wonder why we have taken the full map to have the form $V\otimes I_R$, rather than some more complicated map which is not a product.  The reason is that we are thinking of the reservoir system as both part of the effective description and part of the fundamental description, so operators on $R$ should have the same interpretation in both descriptions and thus should commute with the holographic map.  In particular when the effective description state is a product between $\ell r$ and $R$ then the holographic map should preserve the state of $R$.  After all the reservoir is just one of potentially many quantum systems that could be in the same universe of the black hole, and if it is uncorrelated with all of them then how would we know which one the map should mix with the black hole?  We emphasize however that a holographic map of the form $V\otimes I_R$ \textit{can} act nontrivially on the state of the reservoir \textit{provided} it is entangled with  $\ell r$: since $V$ is not an isometry, it can teleport information from the interior out into $R$.  We will see below that this is the basic mechanism by which information is conserved in black hole evaporation.

How are we to think about the emergence of spacetime in a situation where the encoding of the effective description into the fundamental description is not isometric?  One possibility is to say that spacetime does not actually emerge in this situation, as was indeed argued in \cite{Almheiri:2012rt}.  On the other hand, in \cite{Penington:2019npb,Almheiri:2019psf} it was shown that if we do take the effective description seriously at late times, as in figure \ref{evapbhmapfig}, then by applying the quantum extremal surface formula to $R$ we can (1) obtain a Page curve \cite{Page:1993wv} which is consistent with unitarity and (2) also reproduce the Hayden-Preskill scrambling argument and ``black holes as mirrors'' effect \cite{Hayden:2007cs}. Inspired by these successes, in this paper we take seriously the idea that the emergence of the black hole interior is described by a linear but non-isometric holographic map $V$ from the effective description on a nice slice to the fundamental black hole Hilbert space as in figure \ref{evapbhmapfig}.
The essence of our proposal is the following:\footnote{The observation that \eqref{oldbh} poses an obstruction to an encoding of the interior degrees of freedom on a nice slice into the microstates of a black hole, and in particular the idea that there might be some ``invalid" states in such a map, has a long history.  See \cite{Susskind:1993if,Kiem:1995iy,Jafferis:2017tiu,Hayden:2018khn,Almheiri:2018xdw,Penington:2019kki,Marolf:2020xie},\cite{Akers:2021fut} for a sampling of places where this has been discussed.  Our main contributions here are 1) to emphasize the crucial role of complexity theory in making a non-isometric encoding work and 2) to give models which illustrate it in detail and show how it relates to other ideas about quantum black holes.}
\begin{center}
\textbf{There is a large set of ``null states'' in the Hilbert space of effective field theory inside a black hole, each of which is annihilated by the holographic map to the fundamental degrees of freedom.  This however cannot be detected by any observer who does not perform an operation of exponential complexity.}
\end{center}
Here ``exponential'' means ``exponential in the black hole entropy'', or equivalently ``exponential in $\log|B|$''.
The idea that computational complexity may give limits on the validity of gravitational effective field theory was introduced in \cite{Harlow:2013tf}, and explored further in \cite{Susskind:2014rva,Susskind:2015toa,Engelhardt:2017aux,Engelhardt:2018kcs,Brown:2019rox,Bouland:2019pvu,Kim:2020cds,Engelhardt:2021mue,Engelhardt:2021qjs}.  Our goal in this paper is to turn it into precise mathematics, giving concrete examples of non-isometric codes which realize it in a setting where many explicit computations are possible.  Indeed we will see that our models achieve the following: 
\bi
\item Our non-isometric map $V\otimes I_{R}$ preserves to exponential precision the inner product between all states of sub-exponential complexity in the effective description.  In other words no failure of isometry can be detected without doing something exponentially complex.
\item The entropy of the radiation can be computed either directly in the fundamental description or using the QES formula in the effective description, with ``islands'' appearing where needed to ensure consistency with the purity of the total state in the fundamental description.
\item There are formulas in the fundamental description for the probabilities of measurement outcomes and the encoded post-measurement states for any sub- exponential observable in the effective description, including those in the black hole interior, and these agree (starting in sub-exponential states) with the usual rules of quantum mechanics in the effective description up to exponentially small errors.
\item Observables of sub-exponential complexity in the effective description can be reconstructed in the fundamental description in accordance with the rules of entanglement wedge reconstruction, so in particular at late times sub-exponential observables in the black hole interior can be reconstructed on the radiation system alone. 
\item Coarse-grained observables in the fundamental description can be given a geometric intrepretation in terms of the ``outermost wedge'' of Engelhardt and Wall \cite{Engelhardt:2017aux,Engelhardt:2018kcs}, and in particular the calculation of the radiation entropy which follows from Hawking's formalism can be given a fundamental interpretation as the ``simple entropy'', which is coarse-grained over observables of exponential complexity.
\ei

\paragraph{A Simple Model:} To give an illustration of some of these results, we here present a simplified version of one of our models, which suppresses the left moving modes $\ell$.  Motivated by the chaotic nature of black hole dynamics, we define the holographic map $V:\Hr\to\HB$ by
\be
V|n\ran_r\equiv \frac{1}{\sqrt{|B|}}\sum_b e^{i\theta(n,b)}|b\ran_B,
\ee
where $|n\ran_r$ and $|b\ran_B$ are bases for $r$ and $B$ and $e^{i\theta(n,b)}$ are a bunch of randomly chosen phases.  When $|B|\gg |r|$ this map is an approximate isometry, while it clearly can't be isometric when $|B|\ll |r|$.  Nonetheless we can easily see that $V$ approximately preserves the inner products of all basis states:
\begin{align}\nonumber
\lan n'|V^\dagger V|n\ran&=\frac{1}{|B|}\sum_b e^{i\theta(n,b)-i\theta(n',b)}\\
&=\begin{cases} 1 & n=n'\\ O(1/\sqrt{|B|}) & n\neq n' \end{cases}.
\end{align}
The second line follows because when $n\neq n'$ we can think of the sum over phases as a random walk, which will typically generate something of order $\sqrt{|B|}$.  In particular this argument works in the regime where $|B|\ll |r|$, at least provided that we do not take $|r|$ to be exponentially large in $|B|$ (in that case the random walk can occasionally give a bigger inner product).  Perhaps surprisingly, we can have far more approximately orthogonal states in a Hilbert space than its dimensionality would naively suggest.  

We can also illustrate the QES formula in this model.  We can model Hawking's picture of the system in the effective description as an entangled state 
\be
|\psi_{\rm Hawk}\ran=\sum_n \sqrt{D_n}|n\ran_r|n\ran_R, 
\ee
where the $D_n$ are non-negative and sum to one.  The encoded state is
\be
(V\otimes I_R)|\psi_{\rm Hawk}\ran=\frac{1}{\sqrt{|B|}}\sum_{n,b}\sqrt{D_n}e^{i\theta(n,b)}|b\ran_B|n\ran_R,
\ee
so the reduced state on the radiation is
\be
\rho_R=\frac{1}{\sqrt{|B|}}\sum_{n,n',b}\sqrt{D_n D_{n'}}e^{i\theta(n,b)-i\theta(n',b)}|n'\ran\lan n|_R.
\ee
The second Renyi entropy is of the radiation is given by
\begin{align}\nonumber
e^{-S_2(\rho_R)}&=\frac{1}{|B|^2}\sum_{n,n',b,b'}D_n D_{n'} e^{i\Big(\theta(n,b)-\theta(n',b)+\theta(n',b')-\theta(n,b')\Big)}\\
&\approx \sum_n D_n^2+\frac{1}{|B|},\label{phaseS2}
\end{align}
where in the second line we have used that the sum is dominated either by the terms where $n=n'$ or the terms where $b=b'$, since these are the only terms where the phases cancel.\footnote{In this paper we intentionally avoid any use or discussion of Euclidean quantum gravity until the final section, but we note in passing that the second term in the second line of \eqref{phaseS2} is precisely the contribution to the second Renyi which arises from a ``replica wormhole'' in the Euclidean approach \cite{Penington:2019kki,Almheiri:2019qdq}. We here have obtained it directly in the Hilbert space formalism.}  We thus have
\be\label{S2intro}
S_2(\rho_R)\approx \min\Big[S_2((\psi_{\rm Hawk})_R),\log |B|\Big],
\ee
which is just what is expected from the QES formula (up to this being a Renyi entropy instead of a von Neumann entropy, which we will fix below).  Indeed the entropy is the minimum of the entropy in Hawking's description and the black hole entropy, just as is needed to respect the purity of the state.   The non-isometric nature of $V$ is crucial for this result: if $V$ were an isometry then we would have
\be
\tr_B\Big((V\otimes I_R)|\psi_{\rm Hawk}\ran\lan \psi_{\rm Hawk}|(V^\dagger\otimes I_R)\Big)=\left(\psi_{\rm Hawk}\right)_R,
\ee
so the radiation entropy would have to agree in the effective and fundamental descriptions.  The non-isometry of $V$ thus lies at the heart of the QES calculations of the Page curve in \cite{Penington:2019npb,Almheiri:2019psf}.

The idea of realizing the black hole interior using a non-isometric code protected by computational complexity has important implications for the black hole information problem.  The key ingredients of the problem as formulated by Hawking are the following:
\bi
\item[(1)] A finite black hole entropy with a state-counting interpretation.
\item[(2)] A unitary black hole S-matrix
\item[(3)] A black hole interior which is described to a good approximation by gravitational effective field theory, including the entanglement between the outgoing interior and exterior modes $r$ and $R$.
\ei
Hawking argued that one cannot have all three of these things in the same theory.  The picture we have been discussing so far explains how (1) and (3) can be reconciled, provided that we only demand that effective field theory is valid for sub-exponential observables.  What about (2)?  We gave some indirect evidence for this via the match \eqref{S2intro} to the QES formula, but can we realize it directly?  Indeed we can, but to do so we need to add dynamics to the story.  

To formulate a dynamical holographic map, we first need to acknowledge that at different times the length of the interior, and thus the number of degrees of freedom in $\ell$ and $r$, is different, as is the size of the black hole, and thus the number of degrees of freedom in $|B|$.  We therefore have a sequence of holographic maps
\be
V_t:\sH_{\ell_t}\otimes \sH_{r_t}\to \sH_{B_t},
\ee
which can be combined into one big map $V$ on the direct sum of interiors of all sizes mapping into the set of black holes of all sizes.  At early times $V_t$ should be close to an isometry, but at late times (i.e. after the Page time) it necessarily becomes highly non-isometric.  We then need a dynamical rule that tells us how to unitarily evolve in the fundamental description from one time slice to the next, and this rule needs to be compatible with the holographic map in the sense the we can either evolve then encode or encode then evolve (this is called equivariance).  We are indeed able to construct a family of models obeying these expectations (the basic idea is shown in figure \ref{dynevapfig} for readers who want to see it now):
\bi
\item Time evolution is exactly unitary in both the fundamental and effective descriptions, with no nonlocal interactions. In the fundamental description, the black hole degrees of freedom have $k$-local, all-to-all couplings. In the effective description, time evolution consists of Hawking pair production for the outgoing modes, while ingoing modes simply fall into the black hole.
\item There is a nonlocal holographic map $V_t$ which maps the effective description into the fundamental description at each time $t$, preserving all sub-exponential observables, and this map is indeed equivariant with respect to time evolution.
\item The entropy of the Hawking radiation follows the Page curve as a function of time, and information escapes the interior according to the Hayden-Preskill decoding criterion.
\ei

Our plan for the rest of the paper is the following: in section \ref{modelsec} we introduce our main model, and also a simplified model slightly generalizing the one described above that can illustrate many of the same features with simpler calculations.  In section \ref{complexitysec} we introduce ideas from the theory of measure concentration and use them to establish the ability of our model to hide its non-isometric nature from any sub-exponential observer.  In section \ref{pagesec} we compute entropies in the fundamental description, finding in all cases that they are compatible with the QES formula in the effective description.  In section \ref{reconstructionsec} we demonstrate how entanglement wedge reconstruction works in our main model, making use of recent ideas on ``state-specific reconstruction'' from \cite{Akers:2021fut} (see also~\cite{Hayden:2018khn}).  In section \ref{measurementsec} we combine the results of the previous sections to develop a measurement theory for observers in the black hole interior.  In section \ref{dynamicsec} we present the dynamical version of our model, showing how the holographic maps at different times are related by unitary time evolution in the fundamental description.  In section \ref{coarsesec} we explore the relation between our model and complexity coarse-graining of \cite{Engelhardt:2017aux,Engelhardt:2018kcs} and the ``Python's lunch'' conjecture of \cite{Brown:2019rox}. In section \ref{adsonesec} we explain how our ideas can be applied to non-evaporating one-sided AdS black holes.  In section \ref{multisec} we explain how they can be applied to situations with more than one black hole.  Finally in section \ref{discussion} we explain in more detail how previous ideas such as Euclidean quantum gravity, the Horowitz/Maldacena black hole final state proposal \cite{Horowitz:2003he}, the Papadodimas-Raju construction \cite{Papadodimas:2013jku}, and the ``ghost operators'' of \cite{Kim:2020cds} relate to our models.  We also give an ``FAQ'' for black hole information experts who want to quickly see our answers to a list of standard questions about the quantum mechanics of black holes.  In the appendices we develop a number of technical tools and results for use in the main text. In particular we give an overview of unitary integration using Weingarten functions and a detailed introduction to the remarkable phenomenon of measure concentration.

\subsection{Notation}
We here establish some conventions and notation.  This can either be read now or referred back to as needed.  Quantum systems are labelled by both upper case and lower case letters, i.e. $A,B,a,b,\ldots$, with their associated Hilbert spaces indicated by $\sH_A,\sH_B,\ldots$.  The dimensionalities of $\sH_A,\sH_B,\ldots$ are written as $|A|, |B|,\ldots$. We will for the most part only consider finite-dimensional Hilbert spaces, and when in doubt this can be assumed.  In any Hilbert space $\sH$ we denote the vector norm of $|\psi\ran\in \sH$ as
\be
|||\psi\ran||=\sqrt{\lan \psi|\psi\ran}.
\ee
We use the term ``state'' to refer both to normalized vectors $|\psi\ran$ obeying $|||\psi\ran||=1$ and to operators $\rho$ on $\sH$ which are positive semi-definite and obey $\tr \rho=1$, with the term ``pure state'' referring to either the former or the rank-one case of the latter.  We will \textit{never} refer to vectors or positive semi-definite operators which are not normalized as states, despite the fact that we often act on states with maps that are not norm/trace preserving.

For any linear operator $X:\sH_A\to\sH_B$, the Schatten $p$-norm is defined for $p\in [1,\infty]$ by
\be
||X||_p\equiv \left(\tr\left((X^\dagger X)^{p/2}\right)\right)^{1/p}.
\ee
Particularly interesting are the trace norm $||X||_1$, which measures the distinguishability of quantum states, the Frobenius norm $||X||_2$, which is the natural norm associated to the inner product
\be
\lan X,Y\ran=\tr (X^\dagger Y),
\ee
and the operator norm 
\be
||X||_\infty=\max_{|||\psi\ran||=1}||X|\psi\ran||=\max_{|||\psi\ran,|||\phi\ran||=1}  |\lan \phi|X|\psi\ran|.
\ee
For any $p\in [1,\infty]$ the Schatten norm is indeed a norm in the sense of being linear (meaning that  $||\lambda X||_p=|\lambda|\,||X||_p$ for all $\lambda\in\mathbb{C}$), positive semi-definite,  vanishing if and only if $X=0$, and obeying the triangle inequality 
\be
||X+Y||_p\leq ||X||_p+||Y||_p.
\ee
The last is nontrivial to prove: it is a consequence of the operator version of H\"older's inequality, which says that for any $p,q\in [1,\infty]$ obeying $\frac{1}{p}+\frac{1}{q}=1$ and any operator $X$ we have
\be\label{Holder}
||X||_p=\max_{||Y||_q=1}|\tr(Y^\dagger X)|.
\ee
The proof of this follows from combining the classical H\"older inequality with von Neumann's trace inequality $\tr(AB)\leq \sum_i \sigma_i(A)\sigma_i(B)$, where $\sigma_i(X)$ are the singular values of $X$ listed in non-increasing order.  The Schatten norms of different $p$ are related by the fact that for all $1\leq p \leq q \leq \infty$ and all $X:\sH_A\to\sH_B$, we have
\be \label{normineq}
||X||_q\leq ||X||_p\leq \mathrm{rank}(X)^{\frac{1}{p}-\frac{1}{q}}||X||_q.
\ee
There is also a useful triple inequality
\be\label{triplebound}
||XYZ||_p\leq ||X||_\infty||Y||_p||Z||_\infty, 
\ee
from which (together with \eqref{normineq} we immediately see that the Schatten norm is submultiplicative:
\be
||XY||_p\leq ||X||_p||Y||_p.
\ee

We often integrate over the unitary group using the invariant Haar measure, which we denote $dU$ and normalize so that $\int dU=1$.  When discussing probability measures (such as the Haar measure) we often use the notation $\mathrm{Pr}[A]$, which means the probability that the statement $A$ is true.  For example we make frequent use of the union bound $\mathrm{Pr}[A\cup B]\leq \mathrm{Pr}[A]+\mathrm{Pr}[B]$ and also the fact that if $A\implies B$, then $\mathrm{Pr}[B]\leq \mathrm{Pr}[A]$.

We sometimes compare our results to those which can be obtained from quantum extremal surface methods, so we here recall some basic definitions related to quantum extremal surfaces.  We'll begin in AdS/CFT.  Given any boundary subregion $A$, a codimension-two surface $X_A$ in the bulk is \textit{homologous} to $A$ if there exists a \textit{homology hypersurface} $H_A$, meaning a codimension one surface such that $\partial H_A=X_A\cup A$.  The bulk domain of dependence of any such $H_A$ is called the \textit{outer wedge} of $X_A$.  $X_A$ is called a \textit{quantum extremal surface} (QES) for $A$ if it is homologous to $A$ and also extremizes the generalized entropy
\be\label{Sgen}
S_{\rm gen}[X_A]\equiv \frac{\mathrm{Area}(X_A)}{4}+S(\rho_{H_A}),
\ee
where $\rho_{H_A}$ is the state of the bulk fields on $H_A$ and the variations are restricted to preserve the homology constraint.  The \textit{quantum extremal surface formula} says that the von Neumann entropy of the CFT state on $A$ is equal to the generalized entropy of whichever QES homologous to $A$ has smallest generalized entropy:
\be
S(\rho_A)=S_{\rm gen}[X_A^{\rm min}].
\ee
The outer wedge of $X_A^{\rm min}$ is called the \textit{entanglement wedge}.  

In recent times it has been understood that the QES machinery can be used beyond AdS/CFT \cite{Penington:2019npb,Almheiri:2019psf,Almheiri:2019hni}.  The full range of validity is not yet known, but one situation which is now well-understood is when we have a holographic CFT $B$ coupled to a non-gravitational reference system $R$.  We can then ask for a way of computing the von Neumann entropy in the fundamental description of a region $A=A_B\cup A_R$, where $A_B\subset B$ and $A_R\subset R$.  The key is to find a generalization of the homology constraint to this situation.  The rule which seems to work is the following: we look for a surface $X_A$ in the gravitational part of the effective description such that we can find a hypersurface $\hat{H}_{A}$, also in the gravitational region, such that $\partial \hat{H}_A=A_B\cup X_A$. We then define the ``true'' homology hypersurface to be $H_A=\hat{H}_A\cup A_R$, and the outer wedge is now the domain of dependence of $H_A$.  The generalized entropy is given by \eqref{Sgen} as before, and $X_A$ is a QES if it is homologous to $A$ in this more general sense and also extremizes $S_{\rm gen}$.  The von Neumann entropy of $\rho_A$ is again computed by the generalized entropy of $X_A^{\rm min}$.  These definitions will be sufficient for our purposes, but it would be interesting to understand how they generalize to the situation where $R$ is weakly gravitating.  

\section{A model of the interior}\label{modelsec}
\bfig
\includegraphics[height=5cm]{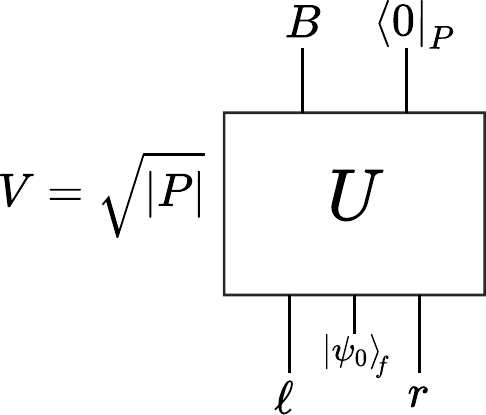}
\caption{Our model holographic map $V$: a Haar-typical unitary $U$ acts on the effective field theory degrees of freedom $\ell$ and $r$ in the black hole interior, together with some fixed additional state $|\psi_0\ran_f$, and then post-selects on some auxilliary degrees of freedom $P$, resulting in a state of the fundamental degrees of freedom $B$.}\label{Vdeffig}
\efig
We would like to define a map
\be
V:\Hl\otimes \Hr\to \mathcal{H}_B,
\ee
where $\Hl$ and $\Hr$ are the Hilbert spaces of left/right-moving modes in the black hole interior and $\HB$ is the fundamental Hilbert space of black hole microstates as in figure \ref{evapbhmapfig}. Our proposal first considers a larger Hilbert space $\Hl\otimes \Hf\otimes \Hr$, where $f$ are some extra degrees of freedom such that
\be
|f||\ell||r|=|P||B|
\ee 
for some positive integer $|P|$, then introduces an alternative tensor decomposition
\be
\Hl\otimes\Hf\otimes \Hr=\HB\otimes\HP,
\ee
and then defines 
\be\label{Vdef}
V\equiv \sqrt{|P|}\, \lan 0|_P U |\psi_0\ran_f.
\ee
Here $|\psi_0\ran_f$ is some fixed state on $\Hf$, $|0\ran_P$ is some fixed state on $\HP$, and $U$ is a typical sample from the Haar measure on unitary transformations of $\Hl\otimes\Hf\otimes\Hr$.  The role of the prefactor $\sqrt{|P|}$ will be clarified shortly. We can think of $|\psi_0\ran_f$ as keeping track of any extra effective field theory degrees of freedom that we are not interested in varying, e.g. short-distance modes that we integrated out or modes just outside of the black hole, although this interpretation will not be crucial in what follows.  We illustrate the map $V$ in figure \ref{Vdeffig}.  The main novelty is the postselection on $|0\ran_P$, which prevents $V$ from being an isometry from $\ell r$ to $B$.

Why should $U$ be Haar random?  One reason is that this facilitates computation, but a more principled reason is that on general grounds we expect the holographic map from the black hole interior to the fundamental description to be rather complicated and this is most easily achieved by using randomness.  That said, we should acknowledge that unitaries drawn from the Haar measure are surely ``too random'': they do not capture many important features of gravity such as Lorentz invariance, the dimensionality of spacetime, etc. They have nonetheless been able to qualitatively reproduce many of the key aspects of quantum black hole physics \cite{Page:1993df, Hayden:2007cs}, and  we will see that this continues to be the case here. 

We emphasize that we are interested in a single sample from the Haar measure; we are \textit{not} viewing $U$ as a random variable in the fundamental description. In what follows we will use averages over $U$ as a way to diagnose what happens in a typical sample, but these averaged results only give a good picture of the typical case for quantities whose fluctuations are small.  Quantities of this type are sometimes called self-averaging, see section 7 of \cite{Penington:2019kki} for more discussion of this idea.

\bfig
\includegraphics[height=7cm]{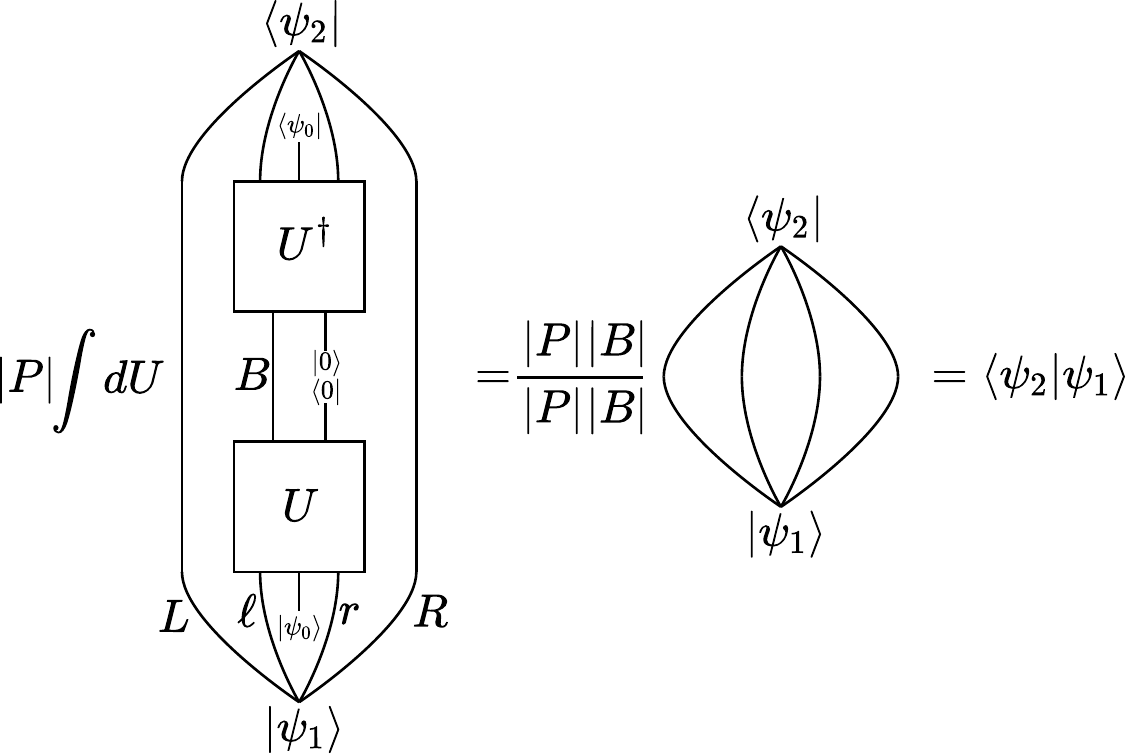}
\caption{Computing the typical overlap between $V|\psi_1\ran$ and $V|\psi_2\ran$.  By equation \eqref{Uresults} the unitary integral connects the indices going into the top/bottom of $U$ to those going into the bottom/top of $U^\dagger$, and then divides by the total dimensionality $|P||B|$.  This factor of the dimensionality is cancelled by the factor of $|P|$ in $V^\dagger V$ and a factor of $|B|$ coming from a $B$ index loop.}\label{overlapfig}
\efig
A key property of the map \eqref{Vdef}  is that on average it preserves the inner product between any two states $|\psi_1\ran,|\psi_2\ran\in\Hl\otimes\Hr$:
\be
\int dU \lan \psi_2|V^\dagger V|\psi_1\ran=\lan \psi_2|\psi_1\ran.
\ee
Moreover if we introduce a reference system $L$ with $|L|=|\ell|$ and a reservoir $R$ with $|R|=|r|$,\footnote{Mathematically $L$ and $R$ are similar, but physically they are quite different: $R$ is the reservoir into which the black hole evaporates, while $L$ is a potentially unphysical auxiliary system that we introduce to purify mixed states on $\ell$.} and take $|\psi_1\ran,|\psi_2\ran\in\HL\otimes\Hl\otimes\Hr\otimes \HR$, then we have:
\be\label{overlapav}
\int dU \lan \psi_2|(V^\dagger V\otimes I_{LR})|\psi_1\ran=\lan \psi_2|\psi_1\ran.
\ee
This calculation uses standard Haar integration technology which we review in appendix \ref{Uapp}; we illustrate it in figure \ref{overlapfig}.  

\bfig
\includegraphics[height=6.5cm]{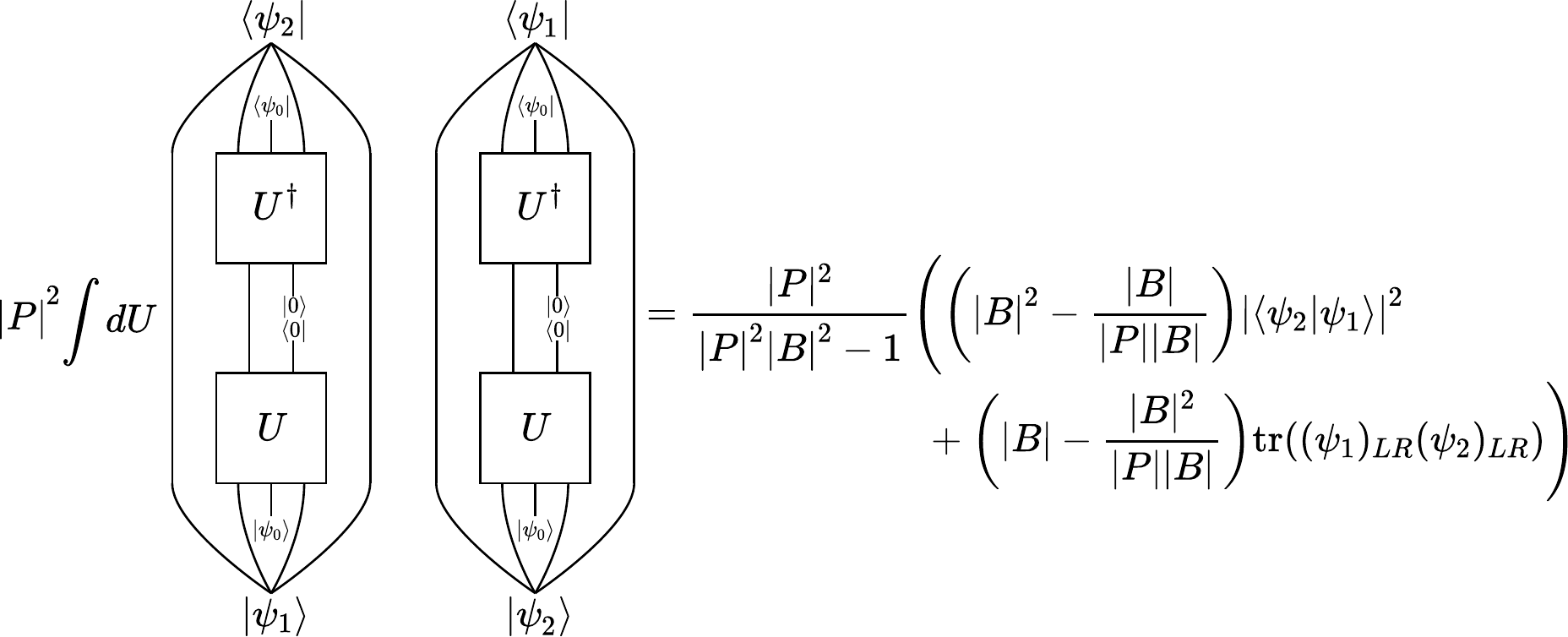}
\caption{Computing the nontrivial term in the fluctuation of the overlap.  The integral connects the indices going into the top/bottom of each $U$ to the indices going into the bottom/top of a $U^\dagger$, but now there are four ways to do these connections: two for the $\ell fr$ indices and two for the $BP$ indices, with the different choices are weighted as in \eqref{Uresults}.}\label{overlapflucfig}
\efig
It may seem surprising that a linear map which is not an isometry can on average preserve the inner product.  We can clarify the situation by computing the fluctuation in the inner product:
\begin{align}\label{overlapfluc}
\int dU \Big|\lan \psi_2|(V^\dagger V\otimes I_{LR})|\psi_1\ran-\lan \psi_2|\psi_1\ran\Big|^2=\frac{|P|-1}{|P|^2|B|^2-1}\Bigg[|P||B|\tr\big((\psi_1)_{LR}(\psi_2)_{LR}\big)-|\lan \psi_2|\psi_1\ran|^2\Bigg].
\end{align}
Here $(\psi_1)_{LR}$, $(\psi_2)_{LR}$ are the reduced states of $|\psi_1\ran$, $|\psi_2\ran$ on $LR$. The cross terms on the left hand side can be computed using \eqref{overlapav}, and the calculation of the nontrivial term is shown in figure \ref{overlapflucfig}. The right-hand side of \eqref{overlapfluc} vanishes when $|P|=1$, as it must since in this case there is no post-selection and $V$ is an isometry.  Otherwise it is nonzero, confirming that $V$ indeed is not in general an isometry. For any states $|\psi_1\ran,|\psi_2\ran$ we have $0\leq\tr\big((\psi_1)_{LR}(\psi_2)_{LR}\big)\leq 1$, so using $\frac{1}{x-1}\leq \frac{2}{x}$ for $x\geq 2$ we have
\be
\int dU \Big|\lan \psi_2|(V^\dagger V\otimes I_{LR})|\psi_1\ran-\lan \psi_2|\psi_1\ran\Big|^2\leq \frac{2}{|B|}\label{flucbound}
\ee
for $|B|>1$.  Since $|B|$ is exponential in the black hole entropy, this means that, although $V$ is not an isometry, the fluctuations in the overlap of $(V\otimes I_{LR})|\psi_1\ran$ and $(V\otimes I_{LR})|\psi_2\ran$ away from that of $|\psi_1\ran$ and $|\psi_2\ran$ are exponentially small in the entropy: $V$ almost always preserves the inner product, even for an ``old'' black hole where  
\be
1\ll|B|\ll |\ell||r|.
\ee

How can \eqref{flucbound} be compatible with the fact that when $|B|\ll |\ell||r|$ there are necessarily a large number of states which are annihilated by $V$?  A first observation is that in the derivation of \eqref{flucbound},  we had assumed that the states $|\psi_1\ran$, $|\psi_2\ran$ were independent of $U$.  Had we allowed them to depend on $U$, as the set of null states surely does, we would have gotten a different answer than \eqref{flucbound}.  From a physics point of view this says the following: states in the effective description that are prepared by an observer who does not know the holographic map are very unlikely to be states that can detect the failure of $V$ to be an isometry.  In particular one might hope that for a typical $U$ this should be true for any states of sufficiently low circuit complexity: in the following section we will show that this is indeed the case.

We can develop some more intuition for \eqref{flucbound} by introducing a simplified model to which we will refer as the phase model.  To motivate it, note that if $|i\ran_\ell$ and $|n\ran_r$ are complete orthonormal bases for $\Hl$ and $\Hr$ and $|b\ran_B$ is a complete basis for $\HB$, then we have
\be
V|in\ran_{\ell r}=\sqrt{|P|}\sum_b |b\ran_B\lan b|_B\lan 0|_P U|\psi_0\ran_f|in\ran_{\ell r}.
\ee
We can think of each matrix element in the sum as a random overlap between two states in $\HB\otimes \HP$, which is typically of order $\frac{1}{\sqrt{|P||B|}}$ times a random phase.  We define the phase model by simply taking this to define the holographic map:\footnote{In the introduction we already discussed an even simpler version of this model which ignored the left movers $\ell$, we now put them back.} 
\be\label{phasedef}
V_{phase}|in\ran_{\ell r}\equiv \frac{1}{\sqrt{|B|}}\sum_be^{i\theta(i,n;b)}|b\ran_B,
\ee
where the phases $e^{i\theta(i,n;b)}$ are given by a single sample from $|\ell||r||B|$ independent copies of the uniform distribution on $U(1)$.  We then have
\begin{align}\nonumber
\lan in|V^\dagger_{phase}V_{phase}|jm\ran&=\frac{1}{|B|}\sum_b e^{i\theta(j,m;b)-i\theta(i,n;b)}\\
&=\begin{cases} 1 & (i,n)=(j,m)\\ O\left(\frac{1}{\sqrt{|B|}}\right) & (i,n)\neq (j,m)\end{cases},\label{phaseoverlap}
\end{align}
where the second line follows because adding up $|B|$ independent phases at large $|B|$ typically gives something of order $\sqrt{|B|}$.  \eqref{phaseoverlap} is nicely compatible with \eqref{flucbound}, and we emphasize again that they both hold even in the regime where $1\ll|B|\ll |\ell||r|$.\footnote{On the other hand if we have $|\ell||r|\gg e^{|B|}$, then purely by unlucky chance there will be occasional pairs of orthogonal basis states in $\Hl \otimes \Hr$ whose images through $V$ have an inner product which is $O(1)$.  We comment further on this rather extreme regime in the following section.}  It may not be possible to injectively map an orthonormal basis of a larger Hilbert space to an orthonormal basis of a smaller one, but we can do it up to errors that are exponentially small. 

We close this section by observing that the particular structure of our holographic map $V$ defined by \eqref{Vdef}, that it is an isometry followed by a post-selection, is in fact completely general in the sense that any linear map $V$ can be put in this form.  Indeed given any $V: \mathcal{H}_{\ell} \otimes \mathcal{H}_r \to \mathcal{H}_B$ of finite operator norm $\lVert V \rVert_\infty$ we can define $\hat{V} \equiv V / \lVert V \rVert_\infty$ and
\begin{equation}
    W \equiv \ket{0}_P \otimes \hat{V} + \ket{1}_P \otimes \sqrt{I - \hat{V}^\dagger \hat{V}}~,
\end{equation}
where $\mathcal{H}_P$ is an ancillary qubit.
$W$ satisfies $W^\dagger W = I$ and so is an isometry, so we can write it as $W = U \ket{0}_f$ for some unitary $U$ and ancillary system $f$.
We then have
\begin{equation}
    V = \lVert V \rVert_\infty \bra{0}_P U \ket{0}_f~.
\end{equation}

\section{Measure concentration and sub-exponential states}\label{complexitysec}
We have now seen that our non-isometric holographic map $V$ defined by \eqref{Vdef} has the property that $V\otimes I_{LR}$ is likely to preserve the inner product between any particular pair of states in $\HL\otimes\Hl\otimes \Hr\otimes \HR$ up to exponentially small errors.  For the phase model \eqref{phasedef} we further saw that we can achieve this for a complete basis of orthonormal states, at least as long as $|\ell||r|\ll e^{|B|}$.  But what about superpositions of $O(1)$ numbers of these basis states?  Or more ambitiously, what about states that are obtained from basis states by acting with a polynomial quantum circuit?  In this section we will see that in fact $V\otimes I_{LR}$ is likely to preserve to exponential accuracy the overlaps between all states on $\Hall$ whose circuit complexities are sub-exponential. Our discussion is closely related to the ``Johnson-Lindenstrauss lemma'' of computer science, which says that for any set of $m$ points in $\mathbb{R}^N$ there is a linear map to $\mathbb{R}^n$ which approximately preserves their distances to accuracy $\epsilon$ provided that $n>8 \frac{\log m}{\epsilon^2}$.  This is not quite what we need however, as we want a statement on $\HL\otimes \Hl\otimes \Hr\otimes \HR$ but only want to allow $V$ to act on $\Hl\otimes \Hr$, so we find it more natural to develop things from scratch.  A preliminary version of these ideas was reported as theorem 5.1 in \cite{Akers:2021fut}.

\subsection{Measure concentration}
Our argument relies on the remarkable phenomenon of measure concentration, which is essentially the statement that for certain probability distributions on certain high-dimensional manifolds the fluctuations of any reasonable observable will be exponentially small in the dimensionality of the manifold. The classic example of this phenomenon is Levy's lemma, which says that this is true for real functions on the sphere:
\begin{lemma}
(Levy) Let $F:\mathbb{S}^d\to \mathbb{R}$ be $\kappa$-Lipschitz, meaning that $|F(x)-F(y)|\leq \kappa |x-y|$ for all $x,y\in \mathbb{S}^d$, with $|x-y|$ being the Euclidean (chordal) distance on $\mathbb{S}^d$.  Then in the uniform probability measure on $\mathbb{S}^d$ we have
\begin{align}\nonumber
\PR\Big[F\geq\lan F\ran +\epsilon \Big] &\leq \exp\left(-\frac{(d-1)\epsilon^2}{2\kappa^2}\right)\\\nonumber
\PR\Big[F\leq\lan F\ran-\epsilon \Big] &\leq \exp\left(-\frac{(d-1)\epsilon^2}{2\kappa^2}\right)\\
\PR\Big[|F-\lan F\ran|\geq\epsilon \Big] &\leq 2\exp\left(-\frac{(d-1)\epsilon^2}{2\kappa^2}\right)
\end{align}
where $\lan F\ran$ is the expectation value of $F$ and $\PR[\cdot]$ indicates the probability that ``$\cdot$'' is true.\label{Levy}
\end{lemma}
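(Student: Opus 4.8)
The plan is to derive Lévy's lemma from the logarithmic Sobolev inequality for the round sphere together with the Herbst argument. I prefer this route over Lévy's original argument via the spherical isoperimetric inequality because the log-Sobolev/Herbst chain produces concentration directly about the mean $\langle F\rangle$, with no spurious prefactor, which is exactly the form stated; the isoperimetric route naturally yields concentration about the \emph{median} (and with a constant out front), and one then has to pay to convert median to mean.

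First I would reduce to a pointwise gradient bound. Since on the unit sphere the chordal distance satisfies $|x-y|\le d_{\rm geo}(x,y)$, a $\kappa$-Lipschitz $F$ in the chordal metric is also $\kappa$-Lipschitz in the geodesic metric, hence (Rademacher) differentiable almost everywhere with $|\nabla F|\le \kappa$ a.e., $\nabla$ being the intrinsic gradient. To legitimately feed this into the log-Sobolev inequality I would first smooth $F$, e.g.\ by running the heat semigroup $P_t$ for a short time: on a compact manifold with nonnegative Ricci curvature the Bakry--Émery gradient estimate gives $\|\nabla P_tF\|_\infty\le\kappa$, while $P_tF$ is smooth, $\langle P_tF\rangle=\langle F\rangle$, and $P_tF\to F$ uniformly as $t\to0$; one proves the tail bound for $P_tF$ and lets $t\to0$. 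With this in hand, the one genuinely substantive input is: because the unit $d$-sphere has Ricci curvature $(d-1)g$, the curvature-dimension / Bakry--Émery criterion gives, for the uniform probability measure $\mu$ on $\mathbb{S}^d$ and all smooth $f$,
\[ \mathrm{Ent}_\mu(f^2)\ \le\ \frac{2}{d-1}\int_{\mathbb{S}^d}|\nabla f|^2\,d\mu . \]

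Now the Herbst argument: assume without loss of generality $\langle F\rangle=0$ and apply the inequality to $f=e^{\lambda F/2}$. Using $|\nabla f|^2=\tfrac{\lambda^2}{4}|\nabla F|^2 e^{\lambda F}\le\tfrac{\lambda^2\kappa^2}{4}e^{\lambda F}$ and writing $H(\lambda)=\log\langle e^{\lambda F}\rangle$, the log-Sobolev inequality becomes $\lambda H'(\lambda)-H(\lambda)\le \tfrac{\lambda^2\kappa^2}{2(d-1)}$, i.e.\ $\bigl(H(\lambda)/\lambda\bigr)'\le \kappa^2/\bigl(2(d-1)\bigr)$; integrating from $0$, where $H(\lambda)/\lambda\to\langle F\rangle=0$, yields the subgaussian moment bound $\langle e^{\lambda F}\rangle\le\exp\!\bigl(\lambda^2\kappa^2/(2(d-1))\bigr)$. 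Then Markov gives, for $\lambda>0$, $\PR[F\ge\epsilon]\le e^{-\lambda\epsilon}\langle e^{\lambda F}\rangle\le\exp\!\bigl(-\lambda\epsilon+\lambda^2\kappa^2/(2(d-1))\bigr)$, and optimizing at $\lambda=(d-1)\epsilon/\kappa^2$ gives the first inequality. Applying the same to $-F$ (also $\kappa$-Lipschitz) gives the lower tail, and a union bound gives the factor-of-two two-sided bound.

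The main obstacle is the log-Sobolev input with the sharp constant $1/(d-1)$; everything else is routine calculus. I would either invoke the Bakry--Émery $\Gamma_2$-calculus (the condition $CD(d-1,\infty)$, which holds since $\mathrm{Ric}_{\mathbb{S}^d}=(d-1)g$) or cite the known sharp spherical log-Sobolev inequality directly. A secondary technical nuisance is making the smoothing step rigorous on a curved compact manifold without inflating the Lipschitz constant, which is precisely what the heat-semigroup/Bakry--Émery gradient estimate handles cleanly; alternatively one can note that the log-Sobolev inequality extends from smooth functions to all Lipschitz (indeed $W^{1,2}$) functions by density.
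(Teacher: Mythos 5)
Your proposal is correct and follows essentially the same route as the paper's own proof in appendix B: the Bakry--\'Emery curvature criterion on $\mathbb{S}^d$ (Ricci $=(d-1)g$) gives the logarithmic Sobolev inequality with constant $C=1/(d-1)$, and the Herbst argument (exponential moment bound plus Markov and optimization in $\lambda$, then applying the result to $-F$ and a union bound) yields exactly the stated tails, with the same handling of the chordal-versus-geodesic Lipschitz constant. The only cosmetic difference is your smoothing via the heat semigroup gradient estimate, where the paper mollifies the Lipschitz function directly; both are standard and equivalent for this purpose.
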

Many people at first find this counter-intuitive.  One way of coming to terms with it is the following. We first observe that the median value $F_\text{med}$ of our function naturally cuts the sphere into two equal-volume halves, with $F > F_\text{med}$ and $F < F_\text{med}$ respectively. In a small neighbourhood of this cut, it follows from the Lipschitz-continuity of $F$ that $F \approx F_\text{med}$. Hence the probability of a large fluctuations in $F$ is bounded by the fractional volume of the sphere that is \emph{not} close to the cut.  Intuitively, the volume close to the cut is minimised, and hence the opportunity for fluctuations maximised, if the cut lies on an equator of the sphere, rather than oscillating in some complex way; see Figure 
\ref{fig:levysketch}.\footnote{Proving that this is indeed the case, a result known as the isoperimetric inequality, is the main technical step in one approach to proving Levy's lemma, although not the one we use in Appendix \ref{measureapp}.} But  the fractional volume of any ball of opening angle $\theta\in [0,\pi/2)$ on $\mathbb{S}^d$ is 
\be
\frac{\Omega_{d-1}}{\Omega_d}\int_0^\theta d\theta'(\sin \theta')^{d-1}\leq \frac{\Gamma((d+1)/2)}{\sqrt{\pi}\Gamma(d/2)}\theta (\sin \theta)^{d-1}\leq \sqrt{\frac{d}{2\pi}}\theta (\sin \theta)^{d-1},
\ee
which vanishes exponentially at large $d$ for any $\theta\in[0,\pi/2)$. We conclude that almost all of the volume of $\mathbb{S}^d$ is very close to the equator. It follows that the probability of a large fluctuation in $F$ is exponentially small, even in the ``worst-case scenario'' where the $F = F_\text{med}$ cut lies on an equator.
\bfig
\includegraphics[height=7cm]{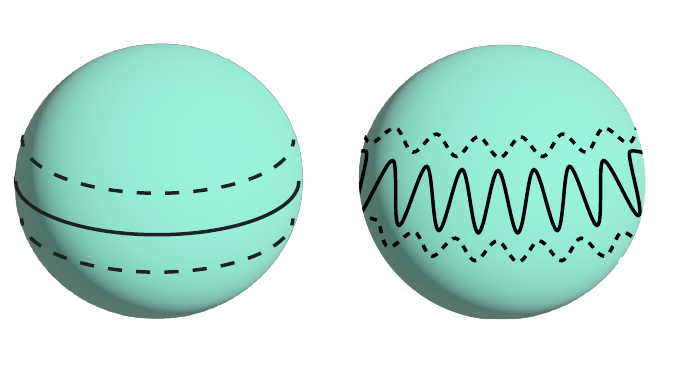} 
\caption{Given a cut $\gamma$ separating the sphere into two equal-volume halves, the volume of the small neighbourhood $\gamma_\varepsilon = \{x : \exists y \in \gamma . \,\,  ||x - y || \leq \varepsilon \}$ is minimized when $\gamma$ lies on an equator of the sphere (left) rather than oscillating in a complex way (right). Somewhat counterintuitively, in high dimensions the fractional volume of $\gamma_\varepsilon$ is exponentially close to one, even in the former case.}\label{fig:levysketch}
\efig

Theorems of this type are usually called deviation bounds.  For our work here we need an analogous deviation bound for functions on the unitary group $U(N)$ \cite{meckes2019random}:
\begin{lemma} (Meckes) Let $F:U(N)\to \mathbb{R}$ be $\kappa$-Lipschitz in the sense that $|F(U_1)-F(U_2)|\leq \kappa ||U_1-U_2||_2$, with $||X||_2\equiv\sqrt{\tr X^\dagger X}$.  Then in the Haar measure on $U(N)$ we have
\begin{align}\nonumber
\PR\Big[F\geq \lan F\ran+\epsilon\Big]&\leq \exp\left(-\frac{\epsilon^2N}{12\kappa^2}\right)\\\nonumber
\PR\Big[F\leq \lan F\ran-\epsilon\Big]&\leq \exp\left(-\frac{\epsilon^2N}{12\kappa^2}\right)\\
\PR\Big[|F-\lan F\ran |\geq\epsilon\Big]&\leq 2\exp\left(-\frac{\epsilon^2N}{12\kappa^2}\right).
\end{align}\label{Meckes}
\end{lemma}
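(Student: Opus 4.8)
This is a concentration-of-measure statement of the same flavour as Lemma \ref{Levy}, and the plan is to deduce it from the positive Ricci curvature of the unitary group via a Gromov--L\'evy (or, equivalently, Bakry--\'Emery/log-Sobolev) comparison argument; the fully explicit constant bookkeeping, and in particular the universal $12$, is the content of \cite{meckes2019random}, so I will only lay out the structure. The starting observation is that $U(N)$, equipped with the bi-invariant metric induced by the inner product $\langle X,Y\rangle=\tr(X^\dagger Y)$ on its Lie algebra (so that geodesic distance is comparable, up to an $O(1)$ factor, to the chordal distance $\|U_1-U_2\|_2$), is a compact homogeneous space with positive Ricci curvature in the directions transverse to its $U(1)$ center. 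Concretely, restricting to $SU(N)$ one reads off from the Killing form of $\mathfrak{su}(N)$ that, for the Frobenius metric $g$, $\mathrm{Ric}_{SU(N)}=\tfrac{N}{2}\,g$, so that the curvature-dimension bound $\mathrm{Ric}\ge (d-1)K\,g$ holds with $d=\dim SU(N)=N^2-1$ and $K$ of order $1/N$.

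Given this bound the argument runs in three standard steps. First, the L\'evy--Gromov isoperimetric comparison says that Haar measure on $SU(N)$ concentrates at least as sharply as the uniform measure on a round sphere $\mathbb{S}^{d}$ of radius $\sim\sqrt{N}$; feeding this into Lemma \ref{Levy} (or, alternatively, invoking the Bakry--\'Emery criterion to get a logarithmic Sobolev inequality with constant of order $1/N$) yields a sub-Gaussian control of $F$. Second, one passes to a bound on the Laplace transform: either directly from the sphere estimate, or by the Herbst argument from the log-Sobolev inequality, one obtains $\langle e^{\lambda(F-\langle F\rangle)}\rangle\le\exp\!\big(C\kappa^2\lambda^2/N\big)$ for all real $\lambda$ and an absolute constant $C$, whenever $F$ is $\kappa$-Lipschitz with respect to $\|\cdot\|_2$. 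Third, optimizing the Chernoff bound $\PR[F\ge\langle F\rangle+\epsilon]\le\inf_{\lambda>0}e^{-\lambda\epsilon}\langle e^{\lambda(F-\langle F\rangle)}\rangle$ over $\lambda$ gives the one-sided tail, which after choosing the constants pessimistically collapses to $\exp(-\epsilon^2 N/(12\kappa^2))$; the second one-sided bound follows by applying the same reasoning to $-F$, and the two-sided bound is then just the union bound $\PR[A\cup B]\le\PR[A]+\PR[B]$.

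The main obstacle is not the curvature computation or the Herbst/Chernoff machinery --- both are routine --- but the $U(1)$ center of $U(N)$. The central circle $\{e^{i\alpha}I\}$ is a totally geodesic \emph{flat} direction of Frobenius-diameter of order $\sqrt{N}$, and Haar measure on $U(N)$ projects under $\det$ to the uniform measure on $U(1)$, which does not concentrate at all. Hence the Ricci argument cannot be applied to $U(N)$ directly, since $\mathrm{Ric}$ has a null eigenvalue. The fix is to realize $U(N)$ as a quotient of $SU(N)\times U(1)$, establish the sub-Gaussian Laplace-transform bound on the $SU(N)$ factor as above, control the circle factor by its diameter alone (a $\kappa$-Lipschitz function restricted to a set of diameter $D$ has range at most $\kappa D$, which still gives a crude sub-Gaussian-type estimate with variance proxy $\sim\kappa^2 N$), and then combine the two by the tensorization property of sub-Gaussian bounds. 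This is precisely why the exponent saturates at $\epsilon^2 N/(12\kappa^2)$ and cannot be improved to grow faster than $N$: the extremal Lipschitz function --- one depending only on $\det U$ --- genuinely fails to concentrate any better, and the proof must accommodate it. Sorting out which constant survives this combination, and verifying it is no worse than $12$, is where the real (if unglamorous) work lies.
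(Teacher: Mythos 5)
Your skeleton (Ricci curvature of $SU(N)$ of order $N$, a log-Sobolev inequality via Bakry--\'Emery, Herbst, Chernoff, then the union bound) is the same route the paper takes in appendix \ref{measureapp}, and that part is fine. The gap is in your treatment of the $U(1)$ factor, and it is fatal to the stated bound. You propose to ``control the circle factor by its diameter alone,'' i.e.\ to use that a $\kappa$-Lipschitz function on a set of Hilbert--Schmidt diameter $D\sim\sqrt{N}$ has range $\lesssim \kappa\sqrt{N}$, giving a sub-Gaussian estimate with variance proxy $\sim\kappa^2 N$, and then to tensorize this with the $SU(N)$ bound (variance proxy $\sim\kappa^2/N$). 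But sub-Gaussian variance proxies add under tensorization, so the circle term dominates and your combined bound is $\exp\!\big(-c\,\epsilon^2/(N\kappa^2)\big)$ --- weaker than the claimed $\exp\!\big(-\epsilon^2 N/(12\kappa^2)\big)$ by a factor of $N^2$ in the exponent. Your accompanying heuristic is also off: a function of $\det U$ alone that is $\kappa$-Lipschitz in $\|\cdot\|_2$ has fluctuations of order $\kappa/\sqrt{N}$ (the map $U\mapsto\det U$ is $\sqrt{N}$-Lipschitz), so such functions are perfectly consistent with the $\epsilon^2 N/\kappa^2$ exponent; they do not force the weaker scaling your construction produces.

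The missing idea is that the quotient structure lets you shrink the circle before you do anything else. Since $SU(N)$ contains the center $\mathbb{Z}_N$ of scalar matrices, every $U\in U(N)$ can be written uniquely as $U=e^{i\theta}V$ with $V\in SU(N)$ and $\theta\in[0,2\pi/N)$, and Haar measure on $U(N)$ is the product of the uniform measure on this short interval with Haar measure on $SU(N)$. In the Hilbert--Schmidt metric the phase direction then has extent of order $\sqrt{N}\cdot(2\pi/N)=O(N^{-1/2})$, not $\sqrt{N}$. The paper derives a log-Sobolev inequality for the uniform measure on an interval of length $a$ with constant $a^2/\pi^2$, rescales $\theta=\sqrt{2/N}\,t$ so that this constant becomes $2/N$, matching the $SU(N)$ constant coming from $\mathrm{Ric}=\tfrac{N}{2}g$, and then uses tensorization of \emph{log-Sobolev inequalities} (constant $=\max$ of the two, i.e.\ still $2/N$) rather than of crude range bounds. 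Finally one checks that a $\kappa$-Lipschitz function on $U(N)$ is $\sqrt{3}\kappa$-Lipschitz on the product $[0,\pi\sqrt{2/N}]\times SU(N)$, and Herbst gives the exponent $\epsilon^2/\big(2\cdot\tfrac{2}{N}\cdot 3\kappa^2\big)=\epsilon^2N/(12\kappa^2)$; this is exactly where the constant $12$ comes from. Without the $\mathbb{Z}_N$ reduction of the circle, no amount of constant bookkeeping rescues your argument.
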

The machinery that goes into proving deviation bounds of this type is somewhat involved; we give a self-contained introduction in appendix \ref{measureapp} which includes proofs of both of these lemmas.  The key fact is that a sufficient condition for such concentration results to hold on a $d$-dimensional Riemannian manifold with probability distribution
\be
d\mu=e^{-\Phi(x)}\sqrt{g}d^d x
\ee
is that we have
\be\label{baksuff}
R_{\mu\nu}+\nabla_{\mu}\nabla_\nu \Phi\geq \frac{1}{C} g_{\mu\nu}
\ee
for some $C>0$, so we can think of measure concentration as a consequence of positive curvature and/or convexity of $\Phi$.  The resulting concentration inequalities then have $e^{-\frac{\epsilon^2}{2C\kappa^2}}$ on the right-hand side, so e.g. for the round sphere we can take $C=\frac{1}{d-1}$, leading to lemma \ref{Levy}.

In this paper our main application of measure concentration is establishing the following theorem:
\begin{thm}\label{setconcthm}
Let $V$ be defined as in \eqref{Vdef}, and let $S$ be a collection of $N_S$ states in $\Hall$. Then for all $|B|\geq 16$ and $0<\gamma<1/2$ we have
\be
\PR\Bigg[\sup_{|\psi\ran,|\phi\ran\in S}\Big|\lan \psi|V^\dagger V\otimes I_{LR}|\phi\ran-\lan\psi|\phi\ran\Big|>\sqrt{18}|B|^{-\gamma}\Bigg]\leq 6 N_S(N_S-1)\exp\left(-\frac{|B|^{1-2\gamma}}{24}\right). \label{36}
\ee
\end{thm}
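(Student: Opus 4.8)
The plan is to bootstrap from the two facts already established — that $V\otimes I_{LR}$ preserves the inner product of any \emph{fixed} pair on average, eq.~\eqref{overlapav}, and with variance $O(1/|B|)$, eq.~\eqref{flucbound} — to a statement uniform over all of $S$, using the deviation bound of Lemma~\ref{Meckes}. There are three ingredients: a polarization step reducing the pairwise statement to a ``diagonal'' one, a concentration estimate for each diagonal quantity, and a union bound.

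\textbf{First, the reduction to the diagonal.} For a Hermitian $M$ on $\Hall$ and unit vectors $|\psi\rangle,|\phi\rangle$, the parallelogram law gives $2\,\mathrm{Re}\langle\psi|M|\phi\rangle=\langle\psi|M|\psi\rangle+\langle\phi|M|\phi\rangle-\langle\psi-\phi|M|\psi-\phi\rangle$ and an analogous identity for $\mathrm{Im}$ with $|\psi\rangle-i|\phi\rangle$. Taking $M=V^\dagger V\otimes I_{LR}-I$, using $\langle\chi|M|\chi\rangle=\|\chi\|^2\big(\langle\hat\chi|V^\dagger V\otimes I_{LR}|\hat\chi\rangle-1\big)$ and $\||\psi\rangle\pm|\phi\rangle\|^2\le4$ (and the same with $i$), one gets a bound of the shape
\begin{equation}
\Big|\langle\psi|V^\dagger V\otimes I_{LR}|\phi\rangle-\langle\psi|\phi\rangle\Big|\ \le\ \sqrt{18}\ \max_{|\hat\chi\rangle}\Big|\langle\hat\chi|V^\dagger V\otimes I_{LR}|\hat\chi\rangle-1\Big|,
\end{equation}
the max running over the normalizations of $|\psi\rangle,|\phi\rangle,|\psi\rangle-|\phi\rangle,|\psi\rangle-i|\phi\rangle$ (the constant $\sqrt{18}=\sqrt{3^2+3^2}$ appears because $|\mathrm{Re}|$ and $|\mathrm{Im}|$ are each $\le\tfrac12(1+1+4)$ times the diagonal deviations). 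Letting $S'$ be the set, of size $O(N_S^2)$, of all such $|\hat\chi\rangle$ as $(\psi,\phi)$ runs over $S\times S$, it then suffices to bound $\PR\big[\exists\,|\hat\chi\rangle\in S':\ |\langle\hat\chi|V^\dagger V\otimes I_{LR}|\hat\chi\rangle-1|>|B|^{-\gamma}\big]$.

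\textbf{Next, concentration for one unit vector.} Fix a unit $|\chi\rangle$ and put $F_\chi(U):=\|(V\otimes I_{LR})|\chi\rangle\|$, so $F_\chi^2=\langle\chi|V^\dagger V\otimes I_{LR}|\chi\rangle$. By \eqref{overlapav}, $\int dU\,F_\chi^2=1$; by \eqref{flucbound} with $|\psi_1\rangle=|\psi_2\rangle=|\chi\rangle$, $\int dU\,(F_\chi^2-1)^2\le2/|B|$. As $F_\chi\ge0$ (so $(F_\chi+1)^2\ge1$), this gives $\mathrm{Var}(F_\chi)\le\mathrm{Var}(F_\chi^2)\le2/|B|$, hence by Jensen $\int dU\,F_\chi\in[\sqrt{1-2/|B|},1]$ — within $O(1/|B|)$ of $1$. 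The key point: $F_\chi$ is $\sqrt{|P|}$-Lipschitz on $U(|P||B|)$ in the Frobenius metric, since $V=\sqrt{|P|}\,\langle 0|_P U|\psi_0\rangle_f$ is \emph{linear} in $U$, so $\|V_1-V_2\|_\infty=\sqrt{|P|}\,\|\langle 0|_P(U_1-U_2)|\psi_0\rangle_f\|_\infty\le\sqrt{|P|}\,\|U_1-U_2\|_2$ and $|F_\chi(U_1)-F_\chi(U_2)|\le\|((V_1-V_2)\otimes I_{LR})|\chi\rangle\|\le\|V_1-V_2\|_\infty$. Then Lemma~\ref{Meckes} with $N=|P||B|$ and $\kappa^2=|P|$ — the $|P|$'s cancelling — gives $\PR[|F_\chi-\langle F_\chi\rangle|\ge\eta]\le2\exp(-\eta^2|B|/12)$. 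On the complement, $|F_\chi^2-1|=|F_\chi-1|\,|F_\chi+1|\le(\eta+O(1/|B|))(2+\eta+O(1/|B|))=O(\eta)$, so taking $\eta\asymp|B|^{-\gamma}$ (this is where $|B|\ge16$, $\gamma<1/2$ enter, to swallow the lower-order terms) yields a per-vector tail $2\exp(-\Omega(|B|^{1-2\gamma}))$ for the diagonal deviation. A union bound over the $O(N_S^2)$ vectors of $S'$, with the first step, then gives \eqref{36}; matching the precise constants $\sqrt{18}$, $24$ and $6N_S(N_S-1)$ is a matter of careful (but routine) tracking, helped by the fact that the asserted bound is vacuous for small $|B|$, so the constants need only be good enough.

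\textbf{The main obstacle} is the Lipschitz estimate just used. The quantity actually in the theorem, $\langle\chi|V^\dagger V\otimes I_{LR}|\chi\rangle$, is \emph{quadratic} in $U$, and since $\|V\|_\infty$ can be as large as $\sqrt{|P|}$ its worst-case Lipschitz constant is only $O(|P|)$; with that, Lemma~\ref{Meckes} gives the useless $\exp(-\Omega(\eta^2|B|/|P|))$, trivial exactly in the regime $|P|\gg1$ (an old black hole) that is the whole point. One must therefore run concentration on a proxy of Lipschitz constant $O(\sqrt{|P|})$ — the norm $F_\chi$, or equivalently a truncation $\min(\langle\chi|V^\dagger V\otimes I_{LR}|\chi\rangle,c)$, which after the substitution $h=|P|s^2$ (with $s=\|(\langle 0|_P U|\psi_0\rangle_f\otimes I_{LR})|\chi\rangle\|$ being $1$-Lipschitz) is again $O(\sqrt{|P|})$-Lipschitz, so that $\kappa^2$ absorbs the $|P|$ in $N=|P||B|$. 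The remaining steps — locating the proxy's mean to within $O(1/|B|)$ of $1$ and transferring back to $F_\chi^2$ — are straightforward given \eqref{overlapav} and \eqref{flucbound}.
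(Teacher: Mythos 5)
Your proposal is, in essence, the paper's own proof (appendix \ref{proofapp}): the decisive move --- concentrating the \emph{norm} $F_\chi(U)=\|(V\otimes I_{LR})|\chi\rangle\|$, which is linear in $U$ and hence $\sqrt{|P|}$-Lipschitz so that the $|P|$ cancels against $N=|P||B|$ in Lemma \ref{Meckes}, instead of the quadratic matrix element whose Lipschitz constant is $O(|P|)$ --- is exactly the paper's, as are the use of the Haar first and second moments to pin $\langle F_\chi\rangle$ within $O(1/|B|)$ of $1$, the polarization reduction, and the union bound over $O(N_S^2)$ auxiliary vectors. Your mean-location step (via $\mathrm{Var}(F)\le \mathbb{E}[(F-1)^2]\le \mathbb{E}[(F^2-1)^2]$) is a harmless variant of the paper's $x\ge \tfrac32 x^2-\tfrac12 x^4$ trick in Lemma \ref{avbounds}.

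The one place you are loose is the constants, and your reason for dismissing them is not right. Your polarization is asymmetric (only $|\psi\rangle-|\phi\rangle$ and $|\psi\rangle-i|\phi\rangle$) and is phrased in terms of the diagonal deviations $|\langle\hat\chi|V^\dagger V\otimes I_{LR}|\hat\chi\rangle-1|$, whereas what concentrates with Lipschitz constant $\sqrt{|P|}$ is $F_{\hat\chi}$; converting via $|F^2-1|\le (2+\eta)|F-1|$ costs a factor of about $2$, so with $\eta\asymp|B|^{-\gamma}$ you prove the bound with $2\sqrt{18}|B|^{-\gamma}$ rather than $\sqrt{18}|B|^{-\gamma}$, and shrinking $\eta$ to compensate degrades the exponent by a factor of $4$. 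Saying the bound is ``vacuous for small $|B|$'' does not repair this: an exponent $|B|^{1-2\gamma}/96$ is weaker than $|B|^{1-2\gamma}/24$ for every $|B|$, so the theorem as stated would not follow. The fix is the paper's symmetric polarization (Lemma \ref{overnorm}): impose the norm condition on the six states $|\psi\rangle,|\phi\rangle,|\psi\rangle\pm|\phi\rangle,|\psi\rangle\pm i|\phi\rangle$ and use $\||\psi\rangle+|\phi\rangle\|^2+\||\psi\rangle-|\phi\rangle\|^2=4$, which bounds each of the real and imaginary parts by $2\epsilon+\epsilon^2\le 3\epsilon$ directly in terms of the norm deviation $\epsilon$; combined with theorem \ref{normconcthm} (two tails per state) this gives $12$ failure events per pair and hence exactly $6N_S(N_S-1)\exp\bigl(-|B|^{1-2\gamma}/24\bigr)$. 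With that substitution your argument is the paper's proof.
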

Thus as long as 
\be\label{meascond}
\frac{\log N_S}{|B|^{1-2\gamma}}\ll 1,
\ee
which can be true for rather large sets $S$ of states, then encoding by $V\otimes I_{LR}$ is very likely to preserve all pairwise overlaps of states in $S$ to exponential precision in the black hole entropy $\log |B|$.  In particular if we take $S$ to be an orthonormal basis for $\HL\otimes\Hl\otimes\Hr\otimes \HR$, then all overlaps will likely be preserved provided that $|\ell|||r|$ is not exponentially large in $|B|$, just as we found in the phase model and as suggested by the Johnson-Lindenstrauss lemma.  In the following subsection we will see that this conclusion can be extended to all states of sub-exponential complexity.  The proof of theorem \ref{setconcthm} is somewhat lengthy, so we defer it to appendix \ref{proofapp}.  The core idea is a judicious application of lemma \ref{Meckes}. 

We can also use theorem \ref{setconcthm} to study when $V$ approximately preserve the overlaps of \textit{all} states, i.e. when $V$ is approximately an isometry in the sense that
\be\label{isomeps}
||V^\dagger V-I_{\ell r}||_\infty <\delta
\ee
for some $\delta$ which is of order $|B|^{-\gamma}$ with $0<\gamma<1/2$.  From \eqref{Vdef} the only ``obvious'' upper bound is
\be\label{naivebound}
||V^\dagger V-I_{\ell r}||_\infty\leq |P|-1,
\ee
which is saturated if there is an input state $|\psi\ran_{\ell r}|\psi_0\ran_f$ which is in the image of $U^\dagger(I_B\otimes |0\ran\lan 0|_P)U$, but we can hope that there are no such input states.  In appendix \ref{appxapp} we use theorem \ref{setconcthm} to show that a sufficient condition for \eqref{isomeps} to hold with $\delta$ of order $|B|^{-\gamma}$ is that 
\be\label{isomcond}
\frac{|\ell||r|}{|B|} |B|^{2\gamma}\left(\gamma \log|B|+\log |P|\right)\ll 1.
\ee
In particular taking $\gamma$ to be small (but not parametrically small), we see that up to logarithmic factors $V$ is likely to be approximately isometric whenever it is allowed to be by counting (i.e. when $|\ell||r|\ll |B|$).

Unfortunately a theorem analogous to \ref{setconcthm} cannot be established in the same way for the phase model \eqref{phasedef}.  Measure concentration of the type characterized by theorem \ref{mcthm} does not occur on the high-dimensional torus $T^d$: the metric is flat,  so \eqref{baksuff} cannot hold for any $C>0$.  For example we can consider a function which varies smoothly over one of the circles in $T^d$ and is constant on the others.  Since the space is a product, we can just integrate out the other circles without affecting the distribution on the circle the function varies over, and thus this function will never concentrate.  This is why we have taken the model \eqref{Vdef} to be our main model, and relegated the phase model to a secondary status even though computations are easier there.\footnote{In the phase model one can still argue that if $|\ell||r|\ll |B|$ then $V$ is likely an isometry.  The idea is to construct the ``worst-case'' state for violating $V^\dagger V=1$ and then show that this state nonetheless is mapped almost to itself by $V^\dagger V$.}

\subsection{Circuit complexity and overlaps of sub-exponential states}\label{subexpsec}
In order to extend the validity of low-energy effective field theory as far as possible, we would like it to be the case that the detection of any violation thereof requires a difficult operation.  The standard way to formalize such requirements is using the theory of computational complexity, which in this case means quantum computational complexity.  Namely we would like to say that the failure of $V$ to be an isometry cannot be detected with operations of low computational complexity.  
To that end, we now show that $V\otimes I_{LR}$ is very likely to approximately preserve the overlap of any two states $|\psi_1\ran,|\psi_2\ran\in \Hall$ which are prepared by sub-exponentially complex quantum circuits.  To do this, we need to define more carefully what we mean by the latter.  

To define a notion of quantum complexity, it is necessary to introduce a model of quantum computation.  We will take our model to be built out of ``physical qudits'', each of which has a Hilbert space dimension $q>1$ which we think of as being $O(1)$ (for example it could be two).  To encode a quantum system $\mathcal{H}_A$ into qudits we need at least 
\be
m=\log_q \lceil A \rceil
\ee
qudits, where $\lceil A\rceil$ is defined to be the smallest integer power of $q$ which is greater than or equal to $|A|$.  By definition we have
\be
|A|\leq \lceil A \rceil <q|A|.
\ee
We will further assume that this encoding can be done in a ``simple'' way, so that operations which are simple in terms of the qudits are also simple to implement physically on $A$.  For example any local structure in $A$ should be respected.  We may then consider a model of computation with some $O(1)$ number $\Gamma$ of fundamental unitary gates, which can be applied in any order to any pair of the physical qudits. A quantum circuit $C$ is an ordered sequence of gates $g_1,g_2,\ldots$, with the number $\mathscr{C}(C)$ of gates being called the \textit{circuit complexity}.  Asymptotic complexity is then defined for sequences $C_1,C_2,\ldots$ of circuits acting on sequences of systems $A_1,A_2,\ldots$, and in particular a circuit family $C_n$ is called \textit{sub-exponential} if for any $\alpha>0$ we have
\be\label{subexp1}
\mathscr{C}(C_n)\leq (\lceil A_n \rceil)^\alpha
\ee
for all but finitely many $n$.  A family of states $|\psi_1\ran\in \mathcal{H}_{A_1},|\psi_2\ran\in \mathcal{H}_{A_2},\ldots$ is called sub-exponential if there is a sub-exponential family of circuits which prepares the encoded version of each starting on the all-zero state of the qudits.

Next we would like to get some sense of ``how many'' sub-exponential states there are.  We first ask how many states of complexity $\mathscr{C}_n$ there are on $\mathcal{H}_{A_n}$: this is upper bounded by the number $N_{\mathscr{C}_n}$ of circuits of complexity $\mathscr{C}_n$ on $\mathcal{H}_{A_n}$, which is given by
\be
N_{\mathscr{C}_n}=\left(\begin{pmatrix} \log_q \lceil A_n \rceil \\ 2\end{pmatrix}\Gamma\right)^{\mathscr{C}_n}\leq e^{\left(2\log \log_q \lceil A_n \rceil +\log \Gamma\right)\mathscr{C}_n} \label{311}
\ee
since for each gate we need to pick which gate we want and which two qudits to act on.  Since for any $\alpha>0$ a subexponential circuit family always eventually obeys \eqref{subexp1}, for any $\alpha>0$ we therefore eventually have
\be
N_{\mathscr{C}_n}\leq e^{\left(2\log \log_q \lceil A_n \rceil +\log \Gamma\right)\left(\lceil A_n \rceil\right)^\alpha}.
\ee

To apply these definitions to the problem at hand, we need to think a bit more about which quantity we should use to measure complexity.  There are several potentially large numbers around, namely $|\ell|$, $|r|$, and $|B|$.  At early times we have $|r|\ll |B|$, while at late times we have $|r|\gg |B|$, and we also have some flexibility in how to choose $|\ell|$.  We certainly do not want effective field theory to break down at early times or when we choose $|\ell|$ to be small, so we should at least hope that any breakdown requires an operation whose complexity is exponential in $\log |B|$.  As we discuss momentarily we will not consider situations where $|\ell|$ or $|r|$ is exponentially bigger than $|B|$, and so it is most natural for us to define a sub-exponential circuit in our problem to be one such that for any $\alpha>0$ we eventually have $\mathscr{C}(C_n)\leq |B_n|^\alpha$.  Dropping the $n$'s to conform to our previous notation, for any $\alpha>0$ the number $N_{sub-exp}$ of sub-exponential states will always eventually obey\footnote{This formula has the usual complexity-theoretic problem that for any fixed problem size we can't be sure that the particular circuit family we are interested is yet to obey the asymptotic bound.  For example for many values of $n$  an algorithm which runs in time $10^{500} n$ is slower than an algorithm which runs in time $n^{10}$, even though the latter is asymptotically slower.  In practice we just have to assume that the particular sub-exponential states we are actually interested in have already reached the asymptotic regime for the black holes we are interested in.}
\be\label{subexpcount}
N_{sub-exp}\leq e^{\left(2\log \log_q(\lceil |l|^2|r|^2\rceil)+\log \Gamma\right)(\lceil B\rceil)^\alpha}\sim e^{|B|^\alpha\log \log (|l||r|)}.
\ee
By theorem \ref{setconcthm} we then have
\be\label{subexpconc}
\PR\Bigg[\sup_{|\psi\ran,|\phi\ran \, \mathrm{sub-exp}}\Big|\lan \psi|V^\dagger V\otimes I_{LR}|\phi\ran-\lan\psi|\phi\ran\Big|>\sqrt{18}|B|^{-\gamma}\Bigg]\lesssim \exp\Big(|B|^\alpha\log \log (|l||r|)-\frac{|B|^{1-2\gamma}}{24}\Big),
\ee
so choosing $\alpha<1-2\gamma$ we see that the overlaps between all sub-exponential states will almost surely be preserved to exponentially small precision for sufficiently large black holes \textit{provided} that $|l||r|$ is not doubly exponentially large in $|B|$.  Assuming that this is not the case, we have as an immediate corollary that any null state must be exponentially complex.

How concerned should we be that the bound \eqref{subexpconc} breaks down when $|\ell||r|$ is doubly exponential in $|B|$?  The short answer is ``not very''.  In such a situation the nice slice on which we are defining the effective description is extremely long (see figure \ref{evapbhmapfig}), and almost entirely inaccessible to any infalling observer. Already when $|\ell||r|$ is merely exponentially large in $|B|$, the exponential time required to prepare the black hole from collapse makes any state exponentially complex (and hence out of scope for the effective description) if we define complexity using a reference state that does not itself contain a black hole. Even if we try to sidestep this issue by taking the reference state to be e.g. the Hawking state, we can never have an entire basis of states for $|\ell||r|$ with subexponential complexity when $|\ell||r|$ is itself exponential in $|B|$.\footnote{For more on why some kind of breakdown of effective field theory is expected at singly exponential times  see e.g. \cite{Maldacena:2001kr,Susskind:2015toa,Cotler:2016fpe,Harlow:2018tqv,Susskind:2020wwe}.} At the doubly exponentially long times necessary for \eqref{subexpconc} to break down, even more drastic breakdowns of effective field theory are expected. For example, in the (admittedly non-evaporating) thermofield double state in $AdS$, one encounters bizarre recurrences where the black hole returns to approximately its original state. 

One possible objection to the content of this section is that perhaps the reason that $V\otimes I_{LR}$ approximately preserves inner products of sub-exponential states is merely that a generic $U$ has exponential complexity, and therefore this is unlikely to be true for a more realistic map.  For example we will see below that more realistic $U$ have only polynomial complexity.  There are two reasons to be skeptical of this objection.  The first is that the calculation leading to \eqref{flucbound} does not really require Haar-random unitaries; it is enough to use $k$-designs with $k>2$, which are sets of sub-exponential unitaries which have the same low moments as Haar-random unitaries do (see appendix \ref{Uapp} for a brief introduction to $k$-designs).  The second reason is that attempting to directly create null states of sub-exponential complexity merely by assuming that $U$ is sub-exponential does not work as long as $\log |f|$ is polynomial in $\log |B|$, as we explain in section \ref{adsonesec} below.  On the other hand it is not currently possible to prove something like \eqref{subexpconc} for $k$-designs, since they have not been shown to obey any measure concentration result analogous to lemma \ref{Meckes}.  This is a pity, as such a result would immediately resolve many open problems in complexity theory.  That is also likely an indication that such a proof would be quite difficult to find.  This however should not be viewed as an indication that no such result exists; many statements in complexity theory, such as \textbf{P}$\neq$\textbf{NP}, are widely expected to be true but seem hopelessly difficult to prove using current techniques.

\section{Quantum extremal surfaces and entropy in the Hawking state}\label{pagesec}
So far we have discussed general properties of the holographic encoding map $V$, defined by \eqref{Vdef}, from states in the effective description to states in the fundamental description.  We learned that this map typically preserves the inner product between all states of sub-exponential complexity, which we will see in the following two sections is sufficient to ensure that the effective description is valid for any observable of sub-exponential complexity.  For example if $O_R$ is a radiation observable of sub-exponential complexity, which for now we'll just define by saying it is a linear combination of two sub-exponential unitaries (see section \ref{measurementsec} for a better definition which implies this one), and $|\psi\ran\in \HL\otimes\Hl\otimes \Hr\otimes\HR$ is a sub-exponential state in the effective description, then we have
\begin{align}\nonumber
\lan \psi|(V^\dagger \otimes I_{LR})O_R(V\otimes I_{LR})|\psi\ran&=\lan \psi|(V^\dagger V\otimes I_{LR})O_R|\psi\ran\\
&\approx\lan \psi|O_R|\psi\ran.\label{ORexp}
\end{align}
In the second line we used equation \eqref{subexpconc}, and also that $O_R|\psi\ran$ is a linear combination of two sub-exponential states. On the other hand we know that the effective description cannot really give an accurate picture for all observables: if it did, then Hawking's prediction of information loss would be correct.  Mathematically there is indeed a large deviation from Hawking's description: at late times there is a large number of null states which are annihilated by $V$.  One place where we can surely detect this is in expectation values of exponentially complex observables.  The great insight of Page however was that we can get at the same physics much more simply by studying entropies instead of observables \cite{Page:1993wv}.  In this section we therefore study the relationship between the entropy of the radiation system $R$ in the fundamental description and its entropy in effective description.  We emphasize that these entropies do not need to be the same: conjugation by $V\otimes I_{LR}$ need not preserve the state on $R$ (or its entropy) since $V$ is not an isometry.  We will find that the entropy in the fundamental description is computed by the quantum extremal surface formula in the effective description, which we view as giving a Hilbert space interpretation to the ```Page curve'' computation of \cite{Penington:2019npb,Almheiri:2019psf}.\footnote{Some of the results of this section can be understood as consequences of the random tensor network formalism of \cite{Hayden:2016cfa}, but we find it convenient to develop things from scratch.}

We begin by introducing what we call the \textit{Hawking state}
\be\label{HawkS}
|\psi_{Hawk}\ran=|\chin\ran_{L\ell}\otimes |\chot\ran_{rR},
\ee
which we use to model the state in the effective description of a black hole which has been evaporating for some time. We can think of $|\chin\ran$ as being the purification of the (possibly mixed) state of the infalling matter which created the black hole onto a reference system $L$ and $|\chot\ran$ as the entangled state of the interior and exterior outgoing Hawking modes.  We will sometimes assume that these states are highly entangled in the sense that $S(\chin_\ell)$ and $S(\chot_R)$ are of order $\log |L|$ and $\log |R|$ respectively.  We define the notation
\be\label{hawkingencode}
\Psi_{LB R}(U)\equiv (V\otimes I_{LR})|\ph\ran\lan\ph|(V^\dagger\otimes I_{LR})
\ee
for the encoded version of the Hawking state, 
and our goal is to compute the Renyi and von Neumann entropies of $\Psi_R(U)$.

\bfig
\includegraphics[height=7cm]{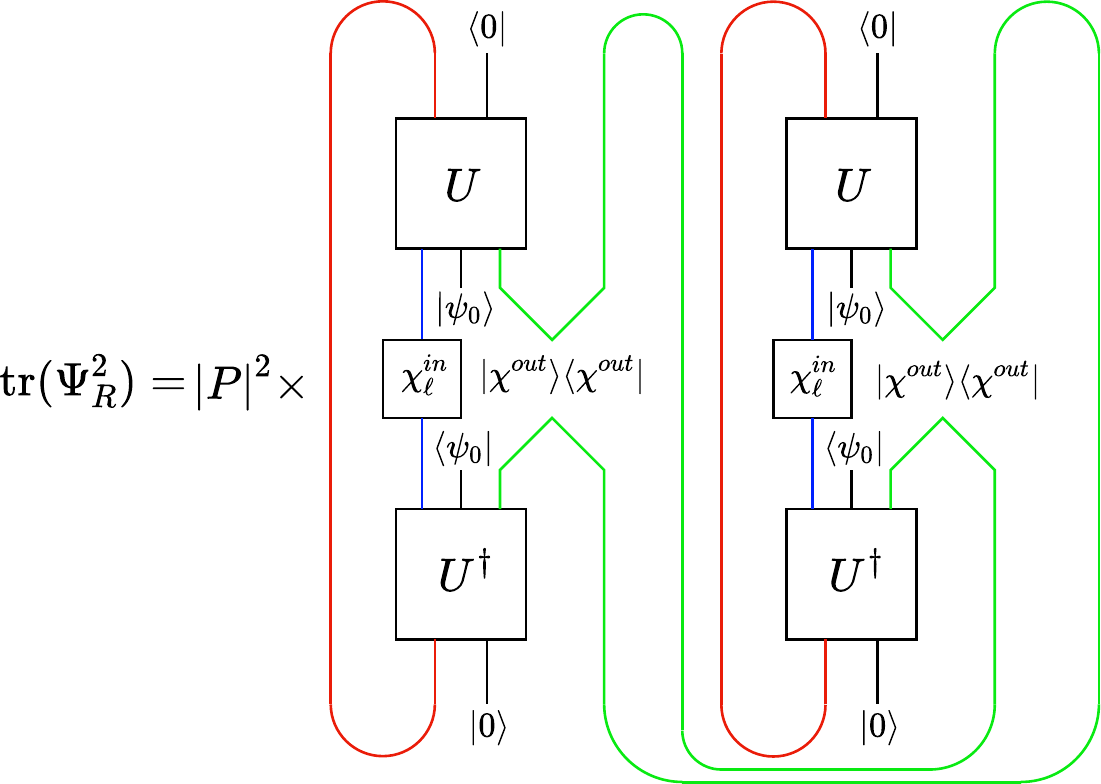}
\caption{Computing the second Renyi entropy of the radiation.  $B$ indices are shaded red, $L,\ell$ indices are shaded blue, and $r,R$ indices are shaded green.  The four different ways of contracting the indices through the unitaries as in \eqref{Uresults} lead to the four terms in \eqref{renyi2av}.}\label{renyi2inoutfig}
\efig
We'll first compute the second Renyi entropy of the radiation.  Using the second equation from \eqref{Uresults} we have (see figure \ref{renyi2inoutfig})
\begin{align}\nonumber
\int dU e^{-S_2\left(\Psi_R(U)\right)}&=\int dU \tr\left(\Psi_R(U)^2\right)\\
&=\frac{|P|^2|B|^2}{|P|^2|B|^2-1}\Bigg[\left(1-\frac{1}{|P||B|^2}\right)e^{-S_2\left(\chot_R\right)}+\left(1-\frac{1}{|P|}\right)\frac{e^{-S_2\left(\chin_\ell\right)}}{|B|}\Bigg],\label{renyi2av}
\end{align}
so when $|B|,|P|\gg 1$ we have\footnote{One can use the method we introduce below to compute the higher Renyi entropies to show that the fluctuations of $S_2(\Psi_R)$ about its average are small, so \eqref{renyi2R} is typically true for each particular $U$.}
\be\label{renyi2R}
S_2(\Psi_R)\approx \min\Big[S_2\left(\chot_R\right),\log |B|+S_2\left(\chin_\ell\right)\Big].
\ee

\bfig
\includegraphics[height=4.5cm]{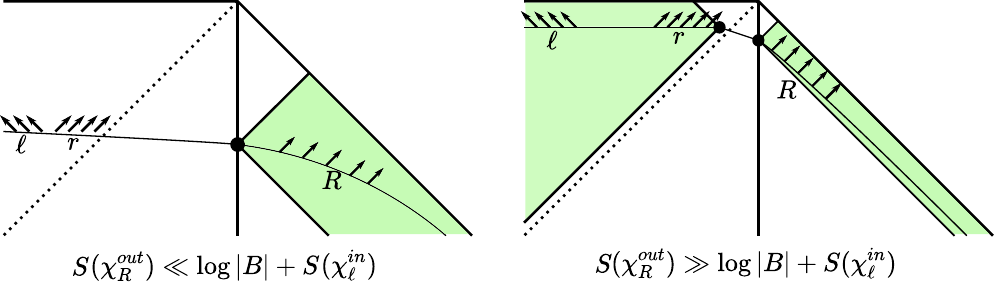}
\caption{The quantum extremal surface calculation of the radiation entropy: at early times the entanglement wedge of the radiation includes only the domain of dependence of $R$, shaded green in the left diagram,  while at late times it also includes an ``island'' in the black hole interior.  The ``rim'' of the island, shown as the second black dot in the right diagram, is the new quantum extremal surface found in \cite{Penington:2019npb,Almheiri:2019psf}.  It has an area which is approximately that of the horizon on this time slice, and thus has $\frac{\mathrm{Area}}{4}\approx \log |B|$.}\label{evapbhislandfig}
\efig
Except for being a Renyi entropy instead of the von Neumann entropy, \eqref{renyi2R} is just what is expected from the quantum extremal surface formula \cite{Penington:2019npb,Almheiri:2019psf}. At early times the entanglement wedge of the radiation $R$ consists just of the radiation, so its generalized entropy matches that of the radiation in the Hawking state \eqref{HawkS}, while at late times this entanglement wedge also includes an ``island'' containing $\ell$ and $r$, so its generalized entropy is given by the area in Planck units of the rim of the island, which is just $\log |B|$, plus the entropy of $\ell$ in the Hawking state ($r$ and $R$ purify each other and don't contribute).  See figure \ref{evapbhislandfig} for an illustration.  The crossover in \eqref{renyi2R} from $S_2\left(\chot_R\right)$ to $\log |B|+S_2\left(\chin_\ell\right)$ when the former becomes larger than the latter is precisely what is needed to ensure consistency with the purity of the total state $\Psi_{LBR}$.

\bfig
\includegraphics[height=8cm]{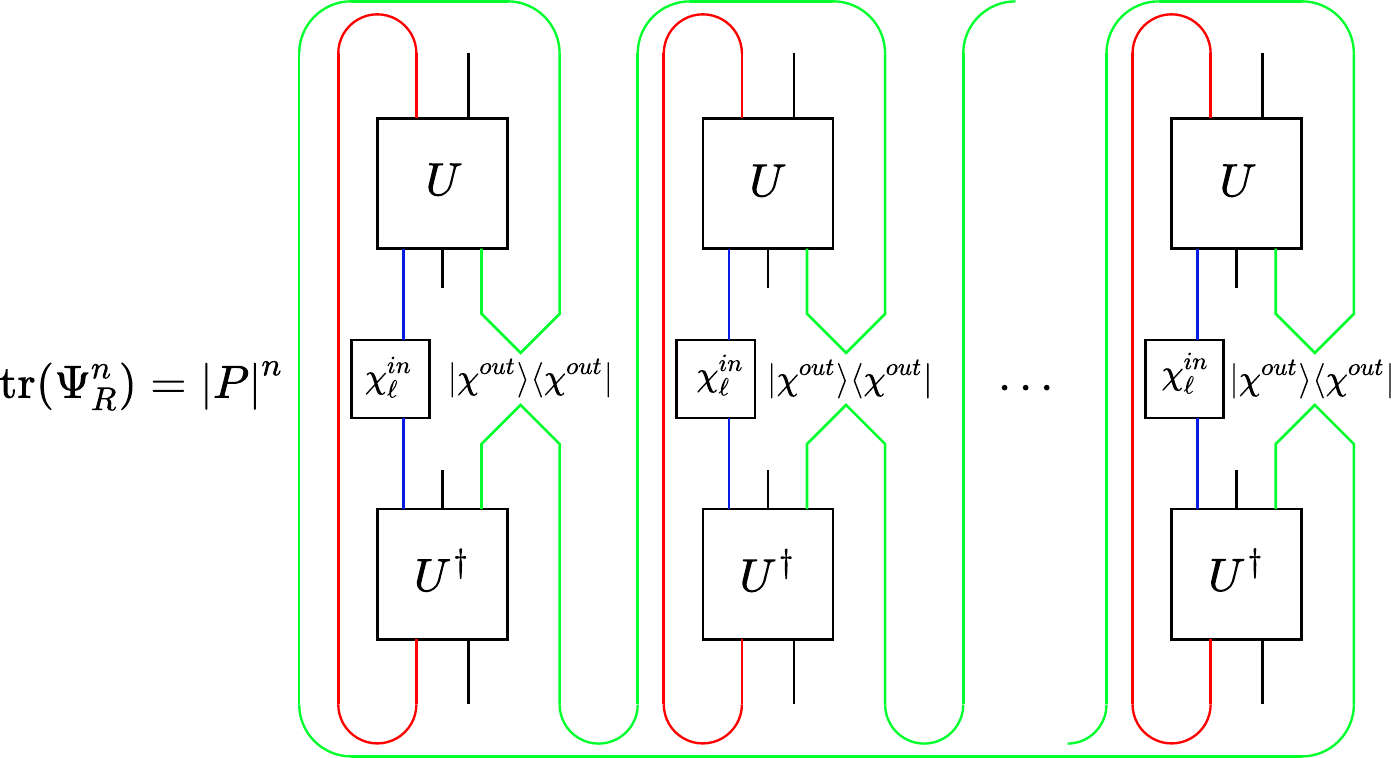}
\caption{Computing the $n$th Renyi entropy.  There are two terms which can dominate: the one which contracts the ingoing/outgoing indices for each $U$ with the outoing/ingoing indices of the $U^\dagger$ just below it, which maximizes the number of red loops, and the one which contracts the ingoing/outgoing indices of each $U$ with the outoing/ingoing indices of the $U^\dagger$ which is in the previous column (wrapping around to the last $U^\dagger$ for the first $U$), which maximizes the number of green loops.}\label{nrenyifig}
\efig
To really match the quantum extremal surface result however, we need to compute the von Neumann entropy instead of the second Renyi entropy.  We can do this by computing the higher Renyis
\be
S_n(\Psi_R)\equiv -\frac{1}{n-1}\log \tr \left(\Psi_R^n\right)
\ee
and then using (subject to the usual caveats about the replica trick)
\be
S(\Psi_R)=\lim_{n\to 1^+}S_n(\Psi_R).
\ee
In principle we can compute the average of $\tr\left(\Psi_R^n\right)$ exactly using the known expression \eqref{Wgresult} for the Weingarten function, but is more instructive to first assume that all dimensionalities are large.  We can then use the asymptotic expression \eqref{Wgasymp}, which tells us that we only need to consider ``diagonal'' contractions in figure \ref{nrenyifig} where the ingoing and outgoing indices of each $U$ are contracted with outgoing and ingoing indices of the same $U^\dagger$.\footnote{One might worry that loop contractions could enhance subleading terms in \eqref{Wgasymp}, but we never have $P$-loops so these subleading terms will at least be suppressed by powers of $|P|$, as in the fourth term in \eqref{renyi2av}.}  There are still $n!$ ways of doing these ``diagonal'' contractions, but if we assume that either $S_n(\chot_R)\gg \log |B|+S_n(\chin_\ell)$ or  $S_n(\chot_R)\ll \log |B|+S_n(\chin_\ell)$, and also assume that $|\chin\ran$ and $|\chot\ran$ are highly entangled, then there are only two contractions which can be dominant: the one which maximizes the number of $B$-loops, which dominates at early times, and the one which maximizes the number of $R$ loops, which dominates at late times (see figure \ref{nrenyifig}). For example the terms in the average of $\tr(\Psi_R^3)$ which arise from diagonal contractions are (see \eqref{tripleWg})
\begin{align}\nonumber
\int dU \tr(\Psi_R^3)\supset \frac{|P|^2|B|^2(|P|^2|B|^2-2)}{(|P|^2|B|^2-1)(|P|^2|B|^2-4)}\Bigg[&e^{-2S_3\left(\chot_R\right)}+\frac{3}{|B|} e^{-S_2\left(\chot_R\right)-S_2\left(\chin_\ell\right)}\\
&+\frac{1}{|B|^2}\left(e^{-2S_3\left(\chot_R\right)-2S_3\left(\chin_\ell\right)}+e^{-2S_3\left(\chin_\ell\right)}\right)\Bigg].
\end{align}
The prefactor rapidly approaches 1 when $|P|,|B|\gg 1$, and when $|\chin\ran$ and $|\chot\ran$ are highly entangled only the first or last term can dominate. Thus in this approximation we have
\be \label{nrenyiexp}
\int dU e^{-(n-1)S_n(\Psi_R(U))}\approx \max\left[\tr\left(\left(\chot_R\right)^n\right), \frac{1}{|B|^{n-1}}\tr\left(\left(\chin_\ell\right)^n\right) \right],
\ee
and therefore
\be\label{nrenyis}
S_n(\Psi_R(U))\approx \min\left[S_n(\chot_R),\log|B|+S_n(\chin_\ell)\right]
\ee
for the Renyi entropies and 
\be\label{pageresult}
S(\Psi_R(U))\approx \min\left[S(\chot_R),\log|B|+S(\chin_\ell)\right]
\ee
for the von Neumann entropy.  This last result is precisely what is expected from the quantum extremal surface calculation, see figure \ref{evapbhislandfig}.  The fact that the same formula holds for the Renyi entropies as well, in particular with the ``area'' term $\log |B|$ being independent of $n$, shows that our model has ``fixed area'' in the sense of \cite{Akers:2018fow,Dong:2018seb}.\footnote{We could easily generalize our model to describe non fixed-area states, for example simply by generalizing $V$ to have a direct sum structure $V = \oplus_\alpha V_\alpha$ with each $V_\alpha$ defined like our $V$, and with input and output Hilbert spaces also block decomposing like $\mathcal{H} = \oplus_\alpha \mathcal{H}_{\alpha}$. Each $\alpha$-block would correspond to a different area eigensector, and the $\log |B|$ term in \eqref{nrenyis} would pick up an $n$-dependence from the differing support on the different sectors. However this is not an important aspect of the physics, since semiclassical states can be approximated to exponential accuracy by states where the area  is fixed to leading order.}

It is instructive to see how these same entropies arise in the phase model \eqref{phasedef}.  Choosing bases for $\HL$, $\Hl$, $\Hr$, and $\HR$ such that
\begin{align}\nonumber
|\chin\ran&=\sum_i\sqrt{C_i}|i\ran_L|i\ran_\ell\\
|\chot\ran&=\sum_m\sqrt{D_m}|m\ran_r|m\ran_R
\end{align}
for some non-negative constants $C_i, D_m$, from \eqref{phasedef} we have
\begin{align} \nonumber
\Psi_R&= \sum_{m, m'}\left(\sum_{i=1}^{|L|} C_i \sqrt{D_m D_{m'}} \braket{im'|V_{phase}^{\dagger}V_{phase}|im}\right)\, \ket{m}\bra{m'}_R  \\\label{psir1}
&=\frac{1}{|B|}\sum_{i,b,m,m'}C_i\sqrt{D_mD_{m'}}e^{i\theta(i,m;b)-i\theta(i,m';b)}|m\ran\lan m'|_R.
\end{align} 
The second Renyi is given by
\be
e^{-S_2(\Psi_R)}=\frac{1}{|B|^2}\sum_{i,i',b,b',m,m'}C_i C_{i'}D_mD_{m'}e^{i\left(\theta(i,m;b)-\theta(i,m';b)+\theta(i',m';b')-\theta(i',m;b')\right)}.
\ee
Generically the dominant contributions to this sum come when the phases cancel in pairs.  There are two ways to do this: we can have $m=m'$, or we can have $(i,b)=(i',b')$.  Including only these contributions we thus have
\be
e^{-S_2(\Psi_R)}\approx e^{-S_2\left(\chot_R\right)}+\frac{1}{|B|}e^{-S_2\left(\chin_\ell\right)},
\label{s2terms}
\ee
which is equivalent to \eqref{renyi2R}.  As discussed in the introduction, the second term exists only because of the non-isometry of $V_{phase}$. One way to think about it is as arising from small but non-zero overlaps  in the images of orthogonal sub-exponential states in the effective description after being mapped to the fundamental picture, as in the second line of \eqref{phaseoverlap}. These overlaps lead to non-zero off-diagonal elements in \eqref{psir1}. While these off-diagonal elements are individually suppressed by a factor of $O(1/\sqrt{|B||L|})$  relative to the diagonal ones (assuming for simplicity that $C_i \sim 1/L$), there are $O(|R|^2)$ of them contributing to  $e^{-S_2(\Psi_R)}=\sum_{ij} |(\Psi_R)_{ij}|^2$. Hence, their total contribution in the second term of \eqref{s2terms} can be comparable to the first term coming from diagonal elements.  

We can also compute the higher Renyi entropies
\begin{align}\nonumber
e^{-(n-1)S_n(\Psi_R)}=\frac{1}{|B|^n}\sum_{i_1,\ldots,i_n,b_1,\ldots, b_n,m_1,\ldots m_n}&C_{i_1}\ldots C_{i_n}D_{i_1}\ldots D_{i_n}\\
&\hspace{-1.5cm}\times e^{i\left(\theta(i_1,m_1;b_1)-\theta(i_1,m_2;b_1)+\ldots+ \theta(i_n,m_n;b_n)-\theta(i_n,m_1;b_n)\right)},
\end{align}
which are dominated either by the terms where $m_1=m_2=\ldots=m_n$ or the terms where $(i_1,b_1)=(i_2,b_2)=\ldots=(i_n,b_n)$, leading to
\be
e^{-(n-1)S_n(\Psi_R)}\approx e^{-(n-1)S_n\left(\chot_R\right)}+\frac{1}{|B|^{n-1}}e^{-(n-1)S_n\left(\chin_L\right)},
\ee
which is equivalent to \eqref{nrenyis}.

An alternative way of understanding the contributions to the $n$-th Renyi entropy which restore unitarity, such as the second term in \eqref{s2terms}, is in terms of the ``equilibrium approximation'' of \cite{eqapprox}. It was argued in \cite{eqapprox} that if under the evolution of a chaotic quantum many-body system, a time-evolved pure state macroscopically resembles an equilibrium density matrix $\Phi$ \footnote{For example, the expectation values of few-body observables in the two states are close.}, then the $n$-th Renyi entropy of the pure state at late times can also be expressed in terms of $\Phi$ using a set of rules explained in \cite{eqapprox}.  In our model we can obtain a guess for the equilibrium state by averaging the encoded Hawking state,
\be
\Phi_{LBR}\equiv \int dU \Psi_{LBR}(U)=\chin_L\otimes \frac{I_B}{|B|}\otimes \chot_R,\label{Phidef}
\ee
and in section \ref{simpentsec} below we will show that all sub-exponential observables in the fundamental description have expectation values in $\Psi_{LBR}(U)$ which agree with those in $\Phi_{LBR}$ to exponential precision.\footnote{In section \ref{simpentsec} we don't include $L$, but it can be restored by replacing $R\to LR$ and $|\chot\ran_{rR}\to |\chin\ran_{L\ell}\otimes |\chot\ran_{rR}$ in the calculations.}  We have checked that applying the rules of the equilibrium approximation to $\Phi_{LBR}$ indeed gives Renyi entropies that agree with those we have computed in this section.

\section{Reconstruction of effective field theory operators}\label{reconstructionsec}
We now turn to the question of how gravitational effective field theory operators can be represented in the fundamental description, a process which is called reconstruction.  In this section we will study this as a purely mathematical problem; the results will then be used in the following section to formulate the measurement theory of an observer near/in the black hole in terms of the fundamental degrees of freedom. 

In a conventional quantum error-correcting code with isometric encoding $V$, there is a canonical way to represent logical operators using physical degrees of freedom: for any logical operator $O$ we define an encoded operator
\be\label{canO}
\wt{O}\equiv VOV^\dagger,
\ee
in terms of which for any logical state $|\psi_1\ran$ we have
\be\label{stateaction}
\wt{O}V|\psi_1\ran=VO|\psi_1\ran.
\ee
Thus it does not matter whether we first encode and then act with the operator or first act with the operator and then encode.  Moreover if $|\psi_1\ran,|\psi_2\ran$ are two logical states then we have
\be\label{codematrix}
\lan \psi_2|V^\dagger \wt{O}V|\psi_1\ran=\lan \psi_2|O|\psi_1\ran.
\ee

The validity of \eqref{stateaction} and \eqref{codematrix} crucially relies on the fact that $V^\dagger V=I$, so if $V$ is not an isometry then it is not clear what to expect.  In general we will say that an operator $\wt{O}$ gives a reconstruction of a logical operator $O$ if \eqref{stateaction} and \eqref{codematrix} both hold for $|\psi_1\ran,|\psi_2\ran$ in some appropriate set of states, and we will say that $\wt{O}$ is an approximate reconstruction of $O$ if \eqref{stateaction} and \eqref{codematrix} hold in some approximation.  In the remainder of this section we will explain what kinds of families of states and in what approximations this can be done.  

\subsection{Attempts at simple reconstruction}
In this subsection we will try two simple ideas for how to do operator reconstruction in the code defined by \eqref{Vdef}.  In both cases we will run into some complications, which motivate us to consider a more restricted kind of reconstruction in the following subsection.

We'll first consider operators $O_{LR}$ with support only on $L$ and $R$.  The situation starts out well: for any $|\psi\ran\in \Hall$ we have
\be
O_{LR}\left(V\otimes I_{LR}\right)|\psi\ran=\left(V\otimes I_{LR}\right)O_{LR}|\psi\ran,
\ee
so \eqref{stateaction} holds provided we use $O_{LR}$ as its own reconstruction.  \eqref{codematrix} however is more subtle: for general states $|\psi\ran_1,|\psi_2\ran\in \HL\otimes \Hl\otimes \Hr \otimes \HR$ and general operators $O_{LR}$ it will \textit{not} hold that $\lan \psi_2|(V^\dagger\otimes I_{LR})O_{LR}(V\otimes I_{LR}) |\psi_1\ran\approx \lan \psi_2| O_{LR}|\psi_1\ran$.  For example at late times the entanglement entropy of the Hawking state $|\ph\ran$ on $R$ is much larger than the entanglement entropy of its encoded image $(V\otimes I_{LR})|\ph\ran$, so by choosing $O_{LR}$ to be an appropriate projection we can arrange for the expectation value of $O_{LR}$ in $V|\ph\ran$ to be much smaller than its expectation value in $|\ph\ran$.  On the other hand things are much better if we restrict to sub-exponential operators $O_{LR}$ and sub-exponential states.  Sub-exponential unitaries and sub-exponential states we have already defined in section \ref{subexpsec}, and in the beginning of section \ref{pagesec} we provisionally defined a sub-exponential observable to be an observable which is a linear combination of two sub-exponential unitaries.  Any operator is a linear combination of two hermitian operators (its hermitian and anti-hermitian parts), so here we provisionally define a general sub-exponential operator to be one which is a linear combination of four sub-exponential unitaries.  We then have
\begin{align}\nonumber
\lan \psi_2|(V^\dagger\otimes I_{LR})O_{LR}(V\otimes I_{LR})|\psi_1\ran&=\lan \psi_2|(V^\dagger\otimes I_{LR})(V\otimes I_{LR})O_{LR}|\psi_1\ran\\
&\approx \lan \psi_2|O_{LR}|\psi_1\ran,\label{simpleR}
\end{align}
where the approximation in the last line is to exponential precision in the black hole entropy and the result follows from equation \ref{subexpconc} and the fact that $O_{LR}|\psi_1\ran$ is a linear combination of four sub-exponential states.  Thus we see that $O_{LR}$ gives a good reconstruction of itself \textit{provided} that 1) it is sub-exponential and 2) we use it only on sub-exponential states. 

What about operators $O_{\ell r}$ that act on the interior degrees of freedom?  To simplify calculations we can take $O_{\ell r}$ to be traceless, as we can always restore its trace by adding a multiple of the identity, and we surely know how to reconstruct the identity.  Let's first try using the canonical reconstruction $V O_{\ell r}V^\dagger \otimes I_{LR}$.  We can test its (approximate) validity by computing
\begin{align}\nonumber
\int dU ||(VO_{\ell r}V^\dagger\otimes I_{LR})(V\otimes I_{LR})|\psi\ran-VO_{\ell r}\otimes I_{LR}|\psi\ran||^2=&\int dU \tr (\psi_{\ell r}V^\dagger V O_{\ell r}^\dagger V^\dagger V O_{\ell r}V^\dagger V)\\\nonumber
&-2\mathrm{Re}\int dU \tr(\psi_{\ell r}O_{\ell r}^\dagger V^\dagger VO_{\ell r}V^\dagger V)\\
&+\lan \psi|(O_{\ell r}^\dagger O_{\ell r}\otimes I_{LR})|\psi\ran.
\end{align}
Using \eqref{Uresults} it is straightforward to show that 
\be
\int dU \tr(\psi_{\ell r}O_{\ell r}^\dagger V^\dagger VO_{\ell r}V^\dagger V)=\frac{|P|^2|B|^2}{|P|^2|B|^2-1}\left(1-\frac{1}{|P||B|^2}\right)\lan \psi|(O_{\ell r}^\dagger O_{\ell r}\otimes I_{LR})|\psi\ran,
\ee
so the nontrivial task is to compute the term involving three copies of $V^\dagger V$.  As in our computation of the Renyi entropies, when all dimensionalities are large the average will be dominated by the diagonal contractions.  Computing these using \eqref{tripleWg} we have
\begin{align}\nonumber
\int dU \tr (\psi_{\ell r}V^\dagger V O_{\ell r}^\dagger V^\dagger V O_{\ell r}V^\dagger V)=\frac{|P|^2|B|^2(|P|^2|B|^2-2)}{(|P|^2|B|^2-1)(|P|^2|B|^2-4)}\Bigg[&\lan \psi|(O_{\ell r}^\dagger O_{\ell r}\otimes I_{LR})|\psi\ran\\\nonumber
&+\frac{1}{|B|}\tr(O_{\ell r}^\dagger O_{\ell r})\\\nonumber
&+\frac{\lan \psi|(O_{\ell r} O_{\ell r}^\dagger\otimes I_{LR})|\psi\ran}{|B|^2}\Bigg]\\
&+\ldots,
\end{align}
where ``$\ldots$'' indicates nondiagonal contractions, and thus
\begin{align}\nonumber
\int dU ||(VO_{\ell r}V^\dagger\otimes I_{LR})(V\otimes I_{LR})|\psi\ran-VO_{\ell r}\otimes I_{LR}|\psi\ran||^2&\approx\frac{1}{|B|}\tr(O_{\ell r}^\dagger O_{\ell r})\\
&\leq ||O_{\ell r}||_\infty^2 \frac{|L||R|}{|B|}.
\end{align}
The terms which have been neglected here are at most of order $||O_{\ell r}||^2_\infty/|B|^2$.  Thus we see that when $|L||R|\ll |B|$, the canonical reconstruction $VO_{\ell r}V^\dagger\otimes I_{LR}$ acts in the right way on encoded states up to an exponentially small error.  This is not too surprising, as in this regime $V$ is close to being an isometry.  On the other hand if $|L||R|\gg |B|$, so that there are many null states and $V$ is far from being an isometry, then we see that the canonical reconstruction breaks down and some new idea is needed.  This breakdown is natural from the point of view of entanglement wedge reconstruction: at late times $O_{\ell r}$ lies in the entanglement wedge of the radiation (see figure \ref{evapbhislandfig}), so it would be a surprising if we could reconstruct it on $B$. 

\subsection{State-specific reconstruction} \label{sec:SSR}

We just saw that the canonical reconstruction $\wt{O}$ of operators $O_{\ell r}$ does not work when $V$ is highly non-isometric, even if we restrict to sub-exponential $O_{\ell r}$ and only act on sub-exponential states $|\psi\ran$.  One might hope that we could come up with some other reconstruction which does work, but if we want the same reconstruction to work on all sub-exponential states then in the non-isometric regime it turns out this is impossible:

\begin{thm}\label{nogo5}
    Let $\mathcal{H}_\ell, \mathcal{H}_r, \mathcal{H}_B$, $\mathcal{H}_L$, and $\mathcal{H}_R$ be finite-dimensional Hilbert spaces, which for this theorem we take to be tensor products of qubits, and let $V: \mathcal{H}_\ell \otimes \mathcal{H}_r \to \mathcal{H}_B$ a linear map.  Furthermore let $\log(|L||\ell||r||R|)$ be sub-exponential in $\log|B|$.  Assume that for every sub-exponential (in $\log |B|$) unitary $W_{L \ell r R}$, there exists a unitary $\wt{W}_{LBR}$ such that for all sub-exponential (in $\log |B|$) states  $\ket{\psi}_{L\ell r R}$ we have
    \begin{equation}\label{Wbound1}
	\lVert \wt{W}_{LBR} V \ket{\psi} - V W_{L \ell r R} \ket{\psi} \rVert \le \epsilon~.
    \end{equation}
    Then
    \begin{equation}\label{512}
        |B| \ge |\ell||r| (1 - \frac{4 \epsilon}{1+\delta} \log |\ell||r||L||R|) ~,
    \end{equation}
    where
    \be
    1+\delta \equiv ||(V\otimes I_{LR})|\mathrm{MAX}\ran_{L\ell r R,\ol{L}\ol{\ell}\ol{r}\ol{R}}||,
    \ee
    with $|\mathrm{MAX}\ran_{L\ell r R,\ol{L}\ol{\ell}\ol{r}\ol{R}}$ being the canonical maximally-entangled state in the computational basis for $L\ell r R$.  Moreover \eqref{512} still holds if \eqref{Wbound1} is only required to hold when $W_{L\ell r R}$ is a single-site Pauli operator.
\end{thm}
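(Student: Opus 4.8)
The plan is to reduce the whole statement to an estimate on how far the encoded maximally entangled state is from being maximally entangled between $LBR$ and a reference copy of $L\ell r R$, and then read off $\mathrm{rank}(V)$ from the rank of its reference marginal.

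First I would observe that every computational-basis state $|i\ran_{L\ell r R}$ is sub-exponential in the sense of Section \ref{subexpsec}: it is prepared from the all-zero state by at most $\log_2\lceil|L||\ell||r||R|\rceil$ single-site gates, and since $\log(|L||\ell||r||R|)$ is assumed sub-exponential in $\log|B|$, so is this gate count. Hence \eqref{Wbound1} applies to each $|i\ran$. Write $|\mathrm{MAX}\ran=\frac{1}{\sqrt D}\sum_i|i\ran_{L\ell r R}|i\ran_{\ol L\ol\ell\ol r\ol R}$ with $D=|L||\ell||r||R|$, and $|\Omega\ran\equiv(V\otimes I_{LR})|\mathrm{MAX}\ran$, so $\lVert|\Omega\ran\rVert=1+\delta$ by definition. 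Applying \eqref{Wbound1} termwise in the reference index and summing squared errors, the orthonormality of the $|i\ran$ and the exact cancellation of the $1/\sqrt D$ against the $\sqrt D$ from summing $D$ terms each bounded by $\epsilon$ yields: for every single-site Pauli $W$ on $L\ell r R$ there is a unitary $\wt W_{LBR}$ with $\lVert(\wt W\otimes I)|\Omega\ran-(I\otimes W^T)|\Omega\ran\rVert\le\epsilon$, where I used the transpose identity $(W\otimes I)|\mathrm{MAX}\ran=(I\otimes W^T)|\mathrm{MAX}\ran$ and the fact that $V\otimes I_{LR}$ commutes with operators on the reference. Since $W^T$ is again a single-site Pauli up to sign, I then define $\wt Q$ for an arbitrary reference Pauli $Q$ as a product of at most $n=\log_2 D$ of the single-site $\wt W$'s and telescope; ordering the telescope so each step incurs only one single-site $\epsilon$-error — using that Paulis on distinct qubits commute and that the trailing/leading factors are unitary — gives $\lVert(\wt Q\otimes I)|\Omega\ran-(I\otimes Q)|\Omega\ran\rVert\le w(Q)\,\epsilon$ with $w(Q)$ the Pauli weight. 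Keeping this telescope from producing a spurious $\sqrt D$ — as a naive operator/Frobenius-norm argument would, rendering the bound vacuous except for exponentially small $\epsilon$ — is the main technical obstacle, and working throughout with the single vector $|\Omega\ran$ and its reference marginal rather than with $V^\dagger V$ as an operator is what avoids it.

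Next I would pass to reduced states on the reference. The marginal of $(\wt Q\otimes I)|\Omega\ran$ on $\ol L\ol\ell\ol r\ol R$ is $\rho\equiv\tr_{LBR}|\Omega\ran\lan\Omega|$ (as $\wt Q$ acts only on $LBR$), while that of $(I\otimes Q)|\Omega\ran$ is $Q\rho Q^\dagger$; the vector bound and $\lVert|\Omega\ran\rVert=1+\delta$ then give $\lVert\rho-Q\rho Q^\dagger\rVert_1\le 2(1+\delta)w(Q)\epsilon$ for every Pauli $Q$. Averaging over all $D^2$ Paulis, the Pauli-twirl identity $\frac{1}{D^2}\sum_Q Q\rho Q^\dagger=\frac{\tr\rho}{D}I$ together with $\sum_Q w(Q)=\tfrac34 nD^2$ and $\tr\rho=(1+\delta)^2$ yields, by the triangle inequality, $\lVert\rho-\tfrac{(1+\delta)^2}{D}I\rVert_1\le c'(1+\delta)\epsilon\log D$ for an explicit $c'$. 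Finally $\rho$ is $I_{\ol L}\otimes(V^\dagger V)^T\otimes I_{\ol R}$ up to the $1/D$ normalization, hence has rank $|L||R|\,\mathrm{rank}(V)$; restricting $\rho-\tfrac{(1+\delta)^2}{D}I$ to $\ker\rho$ shows $\lVert\rho-\tfrac{(1+\delta)^2}{D}I\rVert_1\ge\tfrac{(1+\delta)^2}{D}\big(D-|L||R|\,\mathrm{rank}(V)\big)$. Comparing the two estimates and using $\mathrm{rank}(V)\le|B|$ gives $|B|\ge|\ell||r|\big(1-\tfrac{c\,\epsilon}{1+\delta}\log|\ell||r||L||R|\big)$, which is \eqref{512}. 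Since the argument only ever invokes \eqref{Wbound1} for single-site Paulis, it simultaneously establishes the ``moreover'' clause. What remains is bookkeeping: tracking the Pauli-weight sums (the uniform bound $w(Q)\le n$ already suffices at the cost of a slightly larger constant), carefully passing from the two-vector estimate to the marginal estimate, and matching the complexity count of the basis states to the definitions of Section \ref{subexpsec}, in order to land the stated constant.
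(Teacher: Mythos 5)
Your proposal is correct, and at its core it is the same argument as the paper's: both hinge on (i) noting that computational-basis states are sub-exponential so the hypothesis applies to them, and controlling the error on $(V\otimes I_{LR})|\mathrm{MAX}\ran$ termwise in the reference index so that no factor of $\sqrt{|L||\ell||r||R|}$ appears, (ii) using unitary reconstructions of single-site Paulis to force the reference marginal $\rho$ of the encoded maximally entangled state to be close to $(1+\delta)^2 I/D$, and (iii) a rank count on $\rho$ (whose rank is $|L||R|\,\mathrm{rank}(V)\le |L||R||B|$) to extract \eqref{512}. Where you differ is in the middle step: the paper packages the four single-qubit Pauli reconstructions on each site into a single ``teleportation'' isometry $\wt{L}=\otimes_n \wt{L}_n$ built from the swap decomposition $S=\tfrac12\sum_b\sigma_b\otimes\sigma_b$, telescopes once over the $N$ qubits, and uses that $VL|\mathrm{MAX}\ran$ has an exactly maximally mixed marginal; you instead emulate every multi-qubit reference Pauli $Q$ by a product of single-site reconstructions, get $\lVert\rho-Q\rho Q^\dagger\rVert_1\le 2(1+\delta)w(Q)\epsilon$, and average with the Pauli-twirl identity $\tfrac{1}{D^2}\sum_Q Q\rho Q^\dagger=\tfrac{\tr\rho}{D}I$. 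The two implementations are algebraic cousins (the swap decomposition and the twirl are the same fact about the Pauli basis); yours yields a slightly better constant ($\tfrac32$ or $2$ per $\log_2$ instead of $4$, which implies the stated bound a fortiori), while the paper's single isometry avoids ever handling the $4^n$ multi-qubit Paulis and makes the ``maximally mixed target'' exact rather than emerging from an average. The one step you should spell out is the telescoping itself: after peeling off leading unitary reconstructions, the error term is $(\wt{W}_j V-VW_j)$ applied not to $|\mathrm{MAX}\ran$ but to $(W'\otimes I_{\rm ref})|\mathrm{MAX}\ran$ for a trailing Pauli $W'$; this is fine because $W'$ sends computational-basis states to phases times computational-basis states, which are still sub-exponential, so the termwise $\epsilon$-bound still applies — ``the trailing factors are unitary'' alone is not quite the reason. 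With that sentence added, your argument is complete and lands the theorem, including the single-site-Pauli ``moreover'' clause.
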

Hence, if we want the errors to be suppressed as $\epsilon,\delta\leq |B|^{-\gamma}$ for some $\gamma>0$, then since we require that $|L||R| |\ell||r|$ is sub-exponential in $|B|$, \eqref{512} implies that $|B|$ must be $\gtrsim |\ell||r|$.
The idea behind this theorem is that a reconstruction that works on all sub-exponential states would work well on an entire basis, and therefore also on the maximally entangled state. But if the linear map has a large kernel, then the unitary reconstruction on that entangled state would be able to dramatically change the state of the purifying reference. We give the full proof in appendix \ref{app:rec_no_go}.

We thus need to lower our hopes in terms of how powerful reconstructions of interior operators can be.  Since we can't have a single reconstruction that works on all sub-exponential states, and any particular subset of the sub-exponential states would be arbitrary, we might as well consider the smallest nontrivial domain of reconstruction: given a logical operator $O$, an encoding map $V$, and a particular state $|\psi\ran$, we can ask for a reconstruction $\wt{O}$ such that \eqref{stateaction} holds.
In \cite{Akers:2021fut} this was called ``state-specific'' reconstruction, since the reconstruction $\wt{O}$ is allowed to depend on the state $|\psi\ran$ and there is no guarantee that it will work on any other state.\footnote{The idea that the reconstruction of interior operators should be allowed to depend on the state which is being considered has a long history, and was most forcefully advocated in \cite{Papadodimas:2013jku,Papadodimas:2015jra}.  In section \ref{discussion} we discuss in more detail the relationship between that proposal and what we do here.}  In order for this very limited form of reconstruction to be useful however it is important to add an additional constraint: we require that if the logical operator $O$ is unitary then its reconstruction $\wt{O}$ is also unitary \cite{Akers:2021fut}.  The reason that this requirement is important is that unitary transformations represent physical operations that can be performed by an observer in the effective description, and we would like these to map to physical operations in the fundamental description.  Moreover without this requirement there is too much ambiguity in how to think about the support of the reconstructed operator, and state-specific reconstruction becomes too trivial to be interesting.  For example in quantum field theory, given any finite-energy states $|\psi_1\ran,|\psi_2\ran$, the Reeh-Schlieder theorem tells us that in any spatial region we can always find an operator whose action on $|\psi_1\ran$ gives a state which is arbitrarily close to $|\psi_2\ran$. On the other hand this operator will usually not be unitary, since after all we shouldn't be able to create the moon instantaneously by doing something on Earth.  

Approximate state-specific reconstruction is characterized by the following theorem, which is a generalization of theorem 3.7 from \cite{Akers:2021fut} to non-isometric situations:
\begin{thm}\label{dcthm}
Let $\mathcal{H}_a$, $\mathcal{H}_B$, and $\mathcal{H}_{\ol{B}}$ be finite-dimensional Hilbert spaces, $L:\mathcal{H}_a\to\mathcal{H}_B\otimes\mathcal{H}_{\ol{B}}$ a linear map, $W$ a unitary operator on $\mathcal{H}_a$, and $|\psi\ran$ an element of $\mathcal{H}_a$.  Then the following conditions are equivalent:
\bi
\item[(1)] There exists a unitary operator $W_B$ on $\mathcal{H}_B$ such that
\be
||W_BL|\psi\ran-LW|\psi\ran||\leq \epsilon_1
\ee
\item[(2)] We have the decoupling condition
\be\label{decoupling}
||\tr_B(LW|\psi\ran\lan\psi|W^\dagger L^\dagger)-\tr_B(L|\psi\ran\lan\psi| L^\dagger)||_1\leq \epsilon_2.
\ee
\ei
The equivalence is that  (1)$\implies$(2) with $\epsilon_2\leq \sqrt{2(\lan \psi|L^\dagger L|\psi\ran+\lan\psi|W^\dagger L^\dagger L W|\psi\ran)}\epsilon_1$ and (2)$\implies$(1) with $\epsilon_1\leq \sqrt{\epsilon_2}$.  Here $||X||_1\equiv \tr(\sqrt{X^\dagger X})$ is the trace norm of $X$.
\end{thm}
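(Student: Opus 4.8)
The plan is to prove the two implications separately: (1)$\implies$(2) is an elementary consequence of the contractivity of the partial trace in trace norm, while (2)$\implies$(1) rests on Uhlmann's theorem together with the Powers--Størmer inequality. Throughout I would abbreviate $|\phi_1\ran \equiv L|\psi\ran$ and $|\phi_2\ran \equiv LW|\psi\ran$, which are generally unnormalized vectors in $\mathcal{H}_B \otimes \mathcal{H}_{\ol{B}}$, and set $\sigma \equiv \tr_B(|\phi_1\ran\lan\phi_1|)$, $\tau \equiv \tr_B(|\phi_2\ran\lan\phi_2|)$, so that condition (2) is exactly $\|\tau - \sigma\|_1 \le \epsilon_2$, and note $\||\phi_1\ran\|^2 = \tr\sigma = \lan\psi|L^\dagger L|\psi\ran$ and $\||\phi_2\ran\|^2 = \tr\tau = \lan\psi|W^\dagger L^\dagger L W|\psi\ran$.

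For (1)$\implies$(2): since $W_B$ acts only on $\mathcal{H}_B$, we have $\tr_B(W_B|\phi_1\ran\lan\phi_1|W_B^\dagger) = \sigma$, hence $\tau - \sigma = \tr_B\!\big(|\phi_2\ran\lan\phi_2| - W_B|\phi_1\ran\lan\phi_1|W_B^\dagger\big)$. The partial trace is a contraction in trace norm, and for any two vectors $|a\ran,|b\ran$ one has the elementary estimate $\| |a\ran\lan a| - |b\ran\lan b| \|_1 \le (\||a\ran\| + \||b\ran\|)\,\||a\ran - |b\ran\|$, obtained by writing the difference as $|a\ran(\lan a| - \lan b|) + (|a\ran - |b\ran)\lan b|$ and applying the triangle inequality together with $\||u\ran\lan v|\|_1 = \||u\ran\|\,\||v\ran\|$. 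Using this with $|a\ran = |\phi_2\ran$, $|b\ran = W_B|\phi_1\ran$ (and $\|W_B|\phi_1\ran\| = \||\phi_1\ran\|$) gives $\epsilon_2 \le (\||\phi_1\ran\| + \||\phi_2\ran\|)\,\epsilon_1$; finally $x + y \le \sqrt{2(x^2+y^2)}$ converts this into the stated $\epsilon_2 \le \sqrt{2(\lan\psi|L^\dagger L|\psi\ran + \lan\psi|W^\dagger L^\dagger L W|\psi\ran)}\,\epsilon_1$.

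For (2)$\implies$(1): the vectors $|\phi_1\ran$ and $|\phi_2\ran$ are purifications of $\sigma$ and $\tau$ with purifying system $\mathcal{H}_B$, and since they already live in $\mathcal{H}_B \otimes \mathcal{H}_{\ol{B}}$ the dimension of $\mathcal{H}_B$ is automatically large enough to invoke Uhlmann's theorem. Uhlmann's theorem produces a purification $|\xi\ran \in \mathcal{H}_B \otimes \mathcal{H}_{\ol{B}}$ of $\tau$ with $\lan\phi_1|\xi\ran = F(\sigma,\tau)$, where $F(\sigma,\tau) \equiv \|\sqrt{\sigma}\sqrt{\tau}\|_1$; since $|\xi\ran$ and $|\phi_2\ran$ are purifications of the same operator $\tau$ with the same purifying system, they are related by a unitary on $\mathcal{H}_B$, which (after an appropriate choice of phase) gives a unitary $W_B$ with $\mathrm{Re}\,\lan\phi_2|W_B|\phi_1\ran = F(\sigma,\tau)$. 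Expanding the norm and using $F(\sigma,\tau) = \|\sqrt{\sigma}\sqrt{\tau}\|_1 \ge \tr(\sqrt{\sigma}\sqrt{\tau}) \ge 0$ followed by the Powers--Størmer inequality $\|\sqrt{\sigma} - \sqrt{\tau}\|_2^2 \le \|\sigma - \tau\|_1$,
\begin{align*}
\|W_B|\phi_1\ran - |\phi_2\ran\|^2 &= \tr\sigma + \tr\tau - 2F(\sigma,\tau) \\
&\le \tr\sigma + \tr\tau - 2\tr(\sqrt{\sigma}\sqrt{\tau}) = \|\sqrt{\sigma} - \sqrt{\tau}\|_2^2 \le \|\sigma - \tau\|_1 \le \epsilon_2,
\end{align*}
so $\|W_B L|\psi\ran - LW|\psi\ran\| \le \sqrt{\epsilon_2}$, i.e. $\epsilon_1 \le \sqrt{\epsilon_2}$.

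The only nontrivial ingredients are Uhlmann's theorem and the Powers--Størmer inequality; everything else is triangle inequalities and contractivity of the partial trace. I expect the main point requiring care to be the bookkeeping in the (2)$\implies$(1) direction — in particular checking that the purification freedom genuinely yields a unitary (not merely a partial isometry) on $\mathcal{H}_B$, and that the residual phase can be chosen so that $\lan\phi_2|W_B|\phi_1\ran$ is real and equal to the fidelity, which is what makes the cross term in the expanded norm exactly $-2F(\sigma,\tau)$ and produces the clean constant in $\epsilon_1 \le \sqrt{\epsilon_2}$.
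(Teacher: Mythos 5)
Your proof is correct and follows essentially the same route as the paper's: monotonicity of the trace norm under partial trace plus a rank-one estimate for (1)$\implies$(2), and Uhlmann's theorem applied to the purifications $L|\psi\ran$, $LW|\psi\ran$ with purifying system $B$ for (2)$\implies$(1). The only difference is cosmetic: where the paper simply quotes the unnormalized inequalities $\||a\ran\lan a|-|b\ran\lan b|\|_1\leq\sqrt{2(\lan a|a\ran+\lan b|b\ran)}\,\||a\ran-|b\ran\|$ and $\tr P+\tr Q-2F(P,Q)\leq\|P-Q\|_1$ as lemmas, you derive them explicitly (the latter via $F\geq\tr(\sqrt{\sigma}\sqrt{\tau})$ and Powers--St{\o}rmer), and your handling of the phase and the unitary-vs-partial-isometry issue in the Uhlmann step is sound in finite dimensions.
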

This is an example of what is usually called a ``decoupling theorem'': it shows that in order for us to be able to approximately reconstruct $W$ on $B$, it must be that $\ol{B}$ has little information about whether or not we act with $W$.  More precisely, by the Holevo-Helstrom theorem condition (2) is equivalent to the statement that there is no measurement on $\ol{B}$ that can tell whether or not we acted with $W$ with success probability greater than $\frac{1}{2}+\frac{\epsilon_2}{4}$ (see e.g. theorem 3.4 in \cite{watrous_2018}).  The proof of theorem \ref{dcthm} is given in appendix \ref{proofapp2}.

Our first application of theorem \ref{dcthm} will be to show that for any sub-exponential state $|\psi\ran\in \Hall$ and any sub-exponential unitary $W$ on $\Hall$, there is a state-specific reconstruction $W_{LBR}$ of $W$ on $\HL\otimes\HB\otimes\HR$ that works to exponential precision in $\log |B|$.  By theorem \ref{dcthm} this will follow if we can show that
\be
|\lan\psi|W^\dagger(V^\dagger V\otimes I_{LR})W|\psi\ran-\lan\psi|(V^\dagger V\otimes I_{LR})|\psi\ran|\leq \epsilon
\ee
for some $\epsilon$ which is exponentially small in $\log|B|$.  This however follows immediately from the triangle inequality and equation \eqref{subexpconc}: we are very likely to have
\begin{align}\nonumber
\Big|\lan\psi|W^\dagger(V^\dagger V\otimes I_{LR})W|\psi\ran-\lan\psi|(V^\dagger V\otimes I_{LR})|\psi\ran\Big|&\leq\Big|\lan \psi|W^\dagger(V^\dagger V\otimes I_{LR})W|\psi\ran-\lan \psi|W^\dagger W|\psi\ran\Big|\\\nonumber
&\phantom{\leq}+\Big|\lan\psi|(V^\dagger V\otimes I_{LR})|\psi\ran-\lan \psi|\psi\ran\Big|\\
&\leq 2\sqrt{18}|B|^{-\gamma}
\end{align}
with $0<\gamma<\frac{1}{2}$.

This however is a somewhat uninteresting kind of reconstruction: the reference system $L$ need not have a physical interpretation, so we are more interested in reconstructing sub-exponential unitaries $W_{\ell rR}$ with support only on $\Hl\otimes\Hr\otimes\HR$ and we'd like them to have reconstructions $W_{BR}$ with support just on $\HB\otimes\HR$.  Defining
\be\label{LBRdef}
\Psi_{LBR}(U,W)=(V\otimes I_{LR})W|\psi\ran\lan \psi|W^\dagger(V^\dagger\otimes I_{LR}),
\ee
with $|\psi\ran$ a sub-exponential state, we'd like to show that
\be
||\Psi_L(U,W_{\ell rR})-\Psi_L(U,I)||_1\leq \epsilon
\ee
for some $\epsilon$ which is exponential small in $\log |B|$.  We will first show that on average this is true.  We can observe that 
\begin{align}\nonumber
\int dU ||\Psi_L(U,W_{\ell r R})-\Psi_L(U,I)||_1&\leq \sqrt{|L|}\int dU ||\Psi_L(U,W_{\ell r R})-\Psi_L(U,I)||_2\\
&\leq \sqrt{|L|\int dU ||\Psi_L(U,W_{\ell r R})-\Psi_L(U,I)||_2^2},
\end{align}
with the first inequality following from $||X_A||_1\leq \sqrt{|A|}||X_A||_2$ (see \eqref{normineq}) and the second following from Jensen's inequality.  Our usual unitary integration technology then gives
\begin{align}\nonumber
\int dU ||\Psi_L(U,W_{\ell r R})-\Psi_L(U,I)||_2^2&=\int dU \tr_L\left(\Psi_L(U,W_{\ell r R})^2-2\Psi_L(U,W_{\ell r R})\Psi_L(U,I)+\Psi_L(U,I)^2\right)\\\nonumber
&=\frac{|P|^2|B|^2}{|P|^2|B|^2-1}\left(1-\frac{1}{|P|}\right)\frac{2}{|B|}\Bigg[e^{-S_2\left(\tr_{L\ell r}(W_{\ell r R}|\psi\ran\lan\psi|W_{\ell r R}^\dagger)\right)}\\\nonumber
&\hspace{5.5cm}-||\tr_{\ell r}(W_{\ell r R} \psi_{\ell r R})||_2^2\Bigg]\\
&\leq \frac{4}{|B|}e^{-S_2\left(\tr_{L\ell r}(W_{\ell r R}|\psi\ran\lan\psi|W_{\ell r R}^\dagger)\right)},
\end{align}
and thus
\be
\int dU ||\Psi_L(U,W_{\ell r R})-\Psi_L(U,I)||_1\leq 2\sqrt{\frac{|L|}{|B|}}.\label{Ldcbound}
\ee
So far we have not assumed much about the relative sizes of $|B|$ and $|L|$, but $L$ is designed to keep track of whatever matter we threw into the black hole to create it, and at least as long as this process was not adiabatic (i.e. it happened quickly) then we have $|L|\ll |B|$ and so we can apply theorem \ref{dcthm} to conclude that we can likely give a state-specific reconstruction on $BR$ for any particular sub-exponential $W_{\ell rR}$ on a particular sub-exponential state $|\psi\ran$.

As in our discussion of the overlap, we'd like to use measure concentration to strengthen the conclusion of the previous paragraph from likely applying to any particular sub-exponential $W_{\ell rR}$ and $|\psi\ran$ to likely applying for \textit{all} sub-exponential $W_{\ell rR}$ and $|\psi\ran$.  Based on our experience deriving \eqref{subexpconc}, the natural way to attempt this would be to view $||\Psi_L(U,W_{\ell r R})-\Psi_L(U,I)||_1$ as a Lipschitz function of $U$ and then apply Lemma \ref{Meckes} and equations \eqref{subexpcount}, \eqref{Ldcbound} to conclude that all sub-exponential states and unitaries are likely to obey the decoupling theorem \ref{dcthm}.  Unfortunately however $||\Psi_L(U,W_{\ell r R})-\Psi_L(U,I)||_1$ does not have a nice enough Lipschitz constant for this proof to work.  It is possible however to ``fix it up'' so that it does, and thus to indeed conclude that we are very likely to be able to give a state-specific reconstruction on $BR$ of any sub-exponential unitary $W_{\ell rR}$ acting on any sub-exponential state $|\psi\ran$.  The details of this argument are given in appendix \ref{fixdcapp}.  

\subsection{Entanglement wedge reconstruction} \label{sec:EWR}
We've now seen that by acting on any sub-exponential state $|\psi\ran\in \Hall$ we can reconstruct any sub-exponential unitary $W_{\ell rR}$ with a unitary $W_{BR}$ on $\HB\otimes \HR$.  Under what circumstances can this reconstruction live just on $\HB$ or $\HR$?  We've seen that when $W$ has support \textit{only} on $R$ then it can always be reconstructed on $R$ alone (via the trivial reconstruction), so the natural remaining thing to consider is a unitary $W_{\ell r}$ which acts only on the interior modes.  The entanglement wedge reconstruction proposal \cite{Czech:2012bh,Wall:2012uf,Headrick:2014cta,Dong:2016eik,Penington:2019kki} says that if the entanglement wedge of $R$ includes the interior island shown in figure \ref{evapbhislandfig}, then we should be able to reconstruct $W_{\ell r}$ on $R$, while if it does not include the island then when we should be able to reconstruct $W_{\ell r}$ on $B$.  In fact this expectation follows from a general theorem proven in \cite{Akers:2021fut}, which shows that state-specific reconstruction is essentially equivalent to the validity of the QES formula for non-isometric codes.  In this subsection we confirm this directly by verifying that the decoupling bound \eqref{decoupling} holds where appropriate.  

We first consider the possibility that an interior unitary $W_{\ell r}$ can be reconstructed on $B$.  In the notation of the previous subsection, we would like to give an upper bound for $||\Psi_{LR}(U,W_{\ell r})-\Psi_{LR}(U,I)||_1$ and then apply theorem \ref{dcthm}.  The calculation is similar to that leading to \eqref{Ldcbound}, so we will be brief.  We again first observe that
\be
\int dU ||\Psi_{LR}(U,W_{\ell r})-\Psi_{LR}(U,I)||_1\leq \sqrt{|L||R|\int dU ||\Psi_{LR}(U,W_{\ell r})-\Psi_{LR}(U,I)||_2^2},
\ee
and evaluating the unitary integral we have
\begin{align}\nonumber
\int dU ||\Psi_{LR}(U,W_{\ell r})-\Psi_{LR}(U,I)||_2^2&=\frac{|P|^2|B|^2}{|P|^2|B|^2-1}\left(1-\frac{1}{|P|}\right)\frac{2}{|B|}\left(1-|\lan \psi W_{\ell r}|\psi\ran|^2\right)\\
&\leq \frac{4}{|B|}
\end{align}
and thus
\be\label{LRdcbound}
\int dU ||\Psi_{LR}(U,W_{\ell r})-\Psi_{LR}(U,I)||_1\leq 2\sqrt{\frac{|L||R|}{|B|}}.
\ee
Therefore when $|L||R|\ll |B|$ we can reconstruct $W_{\ell r}$ on $B$. 

We next consider the possibility that $W_{\ell r}$ can be reconstructed on $R$.  As before we have
\be
\int dU  ||\Psi_{LB}(U,W_{\ell r})-\Psi_{LB}(U,I)||_1\leq \sqrt{|L||B|\int dU ||\Psi_{LB}(U,W_{\ell r})-\Psi_{LB}(U,I)||_2^2},
\ee
and evaluating the unitary integral we find\footnote{We emphasize here that $\psi_R=\tr_{\ell r}|\psi\ran\lan \psi|$, $\psi_R$ is not the reduced state of $\Psi_{LBR}$ on $R$.}
\begin{align}\nonumber
\int dU ||\Psi_{LB}(U,W_{\ell r})-\Psi_{LB}(U,I)||_2^2=&2\frac{|P|^2|B|^2}{|P|^2|B|^2-1}\left(1-\frac{1}{|P||B|^2}\right)\\\nonumber
&\times\left(e^{-S_2\left(\psi_R\right)}-\tr\left(\psi_{L\ell r}W_{\ell r}^\dagger \psi_{L\ell r}W_{\ell r}\right)\right)\\
\leq&4 e^{-S_2\left(\psi_R\right)}
\end{align}
and thus
\be\label{LBdcbound}
\int dU  ||\Psi_{LB}(U,W_{\ell r})-\Psi_{LB}(U,I)||_1\leq 2 \sqrt{\frac{|L||B|}{e^{S_2\left(\psi_R\right)}}}.
\ee
Therefore we are likely to have a reconstruction of $W_{\ell r}$ onto $R$ provided that $S_2\left(\psi_R\right)\gg \log |L|+\log |B|$.

Let's now compare with what is expected from entanglement wedge reconstruction. We expect a reconstruction on $B$ if 
\be\label{Brec}
S(\psi_R)\ll S(\psi_L)+\log|B|
\ee
and a reconstruction on $R$ if 
\be\label{Rrec}
S(\psi_R)\gg S(\psi_L)+\log|B|.
\ee
The right-hand side of \eqref{LRdcbound} being small indeed implies \eqref{Brec}, and the right-hand side of \eqref{LBdcbound} being small indeed implies \eqref{Rrec} since $S_2(\psi) \leq S(\psi)$ for any $\psi$ by the concavity of $-\log  x$.  Thus our decoupling results are compatible with entanglement wedge reconstruction.  

There is an intermediate regime where neither the right-hand side of \eqref{LRdcbound} nor the right-hand side of \eqref{LBdcbound} is small.  This is unsurprising since the bounds we derived were somewhat crude; for example we could have replaced $|L||R|$ by the rank of $\Psi_{LR}$ and in \eqref{LBdcbound} we could replace $|L||B|$ by the rank of $\Psi_{LB}$. One might wonder whether in fact the bounds could be improved all the way to \eqref{Brec} and \eqref{Rrec}, but this is not the case. Except for specific classes of quantum states, \eqref{Brec} and \eqref{Rrec} are necessary, but not sufficient, conditions for entanglement wedge reconstruction to be possible. Instead, the optimal bounds involve tools from one-shot quantum Shannon theory, and feature an intermediate regime where neither reconstruction on $B$ nor $R$ is possible \cite{Akers:2020pmf}. 

We close by observing that we established the decoupling bounds \eqref{LRdcbound}, \eqref{LBdcbound} for particular sub-exponential unitaries $W_{\ell r}$ and states $|\psi\ran$, but a measure concentration argument which is analogous to the one given in appendix \ref{proofapp2} shows that in fact they likely hold for all sub-exponential unitaries and states.  

\subsection{Subspace-dependent reconstruction} \label{sec:subspacerecon}
We have now established various results about ``state-specific'' reconstruction, which allows the reconstruction of an effective field theory operator to be different for each state that it acts on.  Mathematically this is a fine thing to study, but if we want to interpret the reconstructed operators as observables in the fundamental description then the idea is in strong tension with the linearity of quantum mechanics.  Observables in quantum mechanics correspond to linear operators on the Hilbert space, and for a given measurement apparatus we don't get to change the operator depending on the state of the system.  In the following section we will confront this problem head-on, but here we first point out an alternative way of ameliorating the problem which has some relation to the proposal of  \cite{Papadodimas:2013jku} and also to the ``alpha-bits'' proposal of \cite{Hayden:2018khn}.  The idea is to show that if we happen to only be interested in states in some effective-description subspace $\sH_C\subset \Hall$, of dimension $|C|=|B|^{\beta}$ with $0<\beta<1$, then we can likely find a reconstruction of any operator on this subspace which works for all states in $\sH_C$ (we will need to assume that $|P|$ is at most sub-exponential in $|B|$, with the same justification as given for assuming this about $|L||R|$ at the end of section \ref{subexpsec}). In general we find this less appealing than our state-specific reconstruction on general sub-exponential states, since we see no reason to focus on any particular subspace, but it is nonetheless worth mentioning. 

Our argument borrows some techniques from appendix \ref{appxapp}.  Namely from \eqref{epsnet}, we can find an $\epsilon$-net $S$ for $\sH_C$ with 
\be
N_S\sim e^{2|B|^\beta \log \frac{1}{\epsilon}}.
\ee
From theorem \ref{setconcthm} we see that $V$ is likely to preserve the inner product of all elements of $S$ up to errors of order $|B|^{-\gamma}$ provided that 
\be\label{reconc}
\frac{|B|^{2\gamma}}{|B|^{1-\beta}} \log \frac{1}{\epsilon}\ll 1,
\ee
which will be the case at large $|B|$ if $\epsilon$ is not too small.  By the same argument as in \eqref{nettoall}, if we take
\be
\epsilon=\frac{1}{|P||B|^\gamma}
\ee
then $V$ will preserve the inner product of all elements of $\sH_C$ up to errors of order $|B|^{-\gamma}$. Assuming that $|P|$ is at most sub-exponential in $|B|$, then we have $\log \frac{1}{\epsilon}\ll |B|^\alpha$ for any $\alpha>0$ and thus  \eqref{reconc} will hold at large $|B|$ provided that we choose $\gamma<\frac{1-\beta}{2}$.  We therefore have
\be
||\hat{V}^\dagger \hat{V}-I_C||_\infty<\delta,
\ee
where $\hat{V}\equiv VP_C$, with $P_C$ the projection onto $\sH_C$, and $\delta$ can be taken to be of order $|B|^{-\gamma}$ with $0<\gamma<\frac{1-\beta}{2}$.

Since $\hat{V}$ is thus an approximate isometry from $\sH_C$ to $\HL\otimes\HB\otimes \HR$, it is natural to expect that we can use it to implement a canonical reconstruction of any observable $O$ on $\sH_C$.  Indeed defining
\be
\wt{O}\equiv \hat{V}O\hat{V}^\dagger,
\ee
for any state $|\psi\ran\in \sH_C$ we have
\begin{align}\nonumber
||(\wt{O}V-VO)|\psi\ran||&=||\hat{V}O(\hat{V}^\dagger \hat{V}-I_C)|\psi\ran||_\infty\\\nonumber
&\leq ||\hat{V}||_\infty||O||_\infty ||\hat{V}^\dagger \hat{V}-I||\\
&\leq ||O||_\infty \delta\sqrt{1+\delta}.
\end{align}
In the last step we have used that
\begin{align}\nonumber
||\hat{V}||_\infty^2&=||\hat{V}^\dagger \hat{V}||_\infty\\\nonumber
&=||\hat{V}^\dagger \hat{V}-I+I||_\infty\\
&\leq 1+\delta.
\end{align}
Thus $\wt{O}$ gives a good reconstruction of $O$ on all of $\sH$ which works to exponential precision in $\log |B|$.  We emphasize that due to the presence of $P_C$ in $\hat{V}$, this reconstruction will in general have support on all of $LBR$. Whether or not a reconstruction can be given with smaller support depends on more details of $O$ and $\sH_C$.

\section{Measurement theory for the black hole interior}\label{measurementsec}
We have seen that the holographic map $V\otimes I_{LR}$ defined by \eqref{Vdef} preserves the inner product between all sub-exponential states, reproduces the QES prescription, and allows for state-specific reconstruction of all sub-exponential observables in a way that is compatible with entanglement wedge reconstruction. On the other hand our construction is in some tension with the principles of quantum mechanics. State-specific reconstructions of effective description operators are non-linear on the fundamental Hilbert space (their definition depends on the state they act on), so they cannot be interpreted as observables according to the usual rules.  Moreover the non-isometric nature of our code ensures that there will be large numbers of states in the effective description that are annihilated by $V$, which seems to introduce a large ambiguity in how to assign effective-description interpretations to fundamental-description states.  How then are we to think about measurements in the black hole interior from the point of view of the fundamental description?  Are the outcomes of such measurements even well-defined?  In this section we will see that it is indeed possible to construct a self-consistent theory of interior measurements in the fundamental description, provided that the interior observer is restricted to measuring sub-exponential observables in sub-exponential states.  The basic idea is to introduce a measurement apparatus outside of the black hole and then reconstruct the (non-local) effective-description unitary which measures an interior observable and writes the answer onto this apparatus.

\subsection{Review of measurement in quantum mechanics} \label{subsec:measQM}
Let's first recall the standard measurement protocol in quantum mechanics.  To measure a hermitian observable $X$ on a quantum system $S$, we introduce an ``apparatus'' system $A$ whose dimensionality is the same as the number of distinct eigenvalues of $X$.  $A$ is initialized in some fixed state $|0\ran_A$, and then the measurement is implemented by a unitary $U^{\{X\}}$ which acts as
\be\label{measU}
U^{\{X\}}|i\ran_S|0\ran_A=|i\ran_S|x_i\ran_A.
\ee
Here $|i\ran_S$ are the eigenstates of $X$, with
\be
X|i\ran_S=x_i|i\ran_S.
\ee
Given any initial state $\rho_S$, the state of the apparatus after acting with $U^{\{X\}}$ is 
\be\label{appstate}
\rho_A=\sum_x p_x |x\ran\lan x|,
\ee
with the sum being over distinct eigenvalues of $X$.  Here 
\be
p_x=\tr(\rho_S P_x),
\ee
with $P_x$ being the projection onto the $x$-eigenspace of $X$.  The state \eqref{appstate} describes a classical probability distribution over measurement outcomes $x$, with the probabilities given by $p_x$.  The state of the full system after the measurement result $x$ becomes known is
\be
\rho^{\{x\}}_{SA}=\frac{1}{p_x}P_x\rho_S P_x\otimes |x\ran\lan x|_A. \label{px}
\ee
Tracing out the apparatus $A$ gives the usual Copenhagen rules, but when in doubt (as we soon will be) it should be included.

There is a generalization of the above protocol which will be useful in what follows. A measurement at its core involves an interaction between a system $S$ and an apparatus $A$ initialized to some fixed state $|0\ran_A$, where the result is subsequently read off from $A$.  So far we have considered interactions of the form \ref{measU}, which entangle a standard basis of $A$ with the eigenbasis of some hermitian operator $X$, and in terms of which we can write the measurement probabilities and post-measurement state as 
\begin{align}\nonumber
p_x&=\tr(\rho_S M_x M_x^\dagger)\\
\rho_{SA}^{\{x\}}&=\frac{1}{p_x}(M_x\rho_S M_x^\dagger)\otimes |x\ran\lan x|_A.\label{Mmeas}
\end{align}
with
\be
M_x\equiv \lan x|_A U^{\{X\}}|0\ran_A.
\ee
A generalized measurement simply allows the interaction between the system and the apparatus to consist of an arbitrary unitary $U_{meas}$, not necessarily obtained from any hermitian operator $X$ as in \eqref{measU}. We then define
\be
M_x\equiv \lan x|_A U_{meas}|0\ran_A,
\ee
with $x$ now just labelling some basis for $A$, and we determine the measurement probabilities and the post-measurement state from \eqref{Mmeas} as before.  The measurement probabilities add up to one since
\be
\sum_x M_x^\dagger M_x=I.
\ee
From this point of view, measurements associated to hermitian observables as in \eqref{measU} are referred to as ``projective measurements''.  Projective measurements are characterized by the condition that the $M_x$ form a complete set of mutually orthogonal projectors, i.e. they are all hermitian and obey
\be
M_x M_{x'}=\delta_{xx'}M_x.
\ee

In the beginning of section \ref{pagesec} we gave a preliminary definition of what is meant by a ``sub-exponential observable'': we said that an observable is sub-exponential if it can be written as a linear combination of two sub-exponential unitaries.  A better definition, which more clearly characterizes the hardness of performing the measurement, is that the measurement unitary $U^{\{X\}}$ (or more generally $U_{meas}$) can be implemented with a sub-exponential quantum circuit.  In appendix \ref{complexityapp} we show in lemma \ref{measeqlem} that this definition implies the provisional one, so from now on we define a sub-exponential observable using the complexity of the measurement unitary.  

So far we have been somewhat vague about when exactly a definite outcome for the measurement is realized, i.e. when the wave function collapses and the state becomes \eqref{px}.  Most quantum mechanics textbooks are deliberately unclear on this point, since in most situations it does not matter in practice.  We however are considering a situation where an external observer can have access to the complete set of fundamental degrees of freedom, and so we need to be more careful.  The basic problem is usually illustrated using the anecdote of ``Wigner's friend'' \cite{wigner1995remarks}, which goes as follows.  We have a quantum mechanical system $S$ on which we would like to measure some observable $X$, an apparatus $A$ which will be acted on by the measurement unitary $U^{\{X\}}$, a friend of Wigner, whom we will refer to as Dirac and denote $D$, and Wigner himself, whom we will denote $W$.  Prior to the measurement we begin in the state
\be\label{state1}
\sum_i C_i |i\ran_S|0\ran_A|0\ran_D|0\ran_W,
\ee
where both Dirac and Wigner have been initialized in a pure state which has no information about either $S$ or $A$ (it is not necessary for the state to be pure but it simplifies things).  Here $|i\ran_S$ are the eigenstates of the observable $X$.  We now act with the measurement unitary $U^{\{X\}}$, producing the state
\be\label{state2}
\sum_i C_i |i\ran_S|x_i\ran_A|0\ran_D|0\ran_W.
\ee
Should we say that a definite measurement outcome has happened?  In the Copenhagen approach one would say yes, but the state \eqref{state2} remains a coherent superposition in which $SA$ is uncorrelated with $DW$, and, depending on the nature of $S$ and $A$, Wigner and Dirac (or more likely an experimental colleague) might even be able to undo the measurement unitary and return to the state \eqref{state1}, where it seems quite clear that no definite outcome has happened.  As a matter of principle we therefore had better say that in the state \eqref{state2} no definite measurement outcome has happened.  

Let's now consider what happens once Dirac looks at the apparatus to see what was measured.  The state of the system becomes
\be\label{state3}
\sum_i C_i|i\ran_S|x_i\ran_A|x_i\ran_D|0\ran_W.
\ee
From Dirac's point of view it is now clear that a definite measurement outcome has been realized.  In each branch of the wave function Dirac is certain which measurement outcome was obtained, and if he applies the measurement again he is sure to obtain the same result.  On the other hand from Wigner's point of view, Dirac can just be viewed as part of the apparatus and so by the same argument as in the previous step it is best to say that no definite measurement outcome has happened.  Who is right? The most sensible response is ``they both are'', from which we learn that the answer to the question of whether or not a specific measurement outcome happened is subjective.  Wigner is only allowed to assume the wave function has collapsed in the state \eqref{state3} to the extent that he is unable to act coherently on Dirac. Of course once Wigner either looks at the apparatus himself or asks Dirac what he saw, then the state becomes
\be\label{state4}
\sum_i C_i|i\ran_S|x_i\ran_A|x_i\ran_D|x_i\ran_W,
\ee
after which to measure the phase coherence Wigner would have to somehow perform a unitary operation which erased his own memories. It is difficult to see how one could do science in the context of such operations.  Thus the safest rule is the following: from the point of view of any particular observer, the wave function does not collapse until that observer learns the measurement outcome.

\subsection{Exterior measurements in the fundamental description}
\bfig
\includegraphics[height=7cm]{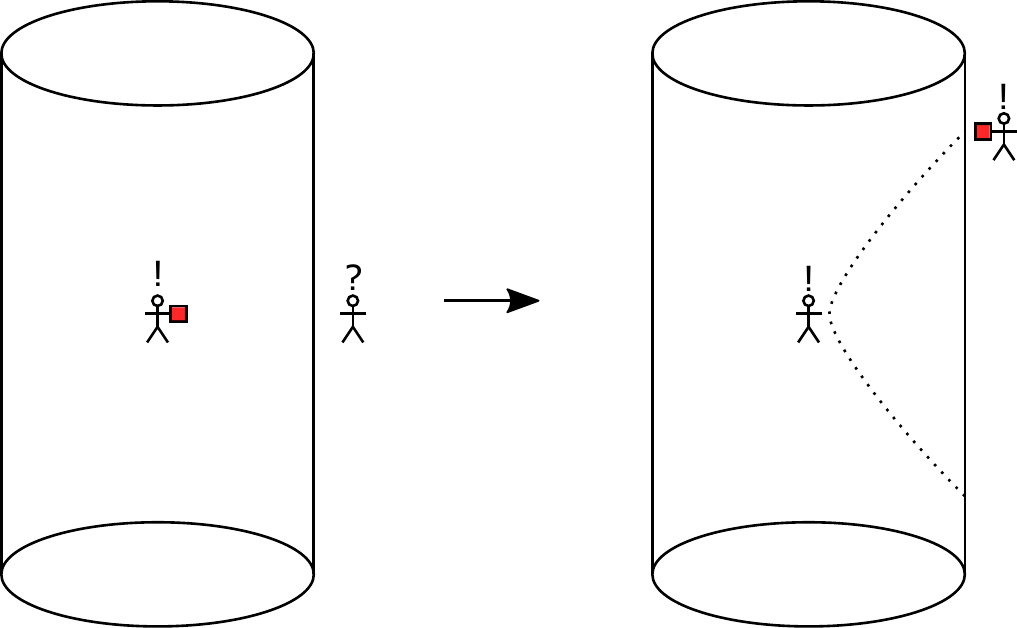}
\caption{Converting a bulk measurement to a boundary measurement in AdS/CFT.  On the left a bulk observer Diana is holding an apparatus which she has just looked at to learn the result of a measurement, but a boundary observer Wendy thinks that Diana and the apparatus are together in a coherent superposition and no definite outcome has been realized.  On the right Wendy uses local operators in the fundamental description to sends a ``space mission'' into the bulk which retrieves Diana's apparatus and swaps it out of the system, after which Wendy can look at it to learn the measurement outcome.}\label{measurementfig}
\efig
As a warm-up we first consider a bulk measurement in AdS/CFT which is not behind a black hole horizon (see \cite{Heemskerk:2012mn} for an earlier discussion of this topic).  The situation in the effective description is shown in the left diagram of figure \ref{measurementfig}.  A bulk observer, let's call her Diana, has an apparatus $a$ which she uses to measure an observable $X$ on some system $s$.  As discussed in the previous section, this results in an effective-description state
\be
\sum_i C_i |i\ran_s|x_i\ran_a|x_i\ran_d.
\ee
Since we are excluding measurements behind horizons there is an (approximately) isometric encoding map $V$, and passing this state through the encoding we obtain a state in the fundamental description.  According to an external observer in the fundamental description, whom we'll call Wendy, the full state of system is
\be\label{dWstate}
\sum_i C_iV\Big(|i\ran_s|x_i\ran_a|x_i\ran_d\Big)|0\ran_W.  
\ee
This state is analogous to state \ref{state3} in the previous subsection, and as in that case we here say that according to Diana a definite measurement outcome has been obtained but according to Wendy the system is still in superposition.  Indeed since we are assuming that Wendy has complete control over the fundamental degrees of freedom, it is in her power to evolve the state \eqref{dWstate} back to the pre-measurement state
\be
\sum_i C_iV\Big(|i\ran_s|0\ran_a|0\ran_d\Big)|0\ran_W,
\ee
where it is clear that no outcome has happened. On the other hand since $V$ is an approximate isometry, given any observable $O$ in the effective description we have the canonical reconstruction $\wt{O}=VOV^\dagger$ in the fundamental description.  Moreover by \eqref{stateaction} and \eqref{codematrix}, a measurement of $\wt{O}$ by Wendy gives the same probabilities and projects onto the same states as a measurement of $O$ by Diana up to exponentially small corrections.  It is therefore common to conflate these two measurements in discussion of reconstruction, as is done in \cite{Heemskerk:2012mn}. Indeed there is a simple way to convert a measurement by Diana into a measurement by Wendy \cite{Bousso:2012mh}: Wendy can create some sort of ``rocket'' near the boundary using local operators, which she programs to fly into the bulk, grab the apparatus $a$, and take it back to the boundary.  She then uses local operators to swap it out into a boundary apparatus $A$, which she can then look at and thus realize a definite measurement outcome (see the right diagram of figure \ref{measurementfig} for an illustration).   The state after this operation is
\be
\sum_i C_i V\Big(|i\ran_s|x_i\ran_d\Big)|x_i\ran_A|x_i\ran_W,
\ee
so, just as in state \eqref{state4} in the previous subsection, Wendy is in full agreement with Diana on the measurement outcome and for both the phase information in the superposition is lost beyond recall.

\subsection{Interior measurements in the fundamental description}
We now consider measurements behind a black hole horizon.  There are two new issues which arise:\footnote{Versions of these issues can arise already in the absence of black holes if we only allow the external observer to have access to a subset of the fundamental degrees of freedom.  For example in AdS/CFT, they can appear whenever the entanglement wedge differs from the causal wedge \cite{Hayden:2018khn,Akers:2019wxj,Akers:2020pmf}.  In those situations however they can be thought of as arising from imposing an artificial restriction on the external observer, and thus do not call into question whether or not the fundamental description has access to the information at all.}
\bi
\item[(1)] Acting on the full set of sub-exponential states behind the horizon, the holographic map $V$ is no longer an isometry and so by theorem \ref{nogo5} we cannot in a state-independent way reconstruct effective-description sub-exponential observables in the fundamental description.
\item[(2)] There is no rocket which can travel to the interior and return, so there is no causal procedure in the effective description which can convert an interior measurement to a measurement in the fundamental description. 
\ei
These two observations are compatible with each other, and indeed (2) can be interpreted as a consequence of (1). We therefore have to be more limited in what we hope to accomplish.  More concretely we need to allow some non-linearity in the fundamental description of measurement, and if we want to convert interior measurements to boundary measurements then we also need to allow for some violation of causality in the effective description.  We now develop a theory of measurement along these lines. 
 
It is useful to first emphasize a particular consequence of the preservation of overlaps for sub-exponential states:
\begin{lemma}\label{invlem}
Let $V$ be defined as in \eqref{Vdef}, and let $|\psi\ran$ and $|\phi\ran$ be sub-exponential states in the sense of section \ref{subexpsec}.  Then for any $0<\gamma<1/2$, with high probability we have
\be\label{invineq}
\Big|\big|\big|V\otimes I_{LR}(|\psi\ran-|\phi\ran)\big|\big|-\big|\big||\psi\ran-|\phi\ran\big|\big|\Big|\leq 2(18)^{1/4}|B|^{-\gamma/2}
\ee
\end{lemma}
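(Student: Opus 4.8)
The plan is to reduce the whole statement to equation \eqref{subexpconc}, which already does the probabilistic work: with the high probability asserted there, the deviation $\big|\lan a|(V^\dagger V\otimes I_{LR})|b\ran-\lan a|b\ran\big|$ is at most $\sqrt{18}\,|B|^{-\gamma}$ \emph{uniformly} over all sub-exponential states $|a\ran,|b\ran$. I would condition on that event and make the rest of the argument entirely deterministic, so that the probability appearing in the lemma is inherited verbatim from \eqref{subexpconc}.

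First I would set $|\chi\ran\equiv|\psi\ran-|\phi\ran$ and expand the difference of squared norms that we ultimately want to control:
\[
\big\|(V\otimes I_{LR})|\chi\ran\big\|^2-\big\||\chi\ran\big\|^2=\lan\chi|(V^\dagger V\otimes I_{LR})|\chi\ran-\lan\chi|\chi\ran .
\]
Expanding bilinearly in $|\psi\ran$ and $|\phi\ran$ produces exactly four terms of the form $\pm\big(\lan a|(V^\dagger V\otimes I_{LR})|b\ran-\lan a|b\ran\big)$ with $a,b\in\{\psi,\phi\}$ (including the two diagonal ones). Since $|\psi\ran$ and $|\phi\ran$ are sub-exponential, each of these four terms is controlled by \eqref{subexpconc} on our event, giving
\[
\Big|\,\big\|(V\otimes I_{LR})|\chi\ran\big\|^2-\big\||\chi\ran\big\|^2\,\Big|\leq 4\sqrt{18}\,|B|^{-\gamma}.
\]

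The last step is to convert this bound on a difference of squares into a bound on the difference of the two non-negative numbers $x\equiv\big\|(V\otimes I_{LR})|\chi\ran\big\|$ and $y\equiv\big\||\chi\ran\big\|$. Here I would use the elementary inequality $|x-y|\leq\sqrt{|x^2-y^2|}$, which holds for $x,y\geq 0$ because $|x-y|^2\leq|x-y|(x+y)=|x^2-y^2|$. Combining this with the previous display yields
\[
\Big|\,\big\|(V\otimes I_{LR})|\chi\ran\big\|-\big\||\chi\ran\big\|\,\Big|\leq\sqrt{4\sqrt{18}\,|B|^{-\gamma}}=2\,(18)^{1/4}\,|B|^{-\gamma/2},
\]
which is precisely \eqref{invineq}, valid on the event of \eqref{subexpconc} and hence with the same high probability.

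I do not expect a genuine obstacle in this argument; it is essentially a two-line consequence of \eqref{subexpconc}. The only subtlety worth naming is that the quadratic expansion naturally controls the difference of \emph{squares} of norms rather than the difference of norms, and passing between the two costs a square root — which is exactly why the exponent degrades from $\gamma$ in \eqref{subexpconc} to $\gamma/2$ in the lemma, and why the prefactor picks up the fourth root $(18)^{1/4}$.
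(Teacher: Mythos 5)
Your proof is correct and follows essentially the same route as the paper's: expand the difference of squared norms via the triangle inequality applied to the four terms controlled by \eqref{subexpconc}, obtaining the bound $4\sqrt{18}\,|B|^{-\gamma}$, then pass to the norms themselves using $|a-b|\leq\sqrt{|a^2-b^2|}$ for non-negative $a,b$. The constants and the degradation from $\gamma$ to $\gamma/2$ match the paper exactly.
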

\begin{proof}
We first observe that from the triangle inequality and \eqref{subexpconc}, with high probability we have
\be
\Big|\big|\big|V\otimes I_{LR}(|\psi\ran-|\phi\ran)\big|\big|^2-\big|\big||\psi\ran-|\phi\ran\big|\big|^2\Big|\leq 4\sqrt{18}|B|^{-\gamma}.
\ee
\eqref{invineq} then follows from the fact that for any $a,b>0$ we have $|a-b|\leq \sqrt{|a^2-b^2|}$ (without loss of generality we can assume $a\geq b$, in which case we have $a-b\leq a+b$ and thus $(a-b)^2\leq (a-b)(a+b)$).
\end{proof}
What this lemma says is that the encoded images of two sub-exponential states are close if and only if the sub-exponential states themselves are close. In other words although $V\otimes I_{LR}$ is highly non-isometric, it is approximately invertible on the set of sub-exponential states.  Thus any state in the fundamental description which is the image of a sub-exponential state has a unique interpretation in the effective description.  This restricted inverse however cannot be interpreted as a linear map, since neither the set of sub-exponential states nor their encoded images form a linear space.\footnote{It is true however that any superposition with sub-exponential coefficients of a sub-exponential number of sub-exponential states is itself sub-exponential, see lemma \ref{suplemapp}.}  

Let's first consider what an interior measurement looks like in the effective description.   We implement this by splitting the infalling modes $\ell$ into two parts - an apparatus $a$ and its complement $\hat{\ell}$ (an interior observer $d$ who looks at the apparatus can just be included as part of $a$).  The measurement of a sub-exponential observable $X$ which is supported on $\mathcal{H}_{\hat{\ell}} \otimes \Hr\otimes \HR$ is implemented by a unitary operator $U^{\{X\}}$ which acts as
\be
U^{\{X\}}|i\ran_{L\hat{\ell}rR}|0\ran_a=|i\ran_{L\hat{\ell}rR}|x_i\ran_a.  
\ee

Now say that some exterior observer Wendy is given a state $|\wt{\psi}\ran_{LBR}$ in the fundamental description such that\footnote{To avoid clutter, for the remainder of this section we will use the symbol $\approx$ to mean ``equal up to terms that are exponentially small in $\log|B|$.''} 
\be
|\wt{\psi}\ran_{LBR}\approx(V\otimes I_{LR})|\psi\ran_{L\hat{\ell}rR}|0\ran_a
\ee
for some sub-exponential state $|\psi\ran_{L\hat{\ell}rR}$.  Lemma \ref{invlem} ensures that there will only be one such state, so the effective description interpretation of $|\wt{\psi}\ran_{LBR}$ is unambiguous.  Since $U^{\{X\}}$ is sub-exponential, by theorem \ref{dcthm} and equation \eqref{Ldcbound} we see that there exists a state-specific reconstruction $U_{BR}^{\{X\}}$ such that
\be
U_{BR}^{\{X\}}|\wt{\psi}\ran\approx (V\otimes I_{LR}) U^{\{X\}}|\psi\ran_{L\hat{\ell}rR}|0\ran_a.
\ee  
Therefore Wendy is able (using her quantum computer) to act with a unitary in the fundamental description which  implements the measurement of $X$ in the effective description up to exponentially small ambiguities.  On the other hand this procedure is not a measurement from Wendy's point of view: there is no apparatus in the fundamental description, and $BR$ remains in a coherent superposition so no definite measurement outcome has been realized.  Wendy can if she likes compute the measurement probabilities and post-measurement states which an effective-description observer would use to describe this measurement (she just has to use the invertibility promised by lemma \ref{invlem} and then compute them in the effective description), but she assigns them no direct meaning in terms of her own experiences.  

In the previous subsection we saw that Wendy could convert a  measurement in the effective description to a measurement in her own description, for example using the rocket algorithm shown in figure \ref{measurementfig}, but any such algorithm for a measurement behind the horizon would necessarily violate the causal structure of the effective description.  This does not mean it is impossible: Wendy's operations are not constrained by the validity of the effective description.  For example in AdS/CFT it is well within her power to act with an operator which instantly creates a particle deep in the bulk.  This kind of operation however usually cannot be given a coherent interpretation by an observer who is living in the bulk: such operations typically affect their memories or place them in configurations which do not arise from any sensible history. Nonetheless they can still be useful to consider, and we now explain how they can give a physical interpretation in the fundamental description to the outcomes of interior measurements.  

The basic idea is to change our interpretation of the apparatus in the effective description: instead of viewing it as part of $\ell$, we instead view it as an addition to $R$.  In other words the apparatus stays outside of the black hole.  We will now refer to it as $A$ to reflect this change.  The measurement unitary in the effective description acts as
\be
U^{\{X\}}|i\ran_{L\ell r R}|0\ran_A=|i\ran_{L\ell rR}|x_i\ran_A.
\ee
Physically this seems absurd, as it allows a measurement inside a black hole to produce a record which stays outside, but we are not claiming that any effective-description observer could actually implement this unitary.  Now let's say that in the fundamental description we have a state $|\wt{\psi}\ran_{LBR}$ such that
\be\label{psiencode}
|\wt{\psi}\ran_{LBR}\approx(V\otimes I_{LR})|\psi\ran
\ee
for some sub-exponential state $|\psi\ran\in \Hall$.  Lemma \ref{invlem} again ensures that $|\psi\ran$ is unique. Since $U^{\{X\}}$ is sub-exponential, by theorem \ref{dcthm} and equation \eqref{Ldcbound} (now viewing $A$ as an addendum to $R$) we see that there exists a state-specific reconstruction $U_{BRA}^{\{X\}}$ such that
\begin{align}\nonumber
U_{BRA}^{\{X\}}|\wt{\psi}\ran_{LBR}|0\ran_A&\approx (V\otimes I_{LRA})U^{\{X\}}|\psi\ran|0\ran_A\\\label{formalmeas}
&=\sum_i \lan i|\psi\ran(V\otimes I_{LRA})|i\ran|x_i\ran_A.
\end{align}
The advantage of this approach, despite its less clear physical interpretation, is that \eqref{formalmeas} now clearly resembles a projective measurement in the fundamental description.  On the other hand since \eqref{formalmeas} is only approximate, it isn't exactly a projective measurement.  We can however interpret it as a generalized measurement: defining
\be
M_x\equiv \lan x|_A U_{BRA}^{\{X\}}|0\ran_A,
\ee
we have the measurement probabilities
\be
p_x=\tr(|\wt{\psi\ran}\lan \wt{\psi}|M_xM_x^\dagger)
\ee
and the post-measurement state
\be
\rho_{LBRA}^{\{x\}}=\frac{1}{p_x}(M_x|\wt{\psi}\ran\lan \wt{\psi}|M_x^\dagger).
\ee
We then have
\begin{align}\nonumber
p_x &\approx \tr(P_x|\psi\ran\lan \psi|)\\
\rho_{LBRA}^{\{x\}}&\approx \frac{1}{p_x}(V\otimes I_{LRA})P_x |\psi\ran\lan \psi|P_x\otimes |x\ran\lan x|_A(V^\dagger\otimes I_{LRA}),
\end{align}
where $P_x$ is the projection onto the $x$-eigenspace of $X$ in $\ell rR$. Therefore this generalized measurement has outcome probabilities which agree to exponential precision with those obtained using the standard rules in the effective description, and it leads to a post-measurement state which is exponentially close to the encoded version of the standard post-measurement state in the effective description.  We show in appendix \ref{complexityapp} (lemma \ref{projlem}) that in the effective description a post-measurement state is sub-exponential if the pre-measurement state and the observable both are, and so we are free to use it as the starting point for additional measurements.  In this way we are able to account for the full measurement theory of a (sub-exponentially-bounded) interior observer doing measurements in the fundamental description only.  The rules differ from those of ordinary quantum mechanics, in particular because we are restricted to sub-exponential states and observables and we have to already know the initial state in order to determine which generalized measurement to use, but the results agree to exponential accuracy with ordinary quantum mechanics as used by an interior observer in the effective description.

\section{A dynamical model}\label{dynamicsec}

So far we have considered features of the holographic map at a fixed time, with dynamics only implicitly present through the dimensions of the various subsystems and the choice of state.  More properly, both the effective description and the fundamental description should evolve under their own dynamics, and these dynamics should be compatible with the holographic map in the sense that it does not matter whether we evolve and then encode or encode and then evolve.  Mathematically this is the statement that the holographic map should be equivariant with respect to time evolution.  In this section we present a dynamical version of our model, which has unitary time evolution in both the effective and fundamental descriptions and for which the (non-isometric) holographic map is indeed equivariant.   Our model is designed to incorporate the basic dynamics which goes into Hawking's information problem: matter in some adjustable quantum state collapses to form a black hole, which then gradually evaporates into radiation via unitary dynamics.  The goal is to understand how these statements are compatible with the existence of the interior in the effective description.

\subsection{Fundamental dynamics}\label{fundyn}

\begin{figure}[t]
    \centering
    \includegraphics[width=6cm]{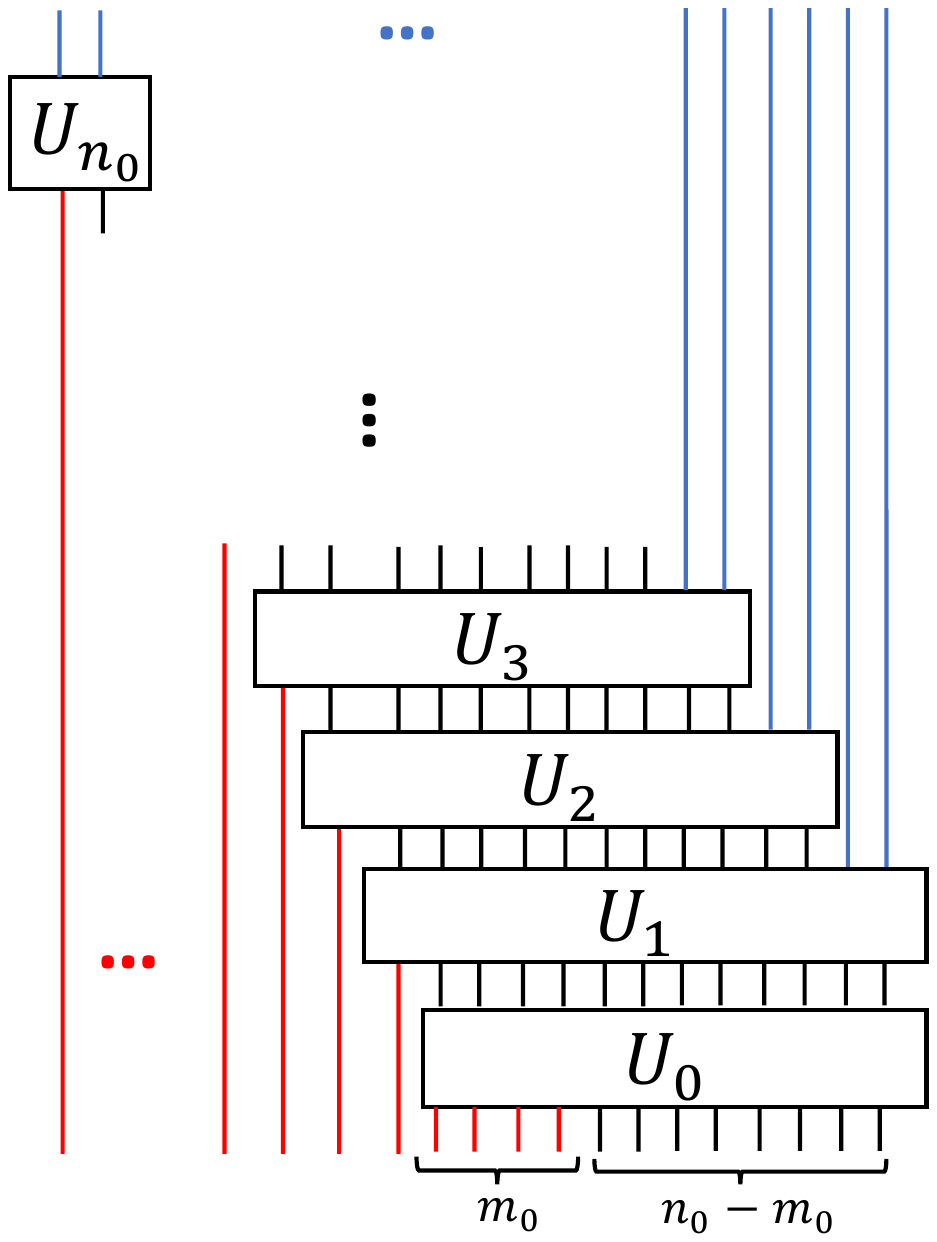}
    \caption{The full time-evolution of an evaporating black hole in the fundamental picture. The red lines represent  qudits that fall into the black hole from the bath, while the blue lines represent qudits that come out of the black hole as radiation. After $n_0$ time steps only radiation remains.  Note that the outgoing qudits appear in the order they would on the nice slice in figure \ref{evapbhmapfig}, while the ingoing qudits are reversed (qudits to the right fell in earlier).}
    \label{fig:bh_dyn}
\end{figure}

We first describe the dynamics in the fundamental picture. At time $t=-1$ we have $m_0$ qudits of collapsing matter, which are converted by an isometry $V_{0}$ to a black hole at time $t=0$ consisting of $n_0$ qudits (for a ``fast collapse'' we should have $m_0\ll n_0$).  We can always write $V_0=U_0|\psi_0\ran_f$ , where $U_0$ is unitary and $|\psi_0\ran_f$ is some fixed state of $n_0-m_0$ qudits.  The black hole is surrounded by a bath consisting of a countably-infinite number of ``ingoing'' qudits $R_{in}$ and countably-infinite number of ``outgoing'' qudits $R_{out}$.  We assume that each qudit has Hilbert space dimension $q$. Starting at $t=0$ we then evolve under the following dynamics: in each time-step 
\begin{enumerate} 
\item One ingoing qudit falls into the black hole. 
\item The black hole qudits interact according to some unitary evolution $U_{t+1}$.
\item Two outgoing qudits come out of the black hole. 
\end{enumerate}
Hence at time $t\geq 0$, the number of ingoing qudits that have fallen into the black hole is $t+m_0$, the number of outgoing qudits that have emerged is $2t$, and the remaining number of qudits in the black hole is $n_0 - t$. The black hole has evaporated completely after $n_0$ steps. See figure \ref{fig:bh_dyn} for an illustration.  The full Hilbert space of the fundamental theory is therefore 
\be
\sH_{\mathrm{fund}}=\sH_{R_{in}}\otimes \sH_{R_{out}} \otimes \sH_{BH},
\ee
where the black hole Hilbert space $\sH_{BH}$ is a direct sum over a different sizes of black hole:
\be \label{eq:funddecomp}
\sH_{BH} \cong \oplus_t \sH_{{B}^{(t)}}~,
\ee
with 
\be\label{Bsize}
\log_q |B^{(t)}| = n_0 - t.
\ee
We will refer to the subset of $R_{out}$ which was produced by the black hole after $t$ time steps as $R^{(t)}$, in which case we have 
\be
\log_q |R^{(t)}| = 2t.
\ee

The fact that two qudits escape the black hole as radiation for each qudit that falls into the black hole is a choice of the model. In reality, an arbitrary amount of information can be thrown into the black hole, but the generalized second law gives a minimum energy cost to doing so. In particular, the black hole can only lose energy and evaporate if fewer qubits fall in than escape in the radiation. The ratio of ingoing qubits to radiation qubits in our model was chosen to be as simple as possible subject to this constraint.

We consider two possible forms for the unitary evolution $U_t$. In the simpler model, which we'll call the block random unitary (BRU) model, each $U_t$ is chosen randomly from the Haar ensemble of unitaries acting on the appropriate number of qudits.  This model has the advantage that computations are relatively simple, but it has also has a substantial disadvantage: it is ``too scrambling'' in the sense that it scrambles perturbations much faster than we'd expect for a dynamics generated by a $k$-local Hamiltonian.  One way to fix this would be to at each time take $\log q\sim \log(n_0-t)$, which coarse-grains the system sufficiently that each time step now lasts for a scrambling time and thus instantaneous scrambling isn't so bad, but this decreases the resolution with which we can model the effective description dynamics.  

We therefore introduce a more realistic version of the model, which we'll call the random pairwise interaction (RPI)  model, as follows:  at each time step we randomly pair up all of the qudits, and then apply an independent Haar-random interaction to each pair.\footnote{One qudit will be left over if $(n_0 - t + 1)$ is odd, in which case we leave it untouched.} More formally, we can take $U_t$ to consist of a product of Haar-random interactions between neighboring pairs of qudits conjugated by a random permutation of the qudits.  The RPI model has the nice features that a) it only involves two-local interactions at each timestep and b) it is `fast scrambling'' in the sense that quantum information which falls into the black hole after the page time is re-emitted after an $O(\log n)$ time \cite{Sekino:2008he,Lashkari:2011yi,brown2012scrambling}.  This latter property is known as the Hayden-Preskill decoding criterion, and is strongly believed to hold for black holes \cite{Hayden:2007cs,Shenker:2013pqa}.  Most of the remainder of this section does not depend on which of these models we use, but it does affect the calculations in appendix \ref{dyn_app}.

\subsection{Effective dynamics}
We now describe the dynamics of the effective picture.  The full effective description Hilbert space consists of the bath qudits together with a direct sum over interior left- and right- moving sectors associated to different times, or equivalently to different lengths of the black hole interior: 
\be \label{eq:effdecomp}
\sH_\mathrm{eff} \cong \sH_{R_{in}}\otimes \sH_{R_{out}}\otimes\left(\oplus_t \sH_{{\ell}^{(t)}} \otimes \sH_{r^{(t)}}\right)~.
\ee
At each time-step we then have the following evolution:
\bi
\item[1.] An ingoing qudit falls in, joining $\ell$.
\item[2.] Two ``Hawking'' pairs of qudits are added to the system, each in the maximally-entangled state $\frac{1}{\sqrt{q}}\sum_{i=0}^{q-1}|i\ran|i\ran$ and each with one qudit joining $r^{(t)}$ and the other joining $R_{out}$.
\ei
For the initial state we take $\ell^{(0)}$ to be $m_0$ matter qudits in whatever state which went into the collapse isometry $V_0$ in the fundamental description, and we take $r^{(0)}$ to be empty.   As a function of time we then have
\begin{align}\nonumber
\log_q|\ell^{(t)}|&=m_0+t\\
\log_q |r^{(t)}|&=2t.\label{lrsize}
\end{align}
These dynamics are clearly unitary in the sense of preserving inner products as we advance in time, but the process of adding qudits to the system in step 2 is somewhat unusual: it requires a ready supply of qudits in a known state which can be brought in as needed.  If we modify the state in some way at time $t$, say by making a measurement or perturbing it from the outside,  we can continue to evolve it forward using steps 1 and 2, but we may not be able to evolve it backwards since the resulting state would not be a product with the two qudits we'd want to remove.  This however is a feature rather than a bug: it models the fact that in effective field theory in curved spacetime we can have a situation where modes are ``coming down'' from above the cutoff, and this necessarily requires additional degrees of freedom to be introduced.  If we modify the state and try to evolve it backwards, then we likely will meet some sort of conventional breakdown of effective field theory due to large energy densities and/or curvatures \cite{tHooft:1984kcu,Kiem:1995iy}.

\subsection{Defining the holographic map}
\bfig
\includegraphics[height=8cm]{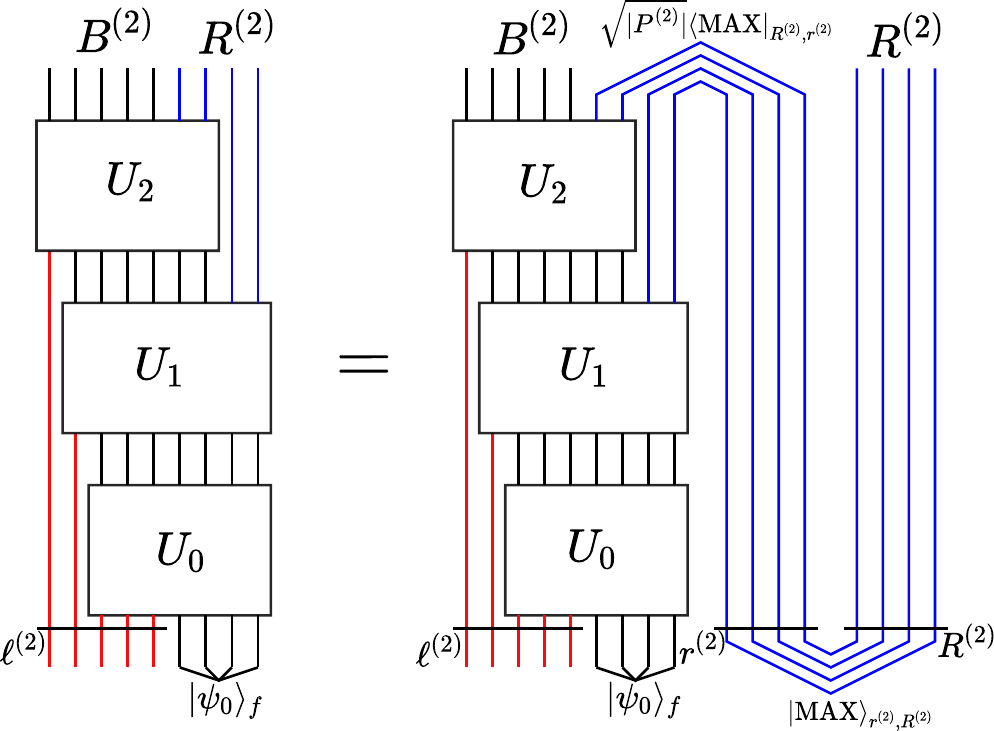}
\caption{Re-interpreting the evolution in the fundamental description as a non-isometric map from the effective description.  Here $n_0=7$, $m_0=3$, and $t=2$.  The factor of $\sqrt{|P^{(2)}|}=|r^{(2)}|$ is necessary to ensure that the maximally-mixed states contract to give the identity map on $R^{(2)}$.  The diagram on the right is precisely the action of a non-isometric holographic map $V_2\otimes I_{R^{(2)}}$ on an effective description state which is arbitrary on $\ell^{(2)}$ and maximally mixed on $r^{(2)}R^{(2)}$, with $V_2:\sH_{\ell^{(2)}}\otimes \sH_{r^{(2)}}\to\sH_{B^{(2)}}$ given by $V_2=\sqrt{|P^{(2)}|}\lan \mathrm{MAX}|_{R^{(2)},r^{(2)}}U_2U_1U_0|\psi_0\ran_f$.  Note the clear resemblance to figure \ref{Vdeffig}.}\label{dynevapfig}
\efig
Let's now understand the relation between the fundamental and effective descriptions.  The basic idea is that we can rewrite the fundamental evolution as a non-isometric map acting on the the effective description by ``bending around'' the outgoing radiation using post-selection.\footnote{This operation is strongly reminiscent of the ``black hole final state'' proposal of \cite{Horowitz:2003he}, but it isn't the same.  For one thing there is no holography in the final state proposal, and for another one does the post-selection there ``at the end of the dynamics'' while here it appears in the holographic map at a fixed time.  We discuss the relationship between the two approaches in more detail in section \ref{discussion}.}  The is illustrated in figure \ref{dynevapfig}.  In general we define the holographic map $V_t:\sH_{\ell^{(t)}}\otimes \sH_{r^{(t)}}\to \sH_{B^{(t)}}$ recursively by starting with
\be
V_0=U_0|\psi_0\ran_f
\ee
and then at each step acting with $U_t$ and then post-selecting on the maximally-mixed state for the two new radiation qudits: see figure \ref{Vrecfig} for an illustration.
\bfig
\includegraphics[height=6cm]{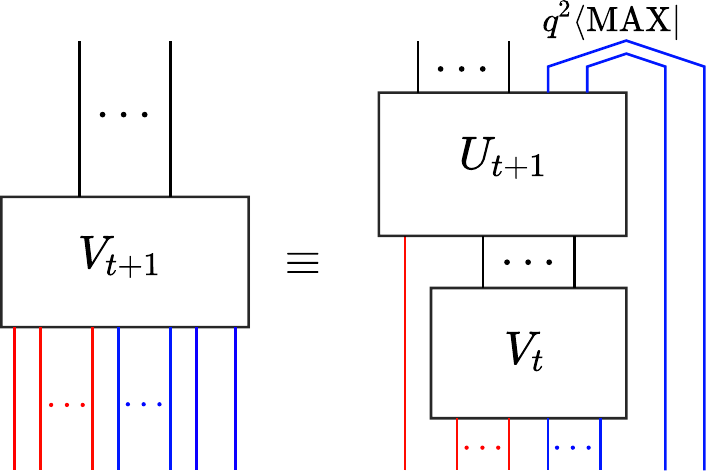}
\caption{The recursive step in defining the holographic map $V_t:\sH_{\ell^{(t)}}\otimes \sH_{r^{(t)}}\to \sH_{B^{(t)}}$. To obtain $V_{t+1}$ we act on $V_t$ with $U_{t+1}$ and then post-select the two new outgoing qudits in $R^{(t+1)}$ down to become the new qudits in $r^{(t+1)}$ in the effective description.}\label{Vrecfig}
\efig
We thus have
\be\label{Vdef2}
V_t=\sqrt{|P^{(t)}|}\lan \mathrm{MAX}|_{R^{(t)},r^{(t)}}U_tU_{t-1}\ldots U_1U_0|\psi_0\ran_f,
\ee
which we observe is a special case of \eqref{Vdef} with
\be
\sH_{P^{(t)}}=\sH_{R^{(t)}}\otimes \sH_{r^{(t)}}.
\ee
We define the full encoding map $V:\oplus_t \sH_{{\ell}^{(t)}} \otimes \sH_{r^{(t)}}\to \sH_{BH}$ by 
\be\label{eq:directsumV}
V\equiv \oplus_t V_t.
\ee

\bfig
\includegraphics[height=6cm]{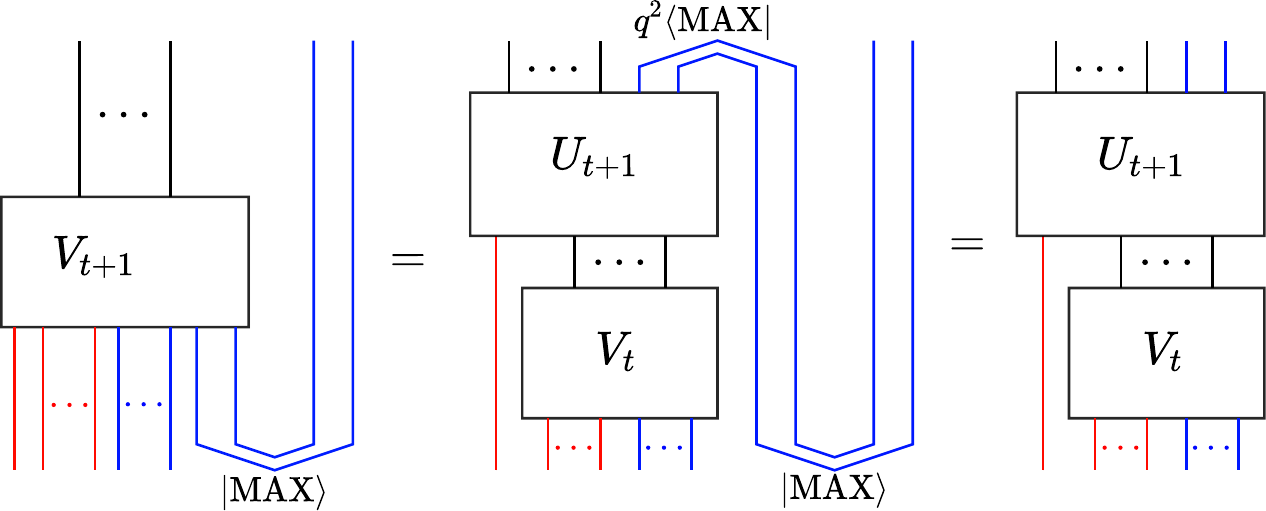}
\caption{Equivariance of the holographic map under time evolution: acting with the effective-description dynamics at time $t$ and then encoding with $V_{t+1}$ is the same as encoding with $V_t$ at time $t$ and then acting with the fundamental-description dynamics.}\label{Veqfig}
\efig
The equivariance of $V\otimes I_{R_{in}R_{out}}$ with respect to time evolution is an immediate consequence of this recursive construction of $V_t$, see figure \ref{Veqfig} for the proof.  We emphasize that this proof works for general inputs into $V_t$; we do not need to restrict to maximal entanglement between $r^{(t)}$ and $R^{(t)}$ even though we used it in figure \ref{dynevapfig} to motivate the definition of $V_t$.  For example at any particular time we can do a sub-exponential measurement or act from the outside with a sub-exponential unitary and then continue to evolve and equivariance will be maintained.  

\subsection{Entropy and the Page curve}
Having introduced our dynamical model, we now describe the results of some computations in the model.  The method of computation is the introduction of an equivalent spin model, as proposed in \cite{Hayden:2016cfa}.  The details are somewhat involved, so we relegate the details to appendix \ref{dyn_app} and just present the results here.  

We first discuss to what extent $V$ is an isometry.  This reduces to the question of when $V_t$ is an isometry.\footnote{The static model introduced in section \ref{modelsec} only describes effective theory states with a single wormhole length (i.e. at a single time), so its holographic map $V$ is analogous to $V_t$ rather than the full direct sum $V$ defined in \eqref{eq:directsumV}.}  $V_t$ certainly cannot be an isometry when $|B^{(t)}|<|\ell^{(t)}||r^{(t)}|$, which from \eqref{Bsize} and \eqref{lrsize} is equivalent to $t>\frac{n_0-m_0}{4}$.  We saw in the static model (i.e. equation \eqref{isomcond}) that this bound is approximately saturated, in the sense that the holographic map was approximately isometric as soon as $|\ell||r|\ll |B|$.  In appendix \ref{dyn_app} we show that the same is true in the dynamical model: $V_t$ is approximately isometric for $t\ll\frac{n_0-m_0}{4}$. This timescale can be interpreted as the ``Page time'' of the model (we'd have gotten $n_0/2$ if we hadn't allowed for ingoing modes).

We can also study the Page curve of the model.  From the discussion in the previous subsections, we expect the QES formula to continue to hold at times when $V_t$ is no longer an isometry.  If we begin the system (in the fundamental description) in a state
\be
|\psi\ran_{L\ell^{(t)} f}=|\chin\ran_{L\ell^{(t)}}|\psi_0\ran_f
\ee
of the pre-collapse matter fields, the ingoing radiation which will have fallen in by time $t$, and a reference system $L$, and then evolve using the dynamics from subsection \ref{fundyn}, then at time $t$ the fundamental description state is
\be
|\Psi\ran_{LB^{(t)}R^{(t)}}=(V_t\otimes I_{LR^{(t)}})\left(|\chin\ran_{L\ell^{(t)}}|\psi_0\ran_f\otimes |\mathrm{MAX}\ran_{r^{(t)},R^{(t)}}\right)
\ee
In appendix \ref{dyn_app} we show that indeed we have  
\begin{align} \nonumber
S_2(\Psi_{R^{(t)}})&= \min\left[\,2 t \log q\, ,\,  (n_0 - t) \log q + S_2\left(\chi^{\{in\}}_{\ell^{(t)}}\right)\,  \right]\\ 
S_2(\Psi_{B^{(t)}}) &= \min\left[\,(n_0-t) \log q\,,\,2t \log q+ S_2\left(\chi^{\{in\}}_{\ell^{(t)}}\right)\,\right].\label{dynrenyi}
\end{align} 
The two quantities that we minimize between in each of these expressions are precisely the (Renyi) generalized entropies of the different quantum extremal surfaces in figure \ref{evapbhislandfig}. For example, when the entanglement wedge of $R$ consists of just $R$,  the generalized entropy of the QES is the ``effective-description entropy'' of $R$, $2t\log q$. When the entanglement wedge of $R$ also contains the ``island''  $\ell r$, then we instead have an ``area term'' $(n_0-t)\log q$ and an ``effective-description entropy term'' $_2(\chi^{\{in\}}_{\ell})$. As explained in Appendix \ref{dyn_app}, the area term is proportional to the number of legs in a minimal cut through the tensor network for $V_t$, and can therefore be interpreted more explicitly as an area term in this model.  We have not computed the analogous von Neumann entropies, but due to the ``fixed area'' nature of the model we expect \eqref{dynrenyi} to also hold with the Renyi entropies replaced by von Neumann entropies.

Finally we can consider reconstruction of interior operators onto $B$ and $R$.  In appendix \ref{dyn_app} we show that in the BRU model an operator on some particular qudit $\ell_i$ can be reconstructed on $B$ when $S_2(\Psi_{R^{(t)}})\approx 2 t \log q$ and on $R$ when $S_2(\Psi_{B^{(t)}})\approx (n_0-t) \log q$, just as entanglement wedge reconstruction would suggest.  In the RPI model an even more precise result is given: these conclusions hold \textit{provided} that $\ell_i$ fell in at least a time $\log (n_0-t)$ before the present, as is required by the Hayden/Preskill scrambling argument.  The BRU model cannot see this time delay since it scrambles too fast.

\section{Coarse-graining and interior complexity}\label{coarsesec}
So far in this paper our only use of complexity has been as a limitation on validity of the effective description: effective-description observers can only expect sensible answers to sub-exponential questions.  It is also natural to consider what can be learned by measuring only observables of sub-exponential complexity in the fundamental description.   This question can be rephrased in terms of coarse-graining: what information do we throw out by limiting ourselves to sub-exponential observables in the fundamental description?  In AdS/CFT attempts were made to study coarse-graining in the dual CFT in various guises over the years (see e.g.\cite{Hubeny:2012wa,Kelly:2013aja,Engelhardt:2017wgc}), with the current understanding primarily based on the ``simple entropy'' approach of Engelhardt and Wall  \cite{Engelhardt:2017aux,Engelhardt:2018kcs} and its converse the ``python's lunch'' proposal of \cite{Brown:2019rox,Engelhardt:2021mue,Engelhardt:2021qjs}.  In this section we explain how these ideas are realized in our model(s), and we also use them to give an interpretation in the fundamental description of Hawking's ``wrong'' calculation of the entropy of the radiation: the radiation entropy which Hawking computed is really the simple entropy rather than the von Neumann entropy, while it is the latter that follows the Page curve as we have seen.  

\subsection{Simple entropy and the outermost wedge}\label{simpentsec}
The key idea of \cite{Engelhardt:2017aux,Engelhardt:2018kcs} is that coarse-graining in the fundamental description throws away information behind horizons.  They refined this idea into a precise formula for coarse-grained entropy in the fundamental description in terms of geometry in the effective description: the \textit{simple entropy formula}.  The coarse-graining protocol used in \cite{Engelhardt:2017aux,Engelhardt:2018kcs} is a special case of a more general procedure of Jaynes \cite{Jaynes:1957zza, Jaynes:1957zz}, and is defined in the following way.  Let $A$ be a quantum system, and $\Psi_A$ a (possibly mixed) state on $\sH_A$.  We then define the set $\mathcal{S}_{\Psi_A}$ of states $\Phi_A$ on $\sH_A$ such that for all sub-exponential observables $O_A$ on $A$ we have
\be
\tr(O_A\Phi_A)\approx \tr(O_A\Psi_A),
\ee
with $\approx$ defined as in section \ref{measurementsec} to mean ``equal up to an error which is exponentially small in $\log |A|$''.\footnote{In practice we are interested in a slightly modified definition, where instead of $\log |A|$ we have $\log |B|$ where $B$ are the black hole degrees of freedom in the fundamental description regardless of whether or not they coincide with $A$.  Since we are only really interested in situations where $|A|$ is not exponentially bigger or smaller than $|B|$ however, this distinction will not matter.}  The simple entropy of $\Psi_A$ is then defined by 
\be\label{eq:SsimpleA}
S^{{\rm simple}}[\Psi_A]\equiv \sup_{\Phi_A\in \mathcal{S}_{\Psi_A}} S[\Phi_A].
\ee
The simple entropy is thus the maximal fine-grained entropy that is compatible with what a subexponential observer can extract from (many copies of) $\Psi_A$.  The proposal of \cite{Engelhardt:2017aux,Engelhardt:2018kcs, Bousso:2019dxk} is that the simple entropy of $\Psi_A$ should be computed by the generalized entropy of the ``outermost'' QES for $A$, regardless of whether it is minimal:
\be \label{eq:SsimpdualA}
S^{\rm simple}[\Psi_{A}]= S_{\rm gen}[X^{\rm outer}_A]. 
\ee
Here $X^{\rm outer}_A$ is whichever QES homologous to $A$ has the property that its outer wedge, which in this case is called the \textit{outermost wedge}, is contained in the outer wedge of any other QES homologous to $A$.\footnote{See our notation section for definitions of terms like ``homologous to $A$'' and ``outer wedge''.  Existence of the outermost QES -- i.e., one which is contained in the outer wedge of every QES homologous to $A$ -- was proved in~\cite{Engelhardt:2021mue, EngPenTA}. See also \cite{Engelhardt:2018kcs} for many more features of this definition.}  The argument of~\cite{Engelhardt:2017aux,Engelhardt:2018kcs} for~\eqref{eq:SsimpdualA} proceeds by explicitly constructing a state $\Phi_{A}$ in which $X^{\rm outer}_A$ is a minimal QES, and then applies the usual QES formula to compute its entropy in the fundamental description.  We will see in the next subsection that the outermost wedge of $A$ is essentially the region that can be accessed using sub-exponential observables on $A$.\footnote{Note that this does not generally coincide with the region outside of the event horizon: the outermost wedge typically includes some spacetime behind the event horizon.}

\bfig
\includegraphics[height=3cm]{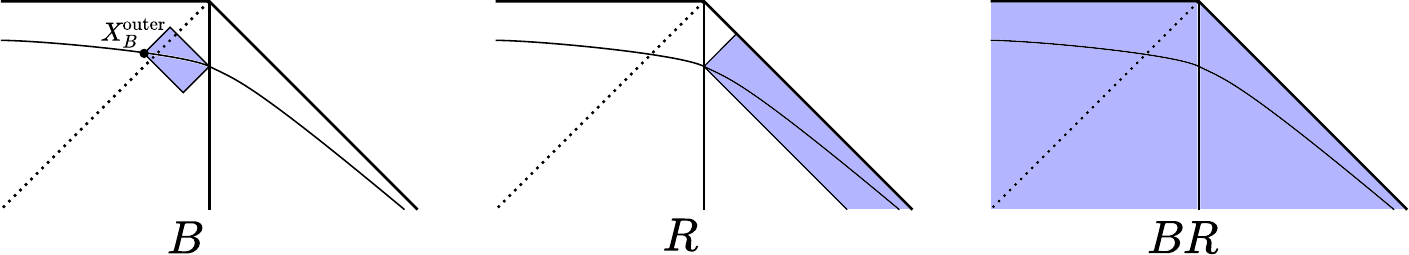}
\caption{Outermost wedges of various subsets of the fundamental degrees of freedom for an evaporating black hole.}\label{outerfig}
\efig
We now apply these ideas to an evaporating black hole, which we take to have been created by a fast collapse of a state of sub-exponential complexity and then evolved for a time which is at least its scrambling time and at most sub-exponential in its final entropy.  In figure \ref{outerfig} we show the outermost wedges for the black hole $B$, its radiation $R$, and their union $BR$.  Here $X^{{\rm outer}}_B$ is the same QES used to compute the Page curve at late times in \cite{Almheiri:2019psf,Penington:2019npb} (see the right diagram of figure \ref{evapbhislandfig}), while $X^{{\rm outer}}_R$ and $X^{{\rm outer}}_{BR}$ are the both the empty set.  For $X^{{\rm outer}}_{BR}$ the empty set has no competition and therefore wins by default, while for $B$ and $R$ the empty set and $X^{{\rm outer}}_B$ are competitors, and we choose whichever has the smaller outer wedge.  For comparison we note that at early times the minimal QES for both $B$ and $R$ is the empty set while at late times the minimal QES for both is $X^{{\rm outer}}_B$.  The outermost wedge has no such transition: as far as coarse-grained entropy is concerned there is nothing special about the Page time.  We can use these outermost wedges to compute the simple entropy using \eqref{eq:SsimpdualA}; in the notation of our models we have
\begin{align}\nonumber
S^{{\rm simple}}[\Psi_B]&=\log |B|\\\nonumber
S^{{\rm simple}}[\Psi_R]&=S(\chot_R)\\
S^{{\rm simple}}[\Psi_{BR}]&=0 .\label{simplents}
\end{align}
As one might expect, the coarse-grained entropy of the black hole is given by the Bekenstein-Hawking formula and the coarse-grained entropy of the radiation is given by Hawking's calculation.  The third result may seem more surprising, but recall that we have only evolved our black hole for a sub-exponential amount of time, so an observer in the fundamental description with access to both $B$ and $R$ can simply evolve it backwards to recover the sub-exponential pure state from which it was made.  

Let's see to what extent we can reproduce the simple entropies \eqref{simplents} in our models.  Unlike the QES-based approach of \cite{Engelhardt:2017aux,Engelhardt:2018kcs}, we will here proceed by directly computing the simple entropies in the fundamental description using the definition \eqref{eq:SsimpleA}.  We first consider the static model of section \ref{modelsec} whose encoding map is defined by \eqref{Vdef}, and we will compute the simple entropies for the  encoded version $\Psi_{LBR}$, defined by \eqref{hawkingencode}, of the Hawking state \eqref{HawkS}.  The key observation is that, due to measure concentration, it is almost surely true that for all sub-exponential observables $O_{BR}$ on $BR$ we have
\be\label{avO}
\tr(O_{BR}\Psi_{BR})\approx \tr(O_{BR}\Phi_{BR}),
\ee
where $\Phi_{BR}$ is the ``average'' state
\be
\Phi_{BR}\equiv \int dU \Psi_{BR}=\frac{I_B}{|B|}\otimes \chot_R.
\ee
That \eqref{avO} holds on average is obvious; our goal is to show that the fluctuations are small and thus that it holds also in each typical instance of $U$.  Indeed we have
\begin{align}\nonumber
\int dU \Big(\tr\big(O_{BR}(\Psi_{BR}-\Phi_{BR})\big)\Big)^2&=\frac{|P|^2|B|^2}{|P|^2|B|^2-1}\Bigg[e^{-S_2\left(\chot_R\right)}\tr\left(O_{BR}\Phi_{BR}\right)^2+\frac{\left(\tr\left(O_{BR}\Phi_{BR}\right)\right)^2}{|P|^2|B|^2}\\\nonumber
&\qquad-\frac{1}{|P||B|}\Bigg(e^{-S_2\left(\chot_R\right)}\tr_R\left(\tr_B \left(O_{BR}\right)\chot_R\right)^2\\\nonumber
&\qquad+\tr_B\left(\tr_R(O_{BR}(I_B\otimes \chot_R))\right)^2\Bigg)\Bigg]\\
&\leq \frac{4}{|B|}||O||_\infty^2 e^{-S_2\left(\chin_L\right)-S_2\left(\chot_R\right)},\label{BRbound}
\end{align}
where in the inequality we have used that for any state $\rho$ and operator $O$ we have
\be
0\leq\tr(\rho O \rho O)\leq ||\rho||_2^2||O||_\infty^2.
\ee
The lower bound follows from $\tr(\rho O \rho O)=||\sqrt{\rho}O\sqrt{\rho}||_2^2$, while the upper bound follows from H\"older's inequality \eqref{Holder} with $p=q=2$ and also \eqref{triplebound}.  By Jensen's inequality we then also have
\be\label{simplebound}
\int dU |\tr\big(O_{BR}(\Psi_{BR}-\Phi_{BR})\big)|\leq \frac{2}{\sqrt{|B|}}||O||_\infty e^{-\left(S_2\left(\chin_L\right)+S_2\left(\chot_R\right)\right)/2}
\ee
Thus we see that any particular observable $O_{BR}$ is quite likely to have an expectation value in $\Psi_{BR}$ which is quite close to its average value.  We can then use a measure concentration argument similar to that of appendix \ref{fixdcapp}  to show that this statement is likely to hold for all sub-exponential $O_{BR}$.\footnote{The analogue of $F(U)$ in appendix \ref{fixdcapp} here is $G(U)\equiv \frac{|\tr\big(O_{BR}(\Psi_{BR}-\Phi_{BR})\big)|}{||(V\otimes I_{LR})|\psi_{{\rm Hawk}}\ran||+1}$, and the analogue of \ref{Fbound} is $G(U)\leq ||O||_\infty(||(V\otimes I_{LR})|\psi_{{\rm Hawk}}\ran||+1)$.  The Lipschitz constant for $G(U)$ is of order $||O||_\infty \sqrt{|P|}$.}  

We can now compute the various simple entropies in the static model.  For $B$ things are straightforward: $\Phi_B$ is already the maximally mixed state, so we must have
\be
S^{{\rm Simple}}[\Psi_B]=\log |B|,
\ee
just as in \eqref{simplents}.  If we take $|\chot\ran_R$ to be maximally mixed then we can also reproduce the second line of \eqref{simplents} by the same argument.  More generally we can use the following lemma:
\begin{lemma}\label{factorlemma}
Let $\sH_A$ be a Hilbert space which tensor factorizes into a sub-exponential (in $\log |A|$) number $n$ of factors $\sH_{A_i}$ of $O(1)$ (in $\log |A|$) size, and let $\Psi_A$ be a state on $\sH_A$ with the property that there exists $\Phi_A\in \mathcal{S}_{\Psi_A}$ such that
\be\label{eq:factorizationassumptionA}
\Phi_A\approx \bigotimes_{i=1}^n \Phi_{A_i},
\ee
with ``$\approx$'' here meaning that the trace norm of the difference of the two states is exponentially small in $\log |A|$.  Then we have
\be
S^{{\rm Simple}}[\Psi_A]\approx S[\Phi_A].
\ee
\end{lemma}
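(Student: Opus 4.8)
The plan is to bound $S^{\rm Simple}[\Psi_A]$ both above and below by $S[\Phi_A]$, up to errors exponentially small in $\log|A|$. The lower bound is free: since by hypothesis $\Phi_A\in\mathcal{S}_{\Psi_A}$, and $S^{\rm Simple}[\Psi_A]$ is defined in \eqref{eq:SsimpleA} as the supremum of $S[\cdot]$ over $\mathcal{S}_{\Psi_A}$, we get $S^{\rm Simple}[\Psi_A]\geq S[\Phi_A]$ immediately. So the whole content is the matching upper bound, which I would establish by showing that \emph{every} $\Phi'_A\in\mathcal{S}_{\Psi_A}$ satisfies $S[\Phi'_A]\leq S[\Phi_A]+(\text{exp.\ small})$, with the correction uniform in $\Phi'_A$; taking the supremum over $\mathcal{S}_{\Psi_A}$ then finishes the job.

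To get the uniform upper bound, fix $\Phi'_A\in\mathcal{S}_{\Psi_A}$. The first step is to observe that any observable supported on a single factor $\sH_{A_i}$ is sub-exponential: since $\dim\sH_{A_i}=O(1)$ and the encoding into qudits is simple, its measurement unitary has an $O(1)$-size circuit. Therefore both $\Phi'_A$ and $\Phi_A$ match the $\Psi_A$-expectation of every such observable up to an error exponentially small in $\log|A|$, hence match each other to the same accuracy, and since the trace norm is the supremum of $|\tr(O(\rho-\sigma))|$ over Hermitian $O$ with $\lVert O\rVert_\infty\leq 1$, we obtain $\lVert\Phi'_{A_i}-\Phi_{A_i}\rVert_1\leq\eta$ for all $i$, with $\eta$ exponentially small in $\log|A|$. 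Now I would chain three standard facts: (i) subadditivity of the von Neumann entropy, $S[\Phi'_A]\leq\sum_{i=1}^n S[\Phi'_{A_i}]$; (ii) the Fannes--Audenaert continuity inequality applied on each $O(1)$-dimensional factor, giving $S[\Phi'_{A_i}]\leq S[\Phi_{A_i}]+\eta\log|A_i|+h(\eta)$ with $h$ the binary entropy, which after summing over the $n$ factors contributes a total correction of order $n\,(\eta\,O(1)+h(\eta))$; and (iii) additivity over the tensor product, $\sum_i S[\Phi_{A_i}]=S\bigl[\bigotimes_i\Phi_{A_i}\bigr]$, followed by Fannes--Audenaert applied on all of $\sH_A$ together with the hypothesis $\Phi_A\approx\bigotimes_i\Phi_{A_i}$ (in trace norm, error exponentially small in $\log|A|$) to conclude $S\bigl[\bigotimes_i\Phi_{A_i}\bigr]\leq S[\Phi_A]+(\text{exp.\ small})$. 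Combining (i)--(iii) yields $S[\Phi'_A]\leq S[\Phi_A]+(\text{exp.\ small})$, uniformly in $\Phi'_A$.

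The main thing to watch is the bookkeeping of error terms, since both the sum over $n$ factors and the factor $\log|A|$ coming from the Fannes estimate on all of $\sH_A$ multiply quantities that are exponentially small in $\log|A|$. The point to make explicit is that $n$ is sub-exponential in $\log|A|$ — hence sub-polynomial in $|A|$ — and $\log|A|$ is likewise sub-polynomial in $|A|$, so multiplying an error of the form $|A|^{-\gamma}$ by such overhead factors still leaves something bounded by $|A|^{-\gamma/2}$ for all sufficiently large $|A|$, i.e.\ exponential smallness in $\log|A|$ is preserved. The only structural input, and the reason the factorization hypothesis \eqref{eq:factorizationassumptionA} cannot be dropped, is step (iii): without near-factorization of $\Phi_A$ one can only conclude $S[\Phi'_A]\leq\sum_i S[\Phi_{A_i}]$, which can strictly exceed $S[\Phi_A]$ when $\Phi_A$ carries correlations among the factors. (As usual, ``sub-exponential observable'' and the set $\mathcal{S}_{\Psi_A}$ are really notions for families of growing size, but the estimates above are manifestly uniform across the family, so nothing changes.)
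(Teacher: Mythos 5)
Your proposal is correct and follows essentially the same route as the paper's proof: subadditivity across the factors, the observation that single-factor observables are automatically sub-exponential so that H\"older/trace-norm duality gives $\lVert\Phi'_{A_i}-\Phi_{A_i}\rVert_1$ exponentially small, Fannes continuity on each factor and on the full space, and the factorization hypothesis to close the chain, with the sub-exponentiality of $n$ controlling the accumulated error. The only cosmetic difference is that you prove a uniform bound over all of $\mathcal{S}_{\Psi_A}$ and then take the supremum, whereas the paper works with a near-maximizing state $\wt{\Phi}_A$; the two are interchangeable.
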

\begin{proof}
It is an immediate consequence of $\Phi_A\in \mathcal{S}_{\Psi_A}$ that 
\be\label{lowersimpA}
S^{{\rm Simple}}[\Psi_A]\geq S[\Phi_A].
\ee
Our goal is thus to prove the opposite inequality 
\be\label{uppersimpA}
S^{{\rm Simple}}[\Psi_A] \lesssim S[\Phi_A],
\ee
meaning that $S^{{\rm Simple}}[\Psi_A]\leq S[\Phi_A]+\epsilon$ with $\epsilon$ exponentially small in $\log |A|$.  

By the definition of $S^{\rm simple}[\Psi_{A}]$, for any $\epsilon>0$ there exists a state $\wt{\Phi}_{A}$ in $\mathcal{S}_{\Psi_A}$ such that $S^{\rm simple}[\Psi_{A}]<S[\wt{\Phi}_{A}]+\epsilon$. Choosing $\epsilon$ to be exponentially small in $\log |A|$, we can find $\wt{\Phi}_A\in \mathcal{S}_{\Psi_A}$ such that
\be\label{phidefA}
S^{\rm simple}[\Psi_{A}]\lesssim S[\wt{\Phi}_{A}].
\ee
By iterative applications of subadditivity of von Neumann entropy we have
\begin{align}\nonumber
S[\wt{\Phi}_{A}]& \leq S[\wt{\Phi}_{A_{1}A_{2}\cdots A_{n-1}}] + S[\wt{\Phi}_{A_{n}}]\\\nonumber
& \leq S[\wt{\Phi}_{A_{1}A_{2}\cdots A_{n-2}}] + S[\wt{\Phi}_{A_{n-1}}] +S[\wt{\Phi}_{A_{n}}]\\\nonumber
& \vdots \\
&\leq \sum\limits_{i=1}^{n} S[\wt{\Phi}_{A_{i}}].\label{subaddA}
\end{align}
Note that we have \textit{not} assumed that $\wt{\Phi}_{A}$ factorizes (in contrast with our assumption on $\Phi_A$). 
By definition of $\mathcal{S}_{\Psi_A}$ (and the fact that $\Phi_A,\wt{\Phi}_A\in \mathcal{S}_{\Psi_A}$), for any subexponential observable $O$ we have 
\be
\tr (O\Psi_{A})\approx\tr (O\Phi_{A})\approx \tr (O\wt{\Phi}_A) ,
\ee
where again $\approx$ means equality up to terms that are exponentially suppressed in $\log |A|$.  If we restrict our attention to observables $O_{i}$ that act nontrivially only on one of the $A_{i}$ (or equivalently to observables that factorize across the $A_{i}$), then:
\be \label{eq:subsystemsimpleA}
\tr (O_{i}\wt{\Phi}_{A_{i}})\approx \tr (O_{i}\Phi_{A_{i}}).
\ee
Because $|A_{i}|$ is by assumption of order $(\log |A|)^{0}$,  an observable $O_{i}$ that acts only on a single $A_{i}$ cannot have complexity which is exponential in $\log|A|$.  Thus~\eqref{eq:subsystemsimpleA} is true for any factorizing $O_{i}$ without the need for an a priori restriction to subexponential operators, i.e. \eqref{eq:subsystemsimpleA} holds for any observable acting on each individual factor. 

We can now make use of H\"older's inequality \eqref{Holder},
which tells us that
\be
\max\limits_{O_i} \left |\tr \left (O_{i}(\wt{\Phi}_{A_{i}}-\Phi_{A_{i}})  \right) \right|=\Vert \wt{\Phi}_{A_{i}} -\Phi_{A_{i}}\Vert_{1},
\ee
with the maximum taken over all operators $O_i$ with $\Vert O_i\Vert_{\infty}=1$. Any operator is a linear combination of two observables, so by \eqref{eq:subsystemsimpleA} the RHS vanishes up to exponentially suppressed corrections. We thus find that the trace distance from $\wt{\Phi}_{A_{i}}$ to $\Phi_{A_{i}}$ is exponentially suppressed in $\log|A|$. We can then use Fannes' inequality 
\be
\left | S\left[ \wt{\Phi}_{A_{i}}\right]- S\left[\Phi_{A_{i}}\right]\right|\leq \Vert \wt{\Phi}_{A_{i}} -\Phi_{A_{i}}\Vert_{1} \log |A_i| - \Vert \wt{\Phi}_{A_{i}} -\Phi_{A_{i}}\Vert_{1}\log \Vert\wt{\Phi}_{A_{i}} -\Phi_{A_{i}}\Vert_{1} 
\ee
to conclude that
\be
S\left [ \wt{\Phi}_{A_{i}}\right]\approx S\left[ \Phi_{A_{i}}\right].
\ee
Combining this with \eqref{phidefA} and \eqref{subaddA} we then have
\be
S^{\rm simple}[\Psi_{A}]\lesssim S[\wt{\Phi}_{A}]\leq \sum\limits_{i=1}^{n} S[\wt{\Phi}_{A_{i}}] \approx\sum\limits_{i=1}^{n} S[\Phi_{A_{i}}] \approx S[\Phi_A],
\ee
where in the last step we used our factorization assumption \eqref{eq:factorizationassumptionA}, and also Fannes' inequality, and in both approximate equalities we have used that $n$ is sub-exponential.  \eqref{uppersimpA} is thus established, which together with \eqref{lowersimpA} completes the proof.
\end{proof}
Turning now to the simple entropy of the radiation $R$, without any additional assumptions about the nature of $\chot_R$ this lemma is of no use.  Recalling however that the Hawking state is often approximated in terms of exterior outgoing modes and their interior partners entangled across the horizon, it is natural to assume that $\chot_R$ indeed has a product structure as in \eqref{eq:factorizationassumptionA}.  Indeed this product structure is manifest in our dynamical model from section \ref{dynamicsec}.  With this assumption, we then  see that $\Phi_{BR}$ satisfies the assumptions of lemma \ref{factorlemma} and thus we have
\begin{align}\nonumber
S^{{\rm Simple}}[\Psi_{R}]&=S\left[\chot_R\right]\\
S^{{\rm Simple}}[\Psi_{BR}]&=\log|B|+S\left[\chot_R\right].
\end{align}
The first of these agrees with \eqref{simplents}, and thus gives confirmation that in our static model Hawking's calculation of the radiation entropy is really computing the simple entropy in the fundamental description.  The second however is \textit{not} compatible with \eqref{simplents}: the latter gave $S^{{\rm Simple}}[\Psi_{BR}]=0$ while here we got something large.  

Why did the static model give the wrong answer for $S^{{\rm Simple}}[\Psi_{BR}]$?  Recall that below \eqref{simplents}, we observed that the reason this should vanish is because we can evolve the black hole backwards in time to see how it was made.  Inspecting figure \ref{Vdeffig} however there are two reasons why we cannot do this in the static model:
\bi
\item[(1)] The unitary $U$ was choosen at random from the Haar ensemble, and thus is almost surely exponentially complex.
\item[(2)] Even if $U$ were not exponentially complex, the post-selection onto $\lan 0|_P$ means that we cannot just apply $U^\dagger$ to run the system backwards.  
\ei
Problem (1) can be avoided by taking $U$ to be a unitary $k$-design (see appendix \ref{Uapp}) instead of a Haar random unitary, but problem (2) is not so easily avoided.  Indeed to deal with problem (2) we need to introduce additional structure to the model.  Fortunately however we have already done so: either of the dynamical models we introduced in the previous section indeed has vanishing $S^{{\rm Simple}}[\Psi_{BR}]$.  This is clear from figure \ref{fig:bh_dyn}: in the fundamental description evolution there is no post-selection, so we can simply apply the dynamical unitaries $U_m$ in reverse to extract the initial state.\footnote{To do this in the BRU model we also need to take the $U_m$ to be $k$-designs instead of Haar random in order to solve problem (1).} On the other hand, since these unitaries act on both $B$ and $R$ we cannot do this backward evolution if we have access only to $B$ or only to $R$; there is thus no similar obstruction to $\Psi_B$ and $\Psi_R$ having nonzero simple entropies.

Showing that $\Psi_B$ and $\Psi_R$ do not need to have small simple entropies is not the same as showing that their simple entropies are given by \eqref{simplents}.  In the static model we were able to prove this using measure concentration, but once we replace random unitaries by $k$-designs we lose the ability to make such arguments since there is no $k$-design version of lemma \eqref{Meckes}.  Have we thus sacrificed $S^{{\rm Simple}}[\Psi_B]$ and $S^{{\rm Simple}}[\Psi_R]$ to recover $S^{{\rm Simple}}[\Psi_{BR}]$?  Unfortunately we are not able to answer this question with a decisive ``no'', essentially because showing things like this is in the same difficulty class as proving that ${\rm P}\neq{\rm NP}$, but we will now argue that the answer is a confident ``probably not'' (just as it is for asking if ${\rm P}={\rm NP}$).  We first observe that the question of whether or not information about $\ell$ is accessible by sub-exponential measurements in $B$ or $R$ (but not both) is equivalent to the question of whether or not, upon maximally entangling the inputs $\ell$ to the dynamical model with a reference system $L$, a sub-exponential observer in $B$ or $R$ (but not both) can extract something which is correlated with $L$.  Up to notation this is precisely the task which was considered in \cite{Harlow:2013tf} in the context of the black hole firewall arguments, and there it was argued that standard assumptions about the difficulty of a quantum complexity class called ``quantum statistical zero knowledge'' imply that no such correlation can be extracted in less than exponential time.  A nicer argument for the same, based on the existence of injective one-way functions, was soon after given by Aaronson, as reported in section 8.1 of \cite{Harlow:2014yka}.  The existence of one-way functions is the foundation of modern cryptography \cite{arora2009computational}, and it is widely expected to hold.  Any failure of \eqref{simplents} in our dynamical models must thus rely on some further structure beyond simply the fact that the dynamical unitaries have sub-exponential size, and we see no reason for such structure to exist.

\subsection{Complexity of distinguishing interior states}
We now study more directly the question of what information about the black hole interior can be accessed using sub-exponential observables on a set $A$ of degrees of freedom in the fundamental description.  Following \cite{Engelhardt:2017aux,Engelhardt:2018kcs} we can guess that the answer to this question changes qualitatively depending on whether or not the interior lies in the outermost wedge of $A$ (see also \cite{Almheiri:2021jwq} for an alternative perspective reaching a similar conclusion).  More concretely, defining
\be
\Psi_{BR}(W_{\ell r})\equiv (V\otimes I_{R})(W_{\ell r}\otimes I_{R})|\psi_{{\rm Hawk}}\ran\lan\psi_{{\rm Hawk}}|(W_{\ell r}^\dagger \otimes I_{R})(V^\dagger \otimes I_{R}),
\ee
where $|\psi_{{\rm Hawk}}\ran$ is the Hawking state \eqref{HawkS} but with $L$ taken to be trivial, i.e. with with $|\chin\ran_{L\ell}=|\chin\ran_\ell$, we will study to what extent sub-exponential observables on $A$ can distinguish the states $\Psi_A(W_{\ell r})$ and $\Psi_{A}(W_{\ell r}')$, with $W_{\ell r}$ and $W_{\ell r}'$ both sub-exponential, in the sense that
\be\label{distinguish}
\left|\tr\left[O_A(\Psi_A(W_{\ell r})-\Psi_A(W_{\ell r}'))\right]\right|>\delta
\ee
for some $\delta$ which is not exponentially small and some sub-exponential observable $O_A$ obeying $||O_A||_\infty\leq 1$.\footnote{\label{distrec}The relationship of this question to state-specific reconstruction is the following: if there is a projection $\Pi_A$ which can distinguish $|\pm\ran\equiv (V\otimes I_{R})\left(|\Psi(W_{\ell r})\ran\pm e^{-i\phi}|\Psi(W_{\ell r}')\ran\right)$ to accuracy $\epsilon$ in the sense that $\lan +|\Pi_A|+\ran/\lan +|+\ran\geq 1-\epsilon$ and $\lan -|\Pi_A|-\ran/\lan-|-\ran\leq \epsilon$, where $e^{i\phi}$ is defined by $\lan\Psi(W_{\ell r})|\Psi(W_{\ell r}')\ran =e^{i\phi}\cos \theta$, then the unitary $W_A=e^{i\phi}\left(2\Pi_A-I_A\right)$ will map $|\Psi(W_{\ell r})\ran$ to $|\Psi(W_{\ell r}')\ran$ with accuracy of order $\epsilon$.  In \eqref{distinguish} however we are only asking if we can distinguish the two states at all, so $\epsilon$ need not be small.  On the other hand, in practice if we can distinguish them at all then we can usually distinguish them well.}  We always assume that the states $W_{\ell r}|\psi_{{\rm Hawk}}\ran$ and $W_{\ell r}'|\psi_{{\rm Hawk}}\ran$ are easily distinguishable in the effective description, and in particular that $W_{\ell r}\neq W_{\ell r}'$.

We first consider the case where $W_{\ell r}$ and $W_{\ell r}'$ act only on $\ell$, in which case things are simple.  Indeed these states differ from the Hawking state only by the quantum state of the infalling matter, and at least so long as they do not change the mass by a large amount this will not create substantial backreaction.  The geometry is thus essentially the same as for $|\psi_{{\rm Hawk}}\ran$, and the outermost wedges are as shown in figure \ref{outerfig}.  We thus can guess that we can sub-exponentially distinguish the two states on $BR$ but not on $B$ or $R$ separately.  The former is indeed is true: on $BR$ we can evolve the system backwards in time, which causes all infalling matter to evolve back out of the black hole into $R$ where it can easily be measured.  The latter is also plausible, for the same complexity-theoretic reasons given at the end of the previous subsection.  

\bfig
\includegraphics[height=6cm]{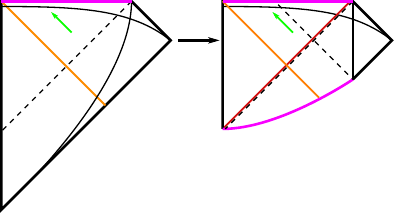}
\caption{Back-reaction on the global geometry of an evaporating black hole caused by disturbing a right-moving interior mode.  The orange line is the collapsing shell which formed the black hole, the thin vertical black line is the join between the radiation system and the black hole region, the red line is the shockwave which is created by disturbing the mode, the horizontal thin black line is the Cauchy slice on which we disturb the mode, and the purple lines are singularities.  The dashed lines are future/past horizons.  Note in particular the appearance of a past singularity in the back-reacted geometry, which didn't exist for the black hole formed from collapse.  The green arrow is an infalling mode which in the unperturbed geometry can be evolved back out to the radiation, but which evolves back to the past singularity in the back-reacted geometry.}\label{pastsingfig}
\efig
The situation is more complicated for general $W_{\ell r}$ and $W_{\ell r}'$.  The reason is that for black holes which were formed at least a scrambling time in the past, disturbing even a single right-moving interior mode and then evolving backwards in time leads to a back-reaction which changes the global geometry in a major way \cite{tHooft:1984kcu,Kiem:1995iy}.  In \cite{Engelhardt:2021qjs} it was explained how this effect can lead to a large change in what information is sub-exponentially accessible.  In the context of an evaporating black hole the relevant back-reaction is shown in figure \ref{pastsingfig}: disturbing a right-moving mode on the indicated Cauchy slice creates a shockwave, shaded red in the figure, which evolves backwards to create a new past singularity which wasn't there in the original geometry.  This is important for sub-exponential reconstruction for two reasons:
\bi
\item[(1)] Any left-moving mode that fell into the black hole at least a scrambling time before we acted with $W_{\ell r}$ will now evolve back to the past singularity rather than leaving the black hole and entering radiation, which means that we can no longer use backward evolution on $BR$ to diagnose whether or not $W_{\ell r}$ acted nontrivially on this mode.  
\item[(2)] As explained in \cite{Engelhardt:2021qjs}, the presence of this shockwave leads to the creation of a new QES homologous to $BR$, which lies in the vicinity of the sphere where the future and past horizons cross in the left diagram of figure \ref{pastsingfig}.  This QES is a new candidate for the outermost QES homologous to $BR$, and it brings about a large change in the outermost wedge of $BR$ (see figure \ref{newouterfig}). 
\ei
These changes do not have much effect on the difficulty of distinguishing $\Psi_{BR}(W_{\ell r})$ and $\Psi_{BR}(W_{\ell r}')$ on $B$ or $R$ separately, since those were already exponentially hard for interior unitaries with support only on $\ell$, but they are essentially dual ways of seeing that accessing interior information using sub-exponential observables on $BR$ is more difficult in general than it is for unitaries with support only on $\ell$.
\bfig
\includegraphics[height=4cm]{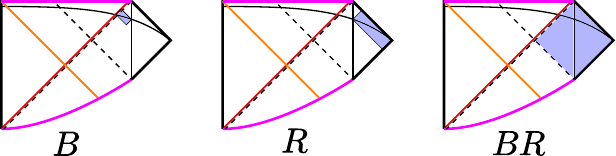}
\caption{Outermost wedges for $B$, $R$, and $BR$ in the back-reacted geometry, shaded blue.  Note that the outermost QES for $B$ is \textit{not} the same as the outermost QES for $BR$; the former is the QES used in \cite{Almheiri:2019psf,Penington:2019npb} to compute the Page curve, while the latter is the new QES constructed in \cite{Engelhardt:2021qjs}.}\label{newouterfig}
\efig

Let's now consider how hard it is to distinguish $\Psi_{BR}(W_{\ell r})$ and $\Psi_{BR}(W_{\ell r}')$ in our models.  We first consider the static model.  By the triangle inequality and \eqref{simplebound} we have
\be
\int dU|\tr\left[O_{BR}(\Psi_{BR}(W_{\ell r})-\Psi_{BR}(W_{\ell r}'))\right]|\leq \frac{4}{\sqrt{|B|}}||O||_\infty e^{-S_2\left(\chot_R\right)/2},
\ee
where $O_{BR}$ is any fixed observable on $BR$ and we have used that $\chin_\ell$ is pure.  Here we have made use of the fact that in the calculation \eqref{BRbound} we can replace $\Psi_{BR}$ by $\Psi_{BR}(W_{\ell r})$ without changing the result, since in $\int dU|\tr\left[O_{BR}(\Psi_{BR}(W_{\ell r})-\Phi_{BR}\right]|^2$ the operators $U$ and $W_{\ell r}$ only appear together and thus $W_{\ell r}$ can be removed by the change of variables $U'=UW_{\ell r}$.  We may then again use measure concentration to extend this conclusion to all sub-exponential $O_{BR}$. Thus we see that $\Psi_{BR}(W_{\ell r})$ and $\Psi_{BR}(W_{\ell r}')$ cannot be distinguished by any sub-exponential observable.  This of course implies the same for distinguishing $\Psi_{B}(W_{\ell r})$ from $\Psi_{B}(W_{\ell r}')$ and $\Psi_{R}(W_{\ell r})$ from $\Psi_{R}(W_{\ell r}')$.  For generic $W_{\ell r}$ and $W_{\ell r}'$ these conclusions are consistent with what one would expect from figure \ref{newouterfig}, but they are not correct when these unitaries have support only on $\ell$ since then we can sub-exponentially distinguish the states on $BR$ by evolving backwards in time.  We will also see in a moment that the situation on $BR$ for general $W_{\ell r}$ is more refined than this calculation would suggest.  

\bfig
\includegraphics[height=7cm]{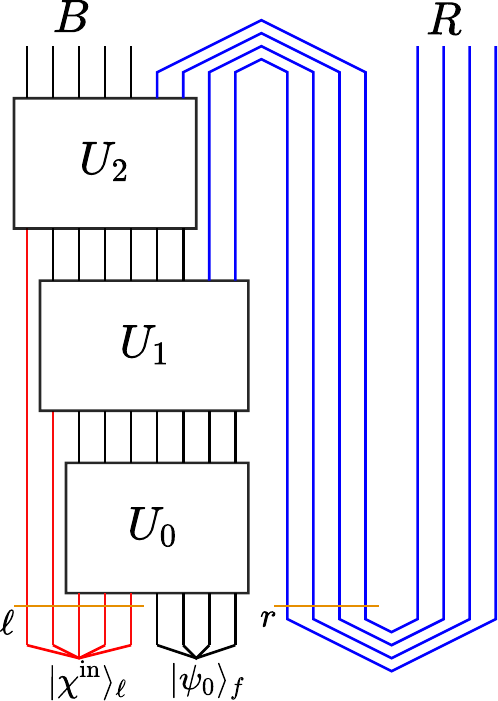}
\caption{Acting with an interior unitary $W_{\ell r}$ in the dynamical model.  The unitary acts on the two orange lines.}\label{Wdynfig}
\efig
We now consider the dynamical model of section \ref{dynamicsec}, focusing on the BRU version of the model with the dynamical unitaries $U_m$ taken to be $k$-designs.  We will primarily consider distinguishing $\Psi_{BR}(W_{\ell r})$ and $\Psi_{BR}(W_{\ell r}')$, since distinguishing the states on $B$ or $R$ separately works in the same way as for the static model.  The basic situation is illustrated in figure \ref{Wdynfig}.  If $W_{\ell r}$ and $W_{\ell r}'$ are supported only on $\ell$, then we can ``straighten out'' the blue lines to remove all post-selection, which is the inverse of the operation shown in figure \ref{dynevapfig}, and then apply the inverses of the dynamical unitaries in reverse order to directly extract the infalling matter and thus easily distinguish which unitary we acted with.  This ``straightening out'' operation however is disrupted the moment we allow $W_{\ell r}$ and $W_{\ell r}'$ to act on $r$, which calls into question whether or not the states can be still sub-exponentially distinguished.  This can be thought of as the manifestation in the fundamental description of the back-reaction phenomenon shown in figures \ref{pastsingfig}, \ref{newouterfig}.  On the other hand, if $W_{\ell r}$ and $W_{\ell r}'$  only affect a small number of the right-moving modes $r$, then we can still straighten out \textit{most} of the blue lines in figure \ref{Wdynfig}; one may thus wonder if that this still makes the states $\Psi_{BR}(W_{\ell r})$ and $\Psi_{BR}(W_{\ell r}')$ easier to distinguish than they would be on $B$ or $R$.  We now argue that it does, although only in a way whose complexity remains exponential in the number of modes which are affected by $W_{\ell r}$ and $W_{\ell r}'$.  

The basic idea, following \cite{Brown:2019rox}, is to use a multi-state version of Grover's algorithm, which we formalize as lemma \ref{groverlem2}, to ``undo'' the post-selection in figure \ref{Wdynfig} and thus extract the initial state.  More concretely, we first do a relabeling (which applies \textit{only} in this paragraph) where we straighten out any blue lines in figure \ref{Wdynfig} which are not affected by $W_{\ell r}$ and then view them as part of $B$.  The symbols $r$ and $R$ then refer only to the legs which remain bent, and the symbol $P$ refers to remaining copy of the new $rR$ which is still post-selected on.  We now construct a unitary $U_{BRP}$ such that
\begin{align}\nonumber
U_{BRP}^\dagger \Big(V W_{\ell r}|\psi_{{\rm Hawk}}\ran\otimes|\mathrm{max}\ran_P\Big)&\approx UW_{\ell r} |\psi_{{\rm Hawk}}\ran|\psi_0\ran_f\\
U_{BRP}^\dagger \Big(V W_{\ell r}'|\psi_{{\rm Hawk}}\ran\otimes|\mathrm{max}\ran_P\Big)&\approx UW_{\ell r}' |\psi_{{\rm Hawk}}\ran|\psi_0\ran_f.\label{groverproj}
\end{align}
Here $U=U_m U_{m-1}\ldots U_0$, and we will see that the complexity of $U_{BRP}$ is of order $\sqrt{|P|}$.  We can then distinguish $\Psi_{BR}(W_{\ell r})$ and $\Psi_{BR}(W_{\ell r}')$ by measuring 
\be
O_{BR}\equiv \lan{\rm max}|_PU_{BRP}O_{\ell r R}U^\dagger U_{BRP}^\dagger|{\rm max}\ran_P,
\ee
where $O_{\ell rR}$ is any observable which distinguishes $W_{\ell r}|\psi_{\rm Hawk}\ran$ and $W_{\ell r}'|\psi_{\rm Hawk}\ran$.  To construct $U_{BRP}$, it is convenient to define $\Pi_P\equiv |{\rm max}\ran\lan{\rm max}|_P$ and introduce an orthonormal basis $|\psi_i\ran$, with $i=1,2$, for the span of $UW_{\ell r} |\psi_{{\rm Hawk}}\ran|\psi_0\ran_f$ and $UW_{\ell r}' |\psi_{{\rm Hawk}}\ran|\psi_0\ran_f$. We then have
\be
|\psi_i\ran=\sqrt{\lan \psi_i |\Pi_P|\psi_i\ran}\frac{\Pi_P|\psi_i\ran}{\sqrt{\lan \psi_i |\Pi_P|\psi_i\ran}}+\sqrt{\lan \psi_i|(1-\Pi_P)|\psi_i\ran}\frac{(1-\Pi_P)|\psi_i\ran}{\sqrt{\lan \psi_i|(1-\Pi_P)|\psi_i\ran}}.
\ee
Noting that $W_{\ell r}|\psi_{\rm Hawk}\ran$ and $W_{\ell r}'|\psi_{\rm Hawk}\ran$ are both sub-exponential states, by lemma \ref{suplemapp} and the (conjectural) analogue of \eqref{subexpconc} for the dynamical model we see that
\be
\lan\psi_i|\Pi_P|\psi_j\ran\approx \frac{1}{|P|}\delta_{ij}.
\ee
Therefore we have
\be
|\psi_i\ran\approx \sin \theta|\phi_i\ran+\cos \theta |\phi_i^\perp\ran,
\ee
with 
\begin{align}\nonumber
\sin \theta &\equiv \frac{1}{\sqrt{|P|}}\\\nonumber
|\phi_i\ran&\equiv \sqrt{|P|}\Pi_P|\psi_i\ran\\
|\phi_i^\perp\ran&\equiv \sqrt{\frac{|P|}{|P|-1}}(1-\Pi_P)|\psi_i\ran.
\end{align}
Noting that the $|\phi_i\ran$ are approximately orthonormal, that $\lan \phi_i|\phi_j^\perp\ran=0$ for all $i,j$, and that the (sub-exponential) unitary 
\be
U_\phi\equiv I-2|\phi_1\ran\lan \phi_i|-2|\phi_2\ran\lan \phi_2|
\ee
acts as
\begin{align}\nonumber
U_\phi|\phi_i\ran&\approx-|\phi_i\ran\\
U_\phi|\phi_i^\perp\ran&\approx|\phi_i^\perp\ran,
\end{align}
we may then invoke lemma \ref{groverlem2} to construct a unitary $U_{BRP}$, whose complexity is of order $\sqrt{|P|}$ (unless that is small compared to the complexity of $|\phi_i\ran$ and $|\psi_i\ran$, in which case it is set by those), and which acts as\footnote{The various approximations here indicate errors in the vector norm which are of order $|B|^{-\gamma}$ for some $0<\gamma<\frac{1}{2}$.  The algorithm in \ref{groverlem2} relies on this approximation $O(\sqrt{|P|})$ times, so the algorithm will work up to accuracy which is of order $\sqrt{|P|}|B|^{-\gamma}$.  Recall however that we have already straightened any legs not affected by $W_{\ell r}$ or $W_{\ell r}'$, so $\sqrt{|P|}$ is only exponential in the number of affected qudits in $R$.}
\be
U_{BRP}|\psi_i\ran\approx |\phi_i\ran.
\ee
Translating this back into a statement about $W_{\ell r}$ and $W_{\ell r}'$ then gives \eqref{groverproj}.  

Thus we see that we can distinguish $\Psi_{BR}(W_{\ell r})$ and $\Psi_{BR}(W_{\ell r}')$ by a measurement whose complexity is exponential in number of modes in $r$ which are affected by $W_{\ell r}$ and $W_{\ell r}'$.  If this is an $O(1)$ fraction of the modes, as it will be for generic sub-exponential $W_{\ell r}$ and $W_{\ell r}'$, we then have that the states cannot be distinguished in sub-exponential (in $\log |B|$) time.  On the other hand if $W_{\ell r}$ and $W_{\ell r}'$ are simpler, and in particular if they only affect an $O(1)$ number of modes in $r$, then by using this algorithm we can distinguish the states in a time which remains sub-exponential in $\log |B|$.  The connection between the outermost wedge and sub-exponential distinguishability is thus in general more subtle than might be hoped.  On the other hand there \textit{is} a more subtle proposal for how these are related: the ``Python's lunch'' conjecture of \cite{Brown:2019rox,Engelhardt:2021mue,Engelhardt:2021qjs}.  This gives a concrete geometric proposal in the effective description for how difficult it should in the fundamental description to to distinguish pairs of states, and we now turn to arguing that for the dynamical model our results are in all cases consistent with this proposal.   

\subsection{Python's lunch conjecture}
In order for there to be a plausible geometric separation between states that can be distinguished using sub-exponential observables on $A$ and states that can be distinguished only using exponential observables on $A$, there need to be at least two QESs homologous to $A$: the minimal QES $X_A^{\rm min}$ and outermost QES $X_A^{\rm outer}$.  It was argued in \cite{Brown:2019rox} that generically this implies the existence of yet a third QES $X_A^{\rm bulge}$, whose generalized entropy is at least as big as both $X_A^{\rm min}$ and $X_A^{\rm outer}$, and whose outer wedge contains the outermost wedge of $A$ but is contained in its entanglement wedge.  In the event that there is more than one such surface, the correct one is identified using a ``maximinimax'' construction as follows: we first consider a spatial slice $\Sigma$ bookended by $X_{A}^{\rm min}$ and $X_{A}^{\rm outer}$ and a foliation of $\Sigma$ by codimension-two surfaces homologous to $A$, and within that foliation we find the surface of maximal $S_{\rm gen}$.  We then pick among all such foliations the one whose max surface has the smallest $S_{\rm gen}$.  Finally we maximize over all possible choices of $\Sigma$, resulting in a surface which can be shown to be a QES obeying the above properties: the ``bulge'' surface $X_A^{\rm bulge}$.  The authors of \cite{Brown:2019rox} then made the following proposal: the complexity of reconstructing observables which lie in the domain of dependence of the surface $\Sigma$ above, which is referred to as the ``Python's lunch'', is of order
\be\label{python}
\mathcal{C}\sim e^{\left(S_{gen}[X_A^{\rm bulge}]-S_{gen}[X_A^{\rm outer}]\right)/2}.
\ee
This is called the Python's lunch formula.  Since generically reconstruction should not be much harder than distinguishability (see footnote \ref{distrec}), it is natural to suppose that this formula should also control the complexity of distinguishing states which differ by the action of a unitary $W_{\ell r}$ which is supported only in the Python's lunch.  

\bfig
\includegraphics[height=3.5cm]{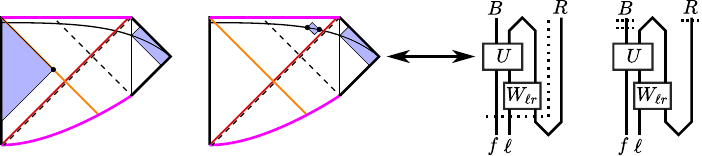}
\caption{Two possible bulges for $B$ and $R$ for the evaporating black hole perturbed by a generic $W_{\ell r}$, indicated by the black dots.  The shaded regions are their outer wedges assuming we are looking to construct $X^{\rm bulge}_R$, the causal complements of these are their outer wedges as candidates for $X^{\rm bulge}_B$.  We have also shown the analogous cuts through the quantum circuit for our dynamical model.}\label{bulges1}
\efig

The motivation for the Python's Lunch formula comes from quantum circuits. In this context, the relevant QESs -- $X_{A}^{\rm outer}$, $X_{A}^{\rm bulge}$, and $X_{A}^{\rm min}$ -- are cuts through the quantum circuit that prepares the state of $A$ in the fundamental description. To be more precise, both are locally minimal cuts, and $X_A^{\rm min}$ is also globally minimal.  The existence of two local minima -- $X_{A}^{\rm min}$ and $X_{A}^{\rm outer}$ -- implies that for any foliation of the network between $X_{A}^{\rm min}$ and $X_{A}^{\rm outer}$, there is a maximal cut $X_A^{\rm bulge}$.  The existence of this bulge gives an obstruction to interpreting the network as an isometry from the legs piercing $X_A^{\rm min}$ to the legs piercing $X_A^{\rm outer}$, as whenever size of the foliating surface decreases this means that post-selection is happening, and to undo post-selection using the multi-state Grover amplification protocol of the last section is expensive.  Indeed the amount of post-selection in the circuit is roughly captured by the constriction from $X_A^{\rm bulge}$ to $X_{A}^{\rm outer}$. This however can differ for different foliations of the circuit, so to find the most efficient isometry we should pick the foliation that minimizes this maximum.  Interpreting now this circuit as lying in the spatial slice $\Sigma$ as is usual in the study of tensor networks these statements translate directly into the Python's lunch formula.\footnote{There are some subtleties with this motivation in the presence of multiple lunches, which need not lie on the same Cauchy slice. This will be discussed in more detail in~\cite{EngPenTA}.}

For an evaporating black hole the locations of $X_B^{\rm bulge}$ and $X_R^{\rm bulge}$ were studied in \cite{Brown:2019rox}.  At early times the outermost and entanglement wedges of $R$ coincide and so there is no bulge for $R$ but there is a bulge for $B$, while at late times there is no bulge for $B$ but there is one for $R$.   In either case there are two bulge candidates, which are shown in figure \ref{bulges1}.  The generalized entropies of these can be estimated by looking at the associated cuts through the quantum circuit of our dynamical model; roughly speaking they are given by $\log|B_0|+S[\chot_R]$ for the bulge with one connected component and $2\log |B|+S[\chot_R]$ for the bulge with two connected components (see \cite{Brown:2019rox} for more details on this calculation).  The correct bulge will be whichever of these has smaller generalized entropy since we are minimizing over foliations.  The key point however is that at early times we have $S_{gen}[X_B^{\rm outer}]\approx \log |B|$, which soon becomes small compared to the generalized entropy of either bulge, and so prior to the Page time \eqref{python} predicts that it will be exponentially hard (in $\log |B|$) to access information about the black hole interior using observables on $B$.  Similarly after the Page time we have $S_{gen}[X_R^{\rm outer}]\approx S[\chot_R]$, so \eqref{python} again predicts that it will be exponentially hard to access information about the black hole interior using observables on $R$.  This is just what we concluded in the previous subsection, and indeed by using the relationship between circuit cuts and bulges shown in figure \ref{bulges1} the authors of \cite{Brown:2019rox} were able to match the complexity predicted by \eqref{python} to that which can be obtained by the multi-state Grover procedure described in the previous subsection (this was essentially their motivation for proposing \eqref{python} in the first place).  

We can also consider the reconstruction/distinguishability of observables in the entanglement wedge of $BR$.  Here the relevant calculations were done in \cite{Engelhardt:2021qjs}, so we will again be brief.  The candidate bulge for $BR$ (present only in the back-reacted geometry) lies just inside of the new outermost QES for $BR$ which is shown in figure \ref{newouterfig}.  Their generalized entropies are comparable, with
\be
S_{gen}[X_{BR}^{\rm bulge}]-S_{gen}[X_{BR}^{\rm outer}]\sim N_r, 
\ee
where $N_r$ is the number of modes in $r$ which are affected by $W_{\ell r}$.  \eqref{python} thus precisely gives the same complexity scaling we found from the multi-state Grover procedure in the previous subsection.

\section{Stable AdS black holes and typicality}\label{adsonesec}
\bfig
\includegraphics[height=6cm]{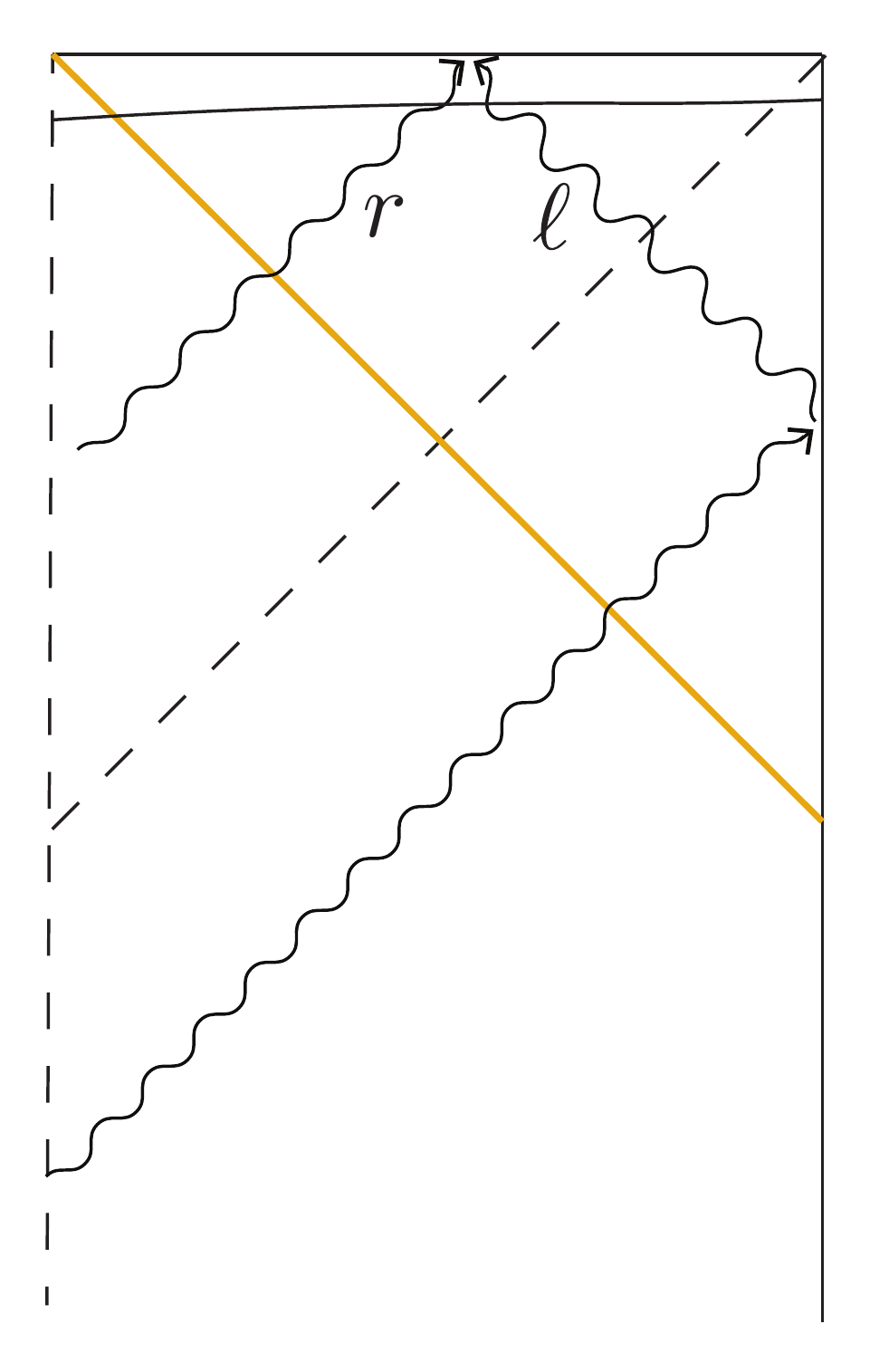}
\caption{Interior modes in a one-sided AdS collapse.  The left boundary of the diagram is the origin of polar coordinates, while the right boundary is the AdS boundary.  The collapsing shell is shown in orange.  Neighboring left and right movers are entangled on the nice slice, since the left movers arise from Hawking quanta which are reflected back into the black hole as shown with the dashed lines.}\label{adsonesidefig}
\efig
So far we have focused on modeling evaporating black holes in flat space.  Large black holes in AdS are also important to understand, coming with their own set of puzzles.  We illustrate the formation and evolution of such a black hole in figure \ref{adsonesidefig}: there is an initial collapse, followed by a long period where the length of the interior geometry (for example on a maximal volume slice) grows with time.  The left and right movers on this slice are entangled as in figure \ref{adsonesidefig}.  Eventually the number of degrees of freedom on this slice exceeds the black hole entropy, leading to a crisis similar to that which happens at the Page time for evaporating black holes.  Inspired by our treatment of the evaporating case, it is natural to expect that this crisis is resolved by allowing the holographic map from the effective description on the nice slice to the fundamental description -- which here is just given by the dual CFT Hilbert space -- to be non-isometric.  In this section we briefly explain how this works, jumping straight to constructing a dynamical model of the problem.  

\bfig
\includegraphics[height=9cm]{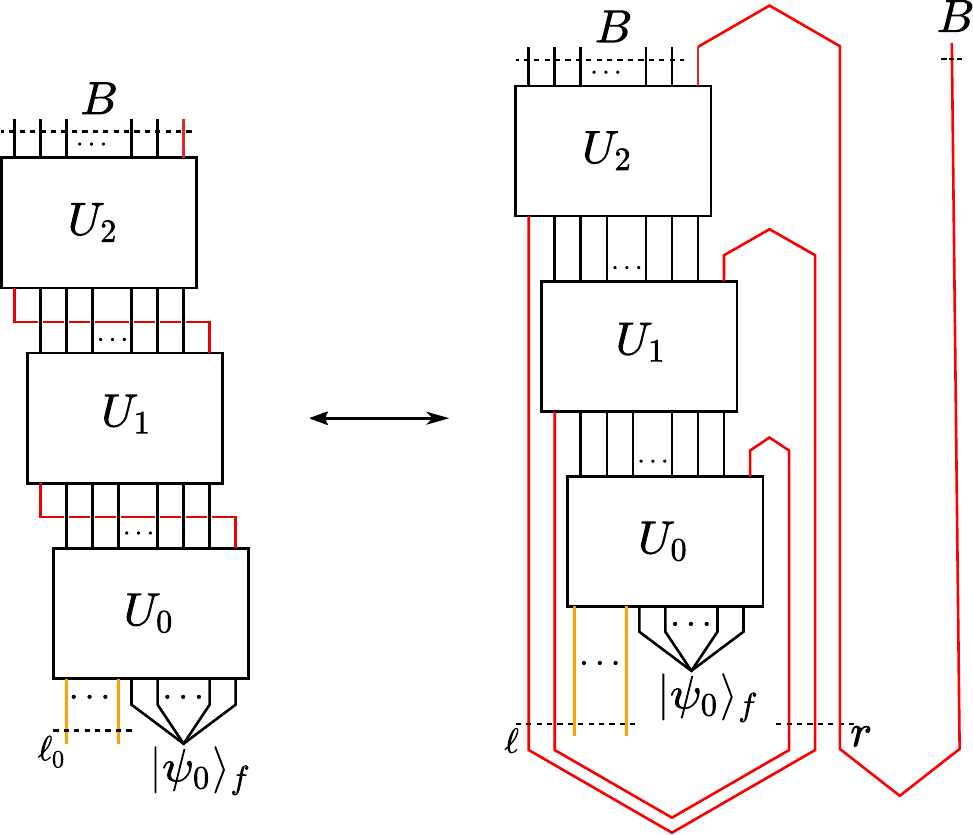}
\caption{A few time steps in a dynamical model of a one-sided AdS black hole formed from a fast collapse.  On the left we have the fundamental CFT dynamics. The initial state of the infalling shell is encoded into a few adjustable qudits $\ell_0$ and then a fixed state $|\psi_0\ran_f$.  It evolves unitarily, with the radiated qudits, shown in red, being reabsorbed by the black hole after a time of order $\ell_{ads}$ (each time step is of order $\ell_{ads}$).  On the right we re-interpret this evolution as defining a non-isometric map from the interior modes $\ell$ and $r$ to the CFT degrees of freedom $B$, with the non-isometry arising from the post-selection on these modes just as in figure \ref{dynevapfig}.  We have allowed the unitary at each time step to be different, but to model the usual situation with no sources in the dual CFT it is natural to take all the $U_i$ to be the same.}\label{dynadsfig}
\efig
Our basic dynamical model is shown in the left diagram of figure \ref{dynadsfig}: we have a set $B$ of fundamental degrees of freedom, which is essentially the dual CFT (up to removing some irrelevant short-distance degrees of freedom), which is evolving under unitary time evolution.  At each moment in time there is a small subset of degrees of freedom, shown in red, which is actually outside of the horizon (or more carefully is in the outermost wedge of $B$).  These are easy to reconstruct in the dual CFT, e.g. by using a variant of the HKLL procedure.  With each time step these exterior degrees of freedom reflect off of the boundary and fall back into the black hole, where they are added at the left side of the black hole degrees of freedom.  The key observation is that we can introduce the interior degrees of freedom $\ell$ and $r$ by bending these lines down to the bottom of the figure using a post-selection and then bending them back up by inserting a maximally entangled state, as in the right diagram.  This should be compared to figure \ref{dynevapfig} above.  We can interpret the bent version of the circuit as defining a non-isometric map $V$ from an entangled set of left and right moving interior modes $\ell$ and $r$, together with the exterior modes, into the fundamental degrees of freedom $B$.  As the evolution continues more and more $\ell$ and $r$ modes are created in this way, corresponding to the growth in length of the nice slice. Eventually there are so many interior modes that $V$ becomes highly non-isometric, which is a kind of Page time for this black hole.  This non-isometric map continues however to preserve the physics of all sub-exponential observables, at least until we run the evolution for a time which is exponential in $\log |B|$. 

As one application of this model, we note that if we consider an effective description state where $r$ and $\ell$ are not maximally entangled with each other, and instead have some appreciable entanglement with an auxiliary system, then the model begins to resemble our dynamical model for the evaporating black hole.  In particular the empty set is no longer homologous to $B$, and there is a new potentially-minimal QES corresponding to the cut across $B$ in figure \ref{dynadsfig}. This exactly matches the calculation of~\cite{Engelhardt:2021qjs}, which found that backreaction resulting from disentangling the outgoing Hawking modes results in a new QES, shown in the left diagram of figure \ref{newouterfig} above, which can be minimal for a sufficiently large code subspace.

As another application, we can use this model to illustrate the difficulty that arises if we try to use a non-isometric code to construct an interior for a ``typical'' black hole microstate.  In order to prepare a typical state, we need to eliminate the ``fixed'' degrees of freedom $f$ in figure \ref{dynadsfig} and include all infalling degrees of freedom in $\ell_0$.  But once we do this there is an immediate problem: defining the full fundamental evolution to be $U$, we can now easily construct null states of the form
\be\label{nullstate}
|\phi^{\rm null}\ran=U^\dagger |\phi\ran_B|0^\perp\ran_P,
\ee
where $|\phi\ran_B$ is arbitrary and $|0^\perp\ran_P$ is any state which is orthogonal to the maximally-entangled state which is post-selected on in the right diagram of figure \ref{dynadsfig}.  If we take the $U_i$ to be $k$-designs, or to be given by the RPI model of section \ref{dynamicsec}, then these null states do \textit{not} have exponential complexity.  In other words if we attempt to introduce an effective description as in figure \ref{dynadsfig} then it breaks down even for sub-exponential observables.  On the other hand this problem disappears rapidly once we restore $f$.  Once $|f|>1$ we can't use the state \ref{nullstate} as input to $V$, but we can still get into trouble by considering the (approximately normalized) input 
\be
|\phi^{\rm small}\ran=\sqrt{|f|}\lan \psi_0|_f U^\dagger|\phi\ran_B|0^\perp\ran_P.
\ee
A short calculation (for simplicity treating the dynamical unitary $U$ as a single random matrix as in \eqref{Vdef}) shows that
\be
\int dU\lan \phi^{\rm small}|V^\dagger V|\phi^{\rm small}\ran=\frac{|P|^2|B|^2}{|P|^2|B|^2-1}\left(1-\frac{1}{|f|}\right),
\ee
so the norm of the state $|\phi^{\rm small}\ran$ is likely to be substantially modified by acting with $V$ unless $|f|$ is exponential in $\log |B|$.  To prepare the state $|\phi^{\rm small}\ran$ starting from $|\phi^{\rm null}\ran$ however we need to use the Grover amplification procedure described in appendix \ref{complexityapp} and used in section \ref{coarsesec}, whose complexity will be exponential $\log |f|$.  Increasing $f$ therefore makes these states both less problematic and more complex.  This is dual to an observation made in \cite{Susskind:2015toa}: even if we assume that typical black holes have singular horizons, as was argued in \cite{Marolf:2013dba} (see also \cite{Harlow:2021dfp} for some recent support), we can restore a smooth horizon for an exponentially long time simply by dropping in a few degrees of freedom and then waiting for a scrambling time.

\section{Multiple black holes}\label{multisec}
One of the most useful laboratories for quantum gravity in AdS has been the two-boundary Schwarzschild wormhole, which is dual to the thermofield double state of the CFT on two disconnected spheres \cite{Maldacena:2001kr}.  This  gives the canonical example of a remarkable phenomenon where black holes which are sufficiently entangled with each other can share an interior region \cite{VanRaamsdonk:2010pw,Maldacena:2013xja}.\footnote{The converse of this statement is not always true; for examples of states with a large amount of entanglement but no shared interior see \cite{Engelhardt:2022qts}.}  We now briefly discuss how this phenomenon arises in the context of our models.  

The obvious guess for how to generalize our static model from section \ref{modelsec} to two black holes is to just define a tensor product holographic map $V\otimes V:\sH_{\ell_1}\otimes \sH_{r_1}\otimes\sH_{\ell_2}\otimes \sH_{r_2}\to \sH_{B_1}\otimes \sH_{B_2}$.  This however cannot be the whole story, since the effective description states in the domain of this map all describe two disconnected spacetimes.  We therefore propose that there is an additional orthogonal sector in the effective description of the form $\sH_{\ell_c}\otimes \sH_{r_c}$ which describes the modes on a nice slice of wormhole geometry connecting the two black holes.  Moreover we take this map to have the form
\be
V_c\equiv\sqrt{|P_c|}\lan 0|_{P_c}U_c|\psi_0\ran_{f_c},
\ee
where $U_c$ is a Haar-random unitary map from $\sH_{\ell_c}\otimes \sH_{r_c}\otimes \sH_{f_c}$ to $\sH_{B_1}\otimes \sH_{B_2}\otimes \sH_{P_c}$.  We then take the full holographic map to act as $V\otimes V$ on disconnected states in the effective description and $V_c$ on connected states in the effective description.  The key point is then that the overlap between the image of a connected state and the image of a disconnected state is likely to be exponentially small:
\begin{align}\nonumber
\int dUdU_c\Big|\lan \phi|_{\ell_c r_c}V_c^\dagger (V\otimes V)|\psi\ran_{\ell_1 r_1,\ell_2 r_2}\Big|^2&=\frac{|P|^2|B|^2}{|P|^2|B|^2-1}\frac{1}{|B|^2}\Big[\left(1-\frac{1}{|P||B|^2}\right)+\frac{\lan \psi|\Sigma_{12}|\psi\ran}{|B|}\left(1-\frac{1}{|P|}\right)\Big]\\
&\leq \frac{4}{|B|^2},
\end{align}
and thus (by Jensen's inquality) 
\be
\int dUdU_c\Big|\lan \phi|_{\ell_c r_c}V_c^\dagger (V\otimes V)|\psi\ran_{\ell_1 r_1,\ell_2 r_2}\Big|\leq\frac{2}{|B|}.
\ee
Here we have taken $|B_1|=|B_2|=|B|$, and $\Sigma_{12}$ is the swap operator that exchanges $\ell_1 r_1$ with $\ell_2 r_2$.  Thus we see that, although connected and disconnected geometries are strictly orthogonal in the effective description, they map to states in the fundamental description where this orthogonality is only approximate.  This is the mechanism that enables the emergence of the connected interior: by taking a superposition of a sufficiently large number of disconnected states we can find ourselves in a connected state.  

In the previous paragraph we showed that any particular disconnected state $|\psi\ran$ and connected state $|\phi\ran$ are likely to have images with an exponentially small overlap.  A natural next step would be to use a measure concentration argument similar to that which went into theorem \ref{setconcthm} to conclude that this approximate orthogonality holds for \textit{all} sub-exponential states. We believe this to be the case, but have not succeeded in constructing the argument.  The reason is that the conclusion is clearly false if we replace $V\otimes V$ by $V\otimes V^*$ (the latter does not preserve the norm of the maximally-entangled state, which is surely sub-exponential), and so the requisite argument must be smart enough to distinguish between $V\otimes V$ and $V\otimes V^*$.

It would be interesting to apply the dynamical evolution of the previous section to the two-sided black holes we consider here.  We expect that the analysis in \cite{Zhao:2020gxq,Haehl:2021prg,Haehl:2022uop} of meetings in the black hole interior could be re-interpreted in this context.

\section{Discussion}\label{discussion}
In this final section we step back and assess our models in the broader context of what was previously known about the black hole information problem.

\subsection{Relation to previous work}
We first discuss how our models relate to previous ideas for understanding the black hole information problem.  Given the size of the literature any such discussion will necessarily be incomplete, so we focus on those proposals about which we have something new to say.

\subsubsection*{Euclidean gravity} 
One of our most powerful tools to date for learning about the quantum properties of black holes is the Euclidean gravity path integral. It provides a straightforward way of doing many calculations in black hole thermodynamics knowing only the low-energy action, which so far have essentially always agreed (within their regime of validity) with those results that can be obtained on more solid ground e.g. from string theory (see for example \cite{Strominger:1996sh,Strominger:1997eq,Witten:1998qj}).  In particular one can compute the entropy and temperature of a black hole \cite{Gibbons:1976ue}, and more generally one can ``derive'' the QES formula of \cite{Engelhardt:2014gca} for von Neumann entropies of subsets of the fundamental degrees of freedom using Euclidean methods \cite{Lewkowycz:2013nqa,Faulkner:2013ana}, \cite{Penington:2019kki,Almheiri:2019qdq}.  On the other hand the justification for the Euclidean gravity path integral has always been somewhat murky.  In particular, as explained in e.g. \cite{Harlow:2020bee}, the saddle points which lead to the most powerful results have topologies which are not consistent with canonical quantization, and thus lead to answers which cannot be interpreted as arising from a quantum theory which treats the integration variables as the true degrees of freedom.  In particular one cannot use Euclidean quantum gravity to construct a Hilbert space of black hole microstates whose dimension is $e^{\mathrm{Area}/4}$. This lack of manifest equivalence with quantum mechanics leads to a number of puzzles, which some authors have tried to interpret in terms of an average over coupling constants and/or a sum over baby universes \cite{Coleman:1988cy,Giddings:1988cx,Saad:2018bqo,Marolf:2020xie}.  Moreover for $d>2$ the Euclidean path integral is somewhat ambiguous due to the non-renormalizability of gravity (with matter), and it is not always clear when this ambiguity is important.    

The role of Euclidean gravity in our models is easy to state: it has none.  We compute all entropies directly from von Neumann's definition $S(\rho)=-\rho \log \rho$ in the fundamental description, in all cases finding results that are consistent with unitarity.  Our results do agree with the QES formula, and thus with the results of Euclidean quantum gravity, but they do not rely on it.  We view this as an advance: we have replaced a mysterious oracle with calculations which are manifestly quantum mechanical.  This does not mean that the oracle now has no value: our models pay a steep price in discarding locality and diffeomorphism/Lorentz invariance, while Euclidean gravity easily incorporates both.  In particular we can only say that $\log |B|$ is analogous to $\mathrm{Area}/4$, we cannot derive it.   It is perhaps useful to also observe more directly how our models reproduce Euclidean calculations.  A first observation is that the modified inner product of \cite{Marolf:2020xie} in the effective description can be interpreted in our language as the average over $U$ of $\lan \psi|V^\dagger V\otimes I_{LR}|\phi\ran$.  The map $V$ however contains more information than is contained in $V^\dagger V$, for example we can multiply $V$ by an arbitrary unitary on the left without changing $V^\dagger V$.  We can also observe that some of the terms in e.g. \eqref{nrenyiexp} and \eqref{s2terms} are clearly playing the same role as ``replica wormholes'' do in the Euclidean calculations of these quantities, for example the second term on the right-hand side of \eqref{s2terms} is precisely the contribution of the ``double trumpet'' wormhole.  From this point of view our models can be viewed as giving additional justification for the Euclidean prescription, since the ``topologies'' one sums over there arise here on general grounds in genuine quantum calculations. 

\subsubsection*{Black hole final state proposal}
\bfig
\includegraphics[height=6cm]{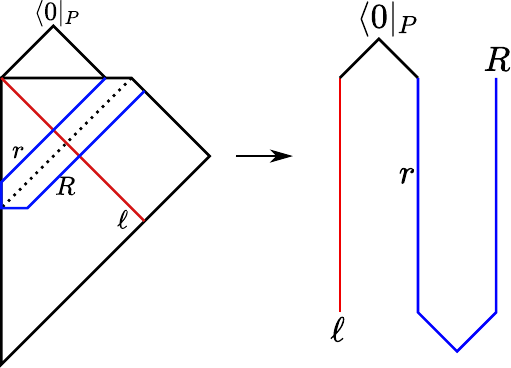}
\caption{The black hole final state proposal: left-moving and right-moving interior modes are jointly post-selected on at the singularity, leading to a circuit which unitarily maps the state of the collapsing matter to the Hawking radiation.  Note the similarity to figure \ref{Vdeffig} and especially figure \ref{dynevapfig}.}\label{finalstatefig}
\efig
In \cite{Horowitz:2003he} Horowitz and Maldacena gave a novel proposal for resolving the black hole information problem by using the post-selected quantum mechanics of \cite{aharonov1964time}.  This is a modification of quantum mechanics where a post-selection onto some ``final state'' is included in the calculations of all measurement probabilities.  Given an initial state $\rho_i$, a final state $\rho_f$, and a sequence $\Pi_{a_1}, \Pi_{a_2},\ldots$ of projective measurements, the proposal of \cite{aharonov1964time} is that the probability of obtaining a sequence of measurement outcomes $a_1,a_2,\ldots$ is given by
\be\label{finalprob}
P(a_1,a_2,\ldots)=\frac{\tr\left(\rho_f \Pi_{a_n}\ldots \Pi_{a_1}\rho_i \Pi_{a_1}\ldots \Pi_{a_n}\right)}{\sum_{b_1,b_2,\ldots}\tr\left(\rho_f \Pi_{b_n}\ldots \Pi_{b_1}\rho_i \Pi_{b_1}\ldots \Pi_{b_n}\right)}.
\ee
This formula has several alarming features when $\rho_f$ is not maximally mixed, the most severe of which is that the probabilities for the outcomes of earlier measurements usually depend on which measurements happen later: causality is violated! This may seem like a heavy price to pay, but the observation of \cite{Horowitz:2003he} was that if we can tolerate it then we can also account for unitary black hole evaporation purely within gravitational effective field theory.  The basic idea is shown in figure \ref{finalstatefig}: by an appropriate post-selection on left-moving and right-moving interior modes we can transfer the information in the former out to the Hawking radiation. See \cite{Gottesman:2003up,Lloyd:2013bza,Bousso:2013uka} for some further discussion of the final state proposal.

Figure \ref{finalstatefig} is very similar to figures \ref{Vdeffig} and \ref{dynevapfig} above, and it in particular it becomes almost identical to figure \ref{dynevapfig} if we wait until the black hole has completely evaporated to ``bend around'' the radiation legs as in figure \ref{dynevapfig}.  The mathematical mechanism for information conservation is thus the same in our models as it is for the final state proposal: post-selection on the interior modes teleports the infalling information out into the Hawking radiation.  On the other hand the physical interpretation is quite different: 
\bi
\item[(1)] In the final state proposal the post-selection comes from a modification of quantum mechanics in the effective description, and no fundamental description is ever introduced.  In our models by contrast, the post-selection arises in the holographic map from the effective description to the fundamental description.
\item[(2)] In the final-state proposal the post-selection is viewed as happening ``at the singularity'', after everything else takes place in the interior.  In our dynamical models by contrast we have a holographic map $V_t$ at each intermediate time, and each of these involves post-selection.
\item[(3)] The measurement theory of the final-state proposal is based on \eqref{finalprob} and includes manifest acausality, while the measurement theory for us in the fundamental description obeys the usual rules of quantum mechanics and in particular is perfectly causal.  This is true even once we translate interior measurements to the fundamental description as in section \ref{measurementsec}: for measurements at time $t$ we do the translation using the holographic map $V_t$, and then do conventional measurements without any further post-selection.  The approximate invertibility of $V_t$ on sub-exponential states ensures a unique interpretation for each (sub-exponential) operation which is performed.
\ei

\subsubsection*{Papadodimas-Raju proposal} 
Another well-known approach to the information problem is the ``state-dependent interior operator'' proposal of Papadodimas and Raju \cite{Papadodimas:2012aq,Papadodimas:2013jku,Papadodimas:2013wnh}.  This proposal has primarily been developed for non-evaporating black holes in $AdS$, so we'll describe it in that context.  The idea is that within the Hilbert space of black hole microstates there is a special set of ``equilibrium states'', which is defined by requiring that in these states the correlation functions of $O(1)$ numbers of low-trace operators in the dual CFT look thermal up to $O(1/N)$ corrections. On top of each equilibrium state one then builds a ``small Hilbert space'' by taking the span of all states obtained by acting on the equilibrium state with an $O(1)$ number of low-trace operators.  One then builds a set of ``mirror operators'' which act within this Hilbert space in the same way as operators on the left exterior of the Hartle-Hawking state would up to $O(1/N)$ corrections.  This then allows the quantum mechanics of the two-sided black hole (including its interior) to be ``simulated'' in the Hilbert space of a one-sided black hole, at least up to $O(1/N)$ corrections.  The mirror operators constructed in this way clearly depend on the choice of equilibrium state, and since the set of equilibrium states does not form a subspace (and in fact likely spans the Hilbert space), these operators are necessarily non-linear.  This proposal was discussed more in \cite{Harlow:2014yoa,Marolf:2015dia}, where in particular it was emphasized that it suffers from an ambiguity arising from the fact that there can be mirror operators which map equilibrium states to other equilibrium states, causing a nontrivial overlap between their small Hilbert spaces, which means that the same rules can give different physical interpretations to the same state in Hilbert space.   

In our models the state-specific reconstructions of interior operators are similarly non-linear.  There is a key difference however: in the PR proposal one begins with a generic state in the fundamental description (the dual CFT) and then tries to construct for it an EFT interpretation, while we begin with the full set of states in the effective description and then construct a linear holographic map $V\otimes I_{LR}$ which maps all of them into the fundamental description.  For us the map $V \otimes I_{LR}$ is fixed once and for all: it is \textit{not} state-dependent.  Non-linearity appears only when we try to invert this map on the set of sub-exponential states.  This is a much tighter structure than the PR proposal, and this leads to several advantages.  Perhaps the most important is that from lemma \ref{invlem}, we learned that our holographic map $V\otimes I_{LR}$ is approximately invertible on sub-exponential states.  Thus there is no state in the fundamental description which is the image of a sub-exponential state but does not have a unique physical interpretation.  We therefore do not have any ambiguity analogous to that of the PR proposal.  Another advantage for us is that equilibrium is not a defining part of our rules, and so we are able to consider non-equilibrium situations such as an evaporating black hole.  Further advantages for us are that the PR proposal works only for observables of $O(1)$ complexity and to leading order in perturbation theory, while our models work for observables of sub-exponential complexity and to all orders in perturbation theory.  Perhaps the closest thing in our models to the PR proposal the discussion of section \ref{sec:subspacerecon}, where we introduced state-independent reconstructions that can work on a large subspace somewhat analogous to the ``small Hilbert space'' of PR. The resemblance however is fairly superficial: there is no role for ``equilibrium states'' in our discussion, and no matter what subspace we choose the map $V\otimes I_{LR}$ remains the same.

\subsubsection*{``Increasing complexity'' criterion for horizon smoothness}
In \cite{Susskind:2015toa} it was proposed that states in the fundamental description whose complexity is increasing should have smooth horizons, while those where it is not may not.  We have not studied this conjecture in detail in our models, but it is quite plausible in them: sub-exponential states are naturally those whose complexity is likely to increase, and as discussed in section \ref{adsonesec} our approach for constructing the interior fails if we try to allow generic states (which have maximal complexity) in the effective description.

\subsubsection*{Previous attempts at an encoded interior}
Here we briefly comment on a few earlier attempts to understand the holographic encoding of the black hole interior.

The first attempt at using quantum error correction to study the black hole interior was given in \cite{Verlinde:2012cy}.  There they showed that interior reconstructions could be given provided that the black hole state was only supported on a subspace of its microstates which was much smaller than the full dimensionality.  This can be thought of as a special case of our discussion in section \ref{sec:subspacerecon}, where their subspace is the image of an appropriately chosen $\sH_C$ through the approximate isometry $\hat{V}$.  Handling old black holes using a coding framework however requires a non-isometric code. 

In \cite{Kourkoulou:2017zaj} a special set of one-sided black hole microstates of the SYK model was proposed, which is obtained by acting with a complete projection $\lan B_s|$ on the left side of the thermofield double. Here $s$ indicates a string of $N/2$ bits, where $N$ is the number of SYK Majorana fermions, and $|B_s\ran$ are a set of eigenstates of the spins obtained by pairing up the Majorana fermions. The geometric description of these states is as a single-sided black hole with an ``end of the world brane'' inside which is labeled by $s$.  In  \cite{Almheiri:2018xdw} the question of reconstruction of interior operators on these states was considered.  In particular it was argued that for each interior operator $O$ there exists a reconstruction $O_R^s$ such that
\be
O_R^s\lan B_s|\beta\ran\approx \lan B_s|O|\beta\ran,
\ee
where $|\beta\ran$ is the thermofield-double state and the $s$ on $O_R^s$ indicates that the reconstruction depends on $s$ and is thus state-specific.  If we try to interpret $\lan B_s|$ as some kind of non-isometric holographic map then this is fairly similar to the reconstruction we discuss here, except that $O|\beta\ran$ describes the action of the observable on the thermofield-double state instead of an end of the world brane state.  This however is more of a defect of notation rather than of principle, since in \cite{Almheiri:2018xdw} only the fundamental description was considered.  If we re-interpret $\lan B_s|O|\beta\ran$ as $VO|\beta,s\ran$ where $|\beta,s\ran$ is a state of the end of the world brane theory in the effective description, and then define $V|\beta,s\ran=\lan B_s|\beta\ran$, then this reconstruction becomes similar to ours:
\be
O_R^sV|\beta,s\ran\approx VO|\beta,s\ran.
\ee
As emphasized in \cite{Almheiri:2018xdw}, the states $V|\beta,s\ran$ are overcomplete.  In other words, the encoding map $V$ cannot be an isometry.  

In \cite{Akers:2019nfi} a simple model using multi-boundary AdS wormholes was proposed by some of us as an analogue of an evaporating black hole. In particular the Page curve for this model follows from the simpler RT formula rather than the full QES formula, essentially because the entanglement of the Hawking pairs was replaced by geometry.  This allows for easy calculations and illustrates the role of islands, but geometrizing all of the bulk entanglement hides the non-isometric nature of the code which is needed to reconstruct interior observables. 

In \cite{Kim:2020cds} an interesting connection was pointed out between pseudorandomness and quantum error correction.  A quantum state $\rho$ is said to be pseudorandom if all sub-exponential observables have the same expectation values in $\rho$ as they would in the maximally-mixed state up to exponentially small errors.  A slight generalization of what they proved is the following:
\begin{thm} (Kim/Preskill/Tang) \label{KPTthm}
Let $V:\sH_A\to \sH_B\otimes \sH_C$ be an isometry, and define
\be
|\Psi\ran_{BC\ol{A}}=(V\otimes I_{\ol{A}})|\mathrm{MAX}\ran_{A,\ol{A}}.
\ee
Furthermore let $\Psi_{C\ol{A}}$ be ``pseudo-product'' in the sense that there is an $\alpha>0$ and a product state $\sigma_{C\ol{A}}=\sigma_C\otimes \sigma_{\ol{A}}$ such that for any sub-exponential (in $\log |B|$) observable $O_{C\ol{A}}$ we have
\be
\Big|\tr\left[O_{C\ol{A}}\left(\Psi_{C\ol{A}}-\sigma_{C}\otimes \sigma_{\ol{A}}\right)\right]\Big|\leq |B|^{-\alpha}.
\ee
Then for any sub-exponential (in $\log |B|$) unitary $U_{OC}$,
if we define the state
\be
\wt{\Psi}_{OBC\ol{A}}\equiv (U_{OC}\otimes I_{B\ol{A}})(\Psi_{BC\ol{A}}\otimes |0\ran\lan 0|_O)(U_{OC}^\dagger\otimes I_{B\ol{A}})
\ee
then we have the decoupling bound
\be
||\wt{\Psi}_{O\ol{A}}-\wt{\Psi}_{O}\otimes\wt{\Psi}_{\ol{A}}||_1\leq 3|O||A||B|^{-\alpha}.
\ee
\end{thm}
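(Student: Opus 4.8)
The plan is to show that $\wt\Psi_{O\ol{A}}$ is exponentially close in trace norm to a state that is \emph{exactly} a product across $O$ and $\ol{A}$, and then conclude by a short triangle-inequality argument. Write $\Theta_{C\ol{A}}\equiv \Psi_{C\ol{A}}-\sigma_C\otimes\sigma_{\ol{A}}$ for the traceless Hermitian operator that the pseudo-product hypothesis controls, and introduce the ``counterfactual'' state obtained by substituting $\sigma_C\otimes\sigma_{\ol{A}}$ for $\Psi_{C\ol{A}}$,
\be
\hat\Psi_{OC\ol{A}}\equiv (U_{OC}\otimes I_{\ol{A}})\big(|0\ran\lan 0|_O\otimes\sigma_C\otimes\sigma_{\ol{A}}\big)(U_{OC}^\dagger\otimes I_{\ol{A}}).
\ee
Since $U_{OC}$ acts trivially on $\ol{A}$, tracing out $C$ gives $\hat\Psi_{O\ol{A}}=\hat\Psi_O\otimes\sigma_{\ol{A}}$, an exact product, and also $\hat\Psi_{\ol{A}}=\sigma_{\ol{A}}$. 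Likewise, since $U_{OC}$ acts trivially on $B$, tracing $B$ out of $\wt\Psi_{OBC\ol{A}}$ gives $\wt\Psi_{OC\ol{A}}=(U_{OC}\otimes I_{\ol{A}})(|0\ran\lan 0|_O\otimes\Psi_{C\ol{A}})(U_{OC}^\dagger\otimes I_{\ol{A}})$. Setting $\epsilon\equiv \|\wt\Psi_{O\ol{A}}-\hat\Psi_{O\ol{A}}\|_1$, inserting $\hat\Psi_O\otimes\sigma_{\ol A}$ and $\wt\Psi_O\otimes\sigma_{\ol A}$ and using that the partial trace is contractive under $\|\cdot\|_1$,
\be
\|\wt\Psi_{O\ol{A}}-\wt\Psi_O\otimes\wt\Psi_{\ol{A}}\|_1\le \epsilon+\|\hat\Psi_O-\wt\Psi_O\|_1+\|\sigma_{\ol{A}}-\wt\Psi_{\ol{A}}\|_1\le 3\epsilon,
\ee
so everything reduces to showing $\epsilon\le |O||A|\,|B|^{-\alpha}$.

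The second step is a ``commute-through'' identity. Writing $\Delta_{O\ol{A}}\equiv \wt\Psi_{O\ol{A}}-\hat\Psi_{O\ol{A}}=\tr_C\big[(U_{OC}\otimes I_{\ol{A}})(|0\ran\lan 0|_O\otimes\Theta_{C\ol{A}})(U_{OC}^\dagger\otimes I_{\ol{A}})\big]$, cyclicity of the trace yields, for any operator $M_{O\ol{A}}$,
\be
\tr(M_{O\ol{A}}\,\Delta_{O\ol{A}})=\tr(N_{C\ol{A}}\,\Theta_{C\ol{A}}),\qquad N_{C\ol{A}}\equiv \lan 0|_O\,(U_{OC}^\dagger\otimes I_{\ol{A}})(M_{O\ol{A}}\otimes I_C)(U_{OC}\otimes I_{\ol{A}})\,|0\ran_O,
\ee
with $\|N_{C\ol{A}}\|_\infty\le \|M_{O\ol{A}}\|_\infty$. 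If $M_{O\ol{A}}$ is a sub-exponential observable (in $\log|B|$) with $\|M_{O\ol{A}}\|_\infty\le 1$, then so is $N_{C\ol{A}}$: it is Hermitian with unit-bounded norm, and its expectation value is computed by adjoining an ancilla $O$ in $|0\ran$, applying the sub-exponential unitary $U_{OC}$, and measuring $M_{O\ol{A}}$ --- a sub-exponential circuit (this is the equivalence of the various definitions of ``sub-exponential observable'' established in appendix \ref{complexityapp}). Hence the pseudo-product hypothesis gives $|\tr(M_{O\ol{A}}\Delta_{O\ol{A}})|\le |B|^{-\alpha}$ for \emph{every} sub-exponential observable $M_{O\ol{A}}$ of operator norm at most one.

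The third step upgrades ``small against all sub-exponential observables on $O\ol{A}$'' to ``small in trace norm.'' Take $M=\mathrm{sgn}(\Delta_{O\ol{A}})$, a Hermitian operator with $\|M\|_\infty\le 1$, so that $\|\Delta_{O\ol{A}}\|_1=\tr(M\Delta_{O\ol{A}})$, and expand $M=\sum_\mu c_\mu P_\mu$ over the $D^2$ Pauli strings $P_\mu$ on the qubits carrying $O\ol{A}$, where $D=|O||\ol{A}|=|O||A|$ (the qudit case is handled similarly with a suitably normalized Hermitian operator basis, at the cost of $O(1)$ factors). Each $P_\mu$ is Hermitian, unitary, of $O(\log D)$ circuit complexity --- hence sub-exponential in $\log|B|$ --- and $c_\mu=D^{-1}\tr(P_\mu M)$. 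From $\tr(P_\mu P_\nu)=D\,\delta_{\mu\nu}$ and $\|M\|_\infty\le 1$ we get $\sum_\mu c_\mu^2=\tr(M^2)/D\le 1$, so by Cauchy--Schwarz over the $D^2$ terms $\sum_\mu|c_\mu|\le D$. Combining with the second step,
\be
\epsilon=\|\Delta_{O\ol{A}}\|_1=\Big|\sum_\mu c_\mu\,\tr(P_\mu\Delta_{O\ol{A}})\Big|\le \Big(\sum_\mu|c_\mu|\Big)\,|B|^{-\alpha}\le |O||A|\,|B|^{-\alpha},
\ee
which with the first step gives the claimed bound $\|\wt\Psi_{O\ol A}-\wt\Psi_O\otimes\wt\Psi_{\ol A}\|_1\le 3|O||A||B|^{-\alpha}$.

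A few remarks on what is delicate and what is not. The isometry property of $V$ is used only to ensure that $|\Psi\ran_{BC\ol{A}}$ is normalized, so that $\Psi_{C\ol{A}}$ is a density operator; nothing else about $V$ is needed. I expect the third step to be the main obstacle: getting the prefactor down to exactly $|O||A|$, rather than the $(|O||A|)^{3/2}$ that a naive expansion in an orthonormal operator basis would give, hinges on using the (un-normalized) Pauli basis, for which $\sum_\mu c_\mu^2\le 1$ while there are only $D^2$ terms. One also has to check that the basis operators $P_\mu$ and their conjugates $N_{C\ol{A}}$ genuinely count as sub-exponential in $\log|B|$; this requires $|O||A|$ to be at most singly exponential in $|B|$, which is harmless since otherwise the claimed bound is vacuous.
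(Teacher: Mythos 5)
Your overall strategy is sound, and there is nothing in the paper itself to compare it against line by line: theorem \ref{KPTthm} is quoted from Kim--Preskill--Tang and the paper does not reproduce a proof. Your route --- pull the sub-exponential operation $U_{OC}$ and the ancilla preparation into the distinguisher (Heisenberg picture), then convert the trace norm on $O\ol{A}$ into a sum over a complete set of cheaply measurable basis observables at the cost of a dimension factor --- is exactly the standard KPT-style argument. Steps 1 and 3 are correct as written: the triangle-inequality reduction to $\epsilon=\lVert\wt{\Psi}_{O\ol{A}}-\hat{\Psi}_{O\ol{A}}\rVert_1$ with the factor of $3$ uses only monotonicity of the trace norm under partial trace and the exact factorization $\hat{\Psi}_{O\ol{A}}=\hat{\Psi}_O\otimes\sigma_{\ol{A}}$, and the Pauli bookkeeping ($\sum_\mu c_\mu^2\le 1$ over $D^2$ strings, hence $\sum_\mu|c_\mu|\le D=|O||A|$) delivers precisely the claimed prefactor.

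The one place you glide over a real issue is the claim that $N_{C\ol{A}}=\lan 0|_O(U_{OC}^\dagger\otimes I_{\ol{A}})(M_{O\ol{A}}\otimes I_C)(U_{OC}\otimes I_{\ol{A}})|0\ran_O$ ``is'' a sub-exponential observable, so that the pseudo-product hypothesis applies to it. By the paper's definition a sub-exponential observable is one whose projective measurement unitary admits a sub-exponential circuit, and lemma \ref{measeqlem} only converts that into a decomposition into sub-exponential unitaries; neither statement shows that the compression $\lan 0|_O\,\cdot\,|0\ran_O$ of a conjugated sub-exponential observable is again of this type. Generically it is not: $N_{C\ol{A}}$ (or, in your step 3, $N_\mu$) is a Hermitian contraction with an essentially arbitrary spectrum, and while its \emph{expectation value} is efficiently estimable by the procedure you describe (adjoin $|0\ran_O$, apply $U_{OC}$, projectively measure $M_{O\ol{A}}$), that is an ancilla-assisted generalized measurement on $C\ol{A}$, not a sub-exponential projective measurement of $N_{C\ol{A}}$ itself. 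What your argument actually needs is the operational (and surely intended) reading of ``pseudo-product'': that $\Psi_{C\ol{A}}$ and $\sigma_C\otimes\sigma_{\ol{A}}$ agree to accuracy $|B|^{-\alpha}$ on any sub-exponentially implementable measurement that may adjoin ancillas in fixed states --- equivalently, that the bound holds for $\tr\big[(W_\mu\otimes P^{\ol{A}}_\mu)\big(|0\ran\lan 0|_O\otimes\Theta_{C\ol{A}}\big)\big]$ with $W_\mu=U_{OC}^\dagger(P^{O}_\mu\otimes I_C)U_{OC}$, which is a genuinely sub-exponential observable but lives on $OC\ol{A}$ rather than $C\ol{A}$. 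In the setting where the theorem is applied in the paper this stronger property does hold (the measure-concentration argument behind \eqref{avO} is insensitive to tensoring in a fixed ancilla state), so the fix is simply to state and use the hypothesis in this ancilla-closed form; but as written, the inference from the literal hypothesis to $|\tr(N_\mu\Theta_{C\ol{A}})|\le|B|^{-\alpha}$ is the missing step. A much smaller point: your exact constant is specific to qubits; for general qudits the Heisenberg--Weyl basis reproduces the counting $\sum_\mu|c_\mu|\le D$, but its elements are not Hermitian, so splitting them into Hermitian parts costs an additional $O(1)$ factor unless handled with some care.
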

What this theorem says (assuming $|B|\gg |O|, |A|$) is that as long as there is no sub-exponentially accessible correlation between $C$ and $\ol{A}$, then no adversary who has access only to $C$ and can only implement sub-exponential operations can learn anything about the initial state of $A$. By standard decoupling arguments this means that after the adversary has acted with $U$ this information is still really in $BC$.\footnote{If the dimensionality of $A$ is large then one has to be careful in applying the decoupling criterion. If we define a channel $\mathcal{C}$ from states on $A$ to states on $BC$ which works by acting with $(U_{OC}\otimes I_B)(V\otimes I_{O})|0\ran_O$ and then tracing out $O$, then the relevant decoupling theorem tells us that 
\be
\min_{\mathcal{R}}\max_{\rho_A}\left(1-F(\mathcal{R}\cdot \mathcal{C}(\rho_A),\rho_A)\right)\leq \frac{1}{2}|A| \, \Big|\Big|\wt{\Psi}_{O\ol{A}}-\wt{\Psi}_{O}\otimes\wt{\Psi}_{\ol{A}}\Big|\Big|_1,
\ee
where $F$ is the fidelity and $\mathcal{R}$ are channels from $BC$ to $A$.  The extra factor of $|A|$ arises because decoupling of the encoded  maximally-mixed state only implies correctability on average, while we are really interested in worst-case correctability.  Thus for theorem \ref{KPTthm} to imply recovery with fidelity at least $1-\epsilon$, we need $\frac{3}{2}|O||A|^2|B|^{-\alpha}\leq \epsilon$.}  
One way to apply this theorem to our setup is by defining $B\to B$, $C\to R$, $A\to \ell$, and $\ol{A}\to L$.  The theorem therefore applies if the encoded state $\Psi_{LR}$ is ``pseudo-product''.  As we've discussed above (see \eqref{Phidef}), in the static model with a Haar random unitary all sub-exponential observables have expectation values which are exponentially close to what they would be in the coarse-grained state
\be
\Phi_{LBR}=\chin_L\otimes \frac{I_B}{|B|}\otimes \chot_R.
\ee
This certainly factorizes on $R$ and $L$, and so $\Psi_{LR}$ is indeed pseudo-product.  In our more realistic models the expectation values of sub-exponential observables on $LBR$ will \textit{not} agree in $\Psi_{LBR}$ and $\Phi_{LBR}$, since if we have access to both $B$ and $R$ we can evolve the dynamics backward and learn about how the black hole was made.  If we have access only to $\Psi_{LR}$ however, then by similar arguments to those at the end of subsection \ref{simpentsec} we expect all sub-exponential observables to agree in $\Phi_{LR}$ and $\Psi_{LR}$, and thus $\Psi_{LR}$ to be pseudo-product.  The theorem then tells us that any sub-exponentially bounded observer who has access only to $R$ cannot learn anything about the black hole microstate, which is a quite reasonable conclusion.  On the other hand the reconstructions of interior observables, called ``ghost operators'' in \cite{Kim:2020cds}, will in general depend on the choice of unitary $U_{OR}$ applied by the observer, which is another illustration of the state-dependence of interior reconstruction. 

Recently \cite{Balasubramanian:2022fiy} included some general discussion of the idea that there could be a non-isometric code mapping the black hole interior into the fundamental description.  Technically their main result was a study of the robust nature of encoding the interior into the radiation emphasized by \cite{Kim:2020cds} in the context of the  ``west coast'' toy model of the evaporating black hole, featuring JT gravity with a (flavored) end of the world brane~\cite{Penington:2019kki}.  More concretely, using Euclidean quantum gravity they showed that the conclusions of theorem \ref{KPTthm} hold in that model with the subsystem identifications suggested in the previous paragraph.

\subsubsection*{Null states as gauge equivalences}

A common idea, advocated for example in~\cite{Marolf:2020xie, anous2020density, langhoff2020ensemble, blommaert2022alpha} is that
null states in the effective description arise from some kind of gauge symmetry.  If so, then it is natural to define a physical Hilbert space modulo addition by a null state and this space is then in one-to-one correspondence with the Hilbert space of the fundamental description.  In this interpretation, whenever $V|\psi\ran=V|\phi\ran$ then $|\psi\ran$ and $|\phi\ran$ are equivalent descriptions of the same state. The question then becomes understanding how to predict the experience of an observer in the effective description when there exist many distinct, but equivalent, effective
descriptions of the same fundamental state.

We do not require, or necessarily advocate, this picture.  Our paper is potentially consistent with it, but only with the understanding that the gauge symmetry must not include any sub-exponential transformations in the effective description.  Indeed we saw from lemma \ref{invlem} that every image of a sub-exponential state has a unique physical interpretation, and so in each equivalence class of the putative gauge symmetry there can be at most one sub-exponential state.  For such classes, we claim that the sub-exponential description is the \textit{only} correct description of what is going on.  It is therefore not so clear what value is added by such a gauge interpretation for the null states of $V$, although one may exist.

\subsubsection*{Related ideas in information theory}
The question of how to pack more logical qubits into fewer physical qubits has previously been discussed in the quantum information literature. In \cite{chao2017overlapping}, the authors studied the questions of how many ``overlapping qubits'' can be fit into a finite-dimensional Hilbert space. 
Imagine we have access to $n$ sets of Pauli operators $\{X_i, Z_i\}$ acting on a Hilbert space $\mathcal{H}$, each satisfying $\lVert \{X_i, Z_i \} \rVert = 0$, $\tr[X_i] = \tr[Z_i] = 0$ and $X_i^2 = Z_i^2 = 1$. We also require that each set approximately (but not necessarily exactly) commutes with all the other sets of operators; in other words, we have $\lVert [S_i, T_j] \rVert \le \varepsilon$ for $i \neq j$ and $S,T \in \{X,Z\}$. How small can $\dim \mathcal{H}$ be, as a function of $\varepsilon$ and $n$? The answer, found in \cite{chao2017overlapping}, is that for any fixed $\varepsilon > 0$, $\dim \mathcal{H}$ only needs to grow polynomially with $n$; the small overlaps allow exponentially more qubits to fit in a fixed Hilbert space $\mathcal{H}$ than would otherwise be possible. The spirit of this result is clearly somewhat similar to those in this paper, and indeed their proof strategy, based on the Johnson-Lindenstrauss lemma, is closely related to techniques that we use here. However the technical statements are different, and in particular the constructions in \cite{chao2017overlapping} do not (and cannot) feature a nonisometric encoding of states in a larger Hilbert space.\footnote{There is, for example, no state in the Hilbert space $\mathcal{H}$ that is simultaneously an (approximate) +1 eigenstate of all the operators $Z_i$.}
The authors of \cite{chao2017overlapping} also prove some no-go theorems about compressing qubits that are closely related to our theorem \ref{nogo5} -- see especially Corollary 4.4 -- although again the precise constructions being ruled out differ from the ones we consider here.

An earlier closely related idea is quantum fingerprinting \cite{buhrman2001quantum}, where  the authors construct a non-isometric code that maps each of $2^n$ orthogonal states to its “quantum fingerprint,” a state in the Hilbert space of $\log(n)$ qubits, such that the inner-product of the fingerprints approximately equals that of the original states. They then show that these fingerprints are helpful for an “equality testing” quantum task. In this task, Alice and Bob each want to communicate one of $2^n$ possible messages to a referee, by encoding that message into as few qubits as possible. The referee should accept with probability at least $1-\delta$ if Alice and Bob sent the same message and accept with probability less than $\delta$ if they did not. Classically, Alice and Bob can accomplish this task for exponentially small $\delta$ by sending just $\log(n)$ classical bits -- so long as they already share a random key. Quantum mechanically, as our non-isometric codes demonstrate, this task is possible without any pre-shared randomness. Alice and Bob can encode their $2^n$ states using \eqref{Vdef}, send the output to referee, and the referee can test the overlap of the state efficiently. It is important to note that the nonisometric codes used in this paper are not identical to those in \cite{buhrman2001quantum}: the postselection in our model cannot be implemented efficiently, whereas a main point of \cite{buhrman2001quantum} is to construct a scheme for efficiently producing fingerprints, thereby providing an efficient protocol for the equality testing task.
On the flip side, their scheme would not work for our purposes, because it does not guarantee that all sub-exponential input states (rather than merely a single basis of states) have their inner-product preserved.

\subsection{Frequently asked questions}
In this subsection we attempt to give concise answers to questions which we expect many readers will have.   
\bi
\item \textit{Is the black hole information problem solved?}  In the introduction we identified three essential ingredients of Hawking's paradox:
\bi
\item[(1)] A finite black hole entropy with a state-counting interpretation.
\item[(2)] A unitary black hole S-matrix.
\item[(3)] A black hole interior which is described to a good approximation by gravitational effective field theory, including the entanglement between the outgoing interior and exterior modes $r$ and $R$.
\ei
We have seen that analogues of all three of these exist in our dynamical model.  Thus, as a point of principle, we have seen that it is possible for them to co-exist, at least to the extent that we have modeled them here.  We can therefore say that within our models the black hole information problem is resolved.  On the other hand there are many aspects of gravity which our models have not captured.  These include Lorentz invariance, diffeomorphism invariance, interactions between left- and right- movers, and even the dimensionality of spacetime.  We believe these to be inessential for Hawking's paradox.  We are thus optimistic that the basic framework of a non-isometric code protected by computational complexity should also be implementable in theories of quantum gravity which do have all these features, with the tension between (1)-(3) resolved on similar lines.  In particular in the AdS/CFT correspondence, we find it quite plausible that the non-isometric encoding map $V$ we use as an input to our models is more or less uniquely determined by compatibility with standard AdS/CFT dictionary near the boundary together with an assumption that black holes formed by quick collapse should have a smooth interior.\footnote{We emphasize that we are allowed to assume that a horizon of a quick-collapse black hole which has not been recently disturbed from the outside is smooth unless this leads to a contradiction, and in our models it does not.}  We leave the details of all this to future work.

\item \textit{Where does Hawking's argument go wrong?} He did not realize that point (3) in the previous question, the validity of effective field theory in the vicinity of a black hole, only holds for sub-exponential observables.  Once we make this restriction then there is no obvious contradiction, and indeed our dynamical model makes it clear that there is no inconsistency between (1)-(3). 
\item \textit{Is there non-locality?}  The dynamical rules of both the fundamental and effective descriptions respect locality, with the former being clear from figure \ref{fig:bh_dyn} and the latter being clear from the discussion below \eqref{eq:effdecomp}.  On the other hand the relationship between the two via the holographic map $V$ involves a very non-local post-selection, which teleports information from deep inside the black hole out into the radiation and annihilates a large number of highly-entangled null states.  Thus the manifest locality in the effective description is really only an emergent notion, approximately valid for some questions but not others.  One can interpret this as evidence that there is a large amount of non-locality in whatever is the full non-perturbative theory of quantum gravity, i.e. non-pertubative string theory, since whatever this is it would have to be equivalent to the fundamental description via something like AdS/CFT.
\item \textit{What radiation entropy is consistent with Hawking's calculation?} The radiation entropy which follows from Hawking's picture is $S(\chot_R)$, which is the entropy in the effective description.  This increases throughout the evaporation.  In the fundamental description we can interpret it as the simple entropy $S^{\rm simple}(\Psi_R)$, as explained in section \ref{simpentsec}.
\item \textit{If Hawking's calculation was wrong, why are we allowed to use his state in the QES formula?} Because Hawking's picture is correct in the effective description, which is mapped into the fundamental description by a non-isometric quantum code.  The QES formula is a general feature of quantum codes \cite{Harlow:2016vwg,Akers:2021fut}, as we have confirmed by explicit calculation in our models. 
\item \textit{What happens in the interior if you make a complete projective measurement of sub-exponential observables on the Hawking radiation, i.e. measuring all the qubits in the $Z$ basis?} Nothing.\footnote{Here we mean ``nothing that can be detected locally''.  Since the interior is entangled with the exterior in the effective description, the results of interior measurements are correlated with the results of exterior measurements in the usual EPR way.}  This is a sub-exponential operation, and thus does not disrupt the validity of the effective description.  The same is true if the radiation interacts with a dust cloud.
\item \textit{What happens in the interior if after the Page time you act on the radiation with the reconstruction of an interior creation operator?} You create a particle in the interior.  This operation is exponentially complex, and so cannot be described solely within the effective description, but the resulting state is a sub-exponential state (we could have just created the particle directly using an interior operator), whose unique interpretation \`a la lemma \ref{invlem} is that there is a particle in the interior.  
\item \textit{Is there fundamental averaging?} No.  We use randomness for doing calculations, but in a fixed theory there is a fixed map $V$ that doesn't change.
\item \textit{Is the vacuum frozen \cite{Bousso:2013ifa}?}  No.  There are many sub-exponential states in the effective description which have structure at the horizon, and since the encoding map $V$ is fixed there is no re-definition that removes this structure.   
\item \textit{Are there firewalls for old black holes formed by quick collapse \cite{Almheiri:2012rt}?}  No.   A black hole created by a quick collapse evolves through its Page time without disrupting the validity of the effective description.  It is true that null states begin to appear at the Page time, but they cannot be detected by sub-exponential operations and jumping in to the black hole and seeing if you make it is surely sub-exponential.
\item \textit{If we collapse the radiation into a second black hole in the thermofield double state with the first black hole, is there a shared interior \cite{VanRaamsdonk:2013sza}?} Yes.  Just as in our answer to acting with the reconstruction of an interior creation operator, this is an exponentially complex operation which cannot be understood in the effective description.  Nonetheless it results in a sub-exponential state in the effective description of the two-black hole system which is in the connected sector (see section \ref{multisec}), so that is how it should be interpreted.
\item \textit{Renyi entropies can be computed via the expectation values of (sub-exponential) permutation observables (such as the swap operator for the second Renyi entropy), so why doesn't the effective description compute them correctly?} Because in situations where the Renyi entropies differ between the fundamental and effective descriptions, these differences appear as exponentially small contributions to the expectation values of these observables and the effective description is only accurate up to exponentially small corrections. 
\item \textit{How do we avoid Mathur's ``small corrections'' theorem \cite{Mathur:2009hf}?} Mathur's theorem (including its ``effective'' version \cite{Guo:2021blh}) assumes that the state of the radiation in the fundamental description is the same as the state of the radiation in the effective description.  In our language, it assumes that $V$ is an approximate isometry.  After the Page time however it isn't. \item \textit{When can the holographic encoding map be non-isometric?} We don't have a complete answer to this question, but it is reasonable to expect that whenever the minimal QES is `novel' -- i.e. when it exists when quantum corrections are included but disappears at strictly infinite $N$ -- the encoding map should be non-isometric. We can't prove this, but it is consistent with the expectation that a novel QES becomes minimal whenever the effective description has a larger Hilbert space dimension than the fundamental description does. A related argument was recently made in~\cite{Bousso:2022tdb}. If this is indeed borne out, then under the assumption of some version of quantum cosmic censorship (and non-compact time slices), the encoding can only be non-isometric when there is a black hole (or cosmological singularity) present in the effective description.
\item \textit{Are there firewalls (i.e no emergent interior) in typical states \cite{Marolf:2013dba}?} We don't know.  We've seen that our models can explain most of the key features of quantum black holes without requiring the answer to be ``no'', and we've moreover seen that our construction of the interior fails if we try to include typical states in the effective description  (in particular see the discussion at the end of section \ref{adsonesec}).  It is therefore natural to guess that the answer is ``yes''.  Some solace however can be taken in the fact that our models reproduce the observation of \cite{Susskind:2015toa} that such states are very unstable: any small perturbation restores a smooth horizon after a scrambling time, and this smoothness then lasts for an exponentially long time.
\item \textit{Is it ok that quantum mechanics is only approximate for the infalling observer?}  We think so. Due to the presence of the singularity there is a fundamental limit on how accurate of an experiment an infaller can do, so they do not need a quantum theory which computes arbitrarily precise probabilities \cite{Harlow:2010my}.  
\item \textit{What does this all mean for cosmology?}  We don't know.
\ei

\paragraph{Acknowledgments:}
We thank Scott Aaronson, Ahmed Almheiri, Raphael Bousso, Wissam Chemissany, Patrick Hayden, Hong Liu, Juan Maldacena, Don Marolf, Henry Maxfield, Yasunori Nomura, and Herman Verlinde for useful discussions.  CA is supported by the Simons Foundation as an ``It from Qubit'' fellow, the Air Force Office of Scientific Research under the award number FA9550-19-1-0360, the US Department of Energy under grant DE-SC0012567, the John Templeton Foundation and the Gordon and Betty Moore Foundation via the Black Hole Initiative, and the National Science Foundation under grant no. PHY-2011905.  NE is supported in part by NSF grant no. PHY-2011905, by the U.S. Department of Energy Early Career Award DE-SC0021886, by the John Templeton Foundation and the Gordon and Betty Moore Foundation via the Black Hole Initiative, and by funds from the MIT department of physics. DH is supported by the Simons Foundation as a member of the ``It from Qubit'' collaboration, the Sloan Foundation as a Sloan Fellow, the Packard Foundation as a Packard Fellow, the Air Force Office of Scientific Research under the award number FA9550-19-1-0360, the US Department of Energy under grants DE-SC0012567 and DE-SC0020360, and the MIT department of physics. GP is supported by the UC Berkeley Physics Department, the Simons Foundation through the ``It from Qubit'' program, the Department of Energy via the GeoFlow consortium (QuantISED Award DE-SC0019380), and AFOSR award FA9550-22-1-0098.
He also acknowledges support from an IBM Einstein Fellowship at the Institute for Advanced Study.
 SV is supported by US Department of Energy under grants DE-SC0012567 and DE-SC002036.

\appendix
\section{Unitary integrals}\label{Uapp}
In the main text we often want to integrate products of some number of matrix elements of a unitary matrix $U$ and its adjoint $U^\dagger$ over the unitary group $U(N)$ using the Haar measure $dU$ (normalized so that $\int dU=1$).  Such integrals can be evaluated using the left and right invariance of the Haar measure.  In particular since the integral must be invariant under $U'=e^{i\theta}U$ we must have the same number of $U$'s and $U^\dagger$'s, and the Haar invariance and permutation symmetry require that we have
\be\label{unitaryint}
\int dU U_{i_1 j_1}\ldots U_{i_n j_n} U^\dagger_{j'_1 i'_1}\ldots U^\dagger_{j'_ni'_n}=\sum_{\sigma,\tau \in S_n}\delta_{i_1i'_{\sigma(1)}}\ldots \delta_{i_ni'_{\sigma(n)}}\delta_{j_1j'_{\tau(1)}}\ldots \delta_{i_ni'_{\tau(n)}}\mathrm{Wg}(\sigma \tau^{-1},N).
\ee
Here $S_n$ is the permutation group on $n$ elements, and $\mathrm{Wg}$ is some real-valued function of a permutation $\pi\in S_n$ and the unitary dimensionality $N$ which is called the Weingarten function.  The permutation symmetry of the integral ensures that $\mathrm{Wg}(\pi,N)$ depends only on the conjugacy class of $\pi$, and we will see below that for fixed $\pi$ it is a rational function of $N$.  

For small values of $n$ the Weingarten function can be determined by contracting some of the indices and using that $UU^\dagger=1$.  In this manner we immediately find
\be
\mathrm{Wg}((1),N)=\frac{1}{N}
\ee
and
\begin{align}\nonumber
\mathrm{Wg}((1)(2),N)&=\frac{1}{N^2-1}\\
\mathrm{Wg}((1,2),N)&=-\frac{1}{N(N^2-1)},
\end{align}
where we use the usual disjoint cycle description of permutations and we only need to give the value of the function for a single representative of each conjugacy class.  More explicitly, we have
\begin{align}\nonumber
\int dU\, U_{i_1 j_1}U^\dagger_{j'_1 i'_1}&=\frac{1}{N}\delta_{i_1 i'_1}\delta_{j_1 j'_1}\\\nonumber
\int dU\, U_{i_1 j_1} U_{i_2 j_2}U^\dagger_{j'_1 i'_1}U^\dagger_{j'_2 i'_2}&=\frac{1}{N^2-1}\Bigg[\delta_{i_1i'_1}\delta_{i_2i'_2}\delta_{j_1 j'_1}\delta_{j_2 j'_2}+\delta_{i_1i'_2}\delta_{i_2i'_1}\delta_{j_1 j'_2}\delta_{j_2 j'_1}\\
&\qquad\qquad-\frac{1}{N}\Big(\delta_{i_1i'_1}\delta_{i_2i'_2}\delta_{j_1 j'_2}\delta_{j_2 j'_1}+\delta_{i_1i'_2}\delta_{i_2i'_1}\delta_{j_1 j'_1}\delta_{j_2 j'_2}\Big)\Bigg]\label{Uresults}
\end{align}

This ad hoc method becomes quite tedious for higher $n$, so it is useful to obtain some kind of general formula for $\mathrm{Wg}(\pi,N)$ that holds for any $n$.  This can be achieved by using some facts about the representation theory of the permutation and unitary groups, and in particular their relationship via Schur-Weyl duality \cite{collins2003moments,collins2006integration}. We will not develop this theory in detail, but we do need to review a few things to state the resulting formula.  In general the conjugacy classes of the permutation group $S_n$ are determined by its cycle structure: all permutations whose disjoint cycles have the same lengths are in the same conjugacy class.  For example in $S_4$, $(1,2)(3,4)$ and $(1,3)(2,4)$ are in the same conjugacy class but $(1,2,3)(4)$ is not.  Thus conjugacy classes of $S_n$ are in one-to-one correspondence with partitions $\lambda$ of $n$.  Moreover these partitions have a natural one-to-one correspondence with the irreducible representations of $S_n$, which proceeds through the theory of Young tableau and symmetrizers \cite{fulton2013representation}.  We can introduce the Schur polynomials 
\be
s_{\alpha,N}(X)\equiv \frac{1}{n!}\sum_{\pi\in S_n}\chi_\alpha(\pi)\tr(X^{\lambda_1(\pi)})\tr(X^{\lambda_2(\pi)})\ldots,
\ee
where $X$ is an $N\times N$ matrix, $\chi_\alpha$ is the character of the irreducible representation $\alpha$ of $S_n$, and $\lambda(\pi)$ is the partition associated to the cycle structure of $\pi$.  For the Weingarten function we are particularly interested in
\be
s_{\alpha,N}(I_N)=\frac{1}{n!}\sum_{\pi\in S_n}\chi_\alpha(\pi)N^{\ell(\lambda(\pi))},
\ee
where $\ell(\lambda)$ is the number of elements of the partition $\lambda$ (sometimes called its length).  The Schur polynomials have the interesting property that although they are constructed using characters of the permutation group $S_n$, through Schur-Weyl duality they are also characters of irreducible representations of $U(N)$.  This enables us to evaluate the integral of the product of two Schur polynomials over the unitary group $U(N)$ using Schur orthogonality, and thus to explicitly compute the Weingarten function.  The result is that \cite{collins2003moments,collins2006integration}
\be\label{Wgresult}
\mathrm{Wg}(\pi,N)=\frac{1}{(n!)^2}\sum_{\alpha}\frac{d_\alpha^2 \chi_{\alpha}(\pi)}{s_{\alpha,N}(I_N)},
\ee
where $d_\alpha$ is the dimensionality of $\alpha$ and the sum is over all irreducible representations.  Thus the calculation of $\mathrm{Wg}(\pi,N)$ is reduced  to an exercise in looking up elements of the character table of $S_n$.  For example doing this for $S_3$ and $S_4$, one finds
\begin{align}\nonumber
\mathrm{Wg}((1)(2)(3),N)&=\frac{N^2-2}{N(N^2-1)(N^2-4)}\\\nonumber
\mathrm{Wg}((1,2)(3),N)&=-\frac{1}{(N^2-1)(N^2-4)}\\
\mathrm{Wg}((1,2,3),N)&=\frac{2}{N(N^2-1)(N^2-4)}\label{tripleWg}
\end{align}
and 
\begin{align}\nonumber
\mathrm{Wg}((1)(2)(3)(4),N)&=\frac{N^4-8N^2+6}{N^2(N^2-1)(N^2-4)(N^2-9)}\\\nonumber
\mathrm{Wg}((1,2)(3)(4),N)&=-\frac{1}{N(N^2-1)(N^2-9)}\\\nonumber
\mathrm{Wg}((1,2)(3,4),N)&=\frac{N^2-6}{N^2(N^2-1)(N^2-4)(N^2-9)}\\\nonumber
\mathrm{Wg}((1,2,3)(4),N)&=\frac{2N^2-3}{N^2(N^2-1)(N^2-4)(N^2-9)}\\
\mathrm{Wg}((1,2,3,4),N)&=-\frac{5}{N(N^2-1)(N^2-4)(N^2-9)}.
\end{align}
The poles in these expressions at small values of $N$ may look alarming, but in fact they cancel in the sum on the right-hand side of \eqref{unitaryint}.  For example with $n=2$ if we take all indices to be equal then we have
\be
\frac{2}{N^2-1}-\frac{2}{N(N^2-1)}=\frac{2}{N(N+1)},
\ee
which is finite when $N=1$.

We can study the asymptotics of $\mathrm{Wg}(\pi,N)$ at fixed $n$ and large $N$.  In this limit the Schur polynomials are dominated by the contribution of the identity (the longest partition), 
\be
s_{\alpha,N}(I_N)=\frac{d_\alpha}{n!}N^n+O(N^{n-1}), 
\ee
so we have
\begin{align}\nonumber
\mathrm{Wg}(\pi,N)&=\frac{1}{n!}\sum_\alpha \frac{d_\alpha \chi_\alpha(\pi)}{N^n}+O(1/N^{n+1})\\
&=\frac{\delta(\pi,e)}{N^n}+O(1/N^{n+1}),\label{Wgasymp}
\end{align}
which is compatible with the above formulas.  Beware however that in applications one often sums over the various indices in \eqref{unitaryint}, and these sums can potentially enhance subleading terms.

Finally we briefly introduce the idea of a unitary $k$-design.  This is a finite set $\mathcal{T}$ of elements of $U(N)$ with the property that we can replace the integral in equation \eqref{unitaryint} by an average over $\mathcal{T}$ and the equality will still hold provided that $n\leq k$ \cite{dankert2009exact}.  One can also introduce approximate unitary $k$-designs by saying that the equality holds for $n\leq k$ in some appropriate approximation.  Unitary $k$-designs, often just called $k$-designs, are useful because examples can be found which consist only of unitaries which are much simpler than a generic Haar random unitary.  For example in \cite{dankert2009exact} a unitary $2$-design of accuracy $\epsilon$ was constructed for the case of $n$ qubits whose elements have circuit depth at most $O\left(\log n \times \log \frac{1}{\epsilon}\right)$.

\section{Measure concentration}\label{measureapp}
In this appendix we give an introduction to the theory of measure concentration.  We could not find any single source which presents everything which is needed, our exposition draws primarily on  \cite{meckes2019random} and \cite{anderson2010introduction}, with \cite{emery1987simple} and \cite{ledoux2001concentration} also being useful (additional historical references include \cite{bakry1985diffusions,gromov1983topological,szarek1990spaces}).  We make no claim to originality, the goals of this appendix are pedagogical.

\subsection{Bakry-Emery theory and logarithmic Sobolev inequalities}
Let $(M,g)$ be a $d$-dimensional Riemannian manifold, and
\be
d\mu_\Phi=e^{-\Phi} \sqrt{g}d^d x
\ee
be a probability measure on $M$.  Here $\Phi$ is any smooth scalar function on $M$, with the restriction that if $M$ is noncompact then $\Phi$ must grow fast enough at infinity to ensure the convergence of $\int_M d\mu_\Phi=1$.  It will be convenient to write the expectation value of any random variable $F:M\to \mathbb{R}$ as
\be
\lan F\ran_\Phi\equiv \int_M F d\mu_\Phi.
\ee
\begin{mydef}
The measure space $(M,\mu_\Phi)$ obeys a \textbf{Bakry-Emery condition} if there exists $C>0$ such that for all smooth functions $f:M\to \mathbb{R}$ we have
\be\label{BEcrit}
\nabla_\alpha \nabla_\beta f \nabla^\alpha \nabla^\beta f+\left(R^{\alpha\beta}+\nabla^\alpha \nabla^\beta \Phi\right)\nabla_\alpha f \nabla_\beta f\geq \frac{1}{C}\nabla_\alpha f \nabla^\alpha f.
\ee
\end{mydef}
This criterion is interesting because it implies a measure concentration result:
\begin{thm}\label{mcthm}
Let $(M,g)$ be a Riemannian manifold with probability measure $d\mu_\Phi=e^{-\Phi}\sqrt{g}d^dx$ such that the Bakry-Emery condition \eqref{BEcrit} is satisfied with some constant $C>0$.  Then for every $\kappa$-Lipschitz function $F:M\to \mathbb{R}$ obeying $\lan F\ran_\Phi<\infty$ we have
\begin{align}\nonumber
\PR\left[F\geq \lan F\ran_\Phi+\epsilon\right]\leq \exp\left(-\frac{\epsilon^2}{2C\kappa^2}\right)\\
\PR\left[F\leq \lan F\ran_\Phi-\epsilon\right]\leq \exp\left(-\frac{\epsilon^2}{2C\kappa^2}\right),\label{onesign1}
\end{align}
and thus
\be
\PR\left[|F-\lan F\ran_\Phi|\geq\epsilon\right]\leq 2\exp\left(-\frac{\epsilon^2}{2C\kappa^2}\right)\label{mceq}.
\ee
\end{thm}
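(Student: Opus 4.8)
The plan is to deduce the concentration inequalities \eqref{onesign1}--\eqref{mceq} from the logarithmic Sobolev inequality (LSI) implied by the Bakry--\'Emery condition \eqref{BEcrit}, and then run the classical Herbst argument. Concretely, I would first establish that \eqref{BEcrit} forces
\[
\mathrm{Ent}_{\mu_\Phi}(g^2)\;\leq\; 2C\int_M |\nabla g|^2\, d\mu_\Phi
\]
for all (sufficiently nice) smooth $g$, where $\mathrm{Ent}_{\mu_\Phi}(h)\equiv \lan h\log h\ran_{\mu_\Phi} - \lan h\ran_{\mu_\Phi} \log\lan h\ran_{\mu_\Phi}$, and then show that any such LSI self-improves into Gaussian tail bounds for Lipschitz functions.

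For the first step, let $L\equiv \Delta_g - \nabla\Phi\cdot\nabla$ be the $\mu_\Phi$-symmetric diffusion generator and $P_t\equiv e^{tL}$ the associated semigroup, which preserves $\mu_\Phi$-expectations and satisfies $P_t h\to \lan h\ran_{\mu_\Phi}$ as $t\to\infty$. Writing $\Gamma(f)=|\nabla f|^2$ and $\Gamma_2(f)=\tfrac12 L\Gamma(f)-\Gamma(f,Lf)$, the Bochner--Weitzenb\"ock identity identifies the left-hand side of \eqref{BEcrit} with $\Gamma_2(f)$, so \eqref{BEcrit} is precisely the curvature bound $\Gamma_2\geq \tfrac1C\Gamma$. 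Given $g>0$ with $\lan g\ran_{\mu_\Phi}=1$, set $\Lambda(s)\equiv \mathrm{Ent}_{\mu_\Phi}(P_s g)$; then $\Lambda(0)=\mathrm{Ent}_{\mu_\Phi}(g)$, $\Lambda(\infty)=0$, and integration by parts gives the de Bruijn-type identity $\Lambda'(s)=-\int (|\nabla P_s g|^2/P_s g)\,d\mu_\Phi\equiv -I(P_s g)$. A standard semigroup computation using $\Gamma_2\geq\tfrac1C\Gamma$ yields $\tfrac{d}{ds}I(P_s g)\leq -\tfrac2C I(P_s g)$, hence $I(P_s g)\leq e^{-2s/C}I(g)$, and integrating $\Lambda'$ over $[0,\infty)$ gives $\mathrm{Ent}_{\mu_\Phi}(g)\leq \tfrac C2 I(g)$. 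Applying this with $g$ replaced by $g^2$ and using $I(g^2)=4\int|\nabla g|^2\,d\mu_\Phi$ produces the stated LSI.

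For the Herbst step, fix a $\kappa$-Lipschitz $F$ with $\lan F\ran_{\mu_\Phi}<\infty$; subtracting a constant, assume $\lan F\ran_{\mu_\Phi}=0$. Apply the LSI to $g=e^{\lambda F/2}$: since $F$ is differentiable $\mu_\Phi$-a.e. with $|\nabla F|\leq\kappa$ (Rademacher), $|\nabla g|^2\leq \tfrac{\lambda^2\kappa^2}{4}e^{\lambda F}$, so with the moment generating function $H(\lambda)\equiv \lan e^{\lambda F}\ran_{\mu_\Phi}$ the LSI becomes $\lambda H'(\lambda)-H(\lambda)\log H(\lambda)\leq \tfrac{C\kappa^2\lambda^2}{2}H(\lambda)$, i.e. $\tfrac{d}{d\lambda}\big(\lambda^{-1}\log H(\lambda)\big)\leq \tfrac{C\kappa^2}{2}$. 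Because $\lambda^{-1}\log H(\lambda)\to \lan F\ran_{\mu_\Phi}=0$ as $\lambda\to 0^+$, integrating gives $\log H(\lambda)\leq \tfrac{C\kappa^2\lambda^2}{2}$ for all $\lambda>0$. Chernoff's inequality then gives $\PR[F\geq\epsilon]\leq e^{-\lambda\epsilon}H(\lambda)\leq e^{-\lambda\epsilon+C\kappa^2\lambda^2/2}$, and the choice $\lambda=\epsilon/(C\kappa^2)$ produces the first line of \eqref{onesign1}; applying the same argument to $-F$ gives the second line, and the union bound gives \eqref{mceq}.

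The main obstacle is not the algebra but its analytic justification, particularly when $M$ is noncompact: one must control the existence, smoothness and integrability of $P_t$ and its spatial derivatives, justify differentiating $\Lambda(s)$ and $I(P_s g)$ under the integral sign and the $s\to\infty$ limit, and handle the facts that $e^{\lambda F/2}$ is merely Lipschitz rather than smooth and that $H(\lambda)$ must be shown finite. The standard remedies are to prove the LSI first for a dense well-behaved class of test functions (bounded, smooth, bounded below by a positive constant) and pass to the limit, and to run the Herbst argument first for bounded truncations $F_n=\max(-n,\min(F,n))$, mollified slightly to be smooth, obtaining tail bounds with constants uniform in $n$ before removing the truncation by monotone and dominated convergence.
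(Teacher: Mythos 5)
Your proposal is correct and follows essentially the same route as the paper's proof: Bakry--\'Emery condition $\Rightarrow$ logarithmic Sobolev inequality via semigroup interpolation, followed by the Herbst/Chernoff argument applied to $e^{\lambda(F-\langle F\rangle)/2}$, with the same caveats about truncating/mollifying $F$ and treating noncompact $M$. The only cosmetic difference is that you obtain the LSI from exponential decay of the Fisher information $I(P_s g)$, whereas the paper uses the gradient bound $\Gamma_1(P_t f,P_t f)\le e^{-2t/C}P_t\Gamma_1(f,f)$ together with Cauchy--Schwarz; these are interchangeable steps of the same semigroup argument.
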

In applications of interest it is often the case that when the dimension of $M$ is large then the argument in the exponents here is large and negative.  For example we can take $(M,g)$ to be $\mathbb{S}^d$ with the unit round metric. The Ricci tensor is $R_{\alpha \beta}=(d-1)g_{\alpha \beta}$, and thus a Bakry-Emery condition holds with $C=\frac{1}{d-1}$.  Levy's lemma \ref{Levy} is thus an immediate consequence of theorem \ref{mcthm}.\footnote{Note that the Lipschitz constant $\kappa$ appearing in theorem \ref{mcthm} is for the geodesic distance on $\mathbb{S}^d$.  In quantum information theory we are often interested instead in the chordal distance, which on the sphere is related to the geodesic distance by $d_c(x,y)\leq d_g(x,y)\leq \frac{\pi}{2} d_c(x,y)$. In particular the first of these inequalities implies that a Lipschitz constant for the chordal distance also gives a Lipschitz constant in terms of geodesic distance, while the second implies that if $\kappa$ is a Lipschitz constant for the geodesic distance then $\frac{\pi}{2}\kappa$ is a Lipschitz constant for the chordal distance.}  Another simple application is to take $(M,g)$ to be $\mathbb{R}^d$ with the flat metric and the Gaussian distribution
\be
e^{-\Phi(x)}=\frac{1}{\sqrt{2\pi \sigma^2}}e^{-\frac{|x|^2}{2\sigma^2}},
\ee
for which we can take $C=\sigma^2$ and thus we have\footnote{The dependence on dimension is less obvious here, but in practice the mean of $F$ often scales with $d$, e.g. $\lan |x|\ran\sim \sqrt{d}$, and the fractional deviations obey $\PR\left[|F-\lan F\ran|/\lan F \ran\geq \epsilon\right]\leq 2 \exp\left(-\frac{\epsilon^2\lan F\ran^2}{2\sigma^2\kappa^2}\right)$.}
\be
\PR\left[|F-\lan F\ran|\geq \epsilon\right]\leq 2 \exp\left(-\frac{\epsilon^2}{2\sigma^2\kappa^2}\right).
\ee
Later in this appendix we will see how theorem \ref{mcthm} can also be applied to the case where $M$ is a compact matrix group: it applies directly if $M=SU(N), SO(N)$, and with some further work the results \eqref{onesign1}, \eqref{mceq} also apply to $M=U(N)$,  with
\be
C=\begin{cases} \frac{2}{N} & SU(N)\\ \frac{4}{N-2} & SO(N)\\ \frac{6}{N} & U(N)\end{cases}.
\ee

The proof of theorem \ref{mcthm} involves the following notion:
\begin{mydef}
Let $(X,d)$ be a metric space and $\mu$ be a Borel probability measure on $X$. $(X,d,\mu)$ is said to obey a \textbf{logarithmic Sobolev inequality} (LSI) if there exists a constant $C>0$ such that for every locally-Lipschitz function $f:X\to \mathbb{R}$ we have
\be\label{LSI}
\left\lan f^2 \log \frac{f^2}{\lan f^2\ran}\right\ran\leq 2C \lan |\nabla f|^2\ran,
\ee
where $|\nabla f|(x)\equiv \limsup_{y\to x}\frac{|f(y)-f(x)|}{d(x,y)}$.
\end{mydef}
We can get some intuition for this definition by rewriting it in terms of relative entropy: 
\be\label{entLSI}
S\left(\frac{f^2}{\lan f^2\ran}d\mu,d\mu\right)\leq 2C\frac{\lan|\nabla f|^2\ran}{\lan f^2\ran}.
\ee
Thus \eqref{LSI} says that the distribution $\frac{f^2}{\lan f^2\ran}d\mu$ will be close to the distribution $d\mu$ unless the average variation of $f$ is large compared to $1/C$.  Since (classical) relative entropy is a measure of the distinguishability of probability distributions, it is hopefully plausible that \eqref{entLSI} implies a measure concentration result. 

To prove theorem \ref{mcthm} we first show that the concentration result \eqref{mceq} indeed holds for any metric measure space $(X,d,\mu)$ obeying a logarithmic Sobolev inequality, and then show that such an inequality is implied by the Bakry-Emery condition.  We can formulate the first part as a lemma:
\begin{lemma}\label{herbstlemma}
(Herbst) Let $(X,d,\mu)$ be a metric probability space obeying an LSI with $C>0$, and let $F:X\to \mathbb{R}$ be $\kappa$-Lipschitz and obey $\lan F\ran<\infty$.  Then we have
\begin{align}\nonumber
\PR\left[F\geq\lan F\ran+\epsilon\right]\leq  \exp\left(-\frac{\epsilon^2}{2C\kappa^2}\right)\\
\PR\left[F\leq \lan F\ran-\epsilon\right]\leq  \exp\left(-\frac{\epsilon^2}{2C\kappa^2}\right)\label{onesign}
\end{align}
and thus
\be
\PR\left[|F-\lan F\ran|\geq \epsilon\right]\leq 2 \exp\left(-\frac{\epsilon^2}{2C\kappa^2}\right).\label{twosign}
\ee
Moreover if $X$ is a smooth manifold and $d$ arises from a Riemannian metric on $X$, then in establishing \eqref{onesign} and \eqref{twosign} it is sufficient to assume that \eqref{LSI} holds for all smooth functions $f$.
\end{lemma}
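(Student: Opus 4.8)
The plan is to run the classical \emph{Herbst argument}: use the logarithmic Sobolev inequality to obtain a Gaussian bound on the Laplace transform of $F$, then conclude by a Chernoff estimate. First I would make two harmless reductions. By replacing $F$ with $F-\lan F\ran$ and then with $F/\kappa$ we may assume $\lan F\ran=0$ and $\kappa=1$ (the shift does not affect $C$, and the factor $\kappa^2$ will reappear in the final exponent after rescaling). I would also first treat the case of bounded $F$: given a general $1$-Lipschitz $F$ with $\lan F\ran<\infty$, the truncations $F_n\equiv\min(F,n)$ are still $1$-Lipschitz, are bounded above, satisfy $|\nabla F_n|\le|\nabla F|\le 1$ wherever defined, and have $\lan F_n\ran\to\lan F\ran$; one recovers the general statement at the end by letting $n\to\infty$ using monotone/dominated convergence. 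For bounded-above $F$ the function $H(\lambda)\equiv\lan e^{\lambda F}\ran$ is finite and $C^2$ in $\lambda\ge 0$, with $H'(\lambda)=\lan F e^{\lambda F}\ran$.

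Second, I would apply the LSI \eqref{LSI} to $f=e^{\lambda F/2}$, so $f^2=e^{\lambda F}$ and $\lan f^2\ran=H(\lambda)$. Expanding $\log(f^2/\lan f^2\ran)$ gives
\[
\left\lan f^2\log\frac{f^2}{\lan f^2\ran}\right\ran=\lan \lambda F e^{\lambda F}\ran-H(\lambda)\log H(\lambda)=\lambda H'(\lambda)-H(\lambda)\log H(\lambda),
\]
while the chain rule for the local Lipschitz gradient, together with $|\nabla F|\le 1$, gives $|\nabla f|\le\tfrac{|\lambda|}{2}e^{\lambda F/2}|\nabla F|\le\tfrac{|\lambda|}{2}e^{\lambda F/2}$, hence $\lan|\nabla f|^2\ran\le\tfrac{\lambda^2}{4}H(\lambda)$. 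The LSI therefore yields the differential inequality
\[
\lambda H'(\lambda)-H(\lambda)\log H(\lambda)\le \tfrac{C\lambda^2}{2}\,H(\lambda).
\]

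Third, dividing by $\lambda^2 H(\lambda)$ and noting that the left-hand side is exactly $K'(\lambda)$ for $K(\lambda)\equiv\lambda^{-1}\log H(\lambda)$, we get $K'(\lambda)\le C/2$ for $\lambda>0$. Since $\log H(\lambda)=\lambda\lan F\ran+O(\lambda^2)=O(\lambda^2)$ as $\lambda\to 0^+$ (using $\lan F\ran=0$), we have $K(0^+)=0$, and integrating gives $\log H(\lambda)\le C\lambda^2/2$, i.e. $\lan e^{\lambda F}\ran\le e^{C\lambda^2/2}$ for all $\lambda\ge 0$. A Markov/Chernoff bound then gives $\PR[F\ge\epsilon]\le e^{-\lambda\epsilon}\lan e^{\lambda F}\ran\le e^{-\lambda\epsilon+C\lambda^2/2}$, and optimizing at $\lambda=\epsilon/C$ produces $\PR[F\ge\epsilon]\le e^{-\epsilon^2/(2C)}$; restoring the Lipschitz constant this is $e^{-\epsilon^2/(2C\kappa^2)}$. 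Applying the same argument to $-F$ (also $\kappa$-Lipschitz with mean zero) gives the lower-tail estimate in \eqref{onesign}, and the union bound gives \eqref{twosign}. For the final claim, when $d$ comes from a Riemannian metric the only non-smooth object above is $f=e^{\lambda F/2}$; I would approximate the Lipschitz $F$ by smooth $F_j$ with $\mathrm{Lip}(F_j)\le\kappa+1/j$ and $F_j\to F$ locally uniformly (e.g.\ mollification in charts, or the heat semigroup), apply the smooth-only form of \eqref{LSI} to $e^{\lambda F_j/2}$, and pass to the limit.

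The main obstacle is not the algebra of the differential inequality, which is short, but the analytic bookkeeping: verifying that $H(\lambda)$ is finite and can be differentiated under the expectation (handled by the truncation to bounded $F$), justifying $K(0^+)=0$, and controlling the interchange of limits in the truncation and mollification steps — and, if one wants the statement in full generality on a noncompact $M$, using the growth of $\Phi$ at infinity to supply the dominating functions needed for dominated convergence. All of these are standard but should be stated with care.
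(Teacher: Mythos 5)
Your proposal is correct and is essentially the same Herbst argument the paper uses: apply the LSI to $e^{\lambda(F-\lan F\ran)/2}$, derive the differential inequality for $\lambda^{-1}\log\lan e^{\lambda(F-\lan F\ran)}\ran$, integrate from $\lambda=0^+$ to get the sub-Gaussian Laplace bound, and finish with Markov/Chernoff, a sign flip for the lower tail, and the union bound. Your reductions (centering, truncation to handle possible non-integrability, and smooth approximation of the Lipschitz $F$ for the Riemannian case) mirror the paper's treatment, differing only in cosmetic bookkeeping.
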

\begin{proof}
The two lines of \eqref{onesign} are equivalent since we can flip the sign of $F$, so we just need to establish the first line.  By Markov's inequality we have
\be
\PR\left[F-\lan F\ran\geq \epsilon\right]=\PR\left[e^{\lambda(F-\lan F\ran)}\geq e^{\lambda \epsilon}\right]\leq e^{-\lambda \epsilon}\big\lan e^{\lambda(F-\lan F\ran)}\big\ran
\ee
for any $\lambda>0$.  We will show momentarily that
\be\label{herbstbound}
\big\lan e^{\lambda(F-\lan F\ran)}\big\ran\leq e^{\frac{C\lambda^2\kappa^2}{2}}, 
\ee
so choosing $\lambda=\frac{\epsilon}{C\kappa^2}$ to make the exponent as negative as possible we then indeed have \eqref{onesign}.
To establish \eqref{herbstbound}, we apply the LSI \eqref{LSI} to the function\footnote{Here we omit an argument dealing with the possibility that $\lan e^{\lambda F}\ran_\Phi$ is infinite.  The idea is to truncate $F$ by the bounded function
\be\label{truncate}
F_\delta(x)=\begin{cases} \frac{1}{\delta} & F(x)>\frac{1}{\delta}\\ F(x) & |F(x)|<\frac{1}{\delta}\\ -\frac{1}{\delta} & F(x)<-\frac{1}{\delta}\end{cases},
\ee
run the argument for $\delta>0$, and then take take $\delta\to 0$ at the end using Fatou's lemma.}
\be
f=e^{\lambda(F-\lan F\ran)/2}.  
\ee
Defining 
\be
B(\lambda)\equiv \log \lan f^2\ran,
\ee
the LSI implies that
\be
\frac{d}{d\lambda}\left(\frac{B(\lambda)}{\lambda}\right)=\frac{1}{\lambda^2}\frac{1}{\lan f^2\ran}\left\lan f^2 \log \left(\frac{f^2}{\lan f^2\ran}\right)\right\ran\leq \frac{2C}{\lambda^2}\frac{\lan |\nabla f|^2\ran}{\lan f^2\ran}.
\ee
Moreover by the Lipschitz property of $F$ we have
\be
|\nabla f|^2=\frac{\lambda^2}{4}|\nabla F|^2e^{\lambda(F-\lan F\ran)}\leq \frac{\lambda^2\kappa^2}{4}f^2,
\ee
and thus
\be
\frac{d}{d\lambda}\left(\frac{B(\lambda)}{\lambda}\right)\leq \frac{C\kappa^2}{2}.
\ee
Integrating and using that $\lim_{\lambda\to 0}\frac{B(\lambda)}{\lambda}=0$, we thus have
\be
B(\lambda)\leq \frac{C\kappa^2 \lambda^2}{2},
\ee
which directly implies \eqref{herbstbound}.  In the case where $X$ is a smooth manifold and $d$ is induced by a Riemannian metric, we observe that any $\kappa$-Lipschitz function $F:M\to\mathbb{R}$ can be approximated by a smooth function $G_\delta:M\to\mathbb{R}$ with the same Lipschitz constant $\kappa$ such that $|G_\delta(x)-F(x)|\leq \delta$ (for example in each patch we can smear $F$ against a bump function of compact support).  Therefore we can repeat the proof for $G_{\delta}$, note that $\PR[F\geq \lan F\ran+\epsilon]\leq P[G_\delta\geq \lan G_\delta\ran+\epsilon-2\delta]$ for any $\delta>0$, and then take the limit $\delta \to 0$.
\end{proof}

The next step is to show that an LSI is implied by the Bakry-Emery condition \eqref{BEcrit} on a Riemmanian manifold $(M,g)$ with probability measure $\mu_\Phi$.  The proof is based on studying a generalized heat flow on $M$ , which is generated by the differential operator
\be
\LP\equiv \nabla^2-\nabla^\mu \Phi\nabla_\mu.
\ee 
A natural set of functions to consider in this situation is $L^2_\Phi(M)$, which are the set of functions $f:M\to \mathbb{R}$ such that
\be
\lan f^2\ran<\infty.
\ee
We can view $L^2_\Phi(M)$ as a Hilbert space, with inner product
\be
\lan g,f\ran_\Phi=\int gf d\mu_\Phi,
\ee
and the operator $\LP$ is symmetric in the sense that
\be\label{LPsym}
\lan g,\LP f\ran_\Phi=\lan \LP g,f\ran_\Phi
\ee
and also obeys
\be\label{LPzero}
\lan \LP f\ran_\Phi=0
\ee
whenever $f$ and $g$ are in the domain of $\LP$.  We then define the generalized heat flow operation $P_t$ on smooth functions $f\in L^2_\Phi(M)$, with $t\geq 0$, by
\be
P_t f\equiv e^{t\LP}f.
\ee
$P_t$ can then be extended to a linear operator on all of $L_\Phi^2(M)$ by continuity.  $P_t$ has the following useful properties for all $f,g\in L_\Phi^2(M)$:
\bi
\item $P_0 f=f$
\item $P_t P_s f=P_{t+s}f$ 
\item $P_t 1=1$
\item $\lan P_t f\ran_\Phi=\lan f\ran_\Phi$
\item $\lan P_t f,g\ran_\Phi=\lan f,P_t g\ran_\Phi$
\item $f\geq 0\implies P_t f\geq 0$
\ei
Most of these are obvious from the definitions and the symmetry of $\LP$, but the last one requires a bit of thought.  One argument is the following: say that at $t=0$ we have $f\geq 0$.  For all $t\geq 0$ $P_t f$  obeys the generalized heat equation 
\be\label{heat}
\frac{d}{dt}P_t f=\LP P_t f,
\ee
so if for $\epsilon>0$ we define
\be
g_\epsilon(t,x)\equiv P_t f+\epsilon t,
\ee
then for all $t\geq 0$ and $x\in M$  we have
\be\label{gbound}
\frac{d}{dt}g_\epsilon(t,x)>\LP g_\epsilon(t,x)
\ee
and also $g_\epsilon(0,x)\geq 0$.  We now argue that $g_\epsilon(t,x)\geq 0$ for all $t\geq 0$.  Indeed say that at some time we have $g_\epsilon(t,x)<0$. There must be a ``last'' time $t$ before this happens (i.e. the supremum over all times where $g_\epsilon$ is non-negative), and at whichever $x$ where $g_\epsilon$ is about to become negative by continuity we must have $g_\epsilon(t,x)=0.$  Since $g$ is still non-negative we must have $\nabla_\mu g_\epsilon(t,x)=0$ and $\nabla^2 g_\epsilon(t,x)\geq 0$, but then by \eqref{gbound} we must have $\frac{d}{dt}g_\epsilon>0$ and so $g$ is not actually about to become negative. Since this is true for any $\epsilon>0$, taking $\epsilon\to 0$ we see that we must also have $P_t f\geq 0$.   A family of operators $P_t$ obeying the above conditions are sometimes said to form a Markov semigroup.

To proceed further it is useful to now introduce the Bakry-Emery bilinear operators
\begin{align}\nonumber
\Gamma_1(f,g)&\equiv\frac{1}{2}\left(\LP(fg)-f\LP g-g\LP f\right)\\
\Gamma_2(f,g)&\equiv\frac{1}{2}\left(\LP \Gamma_1(f,g)-\Gamma_1(\LP f,g)-\Gamma_1(f,\LP g)\right),\label{BHdefs}
\end{align}
which more explicitly are given by
\begin{align}\nonumber
\Gamma_1(f,g)&\equiv \nabla_\mu f \nabla^\mu g\\
\Gamma_2(f,g)&\equiv \nabla_\alpha\nabla_\beta f \nabla^\alpha \nabla^\beta g+(R^{\alpha\beta}+\nabla^\alpha \nabla^\beta\Phi)\nabla_\alpha f \nabla_\beta g.
\end{align}
In terms of these the Bakry-Emery condition \eqref{BEcrit} can be written as
\be
\Gamma_2(f,f)\geq \frac{1}{C}\Gamma_1(f,f).
\ee
A first illustration of the usefulness of the Bakry-Emery condition is the following: let 
\be
\psi(s)\equiv P_s \Gamma_1(P_{t-s} f, P_{t-s} f).
\ee
We then have
\be
\psi'(s)=2P_s \Gamma_2(P_{t-s}f,P_{t-s}f)\geq \frac{2}{C}P_s\Gamma_1(P_{t-s}f,P_{t-s}f)=\frac{2}{C}\psi(s),
\ee
and thus
\be
\psi(s)\geq \psi(0)e^{2s/C}.
\ee
In particular setting $s=t$ we have
\be\label{BEbound}
\Gamma_1(P_t f,P_t f)\leq P_t\Gamma_1(f,f)e^{-2t/C},
\ee
which implies that if $\Gamma_1(f,f)$ is bounded then $P_t f$ approaches a constant at late times (this uses that $P_t g(x)\leq \sup_{y\in M}g(y)$ for any $g$, which follows from the non-negative-preserving property of $P_t$). 

We now show how the Bakry-Emery condition leads to an LSI.\footnote{By the last part of lemma \ref{herbstlemma} it is enough to establish that an LSI holds for all smooth $f$. We'll give the argument assuming that if $M$ is noncompact then $f$ has whatever convergence properties are needed to justify the following manipulations.  This issue can be dealt with more carefully by first establishing the result for bounded $f$, and then truncating a general $f$ as in \eqref{truncate} and using monotone convergence.  The truncated function will not in general be smooth (it may have ``corners''), but the smoothing nature of the heat kernel makes $P_t f$ smooth for all $t>0$, which is enough for the proof to go through.}  We first define 
\be
h\equiv \frac{f^2}{\lan f^2\ran_\Phi}
\ee
and 
\be
S_f(t)\equiv \lan f^2\ran_\Phi\lan P_th \log (P_th)\ran_\Phi,
\ee
in terms of which the LSI \eqref{LSI} we wish to prove is 
\be
S_f(0)\leq 2C \Gamma_1(f,f).
\ee
As $t\to \infty$ we have $h_t\to 1$, and thus $S_f(\infty)=0$, so we then have
\begin{align}\nonumber
S_f(0)&=-\int_0^\infty dt\frac{dS_f(t)}{dt}\\\nonumber
&=-\lan f^2\ran_\Phi\int_0^\infty dt \lan \LP P_t h\log (P_t h)+\LP P_t h\ran_\Phi\\
&=\lan f^2\ran_\Phi \int_0^\infty dt\lan\Gamma_1(P_t h, \log P_t h)\ran_\Phi,
\end{align}
where in the third equality we have used \eqref{heat}, \eqref{BHdefs}, \eqref{LPsym}, and \eqref{LPzero}.  We now wish to establish an upper bound on $S_f(0)$.  Thus:
\begin{align}\nonumber
\lan \Gamma_1(P_t h,\log (P_t h))\ran_\Phi&=\lan\Gamma_1(h,P_t \log(P_t h))\ran_\Phi\\\nonumber
&\leq \left\lan \sqrt{\Gamma_1(h,h)}\sqrt{\Gamma_1(P_t \log (P_t h),P_t\log (P_t h))}\right\ran_\Phi\\\nonumber
&\leq\sqrt{\left \lan \Gamma_1(h,h)/h\right \ran_\Phi\left \lan h \Gamma_1(P_t \log (P_t h),P_t\log (P_t h))\right\ran_\Phi}\\\nonumber
&\leq e^{-t/C}\sqrt{\left \lan \Gamma_1(h,h)/h\right\ran_\Phi\lan h, P_t \Gamma_1(\log(P_t h),\log(P_t h))\ran_\Phi}\\\nonumber
&=e^{-t/C}\sqrt{\left \lan \Gamma_1(h,h)/h\right\ran_\Phi\lan P_th,  \Gamma_1(\log(P_t h),\log(P_t h))\ran_\Phi}\\
&=e^{-t/C}\sqrt{\left \lan \Gamma_1(h,h)/h\right\ran_\Phi\lan \Gamma_1(P_t h,\log(P_t h))\ran_\Phi},
\end{align}
where we have twice used the Cauchy-Schwarz inequality (once in $\mathbb{R}^d$ and once in $L_\Phi^2(M)$), and also \eqref{BEbound} and the symmetry of $P_t$, and thus
\begin{align}\nonumber
\lan \Gamma_1(P_t h,\log (P_t h))\ran_\Phi&\leq e^{-2t/C}\left \lan \frac{\Gamma_1(h,h)}{h}\right\ran_\Phi\\
&=4e^{-2t/C}\frac{\lan \Gamma_1(f,f)\ran_\Phi}{\lan f^2\ran_\Phi}
\end{align}
Finally we then have
\begin{align}\nonumber
S_f(0)&\leq 4 \int_0^\infty dte^{-2t/C}\lan \Gamma_1(f,f)\ran_\Phi\\
&=2C\lan \Gamma_1(f,f)\ran_\Phi.
\end{align}

\subsection{Measure concentration on simple matrix groups}
We'll now use Bakry-Emery theory to establish measure concentration results on the classical matrix groups $SO(N)$ and $SU(N)$ (a similar argument works for $Sp(N)$ but we won't bother discussing it).  Bakry-Emery theory reduces this to the study of the Ricci curvature on the group manifold.  Indeed since each of these groups is a matrix group, we can endow it with a Riemannian metric by using the Hilbert-Schmidt distance
\be\label{HSd}
d(M_1,M_2)=||M_1-M_2||_2
\ee
to define a Euclidean metric on the set of all matrices and then pulling this back to the group submanifold (here $||X||_2\equiv \sqrt{\tr (X^\dagger X)}$).  This ``Hilbert-Schmidt metric'' is invariant under left and right unitary multiplication,
\be
d(UM_1,UM_2)=d(M_1U,M_2U)=d(M_1,M_2),
\ee
so these operations are isometries and the volume measure in this metric is indeed the Haar measure for any subgroup of the unitary group.  Since any compact Lie group $G$ can be constructed as a subgroup of $U(N)$ for some $N$, this gives an explicit construction of the Haar measure for any such group.  To avoid confusion we emphasize that geodesic distances in the Hilbert-Schmidt metric are \textit{not} given by the distance \eqref{HSd}: geodesics in the induced metric are required to stay within the group manifold, and thus will typically be longer than the Euclidean distance $||M_1-M_2||_2$.  On the other hand since the geodesic distance can't be shorter we have the useful fact that a $\kappa$-Lipschitz function with respect to the distance \eqref{HSd} will also be a $\kappa$-Lipschitz function with respect to the geodesic distance.

For any Lie group $G$ the Lie algebra $\mathfrak{g}$ is defined as the set of vector fields on $G$ which are invariant under push-forward by the left-multiplication maps $L_h:g\mapsto hg$ for any $h\in G$.  These vector fields are completely determined by their values at the identity, and for subgroups of $U(N)$ they are thus in one-to-one correspondence with some subspace of the anti-hermitian operators on $\mathbb{C}^N$.  More concretely, each element in the identity component of $G$ can be written as 
\be
g=e^{X}
\ee
for some antihermitian $X$.  If we chose a basis $T_a$ of anti-hermitian operators obeying
\be
[T_a,T_b]=C^c_{\phantom{c}ab}T_c
\ee 
and coordinates $\lambda^a$ in the neighborhood of the identity such that
\be
g=e^{\lambda^aT_a}, 
\ee
then near the identity the left-invariant vector fields have the form
\be
V^a(\lambda)=V^a(0)+\frac{1}{2}C^a_{\phantom{a}bc}\lambda^b V^c(0)+O(\lambda^2).
\ee
To each anti-hermitian matrix $X=X^a T_a$ we then associate the Lie-algebra element which near the identity is given by
\be
\wt{X}^a(\lambda)=X^a+\frac{1}{2}C^a_{\phantom{a}bc}\lambda^b X^c+O(\lambda^2).
\ee
From these formulas we see that at the identity the commutator of the Lie algebra is related to the commutators of the matrices by
\be
[\wt{X},\wt{Y}](0)=[X,Y],  
\ee
so by left-invariance we then have
\be\label{Liecomm}
\left[\wt{X},\wt{Y}\right]=\wt{[X,Y]}
\ee
everywhere on $G$.  In a neighborhood of the identity we can write the Riemannian metric induced by Hilbert-Schmidt metric as
\begin{align}\nonumber
g_{ab}(\lambda)&=\tr\left[\partial_a\left(e^{-\lambda^c T_c}\right)\partial_b\left(e^{\lambda^d T_d}\right)\right]\\
&=\tr(T_a^\dagger T_b)+O(\lambda^2).
\end{align}
At the identity we thus have
\be\label{Gmetric}
g_{ab}\wt{X}^a \wt{Y}^b=\tr\left(X^\dagger Y\right),
\ee
and by left-invariance this equation actually holds throughout $G$.

We can now study the Ricci tensor on $G$ in this metric.  For any three vector fields $U$, $V$, $W$ on a Riemannian manifold we have
\be
2U^\mu W_\nu \nabla_\mu V^\nu=U(g(V,W))+Vg(U,W)-Wg(U,V)+g([U,V],W)+g([W,U],V)+g([W,V],U).
\ee
If $U,V,W$ are Lie-algebra elements $\wt{X},\wt{Y},\wt{Z}$, then the first three terms on the right-hand side vanish by the left-invariance of \eqref{Gmetric}. Using \eqref{Liecomm} and \eqref{Gmetric} we thus have
\begin{align}\nonumber
\wt{X}^a \wt{Z}_b \nabla_a \wt{Y}^b&=\frac{1}{2}(\tr([X,Y]Z)+\tr([Z,X]Y)+\tr([Z,Y]X))\\\nonumber
&=\frac{1}{2}\tr([X,Y]Z)\\
&=\frac{1}{2}\wt{Z}_b [\wt{X},\wt{Y}]^b,
\end{align}
and therefore
\be
\wt{X}^a \nabla_a \wt{Y}^b=\frac{1}{2}[\wt{X},\wt{Y}]^b.
\ee
By the definition of the Riemann tensor we then have
\begin{align}\nonumber
R^a_{\phantom{a}bcd}\wt{Z}^b \wt{X}^c\wt{Y}^d&=[\wt{X}^c\nabla_c,\wt{Y}^d\nabla_d]\wt{Z}^a-[\wt{X},\wt{Y}]^b\nabla_b \wt{Z}^a\\\nonumber
&=\frac{1}{4}\left([\wt{X},[\wt{Y},\wt{Z}]]-[\wt{Y},[\wt{X},\wt{Z}]]\right)+\frac{1}{2}[\wt{Z},[\wt{X},\wt{Y}]]\\\nonumber
&=\frac{1}{4}[\wt{Z},[\wt{X},\wt{Y}]]\\
&=\frac{1}{4}\wt{[Z,[X,Y]]}.
\end{align}
We are interested in computing $R_{ab}\wt{Z}^a\wt{Z}^b$ for any $\wt{Z}$, which we can do by choosing the $T_a$ to be orthonormal in the sense that
\be
\tr(T_a^\dagger T_b)=\delta_{ab}
\ee
and then computing
\begin{align}\nonumber
R_{ab}\wt{Z}^a\wt{Z}^b&=\sum_a (\wt{T_a})_b, (\wt{T_a})^d R^b_{\phantom{b}cde}\wt{Z}^c\wt{Z}^e\\\nonumber
&=\frac{1}{4}\sum_a\tr\Big(T_a^\dagger[Z,[T_a,Z]]\Big)\\
&=\frac{1}{4}\sum_a \tr\Big(Z[T_a,[T_a,Z]]\Big).\label{RicciG}
\end{align}

Finally we can evaluate the trace in \eqref{RicciG} for various classical groups.  We will just consider $SU(N)$ and $SO(N)$, there is a similar calculation for $SP(N)$.  The generators of $SO(N)$ are a subset of those for $SU(N)$, so if we compute the terms in the sum in \eqref{RicciG} we can also obtain the result for $SO(N)$ by just including fewer terms.  We will see that the following proposition holds:
\begin{prop}\label{algebraprop}
In the Lie algebra of $SU(N)$ we have
\be\label{SUcom}
\sum_a[T_a,[T_a,Z]]=-2NZ,
\ee
while in the Lie algebra of $SO(N)$ we have
\be\label{SOcom}
\sum_a[T_a,[T_a,Z]]=-(N-2)Z.
\ee
\end{prop}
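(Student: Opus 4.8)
The plan is to reduce both identities to a single ``completeness'' (Fierz) computation. First I would expand the double commutator, which for any basis gives
\be
\sum_a [T_a,[T_a,Z]] = \Big(\sum_a T_a^2\Big) Z + Z\Big(\sum_a T_a^2\Big) - 2\sum_a T_a Z T_a,
\ee
so it suffices to evaluate the two sums $\sum_a T_a^2$ and $\sum_a T_a A T_a$ for arbitrary $A$ in each case. Since $\sum_a T_a A T_a$ is basis-independent (it is the image of $A$ under the quadratic Casimir operator, built from the orthonormal basis via \eqref{RicciG}), any convenient orthonormal basis may be used.

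For $SU(N)$ I would argue via $\mathfrak{u}(N)$. Let $\{T_a\}_{a=1}^{N^2}$ be any orthonormal basis of the anti-Hermitian matrices with respect to $\langle X,Y\rangle=\tr(X^\dagger Y)$; because $T_a^\dagger=-T_a$ one checks that $\tr(T_a^\dagger T_b)$ is real, so $\{T_a\}\cup\{iT_a\}$ is a real-orthonormal basis of the space of \emph{all} $N\times N$ matrices. Expanding an arbitrary matrix $M$ in this basis and recombining the $T_a$ and $iT_a$ contributions gives $M=\sum_a \tr(T_a^\dagger M)\,T_a$, which in components reads $\sum_a (T_a)_{ij}(T_a)_{kl}=-\delta_{il}\delta_{jk}$, equivalently $\sum_a T_a A T_a=-\tr(A)\,I$ and hence $\sum_a T_a^2=-N\,I$. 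Substituting into the displayed identity yields $\sum_a[T_a,[T_a,Z]]=-2NZ+2\tr(Z)I$ for $\mathfrak{u}(N)$. To pass to $\mathfrak{su}(N)$, note that the single generator dropped in going from $\mathfrak{u}(N)$ to $\mathfrak{su}(N)$ is proportional to $I$ and contributes nothing to the double commutator, while $Z\in\mathfrak{su}(N)$ is traceless; this leaves exactly $-2NZ$, proving \eqref{SUcom}.

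For $SO(N)$ I would use the explicit orthonormal basis $T_{(jk)}=\tfrac{1}{\sqrt 2}(E_{jk}-E_{kj})$ for $j<k$. Symmetrizing the sum over the two orderings of $\{j,k\}$ and repeatedly applying $E_{ab}E_{cd}=\delta_{bc}E_{ad}$ gives, after a short index computation, $\sum_a T_a A T_a=\tfrac12\big(A^T-\tr(A)I\big)$, and in particular $\sum_a T_a^2=-\tfrac{N-1}{2}I$. Feeding this into the displayed identity produces $\sum_a[T_a,[T_a,Z]]=(1-N)Z-Z^{T}+\tr(Z)I$; since $Z\in\mathfrak{so}(N)$ is real and antisymmetric, $Z^{T}=-Z$ and $\tr Z=0$, so this collapses to $(2-N)Z=-(N-2)Z$, proving \eqref{SOcom}.

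The only genuinely delicate step is the $\mathfrak{u}(N)$ completeness relation: one must keep straight the anti-Hermiticity $T_a^\dagger=-T_a$, the reality of $\tr(T_a^\dagger T_b)$, the bookkeeping between the $T_a$ and $iT_a$ pieces, and the normalization $\tr(T_a^\dagger T_b)=\delta_{ab}$ that was assumed in deriving \eqref{RicciG}; once this is in place both cases are routine. As a consistency check one can verify small cases --- $N=2,3$ for $SU(N)$ against the known defining-representation Casimir, and $N=3$ for $SO(N)$ using $\mathfrak{so}(3)\cong\mathfrak{su}(2)$.
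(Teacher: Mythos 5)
Your proof is correct, and it takes a genuinely different route from the paper's. The paper exploits adjoint invariance: it first argues that $\sum_a[T_a,[T_a,Z]]$ is unchanged under $T_a\to UT_aU^\dagger$ and hence only needs to be verified for a single generator (it takes $Z=A_{12}$), and then grinds out the double commutators of $A_{12}$ with each class of Gell-Mann generator ($S_{\hat m\hat n}$, $A_{\hat m\hat n}$, $D_{\hat m}$) and sums the cases, restricting to the $A$'s alone for $SO(N)$. You instead expand $[T_a,[T_a,Z]]=T_a^2Z+ZT_a^2-2T_aZT_a$ and evaluate the two basis-independent sums $\sum_a T_a^2$ and $\sum_a T_aAT_a$ in closed form via completeness relations: the $\mathfrak{u}(N)$ Fierz identity $\sum_a(T_a)_{ij}(T_a)_{kl}=-\delta_{il}\delta_{jk}$ gives $\sum_a T_aAT_a=-\tr(A)\,I$, and the explicit antisymmetric basis gives $\sum_a T_aAT_a=\tfrac12(A^T-\tr(A)I)$ for $\mathfrak{so}(N)$; I have checked both identities and the subsequent reductions (dropping the generator proportional to $I$ when passing from $\mathfrak{u}(N)$ to $\mathfrak{su}(N)$, and using $Z^T=-Z$, $\tr Z=0$ for $\mathfrak{so}(N)$), and they are right, reproducing \eqref{SUcom} and \eqref{SOcom}. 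What your approach buys is the elimination of case-by-case commutator bookkeeping and a slightly stronger statement (the value of $\sum_a[T_a,[T_a,Z]]$ for an arbitrary matrix $Z$, with the trace and transpose terms exhibited explicitly); what the paper's approach buys is that, once invariance is noted, only a handful of $2\times2$-block commutators need to be computed, with no completeness relation required. The one step you rightly flag as delicate — that the anti-Hermitian $T_a$, orthonormal for $\tr(T_a^\dagger T_b)=\delta_{ab}$ as assumed above \eqref{RicciG}, yield the expansion $M=\sum_a\tr(T_a^\dagger M)\,T_a$ for every complex matrix $M$ — is indeed valid, since the $T_a$ form a complex basis of all matrices which is orthonormal for the Hermitian inner product, and the basis independence of $\sum_a T_aAT_a$ under real orthogonal changes of orthonormal basis justifies your freedom to choose the convenient bases in each case.
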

\begin{proof}
To prove \eqref{algebraprop}, we first observe that for any unitary $U\in U(N)$ we have
\be
UT_aU^\dagger=D_a^{\phantom{a}b}(U) T_b,
\ee
where the adjoint representation matrices $D_a^{\phantom{a}b}(U)$ are orthogonal.  Thus the quantity $\sum_a[T_a,[T_a,Z]]$ is invariant under $T_a\to U T_a U^\dagger$.  Moreover we have
\be
U\Big(\sum_a[T_a,[T_a,Z]]\Big)U^\dagger=\sum_a[UT_aU^\dagger,[UT_aU^\dagger,UZU^\dagger]]=\sum_a[T_a,[T_a,UZU^\dagger]],
\ee
so if \eqref{SUcom} or \eqref{SOcom} holds when $Z$ is equal to any particular $T_a$, then it holds for all $Z$ since  matrices of the form $UT_a U^\dagger$ span the Lie algebra.  A natural basis for the Lie algebra of $SU(N)$ is the Gell-Mann matrices:
\begin{align}\nonumber
(S_{\hat{m}\hat{n}})_{mn}&\equiv \frac{i}{\sqrt{2}}\left(\delta_{\hat{m},m}\delta_{\hat{n},n}+\delta_{\hat{m},n}\delta_{\hat{n},m}\right) \qquad \hat{m}<\hat{n}\\\nonumber
(A_{\hat{m}\hat{n}})_{mn}&\equiv \frac{1}{\sqrt{2}}\left(\delta_{\hat{m},m}\delta_{\hat{n},n}-\delta_{\hat{m},n}\delta_{\hat{n},m}\right) \qquad \hat{m}<\hat{n}\\
(D_{\hat{m}})_{mn}&\equiv\frac{i}{\sqrt{\hat{m}(\hat{m}+1)}}\delta_{m,n}\left(\sum_{k=1}^{\hat{m}}\delta_{m,k}-\hat{m}\delta_{m,\hat{m}+1}\right),
\end{align}
and some routine calculation shows that
\begin{align}
[S_{\hat{m}\hat{n}},[S_{\hat{m}\hat{n}},A_{12}]]&=\begin{cases} -2A_{12} & \hat{m}=1,\hat{n}=2\\ -\frac{1}{2}A_{12} & \hat{m}=1,\hat{n}>2\\ -\frac{1}{2}A_{12} & \hat{m}=2,\hat{n}>2\\ 0 &\hat{m}>2\end{cases}\\
[A_{\hat{m}\hat{n}},[A_{\hat{m}\hat{n}},A_{12}]]&=\begin{cases} 0 & \hat{m}=1,\hat{n}=2\\ -\frac{1}{2}A_{12} & \hat{m}=1,\hat{n}>2\\ -\frac{1}{2}A_{12} & \hat{m}=2,\hat{n}>2\\ 0 &\hat{m}>2\end{cases}\\
[D_{\hat{m}},[D_{\hat{m}},A_{12}]]&=\begin{cases}-2A_{12} & \hat{m}=1\\ 0 & \hat{m}>1\end{cases}.
\end{align}
Summing over all generators for $SU(N)$ or the $A_{\hat{m}\hat{n}}$ only for $SO(N)$ then directly gives \eqref{SUcom} and \eqref{SOcom}.
\end{proof}
Applying now this proposition to the curvature, from \eqref{RicciG} we have
\be
R_{ab}\wt{Z}^a\wt{Z}^b=\begin{cases} \frac{N}{2}\wt{Z}^a\wt{Z}_a & G=SU(N)\\
\frac{N-2}{4}\wt{Z}^a\wt{Z}_a & G=SO(N)\end{cases},
\ee
and thus the Bakry-Emery condition \eqref{BEcrit} holds on the group manifolds $SU(N)$ and $SO(N)$ in the Hilbert-Schmidt metric with
\be
C=\begin{cases} \frac{2}{N} & G=SU(N)\\ \frac{4}{N-2} & G=SO(N)\end{cases}.
\ee
The measure concentration results \eqref{onesign1} thus apply to these groups with these coefficients, with the Lipschitz constant $\kappa$ defined using either the geodesic distance or the Hilbert-Schmidt distance \eqref{HSd}.    

\subsection{Measure concentration on the unitary group}
It is also interesting to consider measure concentration on $U(N)$.  This does not obey a Bakry-Emery condition, since if we choose $Z$ to be proportional to the identity then $R_{ab}\wt{Z}^a \wt{Z}^b=0$.  On the other hand it turns out we can still use Bakry-Emery theory to show that Lipschitz functions on $U(N)$ obey the concentration results \eqref{onesign1} with $C=6/N$.  The basic idea is to observe that each element of $U(N)$ can be uniquely written as $e^{i\theta}V$ with $V\in SU(N)$ and $\theta\in [0,2\pi/N)$: we already know that there is measure concentration on $SU(N)$, and at large $N$ the range of $\theta$ is small so any Lipschitz function shouldn't vary much over it.  We just need to sharpen this latter observation into a concentration inequality and then learn how to combine these two concentration results.    

It is useful to first consider the case of $U(1)$.  A Bakry-Emery condition on $U(1)$ would say that any function $f\in L^2(\mathbb{S}^1)$ obeys $(f'')^2\geq \frac{1}{C}(f')^2$ for some $C>0$, but this is clearly false (here we parametrize $\mathbb{S}^1$ using $x\in [0,2\pi)$).  For example the function $f(x)=\sin(x)$  has $f''(0)=0$ but $f'(0)=1$.  On the other hand it is still true that functions on the circle obey a Bakry-Emery condition \textit{on average}:
if we expand $f$ in a Fourier basis
\be
f(x)=\sum_{n\in \mathbb{Z}}a_n e^{inx},
\ee
then we have
\be
\int\frac{dx}{2\pi} f''(x)^2=\sum_n n^4|a_n|^2\geq \sum_n n^2 |a_n|^2=\int \frac{dx}{2\pi}f'(x)^2\label{BEave},
\ee
which is the average of the Bakry-Emery condition with $C=1$.  This turns out to be sufficient to establish an LSI \eqref{LSI} with $C=1$.  We first observe that replacing $f^2 \to f$, with $f\geq 0$, we want to show that 
\be\label{newLSI}
\lan f \log f\ran-\lan f\ran \log \lan f\ran\leq \frac{1}{2}\Big\lan\frac{f'^2}{f}\Big\ran.
\ee
Introducing the heat flow operation $P_t$ as before, we have
\begin{align}\nonumber
\lan f \log f\ran-\lan f\ran \log \lan f\ran&=-\int_0^\infty dt \frac{d}{dt}\lan P_t f\log P_t f\ran\\\nonumber
&=-\int_0^\infty dt \Big\lan (P_t f)''(\log (P_t f)+1)\Big\ran\\\nonumber
&=\int_0^\infty dt \Big\lan (P_t f)'(\log (P_t f)+1)'\Big\ran\\
&=\int_0^\infty dt\Big \lan \frac{(P_t f')^2}{P_t f}\Big\ran.\label{circint}
\end{align}
Moreover we have
\begin{align}\nonumber
\frac{d}{dt}\left(e^{2t}\Big\lan \frac{(P_t f')^2}{P_tf}\Big\ran\right)&=e^{2t}\Big\lan 2\frac{(P_tf')^2}{P_tf}+2\frac{P_tf'P_tf'''}{P_tf}-\frac{(P_tf')^2P_tf''}{(P_tf)^2}\Big\ran\\\nonumber
&=4e^{2t}\Big\lan h'^2+\frac{2h'^2h''}{h}+h'h'''-\frac{h'^4}{h^2}\Big\ran\\\nonumber
&=4e^{2t}\Big\lan h'^2-h''^2-\frac{1}{3}\frac{h'^4}{h^2}\Big\ran\\
&\leq 0
\end{align}
where in the second line we've defined $P_tf=\frac{1}{2}h^2$ and in the last line we've used \eqref{BEave}.  Thus we have
\be
\Big\lan \frac{(P_t f')^2}{P_tf}\Big\ran\leq e^{-2t}\Big\lan \frac{f'^2}{f}\Big\ran
\ee
Using this in \eqref{circint} we recover \eqref{newLSI}.  Since $1<6$, we see that in the case $N=1$ we indeed have an LSI for $U(1)$ with $C=6/N$. 

Before proceeding it will be useful to note that we can use this LSI on $\mathbb{S}^1$ to also derive an LSI for functions on the compact interval $[0,a]$  We first consider the case $a=\pi$.  Any function $f:[0,\pi]\to \mathbb{R}$ can be extended to a function on $\mathbb{S}^1$ by
\be
\wt{f}(x)=\begin{cases} f(x) & 0\leq x \leq \pi\\ f(2\pi-x) & \pi\leq x \leq 2\pi\end{cases},
\ee 
and we then have
\be
\int_0^\pi \frac{dx}{\pi}f^2 \log f^2= \int_0^{2\pi} \frac{dx}{2\pi}\wt{f}^2 \log \wt{f}^2\leq 2\int_0^{2\pi} \frac{dx}{2\pi}|\nabla \wt{f}|^2=\int_0^\pi \frac{dx}{\pi}|\nabla f|^2,
\ee
so on $[0,\pi]$ with the uniform measure we have an LSI with $C=1$.  The general case of uniform measure on $[0,a]$ is converted back to the $[0,\pi]$ case by the change of variables $x'=\pi x/a$, which gives an LSI with
\be\label{Ca}
C=\frac{a^2}{\pi^2}.
\ee
Therefore, as suggested in the beginning of this subsection, from lemma \eqref{herbstlemma}  there is strong measure concentration on $[0,a]$ when $a$ is small and $\kappa$ is fixed.

We'd now like to use this LSI on $[0,a]$ together with the fact that $U(N)=(U(1)\times SU(N))/\mathbb{Z}_N$ to derive an LSI for $U(N)$ with $N>1$.  We first observe that any $U\in U(N)$ can be uniquely written as
\be
U=e^{i\theta}V,
\ee
with $V\in SU(N)$ and $\theta\in [0,2\pi/N)$.  Moreover if we take a uniform distribution on $\theta$ and a Haar distribution on $V$, this induces the Haar distribution on $U$.  To see this, we show that it is left-invariant: for any function $f:U(N)\to \mathbb{R}$ and any $\lambda\in \mathbb{R}$ we have
\begin{align}\nonumber
\PR\Big[f\left(e^{i\theta'}V' e^{i\theta}V\right)>\lambda\Big]&=\PR\Big[f\left(e^{i(\theta'+\theta\,\, \mathrm{mod} \frac{2\pi}{N})} e^{\frac{2\pi ik(\theta,\theta')}{N}}V'V\right)>\lambda\Big]\\
&=\PR\Big[f(e^{i\theta}V)>\lambda\Big],
\end{align}
where $k(\theta,\theta')$ is the integer such that $\theta+\theta'\,\,\mathrm{mod}\frac{2\pi}{N}=\theta+\theta'-\frac{2\pi k(\theta,\theta')}{N}$.

To proceed further we need to spend some time developing a ``tensorization'' property of logarithmic Sobolev inequalities:
\begin{thm}\label{tensorthm}
Let $(X_i,d_i, \mu_i)$ be a collection of $n$ metric probability spaces with Borel measure, and suppose that for each $i$ we have an LSI
\be
\Big\lan f_i^2\log \frac{f^2}{\lan f^2\ran}\Big\ran\leq 2C_i \lan |\nabla f_i|^2\ran
\ee
for all functions $f_i:X_i\to \mathbb{R}$.  Then the product space $X=X_1\times \ldots \times X_n$ in the metric
\be\label{prodmet}
d(\{x_1,\ldots,x_n\},\{y_1,\ldots,y_n\})=\sqrt{\sum_i d_i(x_i,y_i)^2}
\ee
and the product measure $\mu=\mu_1\otimes \ldots \otimes \mu_n$ obeys the inequality
\be
\Big\lan f^2\log \frac{f^2}{\lan f^2\ran}\Big\ran\leq 2C\sum_i \lan |\nabla_i f|^2,
\ee
with 
\be
C=\max_{i}C_i
\ee
and
\be
|\nabla_i f|\equiv \limsup_{y_i\to x_i}\frac{|f(x_1,\ldots, y_i, \ldots, x_n)-f(x_1,\ldots,x_n)|}{d_i(y_i,x_i)}.
\ee
\end{thm}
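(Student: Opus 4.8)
The plan is to strip away the heat-flow machinery entirely and reduce everything to two soft properties of the entropy functional: its \emph{chain rule} under products and its \emph{convexity}. Introduce the shorthand $\mathrm{Ent}_\nu(g)\equiv\langle g\log g\rangle_\nu-\langle g\rangle_\nu\log\langle g\rangle_\nu$ for $g\geq 0$, so that each hypothesis reads $\mathrm{Ent}_{\mu_i}(f_i^2)\leq 2C_i\langle|\nabla f_i|^2\rangle_{\mu_i}$ and the desired conclusion reads $\mathrm{Ent}_\mu(f^2)\leq 2C\sum_i\langle|\nabla_i f|^2\rangle_\mu$. First I would observe that it suffices to prove the case $n=2$: the metric \eqref{prodmet} is associative, so $X_1\times\cdots\times X_n$ can be viewed as $X_1$ times a product factor carrying again a metric of the form \eqref{prodmet}, and the general statement follows by iterating the two-factor result and collecting the one-dimensional contributions coordinate by coordinate.

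For $n=2$, the key step (the ``subadditivity of entropy'') comes in two moves. Move one is an exact identity: writing $g\geq 0$ with $\langle g\rangle_\mu=1$, let $g_1(x_1)\equiv\int g(x_1,x_2)\,d\mu_2(x_2)$ be the $X_1$-marginal density, split $\log g=\log(g/g_1)+\log g_1$, and integrate; using $\int (g(x_1,\cdot)/g_1(x_1))\,d\mu_2=1$ for the first piece and Fubini for the second, one obtains $\mathrm{Ent}_\mu(g)=\int\mathrm{Ent}_{\mu_2}(g(x_1,\cdot))\,d\mu_1(x_1)+\mathrm{Ent}_{\mu_1}(g_1)$. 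Move two bounds the last term by Jensen: since $g\mapsto\mathrm{Ent}_{\mu_1}(g)$ is convex --- this is clearest from the Gibbs variational formula $\mathrm{Ent}_\nu(g)=\sup\{\langle gh\rangle_\nu:\langle e^h\rangle_\nu\leq 1\}$, which writes $\mathrm{Ent}$ as a supremum of linear functionals and, via interchanging the supremum with $\int(\cdot)\,d\mu_2$, yields $\mathrm{Ent}_{\mu_1}(g_1)\leq\int\mathrm{Ent}_{\mu_1}(g(\cdot,x_2))\,d\mu_2(x_2)$ directly --- we get $\mathrm{Ent}_\mu(g)\leq\int\mathrm{Ent}_{\mu_2}(g(x_1,\cdot))\,d\mu_1+\int\mathrm{Ent}_{\mu_1}(g(\cdot,x_2))\,d\mu_2$. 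Setting $g=f^2$, for a.e.\ fixed $x_1$ the section $f(x_1,\cdot)$ is locally Lipschitz on $X_2$ with $|\nabla(f(x_1,\cdot))|(x_2)=|\nabla_2 f|(x_1,x_2)$ by the very definition of $|\nabla_2 f|$, so the LSI on $(X_2,\mu_2)$ gives $\mathrm{Ent}_{\mu_2}(f^2(x_1,\cdot))\leq 2C_2\int|\nabla_2 f|^2(x_1,x_2)\,d\mu_2(x_2)$; integrating over $x_1$, doing the symmetric thing on $X_1$, and using $C_1,C_2\leq C=\max_iC_i$ closes the case $n=2$, hence the induction.

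The part that needs genuine care is not any inequality but the measure-theoretic bookkeeping: one must check that $x_1\mapsto\mathrm{Ent}_{\mu_2}(f^2(x_1,\cdot))$ and $x_1\mapsto\lVert f(x_1,\cdot)\rVert_{L^2(\mu_2)}$ are measurable and a.e.\ finite, so that the iterated integrals above are legitimate and Fubini applies, and that the ``local gradient'' $|\nabla_i f|$ really restricts to the local gradient of the frozen-coordinate section. I expect this to be the main obstacle in the sense that it is where a fully rigorous argument must spend its effort. I would handle it exactly as in the proof of Lemma \ref{herbstlemma}: impose enough regularity to justify the interchanges (which is automatic in all cases used in the main text, where the factors are copies of $U(1)$ or $U(1)$ and $SU(N)$ and the relevant sections are smooth or Lipschitz), or else truncate $f$ by $f_\delta$ as in \eqref{truncate}, run the argument for bounded $f_\delta$, and recover the general case by monotone convergence and Fatou. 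No use of the semigroup $P_t$ or the Bakry--Émery condition is needed anywhere in this proof.
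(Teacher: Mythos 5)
Your proof is correct, but it is organized differently from the paper's. The paper proves the Gibbs variational formula $\lan f^2\log\frac{f^2}{\lan f^2\ran}\ran=\sup_{\lan e^g\ran\leq 1}\lan f^2 g\ran$ and then, taking $g=\log f^2$, constructs a telescoping family $g_1,\ldots,g_n$ with $\int e^{g_i}d\mu_i=1$ in the $i$-th coordinate and $\sum_i g_i\geq g$; applying the variational inequality coordinate-by-coordinate (with the other variables frozen) bounds each $\lan f^2 g_i\ran$ by the entropy of the frozen-coordinate section, after which the one-factor LSIs finish the argument for all $n$ at once, with no induction. You instead reduce to $n=2$ and induct, using the exact chain rule $\mathrm{Ent}_\mu(g)=\int\mathrm{Ent}_{\mu_2}(g(x_1,\cdot))\,d\mu_1+\mathrm{Ent}_{\mu_1}(g_1)$ together with convexity of the entropy functional (itself proved from the same variational formula plus an interchange of supremum and integral) to bound the marginal term, and only then apply the sectionwise LSIs. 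Both routes funnel through the same dual characterization of entropy and both end by applying the coordinate LSIs to frozen-coordinate sections, where your observation that $|\nabla_i f|$ is by definition the local slope of the section is exactly right; the paper's telescoping trick avoids any induction and keeps the argument symmetric in the $n$ factors, while your chain-rule-plus-Jensen decomposition makes the underlying subadditivity-of-entropy structure explicit and isolates the only analytic input (convexity of $\mathrm{Ent}$) cleanly. Your flagged measurability/Fubini bookkeeping and the truncation via \eqref{truncate} as in lemma \ref{herbstlemma} are the right way to make the sectionwise manipulations rigorous, and the same implicit issues are present (and similarly benign) in the paper's version, since its normalization of the $g_i$ also rests on Fubini.
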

In particular if the $X_i$ are Riemannian manifolds with geodesic distance $d_i$ and $f:X\to \mathbb{R}$ is smooth then we have 
\be
\sum_i \lan |\nabla_i f|^2=|\nabla f|^2
\ee
and thus the LSI
\be
\Big\lan f^2 \log \frac{f^2}{\lan f^2\ran}\Big\ran\leq 2C\lan |\nabla f|^2\ran
\ee 
holds.
\begin{proof}
To prove theorem \ref{tensorthm}, we first establish that
\be
\Big\lan f^2 \log \frac{f^2}{\lan f^2\ran}\Big\ran=\sup_{\lan e^g\ran\leq 1}\lan f^2g\ran.
\ee
Rescaling $f$ so that $\lan f^2\ran=1$, by setting $g=\log f^2$ we see that the supremum must at least be as big as promised.  To see that it is no bigger, since $\log x\leq x-1$ we have
\be
f^2\log(e^g/f^2)\leq e^g-f^2
\ee
and thus
\be
\lan f^2g\ran\leq \lan f^2\log f^2+e^g-f^2\ran=\lan f^2 \log f^2\ran.
\ee
Now defining
\be
g_i(x_1,\ldots ,x_n)=\log \frac{\int d\mu_1(y_1)\ldots d\mu_{i-1}(y_{i-1}) e^{g(y_1,\ldots,y_{i-1},x_i,\ldots, x_n)}}{\int d\mu_1(y_1)\ldots d\mu_{i}(y_{i}) e^{g(y_1,\ldots,y_{i},x_{i+1},\ldots, x_n)}},
\ee
we have
\be
\int d\mu_i(x_i)e^{g_i(x_1,\ldots, x_n)}=1
\ee
and 
\be
\sum_{i=1}^n g_i(x_1,\ldots, x_n)=\log \frac{e^{g(x_1,\ldots,x_n)}}{\lan e^g\ran}\geq g(x_1,\ldots,x_n), 
\ee
so taking $g=\log f^2$ we have
\begin{align}\nonumber
\lan f^2 \log f^2\ran&=\lan f^2g\ran\\\nonumber
&\leq \sum_{i=1}^n \lan f^2 g_i\ran\\\nonumber
&=\int d\mu_1(x_1)\ldots d\mu_{i-1}(x_{i-1})d\mu_{i+1}(x_{i+1})\ldots d\mu_n(x_n)\lan f^2 g_i\ran_i\\\nonumber
&\leq \int d\mu_1(x_1)\ldots d\mu_{i-1}(x_{i-1})d\mu_{i+1}(x_{i+1})\ldots d\mu_n(x_n) \lan f^2 \log f^2\ran_i\\\nonumber
&\leq 2\sum_i C_i \lan |\nabla_i f|^2\ran\\
&\leq 2C \sum_i \lan |\nabla_i f|^2\ran.
\end{align}
\end{proof}

We'd now like to use tensorization to combine the LSIs on $[0,2\pi/N]$ and $SU(N)$ to derive a concentration inequality for $U(N)$.  There is some freedom in how we do this, but it turns out to be a good idea to define\footnote{More generally we could define $\theta=\frac{\lambda}{\sqrt{N}}t$, but setting $\lambda=\sqrt{2}$ turns out to give the best concentration inequality.}
\be
\theta=\sqrt{\frac{2}{N}}t,
\ee
with $t\in [0,\pi \sqrt{\frac{2}{N}}]$, so that by \eqref{Ca} maps from $[0,\pi \sqrt{\frac{2}{N}}]$ to $\mathbb{R}$ obey an LSI with $C=\frac{2}{N}$ just as we found for $SU(N)$.  Therefore by tensorization maps from $[0,\pi \sqrt{\frac{2}{N}}]\times SU(N)$ to $\mathbb{R}$, including those of the form $f(e^{i\sqrt{\frac{2}{N}}t}V)$, obey an LSI with $C=\frac{2}{N}$.  Therefore by lemma \ref{herbstlemma} they will obey a concentration inequality.  The only task remaining is to understand how a Lipschitz constant for a function $f:U(N)\to \mathbb{R}$ with the Hilbert-Schmidt distance \eqref{HSd} on $U(N)$ is related to a Lipschitz constant for the same function on $[0,\pi \sqrt{\frac{2}{N}}]\times SU(N)$ with the product metric \eqref{prodmet}.  Say that $f$ is $\kappa$ Lipschitz on $U(N)$.  Then we have
\begin{align}\nonumber
|f(e^{i\theta_1}V_1)-f(e^{i\theta_2}V_2)|&\leq \kappa ||e^{i\theta_1}V_1-e^{i\theta_2}V_2||_2\\\nonumber
&\leq \kappa\Big(||V_1-V_2||_2+||e^{i\theta_1}-e^{i\theta_2}||_2\Big)\\\nonumber
&\leq \kappa\Big(||V_1-V_2||_2+\sqrt{N}|e^{i\theta_1}-e^{i\theta_2}|\Big)\\\nonumber
&\leq \kappa\Big(||V_1-V_2||_2+\sqrt{2}|t_1-t_2|\Big)\\\nonumber
&\leq \kappa \sqrt{3}\sqrt{||V_1-V_2||_2^2+|t_1-t_2|^2}\\
&\leq \kappa \sqrt{3}\sqrt{d(V_1,V_2)^2+|t_1-t_2|^2}
\end{align}
so $f$ is $\sqrt{3}\kappa$-Lipschitz on $[0,\pi \sqrt{\frac{2}{N}}]\times SU(N)$ with the metric \ref{prodmet} (here $d(V_1,V_2)$ is the geodesic distance on $SU(N)$ and we have used that $||V_1-V_2||_2\leq d(V_1,V_2)$). Thus by lemma \ref{herbstlemma} we at last have the following:
\begin{thm}(Meckes) For any function $f$ which is $\kappa$ Lipschitz on $U(N)$ with respect to the Hilbert-Schmidt distance \eqref{HSd}, in Haar measure we have the concentration inequalities
\begin{align}\nonumber
\PR\Big[f\geq \lan f\ran+\epsilon\Big]&\leq \exp\left(-\frac{\epsilon^2N}{12\kappa^2}\right)\\\nonumber
\PR\Big[f\leq \lan f\ran-\epsilon\Big]&\leq \exp\left(-\frac{\epsilon^2N}{12\kappa^2}\right)\\
\PR\Big[|f-\lan f\ran |\geq\epsilon\Big]&\leq 2\exp\left(-\frac{\epsilon^2N}{12\kappa^2}\right).
\end{align}
\end{thm}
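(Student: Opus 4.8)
The plan is to realize $U(N)$ as the quotient $(U(1)\times SU(N))/\mathbb{Z}_N$ and transport logarithmic Sobolev inequalities from the two factors through the tensorization theorem \ref{tensorthm}, then invoke Herbst's lemma \ref{herbstlemma}. The delicate point at the outset is that $U(1)$ does \emph{not} satisfy a pointwise Bakry--Emery condition — a constant vector field on $\mathbb{S}^1$ has vanishing Ricci curvature — so this factor cannot be handled by the Bakry--Emery route and must instead be treated via the ``Bakry--Emery on average'' computation: expanding in Fourier modes gives $\langle (f'')^2\rangle \ge \langle (f')^2\rangle$, and running the heat-flow argument (setting $P_t f = \tfrac12 h^2$ and using $\langle h'^2 - h''^2 - \tfrac13 h'^4/h^2\rangle \le 0$) upgrades this to an LSI on $\mathbb{S}^1$ with $C=1$; a reflection trick then yields an LSI with $C = a^2/\pi^2$ on the interval $[0,a]$ with uniform measure.

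Next I would fix the parametrization carefully. Write each $U\in U(N)$ uniquely as $U = e^{i\theta}V$ with $V\in SU(N)$ and $\theta\in[0,2\pi/N)$, and substitute $\theta = \sqrt{2/N}\,t$ so that $t$ ranges over an interval of length $a = \pi\sqrt{2/N}$. With this choice the interval LSI has constant $C = a^2/\pi^2 = 2/N$, which \emph{matches} the Bakry--Emery constant $C = 2/N$ for $SU(N)$ obtained from $R_{ab}\wt{Z}^a\wt{Z}^b = (N/2)\wt{Z}^a\wt{Z}_a$ (proposition \ref{algebraprop}). Tensorization then produces an LSI with $C = \max(2/N,2/N) = 2/N$ on $[0,\pi\sqrt{2/N}]\times SU(N)$. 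One also checks that the pushforward of (uniform)$\times$(Haar) under $(t,V)\mapsto e^{i\sqrt{2/N}\,t}V$ is the Haar measure on $U(N)$; this follows from left-invariance, the only mild subtlety being the $\mathbb{Z}_N$ identification at $\theta = 2\pi/N$, which merely introduces the integer shift $k(\theta,\theta')$ and changes nothing. The last input is the Lipschitz comparison: if $f:U(N)\to\mathbb{R}$ is $\kappa$-Lipschitz for the Hilbert--Schmidt distance, then as a function of $(t,V)$ it is $\sqrt 3\,\kappa$-Lipschitz for the product metric, via $\|e^{i\theta_1}V_1 - e^{i\theta_2}V_2\|_2 \le \|V_1-V_2\|_2 + \sqrt N\,|e^{i\theta_1}-e^{i\theta_2}| \le \|V_1-V_2\|_2 + \sqrt 2\,|t_1-t_2| \le \sqrt 3\,\sqrt{\|V_1-V_2\|_2^2 + |t_1-t_2|^2}$ together with $\|V_1-V_2\|_2 \le d(V_1,V_2)$.

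Feeding $C = 2/N$ and effective Lipschitz constant $\sqrt 3\,\kappa$ into Herbst's lemma \ref{herbstlemma} gives the concentration exponent $-\epsilon^2/\bigl(2C(\sqrt3\kappa)^2\bigr) = -\epsilon^2 N/(12\kappa^2)$, which is exactly the claimed bound, with the two one-sided inequalities and the two-sided (union-bound) inequality all coming out of that lemma simultaneously. I expect the main obstacle to be the bookkeeping around the $U(1)$ factor — establishing the averaged Bakry--Emery inequality and carrying the heat-flow argument through on the circle and on the interval — and, secondarily, getting the scaling constants to line up: the specific substitution $\theta = \sqrt{2/N}\,t$ (rather than a generic $\theta = \lambda t/\sqrt N$) is precisely what forces the interval LSI constant to equal the $SU(N)$ constant and makes the final coefficient come out as $12$ rather than something larger.
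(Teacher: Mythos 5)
Your proposal is correct and follows essentially the same route as the paper's appendix: the decomposition $U(N)=(U(1)\times SU(N))/\mathbb{Z}_N$, the averaged Bakry--Emery/heat-flow LSI on the circle with the reflection trick for the interval, the substitution $\theta=\sqrt{2/N}\,t$ to match the $SU(N)$ constant $C=2/N$, tensorization, the $\sqrt{3}\kappa$ Lipschitz comparison, and Herbst's lemma yielding the exponent $-\epsilon^2 N/(12\kappa^2)$. No gaps to flag.
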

This is the same as lemma \ref{Meckes} in the main text.

\section{Proof of the deviation bound for overlaps of encoded states}\label{proofapp}
In this appendix we prove theorem \ref{setconcthm}.  We first observe that showing $V\otimes I_{LR}$ approximately preserves inner products is essentially equivalent to showing that it approximately preserves norms:
\begin{lemma}\label{overnorm}
Let $L$ be a linear map and $|\psi_1\ran$, $|\psi_2\ran$ be states such that for $|\psi\ran=|\psi_1\ran,|\psi_2\ran, |\psi_1\ran\pm |\psi_2\ran, \, \mathrm{or}\, |\psi_1\ran\pm i|\psi_2\ran$ we have
\be
1-\epsilon\leq \frac{||L|\psi\ran||}{|||\psi\ran||}\leq 1+ \epsilon
\ee
for some $0\leq\epsilon\leq 1$.  Then 
\be
|\lan \psi_2|L^\dagger L|\psi_1\ran-\lan \psi_2|\psi_1\ran|\leq \sqrt{18}\epsilon.
\ee
\end{lemma}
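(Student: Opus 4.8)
The plan is to prove Lemma~\ref{overnorm} by a polarization identity together with the parallelogram law. First I would record the elementary scaling fact: for \emph{any} nonzero vector $|\chi\rangle$ the hypothesis applied to its normalization $|\chi\rangle/\||\chi\rangle\|$ gives, by homogeneity of $L$, the bound $1-\epsilon\le \|L|\chi\rangle\|/\||\chi\rangle\|\le 1+\epsilon$, and hence $\big|\,\|L|\chi\rangle\|^2-\||\chi\rangle\|^2\,\big|\le\big((1+\epsilon)^2-1\big)\||\chi\rangle\|^2\le 3\epsilon\,\||\chi\rangle\|^2$, where the last inequality uses $\epsilon\le 1$. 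I would then apply this with $|\chi\rangle$ equal to each of the four vectors $|\psi_1\rangle\pm|\psi_2\rangle$ and $|\psi_1\rangle\pm i|\psi_2\rangle$ appearing in the hypothesis, introducing the shorthands $a_\pm\equiv\|L(|\psi_1\rangle\pm|\psi_2\rangle)\|^2-\||\psi_1\rangle\pm|\psi_2\rangle\|^2$ and $b_\pm\equiv\|L(|\psi_1\rangle\pm i|\psi_2\rangle)\|^2-\||\psi_1\rangle\pm i|\psi_2\rangle\|^2$, so that $|a_\pm|\le 3\epsilon\,\||\psi_1\rangle\pm|\psi_2\rangle\|^2$ and $|b_\pm|\le 3\epsilon\,\||\psi_1\rangle\pm i|\psi_2\rangle\|^2$.

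Next, the polarization identity for the inner product of $|\psi_1\rangle,|\psi_2\rangle$ and, applied to $L|\psi_1\rangle, L|\psi_2\rangle$, for $\langle\psi_2|L^\dagger L|\psi_1\rangle=\langle L\psi_2|L\psi_1\rangle$, gives termwise $\mathrm{Re}\big(\langle\psi_2|L^\dagger L|\psi_1\rangle-\langle\psi_2|\psi_1\rangle\big)=\tfrac14(a_+-a_-)$ and $\mathrm{Im}\big(\langle\psi_2|L^\dagger L|\psi_1\rangle-\langle\psi_2|\psi_1\rangle\big)=\tfrac14(b_+-b_-)$, so the quantity to be bounded has modulus squared $\tfrac1{16}\big[(a_+-a_-)^2+(b_+-b_-)^2\big]$. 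The key step is to bound $|a_+-a_-|\le|a_+|+|a_-|\le 3\epsilon\big(\||\psi_1\rangle+|\psi_2\rangle\|^2+\||\psi_1\rangle-|\psi_2\rangle\|^2\big)$ and then to use the parallelogram law together with $\||\psi_1\rangle\|=\||\psi_2\rangle\|=1$ (since these are states) to get $\||\psi_1\rangle+|\psi_2\rangle\|^2+\||\psi_1\rangle-|\psi_2\rangle\|^2=2\||\psi_1\rangle\|^2+2\||\psi_2\rangle\|^2=4$; hence $|a_+-a_-|\le 12\epsilon$, and identically (using $\|i|\psi_2\rangle\|=1$) $|b_+-b_-|\le 12\epsilon$. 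Combining, $\big|\langle\psi_2|L^\dagger L|\psi_1\rangle-\langle\psi_2|\psi_1\rangle\big|^2\le\tfrac1{16}\big[(12\epsilon)^2+(12\epsilon)^2\big]=18\epsilon^2$, i.e.\ the bound $\sqrt{18}\,\epsilon$.

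I do not expect a genuine obstacle here; the argument is short and elementary. The one point requiring a little care is the constant: naively expanding the inner product difference as a four-term sum $\tfrac14\sum_k i^k(\ldots)$ and bounding $\tfrac14\sum_k|\ldots|$ loses a factor and yields only $6\epsilon$. To reach $\sqrt{18}\,\epsilon$ one must instead keep the real and imaginary parts separate, control each of the two two-term differences $a_+-a_-$ and $b_+-b_-$ using the parallelogram law so that only the fixed total $4$ enters rather than a sum over four unnormalized norms, and recombine the two parts with Pythagoras only at the very end. With that organization the constant $\sqrt{18}$ comes out exactly.
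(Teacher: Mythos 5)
Your proof is correct and follows essentially the same route as the paper's: polarization identities for the real and imaginary parts, the parallelogram law with $\||\psi_1\rangle\|=\||\psi_2\rangle\|=1$ to keep the total norm contribution at $4$, and a Pythagorean recombination at the end, which reproduces the constant $\sqrt{18}$ (the paper organizes the bound as $2\epsilon+\epsilon^2\le 3\epsilon$ on each of $\mathrm{Re}$ and $\mathrm{Im}$ rather than your $3\epsilon\||\chi\rangle\|^2$ per term, but the content is the same). No gap to report.
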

\begin{proof}
Noting that
\begin{align}\nonumber
4\mathrm{Re}\Big(\lan \psi_2|L^\dagger L|\psi_1\ran\Big)&=||L(|\psi_1\ran+|\psi_2\ran)||^2-||L(|\psi_1\ran-|\psi_2\ran)||^2\\
4\mathrm{Im}\Big(\lan \psi_2|L^\dagger L|\psi_1\ran\Big)&=||L(|\psi_1\ran+i|\psi_2\ran)||^2-||L(|\psi_1\ran-i|\psi_2\ran)||^2,
\end{align}
from the assumed inequalities we have
\begin{align}\nonumber
-(2\epsilon+\epsilon^2)&\leq \mathrm{Re}\left(\lan \psi_2|L^\dagger L|\psi_1\ran-\lan \psi_2|\psi_1\ran\right)\leq 2\epsilon+\epsilon^2\\
-(2\epsilon+\epsilon^2)&\leq \mathrm{Im}\left(\lan \psi_2|L^\dagger L|\psi_1\ran-\lan \psi_2|\psi_1\ran\right)\leq 2\epsilon+\epsilon^2,
\end{align}
so the conclusion follows from $\epsilon^2\leq\epsilon$ since $0\leq\epsilon\leq 1$.
\end{proof}
It will also be useful to establish the following bounds:
\begin{lemma}\label{avbounds}
Let $V$ be defined as in \eqref{Vdef}, and $|\psi\ran\in \HL\otimes\Hl\otimes \Hr \otimes \HR$.  Then
\begin{align}\nonumber
\int dU ||(V\otimes I_{LR})|\psi\ran||&\leq 1\\
\int dU ||(V\otimes I_{LR})|\psi\ran||&\geq 1-\frac{1}{2}\frac{|P|-1}{|P|^2|B|^2-1}\left(|P||B|\tr(\psi_{LR}^2)-1\right).
\end{align}
\end{lemma}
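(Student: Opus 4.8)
The plan is to reduce both inequalities to elementary facts about the single nonnegative random variable
\[
X(U)\equiv \lan\psi|(V^\dagger V\otimes I_{LR})|\psi\ran = \big|\big|(V\otimes I_{LR})|\psi\ran\big|\big|^2,
\]
which is nonnegative because $V^\dagger V\geq 0$, and which is a bounded continuous function of $U$ on the compact group, hence integrable. The point is that $\int dU\,\big|\big|(V\otimes I_{LR})|\psi\ran\big|\big| = \int dU\,\sqrt{X(U)}$, so we just need to control $\int dU\,\sqrt{X}$ given the first two moments of $X$. Taking $|\psi\ran$ normalized as usual, equation \eqref{overlapav} with $|\psi_1\ran=|\psi_2\ran=|\psi\ran$ gives $\int dU\,X(U)=1$, and equation \eqref{overlapfluc} with the same specialization gives
\[
\int dU\,(X(U)-1)^2=\sigma^2\equiv\frac{|P|-1}{|P|^2|B|^2-1}\Big(|P||B|\tr(\psi_{LR}^2)-1\Big),
\]
so that $\int dU\,X(U)^2=1+\sigma^2$.

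For the upper bound I would simply invoke concavity of $t\mapsto\sqrt t$ and Jensen's inequality: $\int dU\,\sqrt{X(U)}\leq\sqrt{\int dU\,X(U)}=1$.

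For the lower bound the key step is to establish the elementary inequality $\sqrt x\geq\tfrac12(3x-x^2)$ valid for \emph{all} $x\geq 0$. I would prove it in two ranges: for $x\geq 3$ the right-hand side is nonpositive while $\sqrt x\geq 0$; for $0\leq x\leq 3$ both sides are nonnegative, so the inequality is equivalent after squaring to $x(3-x)^2\leq 4$, and this holds because $\phi(x)=x(3-x)^2$ has $\phi'(x)=3(3-x)(1-x)$, hence attains its maximum on $[0,3]$ at $x=1$ with $\phi(1)=4$. Applying this pointwise to $X(U)$ and integrating,
\[
\int dU\,\sqrt{X(U)}\;\geq\;\tfrac12\Big(3\!\int\! dU\,X(U)-\!\int\! dU\,X(U)^2\Big)=\tfrac12\big(3-(1+\sigma^2)\big)=1-\tfrac{\sigma^2}{2},
\]
which is precisely the claimed bound once $\sigma^2$ is written out explicitly.

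I do not anticipate a genuine obstacle here; the only mildly delicate point is the choice of the quadratic lower bound for $\sqrt{\cdot}$. A naive second-order Taylor bound about $x=1$ fails (e.g.\ $\sqrt{x}\geq x-\tfrac12(x-1)^2$ is false for $x$ slightly larger than $1$), so one needs the particular bound $\sqrt x\geq\tfrac12(3x-x^2)$, which is global on $[0,\infty)$, saturates at $x=0,1$, and is exactly tight enough to reproduce the constant $\tfrac12$ appearing in the statement.
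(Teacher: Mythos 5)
Your proof is correct and follows essentially the same route as the paper: Jensen's inequality for the upper bound, and for the lower bound the pointwise inequality $\sqrt{X}\geq\tfrac32 X-\tfrac12 X^2$ (which the paper writes equivalently as $x\geq\tfrac32 x^2-\tfrac12 x^4$ for the norm $x=\sqrt{X}$), combined with the first and second moments of $\lan\psi|(V^\dagger V\otimes I_{LR})|\psi\ran$ from \eqref{overlapav} and \eqref{overlapfluc}.
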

\begin{proof}
The first inequality follows immediately from Jensen's inequality:
\be
\int dU \sqrt{\lan \psi|(V^\dagger V\otimes I_{LR})|\psi\ran}\leq \sqrt{\int dU\lan \psi|(V^\dagger V\otimes I_{LR})|\psi\ran}=1,
\ee
where in the second step we used \eqref{overlapav}.  To establish the second inequality we can observe that for all $x\geq 0$ we have $x\geq \frac{3}{2}x^2-\frac{1}{2}x^4$, and thus
\begin{align}\nonumber
\int \sqrt{\lan \psi|(V^\dagger V\otimes I_{LR})|\psi\ran}&\geq \frac{3}{2}-\frac{1}{2}\frac{|P|^2|B|^2-|P|+|P||B|(|P|-1)\tr \psi_{LR}^2}{|P|^2|B|^2-1}\\
&=1-\frac{1}{2}\frac{|P|-1}{|P|^2|B|^2-1}\left(|P||B|\tr(\psi_{LR}^2)-1\right).
\end{align}
In the first line we have again used \eqref{overlapav} and also the calculation shown in figure \ref{overlapflucfig}.
\end{proof}
To apply lemma \ref{Meckes} we need a Lipschitz constant for 
\be
F(U)\equiv ||(V\otimes I_{LR})|\psi\ran||=\sqrt{|P|}\big|\big|\lan 0|U\otimes I_{LR}|\psi_0\ran|\psi\ran\big|\big|.
\ee
Indeed we have
\begin{align}\nonumber
|F(U_1)-F(U_2)|&\leq \sqrt{|P|}\big|\big|\lan 0|(U_1-U_2)\otimes I_{LR}|\psi_0\ran|\psi\ran\big|\big|\\\nonumber
&=\sqrt{|P|}\sqrt{\lan \psi_0|\lan \psi|(U_1-U_2)^\dagger \otimes I_{LR}|0\ran\lan 0|(U_1-U_2)\otimes I_{LR}|\psi_0\ran|\psi\ran}\\\nonumber
&\leq \sqrt{|P|}\sqrt{\lan \psi_0|\lan \psi|(U_1-U_2)^\dagger(U_1-U_2)\otimes I_{LR}|\psi_0\ran|\psi\ran}\\\nonumber
&\leq \sqrt{|P|}||(U_1-U_2)\otimes I_{LR}||_\infty\\\nonumber
&=\sqrt{|P|}||U_1-U_2||_\infty\\
&\leq \sqrt{|P|}||U_1-U_2||_2,\label{normlip}
\end{align}
where in the first line we've used the triangle inequality for the Hilbert space norm and $||X||_\infty$ indicates the operator norm
\be
||X||_\infty = \sup_{|||\psi\ran||=1} ||X|\psi\ran||.
\ee
Thus $F$ is a Lipschitz function with $\kappa=\sqrt{|P|}$.  We then have the following deviation bound
\begin{thm}\label{normconcthm}
Let $V$ be defined as in \eqref{Vdef} and $|\psi\ran\in \HL\otimes\Hl\otimes \Hr\otimes \HR$.  Then for any $0<\gamma<\frac{1}{2}$ and $|B|\geq 16$, in the Haar distribution on $U$ we have
\be
\PR\Big[\big|||(V\otimes I_{LR})|\psi\ran||-1\big|\geq |B|^{-\gamma}\Big]\leq 2 \exp\left(-\frac{|B|^{1-2\gamma}}{24}\right).\label{normconc}
\ee
\end{thm}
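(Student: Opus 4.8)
The plan is to apply the Meckes deviation bound (Lemma \ref{Meckes}) directly to the function $F(U)=||(V\otimes I_{LR})|\psi\ran||$ on the unitary group acting on $\Hl\otimes\Hf\otimes\Hr$, whose dimension is $N=|P||B|$ by the constraint $|f||\ell||r|=|P||B|$. The Lipschitz estimate \eqref{normlip} already shows that $F$ is $\kappa$-Lipschitz with respect to the Frobenius distance with $\kappa=\sqrt{|P|}$, so the ratio $N/\kappa^{2}$ that controls the exponent in Lemma \ref{Meckes} is exactly $|P||B|/|P|=|B|$. Hence for every $\epsilon>0$
\begin{equation}
\PR\Big[\,\big|F-\lan F\ran\big|\ge \epsilon\,\Big]\le 2\exp\!\left(-\frac{\epsilon^{2}|B|}{12}\right),
\end{equation}
and it remains only to replace $\lan F\ran$ by $1$ and to choose $\epsilon$ appropriately.

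Next I would bound $|\lan F\ran-1|$ using Lemma \ref{avbounds}. Its first inequality gives $\lan F\ran\le 1$, and its second, combined with $0\le\tr(\psi_{LR}^{2})\le 1$ and the factorization $|P|^{2}|B|^{2}-1=(|P||B|-1)(|P||B|+1)$, gives
\begin{equation}
0\le 1-\lan F\ran\le \frac{1}{2}\,\frac{|P|-1}{|P|^{2}|B|^{2}-1}\big(|P||B|-1\big)=\frac{1}{2}\,\frac{|P|-1}{|P||B|+1}<\frac{1}{2|B|}.
\end{equation}
Thus $\lan F\ran$ differs from $1$ by strictly less than $1/(2|B|)$, which is negligible compared with the $|B|^{-\gamma}$ scale we are tracking because $\gamma<1/2$.

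Finally I would assemble the two inputs with the triangle inequality: if $|F-1|\ge |B|^{-\gamma}$ then $|F-\lan F\ran|\ge |B|^{-\gamma}-\tfrac{1}{2|B|}$. For $|B|\ge 16$ and $0<\gamma<1/2$ one has $|B|^{\gamma-1}\le |B|^{-1/2}\le \tfrac14\le 2-\sqrt{2}$, whence $\tfrac{1}{2|B|}=\tfrac12|B|^{\gamma-1}|B|^{-\gamma}\le\big(1-\tfrac{1}{\sqrt2}\big)|B|^{-\gamma}$ and therefore $|B|^{-\gamma}-\tfrac{1}{2|B|}\ge |B|^{-\gamma}/\sqrt2$. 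Plugging $\epsilon=|B|^{-\gamma}/\sqrt2$ into the Meckes bound then yields
\begin{equation}
\PR\Big[\,\big|F-1\big|\ge |B|^{-\gamma}\,\Big]\le \PR\!\left[\,\big|F-\lan F\ran\big|\ge \frac{|B|^{-\gamma}}{\sqrt2}\,\right]\le 2\exp\!\left(-\frac{|B|^{1-2\gamma}}{24}\right),
\end{equation}
which is \eqref{normconc}.

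There is no genuine obstacle here: the two substantive inputs — the Lipschitz constant \eqref{normlip} and the two-sided estimate on $\lan F\ran$ in Lemma \ref{avbounds} — are already established, and the remaining work is the arithmetic of relating $N/\kappa^{2}$ to $|B|$ and of verifying that the slack left after the $\lan F\ran\to 1$ replacement still produces the clean constant $1/24$. The one point requiring care is exactly that constant chase, which is precisely why the hypothesis $|B|\ge 16$ is imposed.
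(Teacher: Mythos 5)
Your proof is correct and follows essentially the same route as the paper: the Lipschitz constant $\sqrt{|P|}$ from \eqref{normlip}, Lemma \ref{avbounds} to pin $\lan F\ran$ within $O(1/|B|)$ of $1$, and Lemma \ref{Meckes} followed by the same constant chase using $|B|\geq 16$ to land on $1/24$. The only (harmless) difference is bookkeeping — you apply the symmetric two-sided Meckes bound with $\epsilon=|B|^{-\gamma}/\sqrt{2}$ after shifting both tails, while the paper treats the two tails separately (the mean-shift only matters for the lower tail) and then applies the union bound.
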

Thus we see that for any particular state $|\psi\ran$, the norm $||(V\otimes I_{LR})|\psi\ran||$ is doubly-exponentially unlikely (in $\log |B|$) to differ from one by more than an exponentially small amount (in $\log|B|$). 
\begin{proof}
By lemmas \eqref{avbounds} and \eqref{Meckes}, for all $\delta>0$ and $\hat{\delta}>0$ we have
\begin{align}\nonumber
\PR\Big[F(U)\geq 1+\delta\Big]&\leq \PR\Big[F(U)\geq \lan F\ran+\delta \Big]\leq e^{-\frac{\delta^2|B|}{12}}\\
\PR\Big[F(U)\leq 1-\zeta-\hat{\delta}\Big]&\leq \PR\Big[F(U)\leq \lan F\ran -\hat{\delta}\Big]\leq e^{-\frac{\hat{\delta}^2|B|}{12}}
\end{align}
with
\be
\zeta\equiv \frac{1}{2}\frac{|P|-1}{|P|^2|B|^2-1}\left(|P||B|\tr(\psi_{LR}^2)-1\right).
\ee
It is convenient to observe here that
\be\label{zetabound}
\zeta\leq \frac{1}{|B|}.
\ee
Taking $\delta=|B|^{-\gamma}$ and $\hat{\delta}=|B|^{-\gamma}-\zeta$ (by \eqref{zetabound} the latter is positive for $\gamma<1$), we have
\begin{align}\nonumber
\PR\Big[F(U)\geq 1+|B|^{-\gamma}\Big]&\leq \exp\left(-\frac{|B|^{1-2\gamma}}{12}\right)\\
\PR\Big[F(U)\leq 1-|B|^{-\gamma}\Big]&\leq \exp\left(-\frac{|B|^{1-2\gamma}}{12}(1-\zeta |B|^\gamma)^2\right)
\end{align}
We can make these symmetric by weakening them a bit.  Observing that from \eqref{zetabound}, $\gamma<1/2$, and $|B|\geq 16$ we have
\be
1-\zeta|B|^\gamma\geq 1-|B|^{\gamma-1}\geq 1-|B|^{-1/2}\geq\frac{3}{4}>\frac{1}{\sqrt{2}},
\ee
we see that
\begin{align}\nonumber
\PR\Big[F(U)\geq 1+|B|^{-\gamma}\Big]&\leq \exp\left(-\frac{|B|^{1-2\gamma}}{24}\right)\\
\PR\Big[F(U)\leq 1-|B|^{-\gamma}\Big]&\leq\exp\left(-\frac{|B|^{1-2\gamma}}{24}\right).
\end{align}
An application of the union bound $\PR[A\cup B]\leq \PR[A]+\PR[B]$ then gives \eqref{normconc}.
\end{proof}
We can combine this theorem with lemma \ref{overnorm} to get a deviation bound on the overlap between pairs of states: 
\begin{thm}\label{overconcthm}
Let $V$ be defined as in \eqref{Vdef}, and $|\psi_1\ran,|\psi_2\ran$ be states in $\HL\otimes \Hl\otimes \Hr\otimes \HR$.  Then for all $|B|\geq 16$ and $0<\gamma<1/2$ we have
\be
\PR\Bigg[\Big|\lan \psi_2|V^\dagger V\otimes I_{LR}|\psi_1\ran-\lan \psi_2|\psi_1\ran\Big|\geq\sqrt{18}|B|^{-\gamma}\Bigg]\leq 12 \exp\left(-\frac{|B|^{1-2\gamma}}{24}\right).
\ee
\end{thm}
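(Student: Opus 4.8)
The plan is to reduce the overlap deviation bound to the norm-preservation deviation bound of Theorem \ref{normconcthm} using the polarization-type estimate of Lemma \ref{overnorm}. Set $L = V\otimes I_{LR}$, so that $L^\dagger L = V^\dagger V\otimes I_{LR}$ and $\langle\psi_2|V^\dagger V\otimes I_{LR}|\psi_1\rangle = \langle\psi_2|L^\dagger L|\psi_1\rangle$. First I would consider the (at most) six vectors $|\psi_1\rangle$, $|\psi_2\rangle$, $|\psi_1\rangle\pm|\psi_2\rangle$, $|\psi_1\rangle\pm i|\psi_2\rangle$. Since the ratio $\|L|\psi\rangle\|/\||\psi\rangle\|$ is invariant under rescaling of $|\psi\rangle$, for each of these vectors that is nonzero I apply Theorem \ref{normconcthm} to its normalized version: with probability at least $1 - 2\exp(-|B|^{1-2\gamma}/24)$ we have $\bigl|\,\|L|\psi\rangle\|/\||\psi\rangle\| - 1\,\bigr| \le |B|^{-\gamma}$.

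Next I would take a union bound over these six events. On the complementary ``good'' event, which has probability at least $1 - 12\exp(-|B|^{1-2\gamma}/24)$, all six ratios simultaneously lie in $[1 - |B|^{-\gamma},\, 1 + |B|^{-\gamma}]$. Since $|B|\ge 16$ and $\gamma>0$ force $0 \le |B|^{-\gamma} \le 1$, the hypotheses of Lemma \ref{overnorm} hold with $\epsilon = |B|^{-\gamma}$, and the lemma immediately gives $\bigl|\langle\psi_2|V^\dagger V\otimes I_{LR}|\psi_1\rangle - \langle\psi_2|\psi_1\rangle\bigr| \le \sqrt{18}\,|B|^{-\gamma}$ on this event. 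Passing to complements yields exactly the claimed probability bound.

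The only point requiring a little care is the degenerate case where $|\psi_2\rangle = e^{i\phi}|\psi_1\rangle$ for some phase $\phi$, so that one of $|\psi_1\rangle\pm|\psi_2\rangle$ or $|\psi_1\rangle\pm i|\psi_2\rangle$ vanishes and Lemma \ref{overnorm} does not directly apply. Here the conclusion is even easier: one has $\langle\psi_2|\psi_1\rangle = e^{-i\phi}$ and $\langle\psi_2|V^\dagger V\otimes I_{LR}|\psi_1\rangle = e^{-i\phi}\,\|L|\psi_1\rangle\|^2$, so conditioning on the single norm event of Theorem \ref{normconcthm} for $|\psi_1\rangle$ gives $\bigl|\langle\psi_2|V^\dagger V\otimes I_{LR}|\psi_1\rangle - \langle\psi_2|\psi_1\rangle\bigr| = \bigl|\,\|L|\psi_1\rangle\|^2 - 1\,\bigr| \le 2|B|^{-\gamma} + |B|^{-2\gamma} \le 3|B|^{-\gamma} \le \sqrt{18}\,|B|^{-\gamma}$, with failure probability at most $2\exp(-|B|^{1-2\gamma}/24) \le 12\exp(-|B|^{1-2\gamma}/24)$. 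Thus the stated bound holds in all cases.

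I do not expect any genuine obstacle: the argument is entirely ``polarization $+$ Theorem \ref{normconcthm} $+$ union bound,'' and the constants are arranged so everything lines up — the factor $12$ is $6$ vectors times the $2$ from the two-sided concentration estimate, and the $\sqrt{18}$ is inherited verbatim from Lemma \ref{overnorm}. The only mildly delicate bookkeeping is verifying that the hypotheses of Lemma \ref{overnorm} (the requirement $0\le\epsilon\le1$, and the full six-vector list) are met on the good event, together with disposing of the parallel-states corner case as above.
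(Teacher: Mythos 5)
Your proof is correct and takes essentially the same route as the paper: reduce the overlap deviation to norm preservation via Lemma \ref{overnorm}, apply Theorem \ref{normconcthm} to the six polarization states, and conclude by the union bound, which produces the factor $12 = 6\times 2$ and the constant $\sqrt{18}$ exactly as in the paper's argument. Your explicit handling of the degenerate case $|\psi_2\rangle = e^{i\phi}|\psi_1\rangle$ is a minor point the paper glosses over, and you dispose of it correctly.
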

\begin{proof}
By lemma \ref{overnorm}, if we have $|\lan \psi_2|V^\dagger V\otimes I_{LR}|\psi_1\ran-\lan\psi_2|\psi_1\ran|>\sqrt{18}|B|^{-\gamma}$ then at least one of the six states $|\psi_1\ran, |\psi_2\ran, \frac{|\psi_1\ran\pm |\psi_2\ran}{|||\psi_1\pm |\psi_2\ran||},\,\mathrm{or}\, \frac{|\psi_1\ran\pm i|\psi_2\ran}{|||\psi_1\pm i|\psi_2\ran||}$ must obey $\Big|||V\otimes I_{LR}|\psi\ran||-1\Big|>|B|^{-\gamma}$.  The result then follows from theorem \ref{normconcthm} and the union bound.
\end{proof}
Theorem \ref{setconcthm} then follows immediately from theorem \ref{overconcthm} and the union bound.

\section[Epsilon-nets and approximate isometry]{$\epsilon$-nets and approximate isometry}\label{appxapp}
In this appendix we show that the condition \ref{isomcond} is sufficient for the holographic encoding $V$  defined by \eqref{Vdef} to likely be an approximate isometry in the sense that
\be
||V^\dagger V-I_{\ell r}||_\infty<\delta
\ee
with $\delta$ of order $|B|^{-\gamma}$ and $\gamma\in (0,1/2)$.  
It is useful to introduce the the following idea:
\begin{mydef}
Let $\sH$ be a Hilbert space.  A collection $S$ of states in $\sH$ is an \textbf{$\mathbf{\epsilon}$-net for $\mathbf{\sH}$} if for any state $|\psi\ran\in \sH$ there exists a state $|\phi\ran\in S$ such that 
\be
|||\psi\ran-|\phi\ran||\leq \epsilon.
\ee
An $\epsilon$-net is called \textbf{minimal} if no element can be removed without ruining this property.
\end{mydef}
Simple counting arguments\footnote{One simple construction of an $\epsilon$-net is to take a maximal set of vectors $v_i$ such that $||v_i-v_j||>\epsilon$ for all $i\neq j$.  That this is an $\epsilon$-net follows from its maximality (a vector that had distance greater than $\epsilon$ from all $v_i$ could be added to the set), and we can get an upper bound on its size by observing that the set of balls of radius $\epsilon/2$ in $R^{2d}$ which are centered at the $v_i$ must be disjoint and must also fit into the ball of radius $1+\epsilon/2$ centered at the origin.  This gives \eqref{epsnetbound}.  We can obtain the lower bound \eqref{epsnetlower} by demanding that the set of balls of radius $\epsilon$ centered on the $v_i$ must cover $\mathbb{S}^{2d-1}$.} show that for any $\sH$ of dimension $d$ there exists a minimal $\epsilon$-net $S$ whose number of states obeys
\be\label{epsnetbound}
N_S\leq \left(1+\frac{2}{\epsilon}\right)^{2d},
\ee
and also that any $\epsilon$-net must have at least
\be\label{epsnetlower}
N_S\geq \sqrt{\pi d}\left(\frac{1}{\epsilon}\right)^{2d}.
\ee
Thus for small $\epsilon$ and large $d$ we can roughly speaking think of an $\epsilon$-net $S$ as having 
\be\label{epsnet}
N_S\sim e^{2d \log \frac{1}{\epsilon}}
\ee
states. Therefore by theorem \ref{setconcthm} any $\epsilon$-net $S$ of states in $\Hl\otimes\Hr$ is likely to obey
\be\label{netisom}
|\lan \psi|(V^\dagger V-I_{\ell r})|\phi\ran|<\sqrt{18}|B|^{-\gamma}
\ee
for all $|\psi\ran,|\phi\ran\in S$ provided that
\be
\frac{|\ell||r|}{|B|}|B|^{2\gamma}\log \frac{1}{\epsilon}\ll 1.
\ee
For $V$ to be an approximate isometry however, we'd like something like \eqref{netisom} to hold for \textit{all} states in $\Hl\otimes\Hr$.  This indeed follows provided that we take
\be
\epsilon=\frac{1}{|P||B|^{\gamma}},
\ee
since then from \eqref{naivebound} for any $|\psi\ran,|\phi\ran\in \Hl\otimes\Hr$ we have
\begin{align}\nonumber
\Big|\lan \psi|(V^\dagger V-I_{\ell r})|\phi\ran\Big|\leq&\Big|\lan \hat{\psi}|(V^\dagger V-I_{\ell r})(|\hat{\phi}\ran\Big|+\Big|\lan \hat{\psi}|(V^\dagger V-I_{\ell r})\left(|\phi\ran-|\hat{\phi}\ran\right)\Big|\\\nonumber
&+\Big|\left(\lan \psi|-\lan \hat{\psi}|\right)(V^\dagger V-I_{\ell r})|\phi\ran\Big|\\
&\leq \sqrt{18}|B|^{-\gamma}+2\epsilon|P|.\label{nettoall}
\end{align}
Here $|\hat{\psi}\ran$ is the element of $S$ which is closest to $|\psi\ran$ and likewise $|\hat{\phi}\ran$ for $|\phi\ran$.  Thus we see that $V$ is likely to be an approximate isometry to exponential precision in $\log |B|$ provided that
\be
\frac{|\ell||r|}{|B|} |B|^{2\gamma}\left(\gamma \log|B|+\log |P|\right)\ll 1,
\ee
which matches \eqref{isomcond}.

\section{Proof of the decoupling theorem}\label{proofapp2}
In this appendix we prove the decoupling theorem \ref{dcthm}.  The proof uses a few lemmas which generalize standard quantum information theory formulas to non-normalized states.  Namely for any vectors $|\phi\ran$, $|\psi\ran$ in a finite-dimensional Hilbert space we have
\begin{align}\nonumber
|||\phi\ran-|\psi\ran||&=\sqrt{\lan \psi|\psi\ran+\lan\phi|\phi\ran-2\mathrm{Re}\lan\psi|\phi\ran}\\\nonumber
|||\phi\ran\lan\phi|-|\psi\ran\lan\psi|||_1&=\sqrt{\left(\lan\psi|\psi\ran+\lan\phi|\phi\ran\right)^2-4|\lan\psi|\phi\ran|^2}\\
|||\phi\ran\lan\phi|-|\psi\ran\lan\psi|||_1&\leq \sqrt{2(\lan \psi|\psi\ran+\lan\phi|\phi\ran)}\,|||\phi\ran-|\psi\ran||\label{normrel}
\end{align}
and for any positive operators $P$ and $Q$ we have
\be
\tr(P)+\tr(Q)-2F(P,Q)\leq ||P-Q||_1\leq \sqrt{(\tr(P)+\tr(Q))^2-4F(P,Q)^2}.
\ee
Here $F(P,Q)\equiv||\sqrt{P}\sqrt{Q}||_1$ is called the fidelity of $P$ and $Q$.  We'll leave the proofs of these lemmas to the imagination of the reader (see the appendices of \cite{Akers:2021fut}, or any QI textbook, for help).  To show that (1)$\implies$(2) we then have
\begin{align}\nonumber
||\tr_B(LW|\psi\ran\lan\psi|W^\dagger L^\dagger)-\tr_B(L|\psi\ran\lan\psi L^\dagger)||_1&\leq||LW|\psi\ran\lan\psi|W^\dagger L^\dagger-W_BL|\psi\ran\lan\psi|L^\dagger W_B^\dagger||_1\\\nonumber
&\leq \sqrt{2(\lan \psi|L^\dagger L|\psi\ran+\lan\psi|W^\dagger L^\dagger L W|\psi\ran)}||W_BL|\psi\ran-LW|\psi\ran||\\
&\leq \sqrt{2(\lan \psi|L^\dagger L|\psi\ran+\lan\psi|W^\dagger L^\dagger L W|\psi\ran)}\epsilon_1,
\end{align}
where the first line uses the monotonicity of the trace norm under partial trace.  To show (2)$\implies$(1) we have
\begin{align}\nonumber
\epsilon_2&\geq ||\tr_B(LW|\psi\ran\lan\psi|W^\dagger L^\dagger)-\tr_B(L|\psi\ran\lan\psi| L^\dagger)||_1\\\nonumber
&\geq\lan \psi|L^\dagger L|\psi\ran+\lan\psi|W^\dagger L^\dagger L W|\psi\ran-2F(\tr_B(LW|\psi\ran\lan\psi|W^\dagger L^\dagger),\tr_B(L|\psi\ran\lan\psi |L^\dagger))\\\nonumber
&=\lan \psi|L^\dagger W_B^\dagger W_B L|\psi\ran+\lan\psi|W^\dagger L^\dagger L W|\psi\ran-2\mathrm{Re}\lan\psi|W^\dagger L^\dagger W_BL|\psi\ran\\
&=||W_BL|\psi\ran-LW|\psi\ran||^2,
\end{align}
where the existence of a unitary $W_B$ on $\mathcal{H}_B$ such that the equality in the third line holds is a consequence of Uhlmann's theorem:
\begin{thm}
(Uhlmann) For any positive-semidefinite operators $P,Q$ on $\mathcal{H}_A$ we have $F(P,Q)=\max_{|\psi\ran,|\phi\ran\in \mathcal{H}_A\otimes \mathcal{H}_{\ol{A}}}|\lan \psi|\phi\ran|$, where $|\ol{A}|$ is greater than or equal to the ranks of both $P$ and $Q$ and we only maximize over $|\phi\ran,|\psi\ran$ such that $P=\tr_{\ol{A}}|\phi\ran\lan\phi|$ and $Q=\tr_{\ol{A}}|\psi\ran\lan\psi|$.
\end{thm}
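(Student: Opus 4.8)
The plan is to reduce the statement to the duality between the trace norm and the operator norm, equation \eqref{Holder}, by means of the standard ``transpose trick'' for maximally entangled vectors. First I would record the elementary fact that (after restricting $A$ to $\mathrm{supp}(P)$ if necessary) any purification of $P$ in $\mathcal{H}_A\otimes\mathcal{H}_{\ol{A}}$ can be written as $|\phi\ran=(\sqrt P\otimes W)|\mathrm{MAX}\ran$ for some isometry $W$ on $\mathcal{H}_{\ol{A}}$, where $|\mathrm{MAX}\ran=\sum_i|i\ran_A|i\ran_{\ol{A}}$ is the (unnormalized) maximally entangled vector in a fixed basis: this holds because $\tr_{\ol{A}}(|\mathrm{MAX}\ran\lan\mathrm{MAX}|)=I_A$, so $(\sqrt P\otimes I)|\mathrm{MAX}\ran$ purifies $P$, and any two purifications of the same state differ by an isometry on the purifying system. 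Writing likewise $|\psi\ran=(\sqrt Q\otimes W')|\mathrm{MAX}\ran$ and using the two identities $\lan\mathrm{MAX}|(A\otimes B)|\mathrm{MAX}\ran=\tr(AB^T)$ and $(X\otimes Y)|\mathrm{MAX}\ran=(XY^T\otimes I)|\mathrm{MAX}\ran$ (both immediate in components), one obtains
\be
\lan\psi|\phi\ran=\tr\big(\sqrt Q\sqrt P\,(W'^\dagger W)^T\big).
\ee

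Next I would prove the ``$\le$'' direction. The operator $C\equiv(W'^\dagger W)^T$ is a contraction, $\lVert C\rVert_\infty\le 1$, since $W'^\dagger W$ is the product of an isometry with the adjoint of an isometry and transposition preserves the operator norm. Hence by the triple inequality \eqref{triplebound} (equivalently by H\"older \eqref{Holder} with $p=1$, $q=\infty$),
\be
|\lan\psi|\phi\ran|\le\lVert\sqrt Q\sqrt P\rVert_1\,\lVert C\rVert_\infty\le\lVert\sqrt Q\sqrt P\rVert_1=\lVert\sqrt P\sqrt Q\rVert_1=F(P,Q),
\ee
where I used that the trace norm is invariant under taking adjoints. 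This bound holds for any dimension of $\ol{A}$.

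For the reverse inequality I would exhibit purifications saturating the bound. By the equality case of \eqref{Holder}, the maximum of $|\tr(\sqrt Q\sqrt P\,C)|$ over contractions $C$ equals $\lVert\sqrt Q\sqrt P\rVert_1$ and is attained when $C$ is the adjoint of the partial isometry in the polar decomposition of $\sqrt P\sqrt Q$; this $C$ has $\lVert C\rVert_\infty=1$, rank at most $\min(\mathrm{rank}\,P,\mathrm{rank}\,Q)$, and can be realized as $(W'^\dagger W)^T$ by taking $W$ a fixed isometry and $W'$ that same isometry composed with a unitary extension of $\ol{C}$. Feeding these $W,W'$ back into the purification formulas yields $|\phi\ran,|\psi\ran$ with the correct marginals $P,Q$ and $\lan\psi|\phi\ran=F(P,Q)$.

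I expect the main obstacle to be the dimension bookkeeping required to make this last step go through at the stated minimal size $|\ol{A}|\ge\max(\mathrm{rank}\,P,\mathrm{rank}\,Q)$ rather than the more comfortable $|\ol{A}|\ge|A|$. When $P$ or $Q$ is not full rank one must run the whole argument on the subspaces $\mathrm{supp}(P)$ and $\mathrm{supp}(Q)$, replacing $|\mathrm{MAX}\ran$ by the maximally entangled vectors of ranks $\mathrm{rank}\,P$ and $\mathrm{rank}\,Q$ respectively, being careful that the two ``$\mathrm{MAX}$'' vectors now have different sizes, and checking that the partial isometry $C$ still fits as $(W'^\dagger W)^T$ inside $\mathcal{H}_{\ol{A}}$. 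This is routine linear algebra, but it is the only place where the precise hypothesis on $|\ol{A}|$ is used; everywhere else the argument is dimension-independent.
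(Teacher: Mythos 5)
The paper does not actually prove this theorem --- it simply cites Watrous (Theorem 3.22) --- and your argument is precisely the standard proof given in that reference: representing purifications via the transpose trick with the maximally entangled vector, the H\"older duality $||X||_1=\max_{||C||_\infty\leq 1}|\tr(XC)|$, and attainment by a unitary extension of the partial isometry from the polar decomposition of $\sqrt{P}\sqrt{Q}$. Your proof is correct; the only place requiring care is the rank-deficient bookkeeping you already flag (in particular, completing the optimal partial isometry to a unitary so that $W'=W\ol{C}$ is genuinely an isometry while leaving the trace value unchanged), which goes through precisely because $|\ol{A}|\geq\max(\mathrm{rank}\,P,\mathrm{rank}\,Q)$.
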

For a proof see e.g. theorem 3.22 in \cite{watrous_2018}.  To understand the relationship to $W_B$, note that the conditions on $|\phi\ran$ and $|\psi\ran$ just say that $|\phi\ran$ is a purification of $P$ and $|\psi\ran$ is a purification of $Q$.  By the Schmidt decomposition any two purifications of the same positive semi-definite operator differ only by the action of a unitary on the purifying system, and it is this unitary which becomes $W_B$ (the symbol replacements are $A\to \ol{B}$ and $\ol{A}\to B$).  

\section{Deviation bound for decoupling in sub-exponential states}\label{fixdcapp}
In this appendix we use measure concentration to show that it is very likely that for all sub-exponential states $|\psi\ran\in \Hall$ and sub-exponential unitaries $W_{\ell rR}$ the decoupling bound \eqref{decoupling} holds with $\epsilon_2$ being exponentially small in $\log |B|$.  We first need a lemma:
\begin{lemma}
Let $V$ be defined as in \eqref{Vdef}, $W_{\ell r R}$ be a unitary on $\Hl\otimes\Hr\otimes \HR$, and $|\psi\ran$ be a state in $\Hall$.  Moreover define
\be
F(U)\equiv \frac{||\Psi_L(U,W_{\ell r R})-\Psi_L(U,I)||_1}{||(V\otimes I_{LR})W_{\ell r R}|\psi\ran||+||(V\otimes I_{LR})|\psi\ran||},
\ee
where $\Psi_L(U,W_{\ell r R})$ is the reduction to $L$ of the state $\Psi_{LBR}(U,W_{\ell r R})$ defined by \eqref{LBRdef}. Then we have
\be\label{Fbound}
F(U)\leq ||(V\otimes I_{LR})W_{\ell r R}|\psi\ran||+||(V\otimes I_{LR})|\psi\ran||,
\ee
and moreover $F$ is a Lipschitz function of $U$ with Lipschitz constant $6\sqrt{|P|}$:
\be
|F(U_2)-F(U_1)|\leq 6\sqrt{|P|} ||U_2-U_1||_2.\label{Flip}
\ee
\end{lemma}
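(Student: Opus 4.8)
The plan is to prove the two assertions separately: first the pointwise bound $F(U)\le \|(V\otimes I_{LR})W_{\ell r R}|\psi\ran\|+\|(V\otimes I_{LR})|\psi\ran\|$, and then the Lipschitz estimate. Throughout I would abbreviate $a(U)\equiv \|(V\otimes I_{LR})W_{\ell r R}|\psi\ran\|$, $b(U)\equiv\|(V\otimes I_{LR})|\psi\ran\|$, $D(U)\equiv a(U)+b(U)$ and $N(U)\equiv\|\Psi_L(U,W_{\ell r R})-\Psi_L(U,I)\|_1$, so that $F=N/D$, with the understanding that when $D(U)=0$ (equivalently $a(U)=b(U)=0$) we also have $N(U)=0$ and set $F(U)=0$. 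For the pointwise bound I would observe that $\Psi_L(U,W_{\ell r R})$ and $\Psi_L(U,I)$ are partial traces over $BR$ of the rank-one positive operators $(V\otimes I_{LR})W_{\ell r R}|\psi\ran\lan\psi|W_{\ell r R}^\dagger(V^\dagger\otimes I_{LR})$ and $(V\otimes I_{LR})|\psi\ran\lan\psi|(V^\dagger\otimes I_{LR})$, hence are themselves positive with traces $a(U)^2$ and $b(U)^2$. Therefore $\|\Psi_L(U,W_{\ell r R})\|_1=a(U)^2$, $\|\Psi_L(U,I)\|_1=b(U)^2$, and the triangle inequality for $\|\cdot\|_1$ gives $N(U)\le a(U)^2+b(U)^2$; dividing by $D(U)$ and using $a^2+b^2\le (a+b)^2$ yields $F(U)\le a(U)+b(U)$, which is the first claim.

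For the Lipschitz bound I would proceed in two preparatory steps. First, $D$ is $2\sqrt{|P|}$-Lipschitz in $\|\cdot\|_2$: since $V=\sqrt{|P|}\lan 0|_P U|\psi_0\ran_f$ depends linearly on $U$ and $\lan 0|_P$, $|\psi_0\ran_f$ are contractions, the estimate in \eqref{normlip} shows that $a(\cdot)$ and $b(\cdot)$ are each $\sqrt{|P|}$-Lipschitz. Second, and this is the whole point of the normalization, one shows that $N$ carries a compensating factor of $D$, namely
\be
|N(U_2)-N(U_1)|\le \big(D(U_1)+D(U_2)\big)\sqrt{|P|}\,\|U_2-U_1\|_2 .
\ee
This follows by applying the reverse triangle inequality $|\,\|X\|_1-\|Y\|_1\,|\le\|X-Y\|_1$, splitting into the $W_{\ell r R}$-term and the $I$-term, using monotonicity of $\|\cdot\|_1$ under the partial trace, the elementary identity $\||\phi_2\ran\lan\phi_2|-|\phi_1\ran\lan\phi_1|\|_1\le\big(\||\phi_1\ran\|+\||\phi_2\ran\|\big)\,\||\phi_2\ran-|\phi_1\ran\|$ (cf. the third line of \eqref{normrel}), and finally the bound $\big\|\big((V(U_2)-V(U_1))\otimes I_{LR}\big)W_{\ell r R}|\psi\ran\big\|\le\|V(U_2)-V(U_1)\|_\infty\le\sqrt{|P|}\,\|U_2-U_1\|_2$, obtained exactly as in \eqref{normlip}.

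To finish I would combine these. By relabeling assume $D(U_1)\le D(U_2)$. If $D(U_2)=0$ then $F\equiv 0$ and there is nothing to prove; if $D(U_1)=0<D(U_2)$ then $|F(U_2)-F(U_1)|=F(U_2)\le D(U_2)=|D(U_2)-D(U_1)|\le 2\sqrt{|P|}\,\|U_2-U_1\|_2$ using the first claim and the Lipschitz bound on $D$. Otherwise both are positive and
\be
F(U_2)-F(U_1)=\frac{N(U_2)-N(U_1)}{D(U_2)}+F(U_1)\,\frac{D(U_1)-D(U_2)}{D(U_2)} .
\ee
By the bound on $|N(U_2)-N(U_1)|$ together with $D(U_1)\le D(U_2)$, the first term is at most $2\sqrt{|P|}\,\|U_2-U_1\|_2$; by $F(U_1)\le D(U_1)\le D(U_2)$ (first claim) and $|D(U_1)-D(U_2)|\le 2\sqrt{|P|}\,\|U_2-U_1\|_2$, the second term is also at most $2\sqrt{|P|}\,\|U_2-U_1\|_2$. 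Hence $|F(U_2)-F(U_1)|\le 4\sqrt{|P|}\,\|U_2-U_1\|_2\le 6\sqrt{|P|}\,\|U_2-U_1\|_2$.

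The only genuine obstacle is the one already flagged below \eqref{Ldcbound}: the unnormalized quantity $N(U)=\|\Psi_L(U,W_{\ell r R})-\Psi_L(U,I)\|_1$ is not Lipschitz with a usable constant, because the naive estimate of $|N(U_2)-N(U_1)|$ picks up a factor $\|V\|_\infty$, which \eqref{naivebound} only bounds by $\sqrt{|P|}$ regardless of how small the vectors in question are. The fix is precisely the division by $D(U)$, made to work by the observation that $|N(U_2)-N(U_1)|$ actually carries a compensating factor $D(U_1)+D(U_2)$ and by the symmetrization $D(U_1)\le D(U_2)$, which prevents the ratio $D(U_1)/D(U_2)$ from blowing up. Everything else — partial-trace monotonicity, the rank-one trace-norm identities, and the linear dependence of $V$ on $U$ — is routine.
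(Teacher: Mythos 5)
Your proof is correct, and while it uses the same basic ingredients as the paper's argument (monotonicity of the trace norm under partial trace, the rank-one trace-norm bounds of \eqref{normrel}, and the estimate $\|V(U_2)-V(U_1)\|_\infty\leq\sqrt{|P|}\,\|U_2-U_1\|_2$ from \eqref{normlip}), your handling of the Lipschitz step is genuinely different. The paper works perturbatively: it argues that since the Hilbert--Schmidt metric is chordal it suffices to bound $F(U_2)-F(U_1)$ to linear order in $\|U_2-U_1\|_2$, at the price of extending the estimates to non-unitary matrices along the straight-line path and discarding higher-order terms. You instead prove an exact global bound: the inequality $|N(U_2)-N(U_1)|\leq (D(U_1)+D(U_2))\sqrt{|P|}\,\|U_2-U_1\|_2$ holds with no linearization, and the algebraic identity $F(U_2)-F(U_1)=\frac{N(U_2)-N(U_1)}{D(U_2)}+F(U_1)\frac{D(U_1)-D(U_2)}{D(U_2)}$ together with the symmetrization $D(U_1)\leq D(U_2)$ and the pointwise bound $F\leq D$ closes the argument. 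This buys you three things: no need to discuss non-unitary interpolants or neglected higher-order terms, an explicit treatment of the degenerate case $D=0$ (which the paper silently ignores), and the sharper constant $4\sqrt{|P|}$, which of course implies the claimed $6\sqrt{|P|}$. One small remark: the inequality you cite as ``cf.\ the third line of \eqref{normrel}'', namely $\||\phi_2\ran\lan\phi_2|-|\phi_1\ran\lan\phi_1|\|_1\leq(\||\phi_1\ran\|+\||\phi_2\ran\|)\,\||\phi_2\ran-|\phi_1\ran\|$, is not literally that line (which has $\sqrt{2(\lan\phi_1|\phi_1\ran+\lan\phi_2|\phi_2\ran)}$ in place of the sum of norms), but it does follow from the exact second line of \eqref{normrel} via $|\lan\phi_1|\phi_2\ran|\leq\|\phi_1\|\,\|\phi_2\|$ and $\mathrm{Re}\lan\phi_1|\phi_2\ran\leq|\lan\phi_1|\phi_2\ran|$; and even if you used the weaker stated version, the resulting constant $(2+2\sqrt{2})\sqrt{|P|}$ still sits below $6\sqrt{|P|}$, so nothing in your argument is at risk.
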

\begin{proof}
First of all by the triangle inequality and the monoticity of the partial trace we have
\begin{align}\nonumber
||\Psi_L(U,W_{\ell r R})-\Psi_L(U,I)||_1&\leq ||\Psi_L(U,W_{\ell r R})||_1+||\Psi_L(U,I)||_1\\\nonumber
&\leq ||\Psi_{LBR}(U,W_{\ell r R})||_1+||\Psi_{LBR}(U,I)||_1\\\nonumber
&\leq ||(V\otimes I_{LR})W_{\ell r R}|\psi\ran||^2+||(V\otimes I_{LR})|\psi\ran||^2\\
&\leq (||(V\otimes I_{LR})W_{\ell r R}|\psi\ran||+||(V\otimes I_{LR})|\psi\ran||)^2.
\end{align}
This establishes \eqref{Fbound}.

To see that $F$ is a Lipschitz function, we first observe that since we are using the Euclidean Hilbert-Schmidt metric on unitaries it is sufficient to establish a local Lipschitz constant \textit{provided} that we establish it for non-unitary $U$ as well as unitary $U$ (this is because the geodesic from $U_1$ to $U_2$ with length $||U_2-U_1||_2$ will usually pass through non-unitary matrices).  Therefore we only need to establish \eqref{Flip}
to linear order in $||U_2-U_1||_2$.  We can assume without loss of generality that $F(U_2)\geq F(U_1)$, so to linear order in $||U_2-U_1||_2$ we have
\begin{align}\nonumber
|F(U_2)-F(U_1)|&= F(U_2)-F(U_1)\\\nonumber
&\leq \frac{||\Psi_L(U_2,W_{\ell r R})-\Psi_L(U_2,I)||_1-||\Psi_L(U_1,W_{\ell r R})-\Psi_L(U_1,I)||_1}{||(V_1\otimes I_{LR})W_{\ell r R}|\psi\ran||+||(V_1\otimes I_{LR})|\psi\ran||}\\\nonumber
&\phantom{\leq}-\frac{F(U_1)\Big(||(V_2\otimes I_{LR})W_{\ell r R}|\psi\ran||+||(V_2\otimes I_{LR})|\psi\ran||\Big)}{||(V_1\otimes I_{LR})W_{\ell r R}|\psi\ran||+||(V_1\otimes I_{LR})|\psi\ran||}\\
&\phantom{\leq}+\frac{F(U_1)\Big(||(V_1\otimes I_{LR})W_{\ell r R}|\psi\ran||+||(V_1\otimes I_{LR})|\psi\ran||\Big)}{||(V_1\otimes I_{LR})W_{\ell r R}|\psi\ran||+||(V_1\otimes I_{LR})|\psi\ran||}
\end{align}
By \eqref{Fbound} the second and third terms are bounded by $2\sqrt{|P|}||U_2-U_1||_1$ since $||(V\otimes I_{LR})|\psi\ran||$ is a $\sqrt{|P|}$-Lipschitz function of $U$ (see equation  \eqref{normlip}), and we can bound the first term via
\begin{align}\nonumber
&\frac{||\Psi_L(U_2,W_{\ell r R})-\Psi_L(U_2,I)||_1-||\Psi_L(U_1,W_{\ell r R})-\Psi_L(U_1,I)||_1}{||(V_1\otimes I_{LR})W_{\ell r R}|\psi\ran||+||(V_1\otimes I_{LR})|\psi\ran||}\\\nonumber
&\leq \frac{||\Psi_L(U_2,W_{\ell rR})-\Psi_L(U_1,W_{\ell r R})||_1+||\Psi_L(U_2,I)-\Psi_L(U_1,I)||_1}{||(V_1\otimes I_{LR})W_{\ell r R}|\psi\ran||+||(V_1\otimes I_{LR})|\psi\ran||}\\\nonumber
&\leq \frac{||\Psi_{LBR}(U_2,W_{\ell rR})-\Psi_{LBR}(U_1,W_{\ell r R})||_1+||\Psi_{LBR}(U_2,I)-\Psi_{LBR}(U_1,I)||_1}{||(V_1\otimes I_{LR})W_{\ell r R}|\psi\ran||+||(V_1\otimes I_{LR})|\psi\ran||}\\\nonumber
&\leq 2\Bigg(\frac{||(V_1\otimes I_{LR})W_{\ell r R}|\psi\ran||\times||((V_2-V_1)\otimes I_{LR})W_{\ell r R}|\psi\ran||}{||(V_1\otimes I_{LR})W_{\ell r R}|\psi\ran||+||(V_1\otimes I_{LR})|\psi\ran||}\\\nonumber
&\phantom{\leq 2\Bigg(}+\frac{||(V_1\otimes I_{LR})|\psi\ran||\times||((V_2-V_1)\otimes I_{LR})|\psi\ran||}{||(V_1\otimes I_{LR})W_{\ell r R}|\psi\ran||+||(V_1\otimes I_{LR})|\psi\ran||}\Bigg)\\
&\leq 4 \sqrt{|P|}||U_2-U_1||_2,
\end{align}
where the first inequality uses two applications of the triangle inequality for the trace norm, the second inequality uses the monotonicity of the trace norm, the third inequality uses two applications of the third line of \eqref{normrel} and also neglects higher-order terms in $||U_2-U_1||_2$, and the last inequality follows from the same argument as in \eqref{normlip}.  Thus \eqref{Flip} is established.
\end{proof}
It is also useful to observe that
\begin{align}\nonumber
\int dU F(U)&\leq \sqrt{\int dU F(U)^2}\\\nonumber
&\leq \sqrt{\int dU F(U)\Big(||(V\otimes I_{LR})W_{\ell r R}|\psi\ran||+||(V\otimes I_{LR})|\psi\ran||\Big)}\\\nonumber
&=\sqrt{\int dU ||\Psi_L(U,W_{\ell r R})-\Psi_L(U,I)||_1}\\
&\leq \sqrt{2}\frac{|L|^{1/4}}{|B|^{1/4}},
\end{align}
where the first inequality follows from Jensen's inequality, the second from \eqref{Fbound}, and the third from \eqref{Ldcbound}.  With these results in hand we can then immediately apply lemma \ref{Meckes} to obtain the deviation bound
\begin{align}\nonumber
\PR\Big[F(U)\geq \sqrt{2}\frac{|L|^{1/4}}{|B|^{1/4}}+\frac{1}{|B|^\gamma}\Big]&\leq \PR\Big[F(U)\geq \int dU F(U)+\frac{1}{|B|^\gamma}\Big]\\
&\leq e^{-\frac{1}{432}|B|^{1-2\gamma}},
\end{align}
so choosing $0\leq\gamma<\frac{1}{2}$ we see that we are doubly-exponentially unlikely (in $\log |B|$) for $F(U)$ to not be exponentially small.  Moreover the number of sub-exponential states (and the number of sub-exponential unitaries) are both bounded as in \eqref{subexpcount} for any $\alpha>0$, so choosing $\alpha<1-2\gamma$ we see that it is very likely for $F(U)$ to be exponentially small for all sub-exponential states $|\psi\ran$ and sub-exponential unitaries $W_{\ell r R}$.  Moreover by \eqref{subexpconc} it is also very likely that for these states and unitaries the denominator in the definition of $F(U)$ is exponentially close to being two, and thus the numerator (which is what appears in the decoupling bound \eqref{decoupling}) is exponentially small.

\section{Complexity theory lemmas}\label{complexityapp}
In this appendix we establish various complexity-theoretic results which are needed in the main text.  In all results ``sub-exponential'' can be replaced by ``polynomial'' and the statement will still be true.   
\begin{lemma}\label{measeqlem}
Let $S$ be a quantum system, $X$ a bounded observable on $S$, $A$ an apparatus system labeled by the distinct eigenvalues $x_i$ of $X$, and $U^{\{X\}}$ a sub-exponential unitary which implements the measurement
\be
U^{\{X\}}|i\ran_S|0\ran_A=|i\ran_S|x_i\ran_A.
\ee
Then $X$ can be written as a linear combination of two sub-exponential unitaries.
\end{lemma}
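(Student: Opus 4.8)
The plan is to combine the standard ``unitary square root'' trick for bounded Hermitian operators with the observation that the measurement unitary converts functions of $X$ into operations acting only on the apparatus. Since $X$ is bounded we may first rescale, replacing $X$ by $X/\|X\|_\infty$ and restoring the scalar $\|X\|_\infty$ at the end, so assume $\|X\|_\infty \le 1$. Set $U_\pm \equiv X \pm i\sqrt{I-X^2}$; since $X$ is Hermitian and $I-X^2\ge 0$ commutes with $X$, one computes $U_\pm U_\pm^\dagger = X^2 + (I - X^2) = I$, so each $U_\pm$ is unitary, and $X = \tfrac12(U_+ + U_-)$. Hence the lemma reduces entirely to showing that $U_+$ and $U_-$ each admit a sub-exponential circuit given that $U^{\{X\}}$ does.

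To produce such circuits I would realize $U_\pm$ by conjugating an apparatus-side phase through $U^{\{X\}}$. Let $D_\pm$ be the diagonal unitary on $A$ with $D_\pm |x_i\rangle_A = f_\pm(x_i)|x_i\rangle_A$, where $f_\pm(x) = x \pm i\sqrt{1-x^2}$ has modulus one. Using $U^{\{X\}}|i\rangle_S|0\rangle_A = |i\rangle_S|x_i\rangle_A$, the operator $(U^{\{X\}})^\dagger (I_S \otimes D_\pm) U^{\{X\}}$ sends $|i\rangle_S|0\rangle_A \mapsto f_\pm(x_i)|i\rangle_S|0\rangle_A$; since the eigenvectors $|i\rangle_S$ span $\mathcal{H}_S$, on $\mathcal{H}_S \otimes |0\rangle_A$ it acts as $U_\pm$ while leaving the apparatus in $|0\rangle_A$. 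Therefore the circuit ``adjoin $A$ in the fixed (computational-basis) state $|0\rangle_A$, apply $U^{\{X\}}$, apply $I_S\otimes D_\pm$, apply $(U^{\{X\}})^\dagger$, discard $A$'' implements $U_\pm$ exactly on $S$, at a cost of $\mathscr{C}(U^{\{X\}}) + \mathscr{C}((U^{\{X\}})^\dagger) + \mathscr{C}(D_\pm)$ gates (preparing a computational-basis $|0\rangle_A$ being free). The first two terms are sub-exponential by hypothesis, since reversing a circuit and replacing each gate by its inverse costs at most a constant factor, so everything hinges on bounding $\mathscr{C}(D_\pm)$.

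This is the step I expect to be the main obstacle. A generic diagonal unitary on $A$ would require $\sim|A|$ gates, which is not sub-exponential once $X$ has exponentially many eigenvalues; the point is that $D_\pm$ is instead the \emph{specific, efficiently computable} phase $f_\pm$ applied to the register that holds the eigenvalue. Invoking the paper's convention that Hilbert spaces are encoded into qudits in a ``simple'' way, we may take $A$ to store $x_i$ in bit-accessible (fixed-point, base-$q$) form; then $D_\pm$ is implemented by the familiar recipe — reversibly compute $\arg f_\pm(x) = \pm\arccos(x)\bmod 2\pi$ into an $O(\log|A|)$-qudit scratch register, apply the bitwise phase $|y\rangle \mapsto e^{2\pi i y}|y\rangle$ (a product of $O(\log|A|)$ single-qudit phase gates), and uncompute — using $\mathrm{poly}(\log|A|)$ gates total, which is polynomial, hence sub-exponential, in $\dim(\mathcal{H}_S\otimes\mathcal{H}_A)$. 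Putting the pieces together, each $U_\pm$ is sub-exponential, and $X = \tfrac12\|X\|_\infty (U_+ + U_-)$ realizes $X$ as a linear combination of two sub-exponential unitaries. The argument is unchanged if ``sub-exponential'' is replaced throughout by ``polynomial''.

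One subtlety I would address in the write-up: if one refuses to make any assumption about how the abstract label states $|x_i\rangle_A$ are encoded into qudits, $D_\pm$ can fail to be efficient, so one should either appeal to the simple-encoding convention as above, or simply note that one is free to take the apparatus register — and hence the given $U^{\{X\}}$ — to write eigenvalues in binary, which still realizes the projective measurement of $X$ up to the exponentially small errors already tolerated everywhere in the paper.
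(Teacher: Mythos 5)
Your proposal is correct and follows essentially the same route as the paper's proof: the decomposition $X=\tfrac{\lVert X\rVert}{2}(U_++U_-)$ with $U_\pm = X/\lVert X\rVert \pm i\sqrt{I-X^2/\lVert X\rVert^2}$, implemented by adjoining the apparatus, acting with $U^{\{X\}}$, applying the phase $u_\pm(x_i)$ on the apparatus register, undoing $U^{\{X\}}$, and discarding the ancilla. Your extra discussion of why the apparatus-side phase is efficiently implementable (compute-phase-kickback-uncompute on a simply encoded eigenvalue register) just makes explicit what the paper compresses into ``a simple function to compute numerically given $x$''.
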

\begin{proof}
The two sub-exponential unitaries are 
\be
U_{\pm}\equiv \frac{X}{||X||}\pm i \sqrt{1-\frac{X^2}{||X||^2}},
\ee
where $||X||$ is the operator norm of $X$ (i.e. the supremum of its singular values).  These are easily confirmed to be unitary, and we can observe that
\be
X=\frac{||X||}{2}(U_++U_-).
\ee
Therefore we need to show that $U_\pm$ are sub-exponential. Defining
\be
u_{\pm}(x)\equiv \frac{x}{||X||}\pm i \sqrt{1-\frac{x^2}{||X||^2}},
\ee
we can implement $U_\pm$ as
\be
|i\ran_S\to |i\ran_S|0\ran_A\to |i\ran_S|x_i\ran_A\to u_{\pm}(x_i)|i\ran_S|x_i\ran_A\to u_{\pm}(x_i)|i\ran_S|0\ran_A\to u_{\pm}(x_i)|i\ran_S.
\ee
Here the first step is adjoining the ancilla in the all-zero state, the second step is acting with $U^{\{X\}}$, the third step is acting with the phase $u_\pm(x)$ on the ancilla (which is a simple function to compute numerically given $x$), the fourth step is acting with $(U^{\{X\}})^\dagger$, and the fifth is removing the ancilla.
\end{proof}

\begin{lemma}(Grover amplification) \label{groverlem}
Let $|\phi\ran$ be some state we would like to prepare, and let $|\psi\ran$ be a state such that 
\bi
\item[(1)] We can write
\be
|\psi\ran=\sin \theta|\phi\ran+\cos\theta|\phi_\perp\ran,
\ee
with $\theta\in(0,\pi/2]$ and $\lan \phi|\phi^\perp\ran=0$.
\item[(2)] There exists a unitary $U_\phi$ such that
\begin{align}\nonumber
U_\phi|\phi\ran&=-|\phi\ran\\
U_\phi|\phi_\perp\ran&=|\phi_\perp\ran.
\end{align}
\ei
Then there is a circuit of size $O\left(\frac{1}{\sin \theta}\left(\mathcal{C}(|\psi\ran)+\mathcal{C}(U_\phi)\right)\right)$, where $\mathcal{C}(|\psi\ran)$ is the complexity of the circuit which prepares $|\psi\ran$, which prepares $|\phi\ran$ from the all-zero state $|0\ran$ to exponential precision.  In particular if $\frac{1}{\sin \theta}$, $|\psi\ran$, and $U_\phi$ are sub-exponential then there is  sub-exponential circuit which prepares $|\phi\ran$ to exponential precision.\footnote{This lemma is the core of Grover's search algorithm, where the idea is that $|\phi\ran$ is a uniform superposition over bit strings satisfying some desired property, $|\phi_\perp\ran$ is a uniform superposition over bit strings not satisfying that property, and $|\psi\ran$ is a sum over all bit strings.}
\end{lemma}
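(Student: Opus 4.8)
The plan is a standard amplitude-amplification argument, carried out carefully enough to obtain exponential rather than merely $O(\theta)$ precision. First I would restrict attention to the two-dimensional subspace $\mathcal{P} \equiv \mathrm{span}\{|\phi\ran, |\phi_\perp\ran\}$, which by hypothesis (1) contains $|\psi\ran$. Writing $W$ for the circuit preparing $|\psi\ran = W|0\ran$, so $\mathcal{C}(W) = \mathcal{C}(|\psi\ran)$, I would form the reflection about $|\psi\ran$,
\be
R_\psi \equiv W\big(2|0\ran\lan 0| - I\big)W^\dagger = 2|\psi\ran\lan\psi| - I,
\ee
whose complexity is $\mathcal{C}(|\psi\ran)$ plus the (polynomial in the number of qudits, hence absorbed) cost of the reflection about the all-zero state. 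Together with the given $U_\phi$, which in the orthonormal basis $(|\phi_\perp\ran, |\phi\ran)$ of $\mathcal{P}$ acts as $\mathrm{diag}(1,-1)$ and hence as the reflection about the $|\phi_\perp\ran$-axis, the Grover iterate $G \equiv R_\psi U_\phi$ preserves $\mathcal{P}$ and acts there as a rotation by $2\theta$ towards $|\phi\ran$, being the composition of two reflections whose axes are separated by the angle $\theta$ between $|\phi_\perp\ran$ and $|\psi\ran$ (one checks $U_\phi|\psi\ran$ sits at angle $-\theta$ from the $|\phi_\perp\ran$-axis, and then $R_\psi$ sends it to angle $3\theta$). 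Thus $G^k|\psi\ran$ sits at angle $(2k+1)\theta$.

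To land \emph{exactly} on $|\phi\ran$ (angle $\pi/2$) I would use the usual ancilla-padding trick. Pick the smallest nonnegative integer $k$ with $2k+1 \ge \tfrac{\pi}{2\theta}$, so that $\tilde\theta \equiv \tfrac{\pi}{2(2k+1)} \le \theta$, and set $p \equiv \sin^2\tilde\theta/\sin^2\theta \in (0,1]$. Adjoining a single ancilla qubit $c$, define the padded target $|\tilde\phi\ran \equiv |\phi\ran|0\ran_c$ and the padded start $|\psi'\ran \equiv |\psi\ran\otimes\big(\sqrt{p}\,|0\ran_c + \sqrt{1-p}\,|1\ran_c\big)$. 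Then $\lan\tilde\phi|\psi'\ran = \sqrt{p}\,\sin\theta = \sin\tilde\theta$, and $|\psi'\ran$ lies in the two-dimensional subspace spanned by $|\tilde\phi\ran$ and $|\tilde\phi_\perp\ran \equiv (|\psi'\ran - \sin\tilde\theta\,|\tilde\phi\ran)/\cos\tilde\theta$. The reflection $R_{\psi'}$ about $|\psi'\ran$ is $W$ tensored with a one-qubit rotation on $c$ conjugating $2|0\ran\lan 0|-I$, and $U_{\tilde\phi} \equiv I - 2|\phi\ran\lan\phi|\otimes|0\ran\lan 0|_c$ is simply $U_\phi$ controlled on $c$ being in state $|0\ran$; one verifies directly that $U_{\tilde\phi}$ acts as $-1$ on $|\tilde\phi\ran$ and $+1$ on $|\tilde\phi_\perp\ran$. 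Hence $G' \equiv R_{\psi'}U_{\tilde\phi}$ rotates by $2\tilde\theta$ towards $|\tilde\phi\ran$, and $(G')^k|\psi'\ran$ is at angle $(2k+1)\tilde\theta = \pi/2$, i.e.\ equals $|\phi\ran|0\ran_c$ exactly; discarding $c$ yields $|\phi\ran$.

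The complexity is $k$ iterations, each costing $\mathcal{C}(R_{\psi'}) + \mathcal{C}(U_{\tilde\phi}) = O\big(\mathcal{C}(|\psi\ran) + \mathcal{C}(U_\phi)\big)$ up to polynomial overhead, and since $\theta \in (0,\pi/2]$ gives $\sin\theta \ge \tfrac{2}{\pi}\theta$ (Jordan's inequality) we have $k = O(1/\theta) = O(1/\sin\theta)$, which is the claimed bound; in particular $|\phi\ran$ is sub-exponential whenever $1/\sin\theta$, $|\psi\ran$, and $U_\phi$ are. The one genuine subtlety — and where I would be most careful — is the upgrade from the crude estimate (after the best integer number of \emph{unpadded} iterations the state is within angle $\theta$, hence only fidelity $O(\theta^2)$ away from $|\phi\ran$) to exponential precision: this is exactly what the padding buys in exact arithmetic, and the residual error then arises only from implementing the particular one-qubit rotation on $c$ and the multi-controlled reflection about $|0\ran$ with the fixed gate set $\Gamma$, which by Solovay--Kitaev costs only $\mathrm{polylog}(1/\epsilon)$ extra gates to reach precision $\epsilon$ and hence is polynomial overhead at exponential precision. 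This construction is also the template for the multi-state version, Lemma \ref{groverlem2}, used in Section \ref{coarsesec}.
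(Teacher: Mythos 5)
Your proposal is correct and follows essentially the same route as the paper's proof: build the reflection about $|\psi\ran$ from its preparation circuit, compose with $U_\phi$ to obtain a rotation by $2\theta$ in the two-dimensional span, and then achieve exact (up to Solovay--Kitaev errors) targeting by padding with an ancilla qubit whose amplitude is tuned so that the effective angle becomes $\pi/(2(2k+1))$. The only difference is presentational — you spell out explicitly that the padded reflection is the ancilla-controlled version of $U_\phi$ and that it acts correctly on the padded two-dimensional subspace, a step the paper leaves implicit.
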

\begin{proof}
We first define
\be
U_\psi\equiv \hat{U}^{\{|\psi\ran\lan \psi|\}}=2|\psi\ran\lan\psi|-I,
\ee
whose complexity is $O(\mathcal{C}(|\psi\ran))$ since we can write it as the conjugation of $2|0\ran\lan 0|-I$ by the unitary which prepares $|\psi\ran$ (e.g. for qudits a linear time algorithm for $2|0\ran\lan 0|-I$ is to go through the qudits one by one and flip the overall sign as soon as anything other than a zero is encountered).  The idea is then to observe that both $U_\psi$ and $U_\phi$ implement reflections in the two-dimensional real vector space spanned by $|\phi\ran$ and $|\phi_\perp\ran$, so their product $U_\psi U_\phi$ is a rotation in this space.  Moreover this rotation moves $|\psi\ran$ in the direction of $|\phi\ran$ with rotation angle $2\theta$, so applying the $U_\psi U_\phi$ $n$ times we can implement
\be\label{groverrot}
|\psi\ran\to \sin\left[(2n+1)\theta\right]|\phi\ran+\cos\left[(2n+1)\theta\right]|\phi^\perp\ran.
\ee
In the most interesting regime we have $\theta\ll 1$, in which case by performing this rotation $O(1/\theta)$ times we can rotate $|\psi\ran$ to $|\phi\ran$ within at least a precision which is $O(\theta)$. To get to exponential precision (and also to deal with the case where $\theta$ is $O(1)$) is only slightly harder. The trick is to slightly reduce $\theta$ to
\be
\theta'=\frac{\pi}{2(2n+1)},
\ee
where $n$ is the smallest integer such that $\theta'\leq \theta$, in which case the rotation \eqref{groverrot} prepares $|\phi\ran$ exactly. To do this, we add an ancilla qubit in the state $a \ket{0} + b \ket{1}$, which we can do to exponential precision in sub-exponential (and in fact low-order polynomial) time by the Solovay-Kitaev theorem (see e.g. \cite{nielsen&chuang}). We have
\begin{align}
    \ket{\psi} (a \ket{0} + b \ket{1}) = a \,\sin \theta \ket{\phi}\ket{0} + \ket{\mathrm{other}}~,
\end{align}
where $\ket{\mathrm{other}}$ is orthogonal to $\ket{\phi}\ket{0}$. We now carry out exactly the same procedure as above except with $\ket{\psi}$ replaced by $\ket{\psi} (a \ket{0} + b \ket{1})$ and $\ket{\phi}$ replaced by $\ket{\phi}\ket{0}$. Choosing $a$ so that $a\sin \theta=\sin \theta'$ to exponential precision, we thus can ensure that we hit $\ket{\phi}\ket{0}$ to exponential precision.
\end{proof}

\begin{lemma}Let $|\psi\ran$ and $|\phi\ran$ be sub-exponential states, and let $a,b\in \mathbb{C}$ be such that $||a|\psi\ran+b|\phi\ran||=1$.  Moreover assume that $|a|^2+|b|^2$ is upper bounded by some sub-exponential function.  Then there is a sub-exponential circuit which prepares $a|\phi\ran+b|\psi\ran$ with exponentially small error.\label{suplemapp}
\end{lemma}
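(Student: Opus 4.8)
The plan is to reduce the claim to the Grover amplification lemma \ref{groverlem} by means of a single ancilla qubit. Write the target state as $\ket{\chi}\equiv a\ket{\psi}+b\ket{\phi}$, which is normalized by hypothesis. First I would record two elementary consequences of the hypotheses. Since $1=\lVert a\ket{\psi}+b\ket{\phi}\rVert\le |a|+|b|\le\sqrt{2}\,\sqrt{|a|^{2}+|b|^{2}}$ we have $|a|^{2}+|b|^{2}\ge \tfrac12$, and by assumption $|a|^{2}+|b|^{2}$ is sub-exponentially upper bounded; hence the number $\theta$ defined by $\sin\theta=\bigl(2(|a|^{2}+|b|^{2})\bigr)^{-1/2}$ satisfies $\theta\in(0,\pi/2]$ and has $1/\sin\theta=\sqrt{2(|a|^{2}+|b|^{2})}$ sub-exponential. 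This is the single place where the bound on $|a|^{2}+|b|^{2}$ is used.

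Next I would construct, to exponentially small error, the state
\[
\ket{\Psi}\;=\;\frac{1}{\sqrt{|a|^{2}+|b|^{2}}}\bigl(a\ket{\psi}\ket{0}_{\mathrm{anc}}+b\ket{\phi}\ket{1}_{\mathrm{anc}}\bigr)
\]
on the main register together with a fresh ancilla qubit, as follows: prepare the ancilla in $(|a|^{2}+|b|^{2})^{-1/2}(a\ket{0}+b\ket{1})$ to exponential precision in polynomial time using the Solovay--Kitaev theorem as in the proof of lemma \ref{groverlem}, and then apply the sub-exponential circuits preparing $\ket{\psi}$ and $\ket{\phi}$ on the main register, each conditioned on the corresponding ancilla value. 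Since controlled versions of sub-exponential circuits are sub-exponential, $\ket{\Psi}$ is a sub-exponential state, and $\lVert\ket{\Psi}\rVert=1$ because $\ket{\psi}$ and $\ket{\phi}$ are unit vectors. Applying a Hadamard $H$ to the ancilla then gives
\[
(I\otimes H)\ket{\Psi}\;=\;\frac{1}{\sqrt{2(|a|^{2}+|b|^{2})}}\Bigl[(a\ket{\psi}+b\ket{\phi})\ket{0}_{\mathrm{anc}}+(a\ket{\psi}-b\ket{\phi})\ket{1}_{\mathrm{anc}}\Bigr]\;=\;\sin\theta\,\ket{\chi}\ket{0}_{\mathrm{anc}}+\cos\theta\,\ket{\chi_{\perp}},
\]
where $\ket{\chi_{\perp}}\equiv \lVert a\ket{\psi}-b\ket{\phi}\rVert^{-1}(a\ket{\psi}-b\ket{\phi})\ket{1}_{\mathrm{anc}}$. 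Here I would use the identity $\lVert a\ket{\psi}-b\ket{\phi}\rVert^{2}=2(|a|^{2}+|b|^{2})-\lVert a\ket{\psi}+b\ket{\phi}\rVert^{2}=2(|a|^{2}+|b|^{2})-1$ to verify $\sin^{2}\theta+\cos^{2}\theta=1$, and the fact that $\ket{\chi}\ket{0}_{\mathrm{anc}}$ and $\ket{\chi_{\perp}}$ live in orthogonal ancilla sectors to verify $\braket{\chi,0|\chi_{\perp}}=0$. This is exactly the decomposition required in hypothesis (1) of lemma \ref{groverlem}.

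For hypothesis (2) I would take $U_{\phi}=-Z_{\mathrm{anc}}$, the single-qubit gate acting as $-1$ on $\ket{0}_{\mathrm{anc}}$ and $+1$ on $\ket{1}_{\mathrm{anc}}$; then $U_{\phi}\,\ket{\chi}\ket{0}_{\mathrm{anc}}=-\ket{\chi}\ket{0}_{\mathrm{anc}}$ and $U_{\phi}\ket{\chi_{\perp}}=\ket{\chi_{\perp}}$, and $U_{\phi}$ has constant complexity. The key point that makes this work is that, although we do not know $\ket{\chi}$ explicitly, the required reflection is implemented purely on the ancilla. Invoking lemma \ref{groverlem} with start state $(I\otimes H)\ket{\Psi}$, target $\ket{\chi}\ket{0}_{\mathrm{anc}}$, reflection $U_{\phi}=-Z_{\mathrm{anc}}$, and angle $\theta$ with $1/\sin\theta$ sub-exponential, we obtain a sub-exponential circuit preparing $\ket{\chi}\ket{0}_{\mathrm{anc}}$ from the all-zero state up to exponentially small error in vector norm; discarding the (disentangled, $\ket{0}$) ancilla leaves $\ket{\chi}=a\ket{\psi}+b\ket{\phi}$ prepared to exponentially small error, which is the claim. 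I do not expect any step to be a genuine obstacle; the only points needing care are the bookkeeping that $\ket{\Psi}$ is honestly sub-exponential (controlled sub-exponential circuits, additive composition of the exponentially small preparation errors) and the trigonometric identity certifying that $\theta$ is a legitimate angle with the stated $\sin\theta$, both routine.
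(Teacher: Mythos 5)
Your proposal is correct and follows essentially the same route as the paper's own proof: a Solovay--Kitaev-prepared ancilla in $(|a|^2+|b|^2)^{-1/2}(a\ket{0}+b\ket{1})$, controlled preparation of $\ket{\psi}$ and $\ket{\phi}$, a Hadamard on the ancilla, and Grover amplification of the $\ket{0}$-ancilla branch using $-Z_{\mathrm{anc}}$ with $\sin\theta=(2(|a|^2+|b|^2))^{-1/2}$. The extra checks you include (the lower bound $|a|^2+|b|^2\geq \tfrac12$ and the norm identity for $a\ket{\psi}-b\ket{\phi}$) are correct and only make the bookkeeping more explicit.
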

\begin{proof}
We first initialize an ancillary qubit in the state
\be
\frac{1}{\sqrt{|a|^2+|b|^2}}\left(a|0\ran+b|1\ran\right),
\ee
which we can again do to exponential precision in polynomial time by the Solovay-Kitaev theorem.  
The unusual normalization here arises since $|\psi\ran$ and $|\phi\ran$ are not necessarily orthogonal; instead we have
\be\label{normalization}
|a|^2+|b|^2=1-2\mathrm{Re}\left(a^*b \lan \psi|\phi\ran\right).
\ee
We then prepare $|\psi\ran$ or $|\phi\ran$ conditioned on the state of this ancillary qubit, followed by a Hadamard gate on the ancillary qubit:
\begin{align}\nonumber
\frac{1}{\sqrt{|a|^2+|b|^2}}\big(a|0\ran+b|1\ran\big)\otimes|0\ldots 0\ran&\to\frac{1}{\sqrt{|a|^2+|b|^2}}\Big(a|0\ran|\psi\ran+b|1\ran|\phi\ran\Big)\\\nonumber
&\to\frac{1}{\sqrt{|a|^2+|b|^2}}\Bigg(a\frac{|0\ran+|1\ran}{\sqrt{2}}|\psi\ran+b\frac{|0\ran-|1\ran}{\sqrt{2}}|\phi\ran\Bigg)\\
&=\frac{1}{\sqrt{2}}\frac{1}{\sqrt{|a|^2+|b|^2}}\Bigg(|0\ran\Big(a|\psi\ran+b|\phi\ran\Big)+|1\ran\Big(a|\psi\ran-b|\phi\ran\Big)\Bigg).\label{prepstate}
\end{align}
The next step is to apply Grover amplification to enhance the first branch of this state.  To do this we need a sub-exponential unitary which acts as $-1$ on the first branch and $+1$ on the second, but this is precisely minus the Pauli $Z$ operator on the ancillary qubit. The Grover angle $\theta$ is given by
\be
\sin \theta=\frac{1}{\sqrt{2}}\frac{1}{\sqrt{|a|^2+|b|^2}},
\ee
which by assumption is bigger than the inverse of some sub-exponential function.\footnote{The assumed sub-exponential bound on $|a|^2+|b|^2$ is only necessary if $|\psi\ran$ and $|\phi\ran$ are exponentially close to being proportional to each other.  Indeed from \eqref{normalization} we have 
\begin{align}\nonumber
|a|^2+|b|^2&=1-2\mathrm{Re}(a^* b\lan \psi |\phi\ran)\\\nonumber
&\leq 1+2|a||b||\lan \psi|\phi\ran|\\
&\leq 1+(|a|^2+|b|^2)|\lan \psi|\phi\ran|,
\end{align}
and thus
\be
|a|^2+|b|^2\leq \frac{1}{1-|\lan \psi|\phi\ran|},
\ee
so the bound is automatic unless $|\lan \psi|\phi\ran|$ is exponentially close to one.  In this situation there could be a ``large cancellation'' in $a|\psi\ran+b|\phi\ran$, so the bound is necessary to ensure we don't amplify an exponentially small remnant which has exponential complexity.  We thank Scott Aaronson for pointing out this possibility.
}
We may thus use Grover amplification with a sub-exponential runtime of order $\frac{1}{\sqrt{|a|^2+|b|^2}}$ and the lemma is proved.\footnote{The fact that we get a run time of order $\frac{1}{\sqrt{|a|^2+|b|^2}}$ instead of the naive $\frac{1}{|a|^2+|b|^2}$ we'd get from just measuring the ancillary qubit and repeating until we get $0$ is the characteristic square-root improvement of Grover's algorithm.}
\end{proof}

\begin{lemma}
Let $X$ be an sub-exponential observable with eigenstates $|i\ran$, let $U^{\{X\}}:|i\ran_S|0\ran_A\mapsto |i\ran_S|x_i\ran_A$ be a sub-exponential measurement unitary for $X$, let $|\psi\ran$ be a sub-exponential state, and let $P_x$ be the projection onto those $|i\ran$ with $x_i=x$.  Then the state $P_x|\psi\ran$ is also sub-exponential provided that $\lan \psi|P_x|\psi\ran$ is not exponentially small. \label{projlem}
\end{lemma}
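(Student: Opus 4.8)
The plan is to exhibit an explicit sub-exponential circuit whose output, to exponential precision, is the normalized state $P_x|\psi\ran/||P_x|\psi\ran||$. The key tool is Grover amplification (Lemma \ref{groverlem}), which is exactly what is needed to ``undo'' the postselection onto the measurement outcome $x$ that is implicit in the operator $P_x$.

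First I would construct the state $|\Phi\ran \equiv U^{\{X\}}\big(|\psi\ran_S|0\ran_A\big) = \sum_i \lan i|\psi\ran\,|i\ran_S|x_i\ran_A$; its circuit complexity is at most $\mathscr{C}(|\psi\ran) + \mathscr{C}(U^{\{X\}})$ plus the trivial cost of adjoining the ancilla $A$, all of which are sub-exponential by hypothesis. Next I would record the orthogonal decomposition
\be
|\Phi\ran = \sin\theta\,|\phi\ran + \cos\theta\,|\phi^\perp\ran, \qquad \sin\theta \equiv \sqrt{\lan\psi|P_x|\psi\ran},
\ee
where $|\phi\ran \equiv \big(P_x|\psi\ran/||P_x|\psi\ran||\big)_S \otimes |x\ran_A$ is the target state with the apparatus reading $x$, and $|\phi^\perp\ran$ is supported entirely on apparatus states $|x'\ran_A$ with $x'\neq x$. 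The reflection $U_\phi$ demanded by Lemma \ref{groverlem}, namely $U_\phi|\phi\ran = -|\phi\ran$ and $U_\phi|\phi^\perp\ran = |\phi^\perp\ran$, is simply the operator that multiplies by $-1$ precisely when the apparatus register $A$ equals $|x\ran$ and acts trivially otherwise; since testing whether $A=|x\ran$ is a trivial classical computation on that register, $U_\phi$ has polynomial, hence sub-exponential, complexity.

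With these ingredients Lemma \ref{groverlem} yields a circuit of size $O\!\big(\tfrac{1}{\sin\theta}(\mathscr{C}(|\Phi\ran) + \mathscr{C}(U_\phi))\big)$ preparing $|\phi\ran$ from the all-zero state to exponential precision. The prefactor $1/\sin\theta = 1/\sqrt{\lan\psi|P_x|\psi\ran}$ is sub-exponential exactly because we assumed $\lan\psi|P_x|\psi\ran$ is not exponentially small, and the two complexities inside the parentheses are sub-exponential by the previous paragraph, so the whole circuit is sub-exponential. Finally, since $A$ ends in the fixed known state $|x\ran_A$, I would reset it to $|0\ran_A$ with a fixed unitary and discard it, leaving $P_x|\psi\ran/||P_x|\psi\ran||$ prepared to exponential accuracy; the accumulated error is at most $O(1/\sin\theta)$ times the per-step Solovay--Kitaev and approximate-rotation errors, which remains exponentially small because $1/\sin\theta$ is sub-exponential. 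This establishes that $P_x|\psi\ran$ is a sub-exponential state.

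The only step requiring real care is the bookkeeping of the ancilla $A$: one must check that $U_\phi$ indeed acts as $-1$ on $|\phi\ran$ and as the identity on \emph{all} of $|\phi^\perp\ran$ (not merely on an idealized two-dimensional subspace), that the correct ``target'' for the amplification is $|\phi\ran$ with $A$ attached rather than the bare $P_x|\psi\ran$, and that $A$ decouples cleanly at the end. Everything else is a direct invocation of Lemma \ref{groverlem}.
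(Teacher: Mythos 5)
Your proof is correct and is essentially the paper's argument: both reduce the claim to Lemma \ref{groverlem} with $\sin\theta=\sqrt{\lan \psi|P_x|\psi\ran}$ (sub-exponential $1/\sin\theta$ by the non-exponentially-small overlap assumption), using the sub-exponential measurement unitary to make the marking reflection cheap. The only cosmetic difference is that the paper amplifies $|\psi\ran$ on $S$ directly, implementing the reflection about the image of $P_x$ as $(U^{\{X\}})^\dagger$ followed by a phase $(-1)^{f(x_i)}$ on the ancilla followed by $U^{\{X\}}$, whereas you apply $U^{\{X\}}$ once, amplify on $S\otimes A$ with a plain phase oracle on the apparatus register, and uncompute $A$ at the end --- the two pictures are related by conjugating the Grover iteration by $U^{\{X\}}$.
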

\begin{proof}
The idea is to write
\be
|\psi\ran=\sqrt{\lan \psi|P_x|\psi\ran} \frac{P_x|\psi\ran}{\sqrt{\lan\psi|P_x|\psi\ran}}+\sqrt{1-\lan \psi|P_x|\psi\ran}\frac{(1-P_x)|\psi\ran}{\sqrt{1-\lan \psi|P_x|\psi\ran}}
\ee
and then apply Grover amplification to the first branch.  To enable this, we need to argue that there is a sub-exponential unitary $U_x$ which acts as $-1$ on the image of $P_x$ and $+1$ on its kernel. Defining $f(x_i)$ to be one if $x_i=x$ and zero if $x_i\neq x$, we can implement $U_x$ as follows:
\begin{align}
|i\ran_S\to |i\ran_S|0\ran_A \to |i\ran_S|x_i\ran_A \to (-1)^{f(x_i)}|i\ran_S|x_i\ran_A\to (-1)^{f(x_i)}|i\ran_S|0\ran_A \to (-1)^{f(x_i)}|i\ran_S.
\end{align}
Here the first step adjoins the ancilla in the all-zero state, the second acts with $U^{\{X\}}$, the third acts with $(-1)^{f(x_i)}$ on the ancilla (which can be done in linear time e.g. by going through the binary expression of $x_i$ and checking if each digit agrees with $x$), the fourth acts with $(U^{\{X\}})^\dagger$, and the fifth removes the ancilla.  Therefore the assumptions of lemma \ref{groverlem} apply, and so $P_x|\psi\ran$ is sub-exponential.
\end{proof}

\begin{lemma} \label{groverlem2} (Multi-state Grover amplification) Let $|\phi_i\ran$ be some set of $N$ orthonormal states we would like to prepare and $|\psi_i\ran$ be some set of $N$ orthonormal states such that the following conditions hold.
\bi
\item[(1)] We have
\be
|\psi_i\ran=\sin \theta |\phi_i\ran+\cos \theta |\phi_i^\perp\ran,
\ee
with $\theta\in (0,\pi/2]$ and $\lan \phi_i|\phi_j^\perp\ran=0$ for all $i,j$.  We emphasize that $\theta$ is independent of $i$.
\item[(2)] There exists a unitary $U_\phi$ such that
\begin{align}\nonumber
    U_\phi|\phi_i\ran&=-|\phi_i\ran\\
    U_\phi|\phi_i^\perp\ran&=|\phi_i^\perp\ran.
\end{align}
\ei
Then there is a circuit $U$ of size $O\left(\frac{1}{\sin \theta}(\mathcal{C}(|\psi_i\ran)+\mathcal{C}(U_\phi))\right)$ which implements
\be
U|\psi_i\ran=|\phi_i\ran
\ee
to exponential precision.  In particular if $\frac{1}{\sin\theta}$, $U_\phi$, and $|\psi_i\ran$ are sub-exponential then so is $U$.
\end{lemma}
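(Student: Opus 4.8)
The plan is to adapt the two-dimensional rotation argument from the proof of Lemma \ref{groverlem} to a \emph{simultaneous} rotation carried out in $N$ mutually orthogonal two-dimensional blocks. First I would pin down the geometry. From condition (1) and the orthonormality of the $|\psi_i\rangle$ we get $\delta_{ij} = \langle\psi_i|\psi_j\rangle = \sin^2\theta\,\langle\phi_i|\phi_j\rangle + \cos^2\theta\,\langle\phi_i^\perp|\phi_j^\perp\rangle$ (the cross terms vanishing since $\langle\phi_i|\phi_j^\perp\rangle = 0$), so, using $\langle\phi_i|\phi_j\rangle = \delta_{ij}$ and assuming $\cos\theta\neq 0$ (the case $\cos\theta = 0$ being trivial, as then $|\psi_i\rangle = |\phi_i\rangle$ up to a phase), the $|\phi_i^\perp\rangle$ are themselves orthonormal and orthogonal to every $|\phi_j\rangle$. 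Hence $\mathrm{span}\{|\phi_i\rangle, |\phi_i^\perp\rangle\}$ decomposes as an orthogonal direct sum $\bigoplus_i V_i$ with $V_i = \mathrm{span}\{|\phi_i\rangle, |\phi_i^\perp\rangle\}$ and $|\psi_i\rangle\in V_i$.

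Next I would introduce the reflection $U_\psi \equiv 2\sum_i |\psi_i\rangle\langle\psi_i| - I$ about the span of the $|\psi_i\rangle$, which by the same conjugation-of-$(2|0\rangle\langle0|-I)$-by-a-preparation-circuit argument as in Lemma \ref{groverlem} has complexity $O(\mathcal{C}(|\psi_i\rangle))$ (in the applications $N = O(1)$, so preparing an $N$-valued index register is negligible overhead). Because the $V_i$ are mutually orthogonal and $|\psi_i\rangle\in V_i$, $U_\psi$ preserves each $V_i$ and restricts there to $2|\psi_i\rangle\langle\psi_i| - I$, a reflection; by condition (2) $U_\phi$ likewise preserves each $V_i$, restricting there to the reflection fixing $|\phi_i^\perp\rangle$ and negating $|\phi_i\rangle$. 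Therefore on each $V_i$ the product $U_\psi U_\phi$ is exactly the rotation by $2\theta$ that carries $|\psi_i\rangle$ toward $|\phi_i\rangle$, and the \emph{same} angle appears in every block — this is precisely where the hypothesis that $\theta$ is independent of $i$ enters. Consequently a single power $(U_\psi U_\phi)^n$ implements $|\psi_i\rangle \mapsto \sin[(2n+1)\theta]|\phi_i\rangle + \cos[(2n+1)\theta]|\phi_i^\perp\rangle$ for all $i$ at once.

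To reach exponential precision I would import the ancilla-qubit trick from the proof of Lemma \ref{groverlem} essentially verbatim: adjoin a qubit in a state $a|0\rangle + b|1\rangle$ (preparable to exponential accuracy in sub-exponential, indeed polynomial, time by Solovay--Kitaev), replace $|\psi_i\rangle$ by $|\psi_i\rangle(a|0\rangle + b|1\rangle)$ and the target $|\phi_i\rangle$ by $|\phi_i\rangle|0\rangle$, and check, using $\langle\phi_i|\phi_j\rangle = \delta_{ij}$ and $\langle\phi_i|\phi_j^\perp\rangle = 0$, that the residual part of $|\psi_i\rangle(a|0\rangle+b|1\rangle)$ is orthogonal to every $|\phi_j\rangle|0\rangle$, so the $N$ enlarged blocks remain mutually orthogonal. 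The effective rotation angle $\theta' = \arcsin(a\sin\theta)$ is again $i$-independent, so choosing $a$ with $a\sin\theta = \sin\theta'$ where $(2n+1)\theta' = \pi/2$ and $n = O(1/\sin\theta)$ makes every block land exactly on $|\phi_i\rangle|0\rangle$, up to the exponentially small errors in preparing the ancilla and in the circuits for $U_\phi$ and the $|\psi_i\rangle$. The resulting unitary $U$ is a product of $O(1/\sin\theta)$ copies of the enlarged reflections, of total size $O\!\left(\tfrac{1}{\sin\theta}\bigl(\mathcal{C}(|\psi_i\rangle) + \mathcal{C}(U_\phi)\bigr)\right)$, and sub-exponentiality of $1/\sin\theta$, $U_\phi$ and the $|\psi_i\rangle$ then yields sub-exponentiality of $U$.

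The main obstacle — really the only non-mechanical point — is verifying that one $i$-independent unitary can act correctly on all the $|\psi_i\rangle$ simultaneously; this rests on the block-orthogonality claim above (in particular on orthonormality of the $|\psi_i\rangle$ forcing orthonormality of the $|\phi_i^\perp\rangle$) together with the common value of $\theta$ across blocks. Once those are in hand, the rest is a direct transcription of the single-state argument, block by block.
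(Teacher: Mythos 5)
Your proposal is correct and is essentially the paper's own argument: the paper simply replaces the single-state $U_\psi$ by $(-1)^{N-1}\prod_i\bigl(2|\psi_i\rangle\langle\psi_i|-I\bigr)$, which for orthonormal $|\psi_i\rangle$ is exactly your reflection $2\sum_i|\psi_i\rangle\langle\psi_i|-I$ about their span, and then runs the single-state rotation-plus-ancilla argument block by block just as you do. Your explicit verification that the $|\phi_i^\perp\rangle$ are orthonormal and orthogonal to all $|\phi_j\rangle$ (so the $2$-dimensional blocks are mutually orthogonal and the rotation angle is common to all of them) is a detail the paper leaves implicit, but it is the same proof.
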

\begin{proof}
The proof is the same as for lemma \ref{groverlem}, but with the role of $U_\psi$ now played by
\be
U_\psi\equiv (-1)^{N-1}\big(2|\psi_1\ran\lan\psi_1|-I\big)\ldots\big(2|\psi_N\ran\lan\psi_N|-I\big)
\ee

\end{proof}

\section{State-independent reconstruction no-go}\label{app:rec_no_go}

This appendix argues that there cannot be reconstructions of all sub-exponential operators that work with small error $\epsilon$ on all sub-exponential states unless the kernel of $V$ is small.  The idea is to show that if such reconstructions exist, then they also work on a maximally entangled state of the system with a reference system. 
This in turn implies that the encoding map must have a relatively small kernel.  To avoid inessential complications, we will give the proof assuming that all Hilbert spaces are tensor products of qubits.  
Specifically we prove

\begin{thm}\label{thm:no_sir_simple}
    Let $\mathcal{H}_a, \mathcal{H}_B$, and $\mathcal{H}_A$ be finite-dimensional Hilbert spaces, which we take to be tensor products of qubits, and let $V: \mathcal{H}_a \to \mathcal{H}_B$ be a linear map. Furthermore, let $\log(|a||A|)$ be sub-exponential in $\log|B|$.
    Assume that for every sub-exponential (in $\log |B|$) unitary $W_{aA}$, there exists a unitary $\wt{W}_{BA}$ such that for all sub-exponential (in $\log |B|$) states $\ket{\psi}_{aA}$ we have
    \begin{equation}
	\lVert \wt{W}_{BA} V \ket{\psi} - V W_{aA} \ket{\psi} \rVert \le \epsilon~.\label{Wbound}
    \end{equation}
    Then 
    \begin{equation}\label{eq:B_lower_bound}
        |B| \ge  |a| \left(1 - \frac{4 \epsilon}{1 + \delta} \log_2 \left(|a||A|\right)\right)~,
    \end{equation}
    where
    \be
    \label{deltadef}
    1+\delta \equiv || V|\mathrm{MAX}\ran_{aA, \ol{a}\ol{A}}||,
    \ee
    with
    \be
    |\mathrm{MAX}\ran_{aA,\ol{a}\ol{A}}\equiv \frac{1}{\sqrt{|a||A|}}\sum_i|i\ran_{aA}|i\ran_{\ol{a}\ol{A}}
    \ee
    being the canonical maximally-entangled state in the computational basis $|i\ran_{aA}$ for $aA$ (i.e. the $z$-basis for each qubit).  Moreover \eqref{eq:B_lower_bound} still holds if \eqref{Wbound} is only required to hold when $W_{aA}$ is a single-site Pauli operator.
\end{thm}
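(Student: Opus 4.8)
The plan is to introduce a purifying reference and exploit the ``ricochet'' property of maximally entangled states. Let $\bar a\bar A$ be a copy of $aA$, let $|\mathrm{MAX}\rangle_{aA,\bar a\bar A}$ be the canonical maximally entangled state in the computational basis, and set $|\Omega\rangle \equiv (V\otimes I_{A\bar a\bar A})|\mathrm{MAX}\rangle$, so that $\||\Omega\rangle\| = 1+\delta$ by \eqref{deltadef}. A short computation using the identity $(M\otimes I)|\mathrm{MAX}\rangle = (I\otimes M^{T})|\mathrm{MAX}\rangle$ shows that the reduced state of the normalized vector $\tfrac{1}{1+\delta}|\Omega\rangle$ on $\bar a\bar A$ is
\[
\rho_{\bar a\bar A} = \frac{1}{(1+\delta)^{2}|a||A|}\,(V^{\dagger}V)^{T}\otimes I_{\bar A},
\]
which has rank $\mathrm{rank}(V)\cdot|A| \le |B||A|$. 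Since any density matrix on a $d$-dimensional space which is $\eta$-close in trace norm to the maximally mixed state has rank at least $d(1-\tfrac{\eta}{2})$, it will suffice to prove that $\|\rho_{\bar a\bar A} - I/(|a||A|)\|_{1}$ is of order $\tfrac{\epsilon}{1+\delta}\log_{2}(|a||A|)$: combined with the rank bound this immediately gives \eqref{eq:B_lower_bound} (up to the precise constant).

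First I would show that the reconstructions, which by hypothesis are accurate on all sub-exponential \emph{states}, are in fact accurate on the (generically highly entangled) vector $|\Omega\rangle$, with an error that does \emph{not} grow with $|a||A|$. Fix a qubit $j$ of $aA$ and $P\in\{X,Z\}$, and let $P^{(j)}_{aA}$ be the corresponding single-site Pauli; since $X^{T}=X$ and $Z^{T}=Z$, the ricochet identity gives $(P^{(j)}_{aA}\otimes I)|\mathrm{MAX}\rangle = (I\otimes P^{(j)}_{\bar a\bar A})|\mathrm{MAX}\rangle$, hence $(V\otimes I)(P^{(j)}_{aA}\otimes I)|\mathrm{MAX}\rangle = P^{(j)}_{\bar a\bar A}|\Omega\rangle$. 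Let $\tilde P^{(j)}_{BA}$ be the unitary guaranteed by the hypothesis for $W = P^{(j)}_{aA}$ (here we only ever use single-site Paulis, which is why the weaker hypothesis in the ``moreover'' clause suffices). Expanding $|\mathrm{MAX}\rangle = \tfrac{1}{\sqrt{|a||A|}}\sum_i |i\rangle_{aA}|i\rangle_{\bar a\bar A}$ and using that the $|i\rangle_{\bar a\bar A}$ are orthonormal,
\[
\big\|(\tilde P^{(j)}_{BA}\otimes I)|\Omega\rangle - P^{(j)}_{\bar a\bar A}|\Omega\rangle\big\|^{2} = \frac{1}{|a||A|}\sum_i \big\|\tilde P^{(j)}_{BA}(V\otimes I_A)|i\rangle_{aA} - (V\otimes I_A)P^{(j)}_{aA}|i\rangle_{aA}\big\|^{2} \le \epsilon^{2},
\]
because each computational basis state $|i\rangle_{aA}$ of the $\log_2(|a||A|)$-qubit system $\mathcal{H}_{aA}$ is a product state, hence sub-exponential (this is where the hypothesis that $\log_2(|a||A|)$ is sub-exponential in $\log|B|$ is used), so \eqref{Wbound} applies term by term. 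The Pythagorean cancellation here — rather than a lossy triangle inequality over the $|a||A|$ terms — is the crux of the argument.

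Finally, since $\tilde P^{(j)}_{BA}$ acts trivially on $\bar a\bar A$, the vector bound just obtained passes to a trace-norm bound on the reduced states via the elementary inequality in \eqref{normrel}: $\|\rho_{\bar a\bar A} - P^{(j)}_{\bar a\bar A}\rho_{\bar a\bar A}P^{(j)\dagger}_{\bar a\bar A}\|_1 \le \tfrac{2\epsilon}{1+\delta}$ for every single-site $X$ or $Z$ Pauli. Every element of the Pauli group on the $n=\log_2(|a||A|)$-qubit space $\bar a\bar A$ is a product of at most $2n$ such generators, so by unitary invariance and the triangle inequality $\|\rho_{\bar a\bar A} - Q\rho_{\bar a\bar A}Q^\dagger\|_1 \le \tfrac{4n\epsilon}{1+\delta}$ for all Paulis $Q$. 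Averaging over the Pauli group (which depolarizes completely, $\tfrac{1}{|\mathcal{P}_n|}\sum_Q Q\rho_{\bar a\bar A} Q^\dagger = I/(|a||A|)$) and using convexity of the trace norm gives $\|\rho_{\bar a\bar A} - I/(|a||A|)\|_1 \le \tfrac{4\epsilon}{1+\delta}\log_2(|a||A|)$, which combined with the rank bound from the first paragraph yields \eqref{eq:B_lower_bound}. The main obstacle is the step in the second paragraph: naively, reconstruction accurate to $\epsilon$ on a basis is only accurate to $\sqrt{|a||A|}\,\epsilon$ on a maximally entangled superposition of that basis, which would be useless; the orthogonality-of-the-reference trick is what keeps the error at $\epsilon$, and arranging the telescoping count in the Pauli twirl so that it produces exactly $\log_2(|a||A|)$ is the other place where a little care is needed.
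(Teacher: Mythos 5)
Your proposal is correct, and it runs on the same engine as the paper's proof: apply the hypothesized reconstructions of single-site Paulis to the encoded maximally entangled state, use the orthogonality of the reference basis (the Pythagorean step) to keep the error at $\epsilon$ rather than $\sqrt{|a||A|}\,\epsilon$, telescope over the $O(\log_2(|a||A|))$ single-site factors, and then derive a contradiction between near-maximal mixedness of the reference reduced state and the rank bound $\mathrm{rank}\,\rho \le |B||A|$ coming from $\mathrm{Tr}_{BA}\,V|\mathrm{MAX}\rangle\langle\mathrm{MAX}|V^\dagger \propto (V^\dagger V)^T\otimes I_{\ol A}$. Where you differ is in the packaging: the paper builds an explicit teleportation isometry $L=\otimes_n L_n$ (a swap written as a sum over all four Paulis per qubit), constructs its reconstructed version $\wt L$, must check that each $\wt L_n$ is an isometry (this uses reconstructions of $\sigma_0,\dots,\sigma_3$), and then compares the reductions on $\ol a$ of $VL|\mathrm{MAX}\rangle$ and $\wt L V|\mathrm{MAX}\rangle$; you instead show that $\rho_{\ol a\ol A}$ is approximately invariant under conjugation by each single-site $X$ and $Z$ on the reference, extend to the full Pauli group by unitary invariance of the trace norm plus telescoping, and depolarize by averaging over the group. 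Your route avoids constructing $\wt L$ and verifying its isometry property, needs only $X$ and $Z$ reconstructions (so it uses the ``moreover'' clause just as the paper does), works directly with the single state $V|\mathrm{MAX}\rangle$ rather than two globally different encoded states, and in fact lands on the slightly stronger bound $|B|\ge |a|\bigl(1-\tfrac{2\epsilon}{1+\delta}\log_2(|a||A|)\bigr)$ once the exact rank-versus-trace-distance inequality $\lVert\rho-I/d\rVert_1\ge 2(d-r)/d$ is used; the paper's construction, by contrast, isolates the reconstructed channel $\wt L$ as an object in its own right, which makes the ``reconstruction on a basis implies reconstruction on the purification'' mechanism more operationally explicit. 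All the steps you flag as delicate (the Pythagorean cancellation, the telescoping count, the restriction to symmetric Paulis so the ricochet $P_{\ol a\ol A}|\mathrm{MAX}\rangle = P_{aA}|\mathrm{MAX}\rangle$ holds) check out.
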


\begin{proof}
The proof relies on introducing a certain isometry  $L:\mathcal{H}_a\otimes \mathcal{H}_A\to \mathcal{H}_a\otimes \mathcal{H}_A\otimes\mathcal{H}_t\otimes\mathcal{H}_k$, where $t$ and $k$ are two additional copies of $aA$.  This isometry is defined by
\be
L|\psi\ran_{aA}\equiv |\psi\ran_t|\mathrm{MAX}\ran_{k,aA},
\ee
which we can interpret as tensoring in a maximally-entangled state $|\mathrm{MAX}\ran_{k,t}$ and then acting with the swap operator exchanging $aA$ and $t$.  Noting that the swap operator on two qubits can be written as
\be
S=\frac{1}{2}\sum_{b=0}^3 \sigma_b\otimes \sigma_b,
\ee
where $\sigma_1, \sigma_2, \sigma_3$ are the Pauli matrices and $\sigma_0=I$, we have $L=\otimes_n L_n$, where $n$ labels the qubits in $aA$ and 
\be
L_n\equiv \frac{1}{2}\sum_{b=0}^3\sigma_b^{(t_n)}|\mathrm{MAX}\ran_{t_n,k_n}\otimes \sigma_b^{(n)}.
\ee
Here $\sigma_b^{(n)}$ acts on the $n$th qubit in $aA$ and $\sigma_b^{(t_n)}$ acts on the $n$th qubit in $t$.

We can also define a reconstructed version of $L$,
\be
\wt{L}\equiv \otimes_n \wt{L}_n,
\ee
with 
\be
\wt{L}_n\equiv \frac{1}{2}\sum_{b=0}^3\sigma_b^{(t_n)}|\mathrm{MAX}\ran_{t_n,k_n}\otimes \wt{\sigma}_b^{(n)}.
\ee
Here $\wt{\sigma}_b^{(n)}$ is the unitary reconstruction of $\sigma_b^{(n)}$ promised by the assumptions of the theorem.  Noting that the Pauli matrices obey
\be
\sigma_a\sigma_{b}=\delta_{a,b}+i\epsilon_{abc}\sigma_c,
\ee
we see that each $\wt{L}_n$ is itself an isometry:
\be
\wt{L}_n^\dagger\wt{L}_n=\frac{1}{4}\sum_{b,b'}\lan \mathrm{MAX}|_{t_n,k_n}\sigma^{(t_n)}_{b'}\sigma^{(t_n)}_b|\mathrm{MAX}\ran_{t_n,k_n}\otimes \left(\wt{\sigma}_{b'}^{(n)}\right)^\dagger \wt{\sigma}_b^{(n)}=I_{aA}.
\ee
$\wt{L}$ is therefore also an isometry. 

We now bound the error of $\wt{L}$ as a reconstruction of $L$.  We can first observe that for any sub-exponential state $|\psi\ran_{aA}$, by the triangle inequality and \eqref{Wbound} we have
\begin{align}\nonumber
||(\wt{L}_nV-VL_n)|\psi\ran||&=\frac{1}{2}||\sum_b\sigma_b^{(t_n)}|\mathrm{MAX}\ran_{t_n,k_n}\otimes\left(\wt{\sigma}^{(n)}_bV-V\sigma_b^{(n)}\right)|\psi\ran||\\
&\leq 2\epsilon.\label{Lnbound}
\end{align}
The same bound holds acting on the state $|\mathrm{MAX}\ran_{aA,\ol{a}\ol{A}}$.  Indeed
defining
\be
|\Delta_{i,n}\ran_{aA}\equiv (\wt{L}_nV-VL_n)|i\ran_{aA},
\ee
we have
\begin{align}\nonumber
\left\lVert \left(\wt{L}_nV-VL_n\right)|\mathrm{MAX}\ran_{aA,\ol{a}\ol{A}} \right\rVert &=\Big|\Big|\frac{1}{\sqrt{|a||A|}}\sum_i|\Delta_{i,n}\ran_{aA}\otimes |i\ran_{\ol{a}\ol{A}}\Big|\Big|\\\nonumber
&=\sqrt{\frac{1}{|a||A|}\sum|||\Delta_{i,n}\ran||^2}\\\nonumber
&\leq \sqrt{\frac{1}{|a||A|}\sum_i 4\epsilon^2}\\
&=2\epsilon.
\end{align}
Here we have used that each computational basis state is sub-exponential since $\log (|a||A|)$ is sub-exponential in $\log |B|$.  We can then derive an analogous bound for $L$ and $\wt{L}$ as follows:
\begin{align}\nonumber
||(\wt{L}V-VL)|\mathrm{MAX}\ran_{aA,\ol{a}\ol{A}}||&=\Big|\Big|\sum_{n=1}^N \wt{L}_N\ldots \wt{L}_{n+1} (\wt{L}_nV-VL_n)L_{n-1}\ldots L_1  |\mathrm{MAX}\ran_{aA,\ol{a}\ol{A}}\Big|\Big|\\\nonumber
&=\Big|\Big|\sum_{n=1}^N \wt{L}_N\ldots\wt{L}_{n+1} (\wt{L}_nV-VL_n)\hat{L}_{n-1}\ldots \hat{L}_1  |\mathrm{MAX}\ran_{aA,\ol{a}\ol{A}}\Big|\Big|\\\nonumber
&=\Big|\Big|\sum_{n=1}^N \wt{L}_N\ldots\wt{L}_{n+1}\hat{L}_{n-1}\ldots \hat{L}_1 (\wt{L}_nV-VL_n)  |\mathrm{MAX}\ran_{aA,\ol{a}\ol{A}}\Big|\Big|\\
&\leq 2\epsilon N.\label{Lbound}
\end{align}
Here $N=\log_2(|a||A|)$ is the number of qubits in $aA$ and
\be
\hat{L}_n\equiv \frac{1}{2}\sum_b \sigma_{b}^{(t_n)}|\mathrm{MAX}\ran_{t_n,k_n} \otimes (\hat{\sigma}^{(n)}_b)^T,
\ee
where $\hat{\sigma}^{(n)}_b$ acts on the $n$th qubit in $\ol{a}\ol{A}$.  $\hat{L}_n$ acts on $|\mathrm{MAX}\ran_{aA,\ol{a}\ol{A}}$ in the same way as does $L_n$, but it commutes with $V$.  In the last step of \eqref{Lbound} we use the triangle inequality, that $\wt{L}_n$ and $\hat{L}_n$ are isometries, and \eqref{Lnbound}.

Finally we can bound the size of the kernel of $V$ as follows. Defining 
\be
\rho_{\ol{a}}=\tr_{BA\ol{A}}\left(V|\mathrm{MAX}\ran\lan \mathrm{MAX}|_{aA,\ol{a}\ol{A}}V^\dagger\right), 
\ee
we have 
\begin{align}\nonumber
\left\lVert \rho_{\ol{a}}-(1+\delta)^2 \frac{I_{\ol{a}}}{|a|}\right\rVert_1&=\left \lVert\tr_{BA\ol{A}tk}\left(\wt{L}V|\mathrm{MAX}\ran\lan \mathrm{MAX}|_{aA,\ol{a}\ol{A}}V^\dagger \wt{L}-VL|\mathrm{MAX}\ran\lan \mathrm{MAX}|_{aA,\ol{a}\ol{A}}L^\dagger V\right)\right\rVert_1\\\nonumber
&\leq \left\lVert\wt{L}V|\mathrm{MAX}\ran\lan \mathrm{MAX}|_{aA,\ol{a}\ol{A}}V^\dagger \wt{L}-VL|\mathrm{MAX}\ran\lan \mathrm{MAX}|_{aA,\ol{a}\ol{A}}L^\dagger V\right\rVert_1\\
&\leq 4\epsilon (1+\delta)\log(|a||A|).\label{abound}
\end{align}
Here we have used that $\wt{L}$ is an isometry, that 
\be
VL|\mathrm{MAX}\ran_{aA,\ol{a}\ol{A}}=|\mathrm{MAX}\ran_{t,\ol{a}\ol{A}}V|\mathrm{MAX}\ran_{aA,k},
\ee
the definition \ref{deltadef} of $1+\delta$, the monotonicity of the trace norm under partial trace, the relation \ref{normrel} between the trace norm and the vector norm, and \eqref{Lbound}.  We can rewrite \eqref{abound} as
\be
\sum_{p}\left|\lambda_p-\frac{1+\delta}{|a|}\right|\leq 4\epsilon(1+\delta) \log(|a||A|),
\ee
where $\lambda_p$ are the eigenvalues of $\rho_{\ol{a}}$.  Finally we can observe that since
\be
\rho_{\ol{a}}=\tr_B\left(V|\mathrm{MAX}\ran\lan \mathrm{MAX}|_{a,\ol{a}}V^\dagger\right),
\ee
at most $|B|$ of the $\lambda_p$ are nonzero.  We therefore have at least $|a|-|B|$ vanishing eigenvalues, and thus we have 
\be
(|a|-|B|)\frac{(1+\delta)^2}{|a|}\leq 4\epsilon(1+\delta) \log(|a||A|),
\ee
from which \eqref{eq:B_lower_bound} immediately follows.  The restriction to single-site Pauli operators follows because in \eqref{Lnbound} we only need \eqref{Wbound} to apply for single-site Pauli operators.
\end{proof}

\section{Calculations in the dynamical model} \label{dyn_app}

In this appendix we provide more concrete calculations for the dynamical model introduced in Section \ref{dynamicsec}. For simplicity, we mostly focus on the less complex block random unitary (BRU) version of the model and only briefly comment on how the analogous calculations would work in the more realistic random pairwise interaction (RPI) version.  To simplify calculations we will assume that $q\gg 1$, but $q$ is still formally $O(1)$ compared to $n_0-t$ so that any black hole we consider will consist of many qudits.

\subsection{Condition for $V_t$ to be an isometry}

Note that $V_t$ defined by \eqref{Vdef2} is an isometry if and only if the state $\ket{\varphi^V}$ defined by 
\be 
\bra{i}_\ell \bra{j}_r \bra{k}_B \ket{\varphi^V} \equiv \frac{1}{\sqrt{|\ell r|}} \bra{k}_B V \ket{i}_{\ell}\ket{j}_{r}  = q^{-(t+m_0)/2} \bra{j}_{R} \bra{k}_B (U_t ... U_1) \ket{i}_\ell \ket{\psi_0}_f    
\ee
is maximally mixed on $\sH_\ell \otimes \sH_r$, which in turn is true if and only if $S_2(\Phi^V_{\ell r})$ is equal to $\log |\ell r| = 3 t\log q $.\footnote{More precisely, what follows from \eqref{S2result} below is that $||V^\dagger V-I||_2\approx \sqrt{\frac{|\ell|^2|r|^2}{|B|}}$.  This only implies that $V$ is approximately isometric ``on average''.  To conclude that it is  approximately isometric in the ``worst case'' sense that $||V^\dagger V-I||_\infty\ll 1$ one can use a measure concentration argument generalizing the one that went into establishing \eqref{isomcond}.}
For notational convenience, we are no longer making the $t$-dependence of the Hilbert spaces $\sH_\ell$, $\sH_r$ and $\sH_B$ explicit.

To evaluate the average of $e^{-S_2(\Phi^V_{\ell r})}$ in the BRU model we use a mapping to a spin model partition function which was introduced in \cite{Hayden:2016cfa} (see also \cite{nahum, frank}). There are two spins associated to each block random unitary $U_m$, one telling us how the ingoing indices of the two copies of $U_m$ are contracted with the outgoing indices of the two copies of $U_m^\dagger$ in the average \eqref{Uresults} and one telling us the same for the outgoing indices of the $U_m$ and the ingoing indices of the $U_m^\dagger$.  It is helpful to introduce the following spin states on 4 copies of the Hilbert space associated with a single qudit: 
\be 
\braket{i_1 i_1' i_2 i_2'|\uparrow} = \frac{1}{q}\delta_{i_1 i_1'} \delta_{i_2 i_2'}, \quad \braket{i_1 i_1' i_2 i_2'|\downarrow} = \frac{1}{q}\delta_{i_1 i_2'}\delta_{i_2 i_1'} \,,
\ee
where each $\ket{i}$, $\ket{i'}$ represent a basis state on one copy. Their inner products are given by 
\be 
\braket{\uparrow| \uparrow}=\braket{\downarrow| \downarrow} =1 , \quad \braket{\uparrow| \downarrow}=\braket{\downarrow| \uparrow} = \frac{1}{q} \, . 
\ee
We can also  define for any operator $O$ the following two states in four copies of the $d-$dimensional Hilbert space that $O$ acts on: 
\be 
\braket{i_1 i_1' i_2 i_2'|O, \uparrow} =  d\, O_{i_1 i_1'}O_{i_1 i_1'}, \quad \braket{i_1 i_1' i_2 i_2'|O, \downarrow} = d\,  O_{i_1 i_2'}O_{i_2 i_1'}\, . 
\ee
With this notation, we can express the second Renyi entropy of $\ket{\varphi^V}$ in $lr$ as   
\begin{gather}  
\exp\left(-S_2(\Phi^V_{lr})\right) =   \bra{\downarrow}_{\ell} \bra{\downarrow}_P  \bra{\uparrow}_B \ket{\Phi^V, \uparrow} \nonumber \\ = \bra{\downarrow}_{p}  \bra{\uparrow}_B \, \,   U \otimes U^{\ast} \otimes U \otimes U^{\ast} \, \, \ket{\downarrow}_{\ell}  \ket{\psi_0, \uparrow}_f 
\label{s2av}
\end{gather} 
where for instance $\ket{\downarrow}_{\ell}$ refers to a tensor product of $\ket{\downarrow}$ on all qudits in $\ell$, and 
\be 
U \equiv U_t U_{t-1} ... U_1 \, . 
\ee
\bfig
    \includegraphics[width=0.7\textwidth]{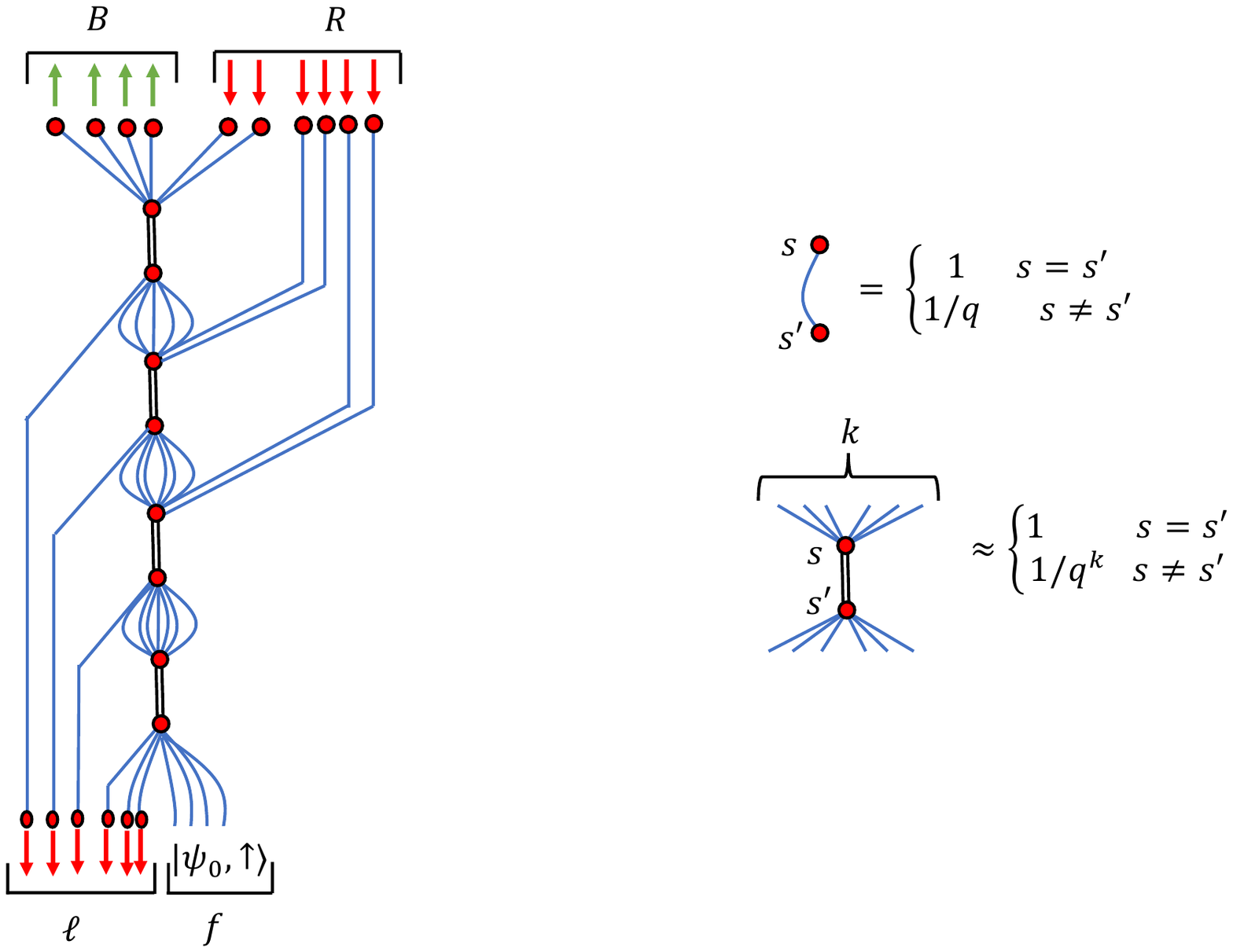}
    \caption{The sums over the $s_m$ and $s'_m$ associated with each $U_m$ appearing the average of \eqref{s2av} can be seen as a partition function on the above lattice (we have shown the case where $n_0=7$, $m_0=3$, and $t=3$).  Each red vertex on the graph, other those fixed at the boundaries, can have either an $\uparrow$ or $\downarrow$ spin, and the total weight of any spin configuration can be obtained by multiplying the weights associated with each of the interactions, which are as shown on the right. The contraction of $\ket{\psi_0, \uparrow}$ with either $\lan\uparrow|$ or $\lan\downarrow|$ contributes a factor of 1.}
    \label{fig:iso}
\efig
The average of \eqref{s2av} can be expressed in terms of the averages $\int d U_m \,  U_m \otimes U_m^{\ast}\otimes U_m \otimes U_m^{\ast}$ for each of the $m$'s. We introduce the spins $s_m$ and $s'_m$ by rewriting \eqref{Uresults} (noting that $U_m$ acts on $n_0-m+2$ qudits) as 
\begin{gather} 
\int dU_m \, U_m \otimes U_m^{\ast}\otimes U_m \otimes U_m^{\ast} = \sum_{s_m, s'_m \in \{\uparrow, \downarrow\}} w_{n_0-m+2} (s_m, s'_m)\, 
(\ket{s_m}\bra{s'_m})^{\otimes{n_0 - m +2}}
\label{umav}
\end{gather} 
where 
\be \label{eq:unitarydomainwalls}
w_k (s, s') = \begin{cases} q^{2k}/(q^{2k}-1) \approx 1  & s = s' \\ 
-q^{k}/(q^{2k}-1) \approx -q^{-k}   & s \neq s'
\end{cases} .
\ee
The approximation comes either from using the fact that we consider times sufficiently before the total evaporation time that $n_0-t\gg 1$, or from taking $q$ to be large. Substituting \eqref{umav} into \eqref{s2av}, the integrals over the $U_m$ are replaced by sums over the $s_m$ and $s'_m$. The overall sum can be interpreted as a partition function on a lattice whose geometry is determined by the tensor network for $U$, as explained in Fig. \ref{fig:iso}.\footnote{This partition function can have negative Boltzmann weights due to the minus sign in \eqref{eq:unitarydomainwalls}.  This can be ``cured'' by integrating out half of the spins and introducing a trivalent vertex, but it is easier to just live with the minus signs since anyways we just need to identify the dominant contribution and this will always be positive.}

\begin{figure}[t]
    \centering
    \includegraphics[width=\textwidth]{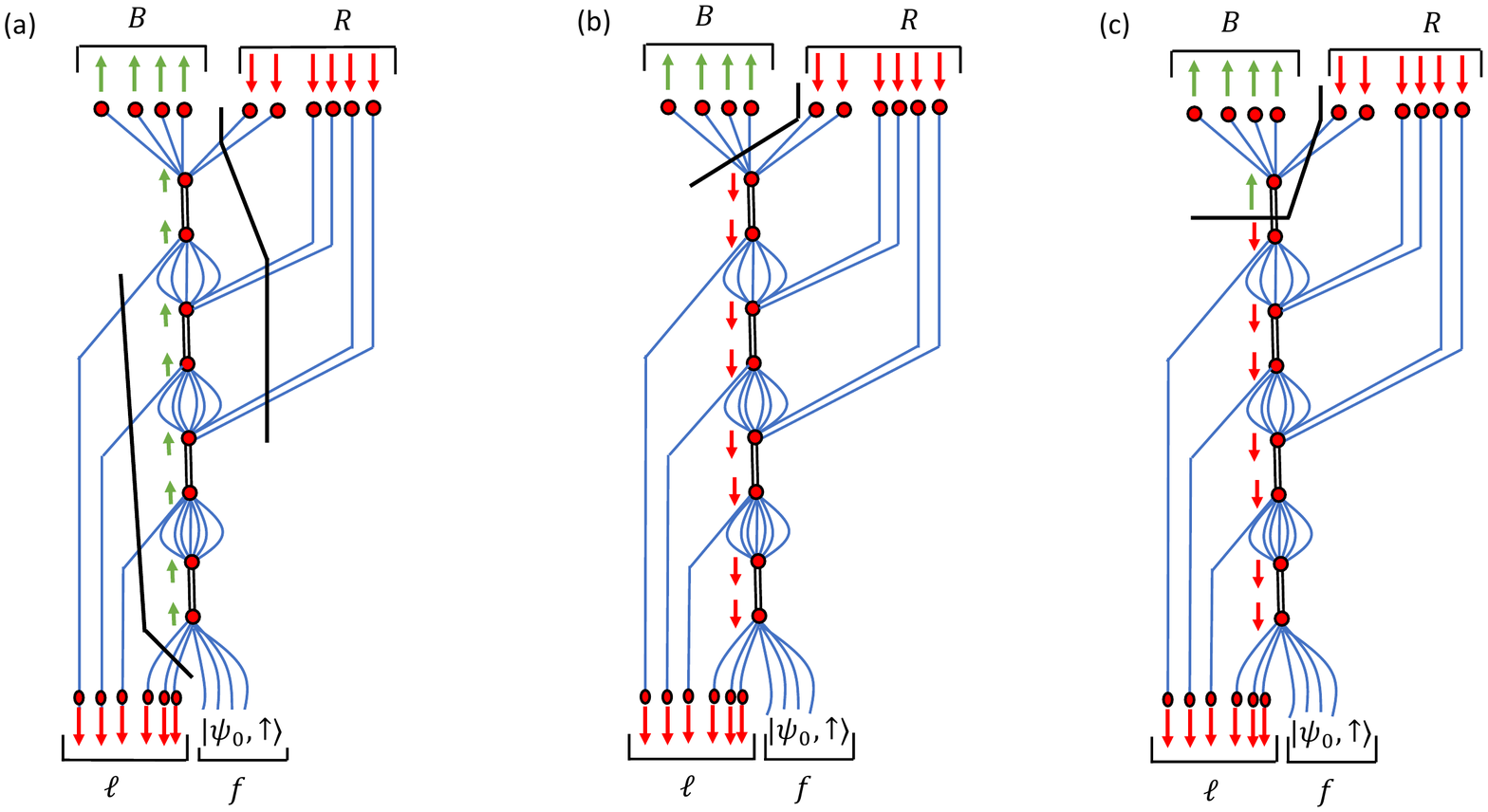}
    \caption{The two configurations, (a) and (b), which can dominate in the partition function \eqref{s2av}, as well as a third (c) which is suppressed compared to (b) by a factor of $1/q$.}
    \label{fig:pf}
\end{figure}
We now evaluate this spin partition function in the regime $1\ll q \ll n_0-t$. There are only two spin configurations that can dominate, which are shown in   Fig. \ref{fig:pf} (a) and (b). These two configurations give
\begin{gather}
\int\mathcal{D}U \, \exp(-S_2(\Phi_{\ell r}^V)) \approx \frac{1}{|\ell r|} + \frac{1}{|B|}, \label{S2result}
\end{gather} 
where $\mathcal{D}U\equiv \prod_{m=0}^t dU_m$.  Other configurations, such as the one shown in Fig. \ref{fig:pf} (c), are further suppressed by factors of $1/q$. 
We therefore find that $\int \mathcal{D}U \, \exp(-S_2(\Phi_{\ell r}^V)) \approx |\ell r|^{-1}$, and hence $V_t$ is typically close to being an isometry, if and only if $t\leq (n_0-m_0)/4$.\footnote{To make this argument quantitatively precise, one can use the fact $e^{-S_2} \geq |\ell r|^{-1}$ and apply Markov's inequality.}

For the RPI model the integrals over unitaries can be performed in an analogous manner, but the resulting partition function will be on a much larger and more complicated lattice. However, for typical random pairings, the dominant contributions are all still analogous to configuration (a) or configuration (b) in Figure \ref{fig:pf}. The primary difference with the BRU calculation is that there are now are large number of degenerate contributions of this form, associated with a freedom to pick the spin of some two qudit unitaries where neither output qudit is emitted as radiation and neither input qudit just fell into the black hole. Naively, each of these configurations would contribute independently to the partition function, increasing the result by a large factor. However, in fact, the minus sign in the $s \neq s'$ term of \eqref{eq:unitarydomainwalls} leads almost all the contributions to cancel, such that the final result is the same as the BRU model, where there is only one configuration of each type.\footnote{This same effect can be seen in a much simpler setting. The distribution of the product of a large number of Haar random unitaries is Haar random, even though the calculations of its moments using \eqref{Uresults} involves many more terms, because the contributions from almost all of those terms cancel.}

\subsection{The QES formula and interior reconstruction}

Let us now see how the second Renyi entropies of the state in the fundamental picture can be expressed in terms of quantities in the effective picture. Suppose we consider a state in the effective description of the form
\be 
\ket{\psi_{\rm eff}} = \ket{\chi^{\{in\}}}_{L \ell} \ket{\mathrm{MAX}}_{rR} \,
\ee
which is the state which would arise from our effective picture dynamics with an ingoing state $\ket{\chi^{\{in\}}}_{L \ell}$,
and ask about the entropies of the state 
$\ket{\psi} = V_t \otimes I_{LR} \ket{\psi_{\rm eff}}$ in different subsystems in the fundamental picture. We will find that these obey the QES formula.

\bfig
    \includegraphics[width=0.7\textwidth]{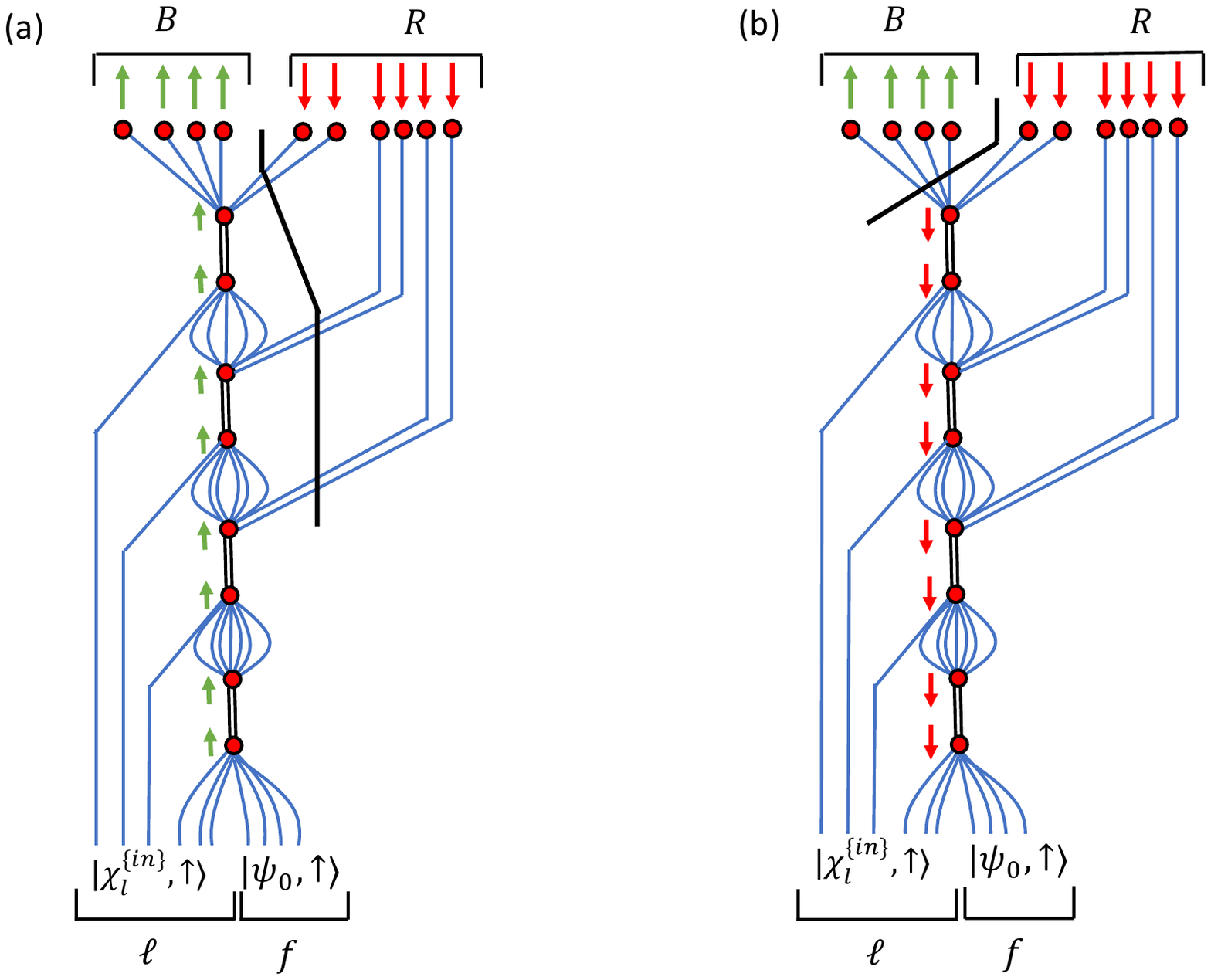}
    \caption{Partition function for $\exp(-S_2(\Psi_R))$, with two possible dominant configurations. Note that  $\ket{\chi_{\ell}^{\{in\}}, \uparrow}$ on the lower boundary gives a factor of 1 when connected to $\uparrow$, and $\exp({-S_2(\chi_{\ell}^{\{in\}})})$ when connected to $\downarrow$.}
    \label{fig:entropy_cases}
\efig
 Using similar steps  to the previous subsection, we find that
the averaged second Renyi entropies of $\ket{\psi}_{LBR}$ in the different subsystems can be expressed as  partition functions. For instance, 
\be 
\exp(-S_2(\Psi_R)) =  \bra{\uparrow}_B  \bra{\downarrow}_{R} \, U \otimes U^{\ast} \otimes U \otimes U^{\ast} \,  \ket{\chi^{\{in\}}_{\ell}, \uparrow}_{\ell} \,  \ket{\psi_0, \uparrow}_f 
\ee
The corresponding lattice and interactions are the same as in Fig. \ref{fig:iso}, but the boundary conditions are now as shown in Fig. \ref{fig:entropy_cases}. When (a) dominates, the entropy is given by $S_2(\Psi_R) = 2 t \log q$. We can interpret this is as the case where only $R$ lies in the entanglement wedge of $R$, and we only have a bulk entropy term in the QES formula. When (b) dominates, $S_2(\Psi_R) = S_2(\chi^{\{ in \}}_{\ell}) + (n_0-t)\log q$.  In this case, the the entanglement wedge of $R$ consists of $\ell rR$. The first term is a bulk entropy term, while the second term is proportional to the number of legs in the tensor network that the quantum extremal surface shown with the solid black line passes through. Overall, we find 
\be \label{eq:S2Rdyn}
S_2(\Psi_R) = \text{min}(2t \log q, ~ S_2(\chi^{\{ in \}}_{\ell}) + (n_0-t)\log q) \, .
\ee

The entanglement entropy of the fundamental picture state in $B$ is given by a similar partition function, with the only change from Fig. \ref{fig:entropy_cases} being that we now have $\downarrow$ spins in $B$ and  $\uparrow$ spins in $R$ in the boundary conditions. Again the two possible dominant contributions have either all bulk spins pointing up or all pointing down, and we find 
\be 
S_2(\Psi_B) = \text{min}\left(2 t \log q +S_2(\chi^{\{ in \}}_{\ell}),~ (n_0-t)\log q\right) \, .
\ee
In the first case, the entanglement wedge of $B$ consists of $\ell$ and $r$ and we have only a bulk entropy term, while in the second case the entanglement wedge of $B$ does not contain either $\ell$ or $r$ and we have only an area term.

A simple variant on the calculations above lets us study when a qudit in $\ell$ can be reconstructed on $B$ or $R$.  Namely if we take 
\be
|\chin\ran=|\mathrm{MAX}\ran_{L_i,\ell_i}|\hat{\chi}^{\mathrm{in}}\ran_{\hat{L}\hat{\ell}},
\ee
where $\ell_i/L_i$ is the $i$th qudit in $\ell/L$ and $\hat{\ell}/\hat{L}$ is its complement, then we will be able to reconstruct $\ell_i$ on $B$ if $\Psi_{RL_i}\approx \Psi_R\otimes \Psi_{L_i}$, while we will be able to reconstruct it on $R$ if  $\Psi_{BL_i}\approx \Psi_B\otimes \Psi_{L_i}$.  We can check these approximate factorizations by showing that the second Renyi entropy is additive.  We can compute $e^{-S_2(\Psi_{R L_i})}$ by setting $\ell_i$ to be down in figure \ref{fig:entropy_cases}, and we can compute $e^{-S_2(\Psi_{B L_i})}$ by setting $\ell_i$ to be down in the analogous figure which flips the spins of $B$ and $R$.  The dominant configurations are still as in figure \ref{fig:entropy_cases}, with the extra qudit in $\ell$ contributing a factor of $1/q$ when the interior spins are up and a factor of $1$ when they are down.  Thus we can reconstruct $\ell_i$ on $B$ whenever the left configuration dominates in figure \ref{fig:entropy_cases}, i.e. whenever the second Renyi of $\Psi_R$ is given by $\log |R|=2t\log q$, and we can reconstruct $\ell_i$ on $R$ whenever the second Renyi of $B$ is given by $\log |B|=(n_0-t)\log q$.  These are just as expected from entanglement wedge reconstruction.

In the RPI model, the second Renyi calculations are essentially identical. However, once we add an $L$ qudit to the computation some new physics arises. Even when the second term in \eqref{eq:S2Rdyn} is minimal, the $L$ qudit is only maximally entangled with $R$ if the entangled $\ell$ qudit is able to influence qudits in $R$. By this we mean that there exists paths from inputs to outputs through two-qudit unitaries that take the $\ell$ qudit to qudits in $R$. If no such path exists then the $L$ qudit will instead be maximally entangled with $B$.  Typically, such paths will only exist if the $\ell$ qudit was thrown into the black hole at least $\log (n_0-t)$ time steps in the past. This is a manifestation of the Hayden-Preskill decoding criterion in the RPI model: information only escapes the black hole (even after the Page time) after a scrambling time delay. The analogue in QES calculations in gravity is the scrambling time delay before an infalling mode enters the ``island''. In contrasts, the delay in the BRU model is only one timestep (for $q \gg 1$), because the BRU model coarse-grains the black hole dynamics such that each timestep corresponds to an entire scrambling time.

\bibliographystyle{jhep}
\bibliography{mybib2022}

\end{document}